\numberwithin{equation}{section}
\numberwithin{figure}{section}
\theoremstyle{plain}
\newtheorem{thm}{\protect\theoremname}[section]
\theoremstyle{definition}
\newtheorem{defn}[thm]{\protect\definitionname}
\theoremstyle{plain}
\newtheorem{lem}[thm]{\protect\lemmaname}
\theoremstyle{plain}
\theoremstyle{plain}
\newtheorem{prop}[thm]{\protect\propositionname}
\theoremstyle{plain}
\newtheorem{rem}[thm]{Remark}
\newtheorem{exa}[thm]{Example}
\DeclareSymbolFont{extraup}{U}{zavm}{m}{n}
\DeclareMathSymbol{\varheart}{\mathalpha}{extraup}{86}
\DeclareMathSymbol{\vardiamond}{\mathalpha}{extraup}{87} 
\providecommand{\corollaryname}{Corollary}
\providecommand{\definitionname}{Definition}
\providecommand{\lemmaname}{Lemma}
\providecommand{\theoremname}{Theorem}
\providecommand{\propositionname}{Proposition}
\begin{document}
\global\long\def\C{\mathbb{C}}%
\global\long\def\Cd{\C^{\delta}}%
\global\long\def\Cprim{\C^{\delta,\circ}}%
\global\long\def\Cdual{\C^{\delta,\bullet}}%
\global\long\def\Od{\Omega^{\delta}}%
\global\long\def\Oprim{\Omega^{\delta,\circ}}%
\global\long\def\Odual{\Omega^{\delta,\bullet}}%

\global\long\def\en{\mathcal{\varepsilon}}%
\global\long\def\wone{\widehat{w}}%
\global\long\def\wtwo{w}%
\global\long\def\wo{\wone}%
 
\global\long\def\wt{\wtwo}%
\global\long\def\cZ{\mathcal{Z}}%

\global\long\def\tfixed{wired}%
\global\long\def\tfree{free}%
\global\long\def\tplus{plus}%
\global\long\def\tminus{minus}%

\global\long\def\fixed{\{\mathtt{\tfixed}\}}%
\global\long\def\free{\{\mathtt{\tfree}\}}%
\global\long\def\plus{\{\mathtt{\tplus}\} }%
\global\long\def\minus{\left\{  \mathtt{\tminus}\right\}  }%

\global\long\def\Cgr{\mathcal{C}}%
\global\long\def\sCgr{\mathit{c}}%

\global\long\def\Pf{\mathrm{Pf}}%

\global\long\def\vv{v_{1},\dots,v_{n}}%
\global\long\def\uu{u_{1},\dots,u_{m}}%
\global\long\def\svv{\sigma_{v_{1}}\ldots\sigma_{v_{l}}}%
\global\long\def\muu{\mu_{v_{l+1}}\dots\mu_{v_{n}}}%

\global\long\def\pesm{\psi^{[\eta]}\!,\en,\mu,\sigma}%
\global\long\def\sfix#1{\sigma_{\mathrm{fix}}^{#1}}%

\global\long\def\any{\diamond}%
\global\long\def\anyother{\triangleright}%

\global\long\def\Opunc{\Omega^{\boxcircle}}%
\global\long\def\Onopunc{\Omega^{\boxempty}}%
\global\long\def\Oother{\Omega'}%

\global\long\def\const{\mathrm{const}}%

\global\long\def\P{\mathsf{\mathbb{P}}}%
 
\global\long\def\E{\mathsf{\mathbb{E}}}%
 
\global\long\def\sF{\mathcal{F}}%
 
\global\long\def\ind{\mathbb{I}}%

\global\long\def\R{\mathbb{R}}%
 
\global\long\def\Z{\mathbb{Z}}%
 
\global\long\def\N{\mathbb{N}}%
 
\global\long\def\Q{\mathbb{Q}}%

\global\long\def\C{\mathbb{C}}%
 
\global\long\def\Rsphere{\overline{\C}}%
 
\global\long\def\re{\Re\mathfrak{e}\,}%
 
\global\long\def\im{\Im\mathfrak{m}\,}%
 
\global\long\def\arg{\mathrm{arg}}%
 
\global\long\def\i{\mathfrak{i}}%
\global\long\def\eps{\varepsilon}%
\global\long\def\lamb{\lambda}%
\global\long\def\lambb{\bar{\lambda}}%

\global\long\def\D{\mathbb{D}}%

\global\long\def\HH{\mathbb{H}}%
 
\global\long\def\M{\mathbb{\mathcal{M}}}%
 
\global\long\def\Deps{\Omega_{\eps}}%
\global\long\def\DD{\hat{\Omega}}%

\global\long\def\dist{\mathrm{dist}}%
 
\global\long\def\reg{\mathrm{reg}}%

\global\long\def\half{\frac{1}{2}}%
 
\global\long\def\sgn{\mathrm{sgn}}%

\global\long\def\bdry{\partial}%
 
\global\long\def\cl#1{\overline{#1}}%

\global\long\def\diam{\mathrm{diam}}%
\global\long\def\corr#1{\overline{#1}}%
\global\long\def\Corr#1#2{\E_{#1}(#2)}%

\global\long\def\corr#1#2#3{\langle#1\rangle_{#2,#3}}%
\global\long\def\pa{\partial}%
\global\long\def\tto#1{\stackrel{#1}{\longrightarrow}}%

\global\long\def\res{\text{res}}%

\global\long\def\u{u}%
\global\long\def\v{v}%
 
\global\long\def\z{z}%
\global\long\def\mod{\;\mathrm{mod\;}}%
\global\long\def\wind{\mathrm{wind}}%
\global\long\def\vz{z^{\bullet}}%
\global\long\def\fz{z^{\circ}}%
\global\long\def\CF{\mathfrak{C}}%
\global\long\def\RPF{\text{I}}%

\global\long\def\FFS#1#2#3{F_{#2}^{#3}(#1)}%
\global\long\def\ds#1{\eta_{#1}}%
\global\long\def\dbar{\overline{\partial}}%
\global\long\def\dual#1{\left(#1\right)^{*}}%

\global\long\def\Eod{\Od_{+}}%
\global\long\def\GammaR{\Gamma_{\R}}%
\global\long\def\GammaiR{\Gamma_{i\R}}%

\global\long\def\ccor#1{\left\langle#1\right\rangle}%
\global\long\def\bar#1{\overline{#1}}%
\global\long\def\anypsi{\psi^{*}}%
\global\long\def\Op{\mathcal{O}}%
\global\long\def\crossing{c}%
\global\long\def\Fdual{F^{\mathrm{dual}}}%

\global\long\def\zz{z_{1},z_{2}}%
\newcommandx\norm[1][usedefault, addprefix=\global, 1=]{n_{#1}}%
\global\long\def\Ocvrc{\Omega_{\cvr}}%
\global\long\def\crad{\text{crad}}%
\global\long\def\feta{f^{[\eta]}}%
\global\long\def\T{\mathbb{T}}%
\global\long\def\reg{\sharp}%
\global\long\def\regg{*}%
\global\long\def\coefA{\mathcal{A}}%

\global\long\def\pbar#1{#1^{\star}}%
\global\long\def\fdag{f^{\star}}%
\global\long\def\CorrO#1{\langle#1\rangle_{\Omega}}%
\global\long\def\formL{\mathcal{L}}%
\global\long\def\appe{\approx_{\eps}}%
\global\long\def\jayhat{\hat{j}}%
\global\long\def\sqr#1{(#1)^{\frac{1}{2}}}%

\global\long\def\Ob{\mathcal{O}}%
\global\long\def\Dom{\Omega}%
\global\long\def\Domepsi{\Omega_{\eps_i}}%
\global\long\def\Domp{\Omega\setminus\{v_1,\dots,v_n\}}
\global\long\def\Domr{\Omega^\star}%
\global\long\def\Surf{\widehat \Omega}%
\global\long\def\Surfeps{{{\widehat \Omega}_\eps}}%
\global\long\def\Surfo{\widehat \Omega_0}%
\global\long\def\bpoints{\{b_1,\dots,b_{2k}\}}%
\global\long\def\SKDom{K_{\Omega_\eps}}
\global\long\def\SKDr{K^\star_{\Omega_\eps}}
\global\long\def\SKDomLim{K_0}
\global\long\def\SKDrLim{K^\star_{0}}
\global\long\def\SpStr{\mathfrak{c}}
\global\long\def\Obs#1{\mathcal{O}\left[#1\right]}%
\global\long\def\hOp{\hat{\mathcal{O}}}
\global\long\def\Ab{\mathcal{U}}

\global\long\def\ds{\mathrm{ds}}
\global\long\def\ns{\mathrm{ns}}
\global\long\def\cs{\mathrm{cs}}

\global\long\def\nord#1{:#1:}
\global\long\def\Nexp#1#2{:e^{#1\Phi(#2)}:}
\global\long\def\nexp#1#2{:e^{#1\varphi(#2)}:}

\global\long\def\bsigma#1{\sqrt{2}:\cos \left(\frac{\sqrt{2}}{2}\Phi(#1)\right):}
\global\long\def\bmu#1{\sqrt{2}:\sin \left(\frac{\sqrt{2}}{2}\Phi(#1)\right):}
\global\long\def\ben#1{-\frac12 :|\nabla \Phi(#1)|^2:}
\global\long\def\bpsi#1{2\sqrt{2}\i\partial \Phi(#1)}
\global\long\def\bpstar#1{-2\sqrt{2}\i\bar \partial \Phi(#1)}
\global\long\def\bpsipstar#1{e^{-\frac{1}{2}g_\Omega(#1,#1)}:\sin 2\sqrt{\pi}\Phi(#1):}

\global\long\def\setS{\mathcal{S}}
\global\long\def\setSn{\mathcal{S}_0}
\global\long\def\gap{\alpha}
\global\long\def\inv{\iota}

\title[Bosonization of the critical Ising model]{Bosonization of primary fields for the critical Ising model on multiply connected planar domains}
\author{Baran Bayraktaroglu, Konstantin Izyurov, Tuomas Virtanen And Christian
Webb}
\begin{abstract}
We prove bosonization identities for the scaling limits of the critical Ising correlations in finitely-connected planar domains, expressing those in terms of correlations of the compactified Gaussian free field. This, in particular, yields explicit expressions for the Ising correlations in terms of domain's period matrix, Green's function, harmonic measures of boundary components and arcs, or alternatively, Abelian differentials on the Schottky double. 

Our proof is based on a limiting version of a classical identity due to D.~Hejhal and J.~Fay relating Szeg\H{o} kernels and Abelian differentials on Riemann surfaces, and a systematic use of operator product expansions both for the Ising and the bosonic correlations.  
\end{abstract}
\maketitle
\tableofcontents{}

\section{Introduction}

The classical two-dimensional Ising model is one of most studied models of mathematical physics. Its importance is due to it being one of the simplest models where complicated phenomena such as phase transitions can be studied rigorously. The two-dimensional Ising model is even more special in that since the work of Onsager and Kaufman \cite{Onsager, OK} it has been regarded as \emph{exactly solvable}. 

What one means by that is that relevant quantities in the model can be calculated explicitly. Of course, the precise meaning of ``exact solvability" then depends on what these quantities are and what qualifies as ``explicit". The early work of Onsager and Kaufmann succeeded in calculating the free energy (per lattice site) of the model. Later on, similar or other combinatorial methods were used to calculate other thermodynamically relevant quantities, and eventually calculate the \emph{scaling exponents}, such as the famous $\frac18$ magnetization exponent of Onsager and Yang \cite{Yang}. 

Beyond the scaling exponents, \emph{correlation functions} 
carry even more refined information on the model; from a physicist's point of view, their computation yields a complete understanding of the model. In the case of the Ising model, a lot of progress was made on computing these correlation functions in various regimes \cite{MW,Palmer}, but the combinatorial methods were mostly limited simple geometries such as the full plane or a torus, and the critical case remained notoriously difficult. Underlying Onsager's solution of the Ising model is the fact that it combinatorially corresponds to a \emph{free fermionic theory}. However, the most natural observables in the theory are not fermions but spins. Expressing those in terms of fermions and analysing the resulting expressions proved to be challenging.

In a complete change of perspective, Belavin, Polyakov and Zamolodchikov \cite{BPZ} postulated that at criticality, the Ising model must have a scaling limit described by a conformal field theory (CFT). What's more, they identified the relevant CFT as one of the \emph{minimal models}. These are the simplest of CFTs, they are exactly solvable, in the sense that their correlation functions can be explicitly calculated. In a concrete application of their theory (a tiny one compared to the scope of the theory itself), Belavin, Polyakov and Zamolodchikov computed the 4-point spin correlation function in the full plane.

In fact, the CFT approach gives several different prescriptions on how to calculate the scaling limit of the Ising correlations, yielding \emph{different} results of varying degree of ``explicitness"; for a curious example of this phenomenon, see the discussion around \cite[Eq. 3.42--3.43]{Felder}. Part of the prediction is that the correlations must be conformally covariant, see \eqref{eq: conf_cov} below. Moreover, the relevant CFT being a minimal model implies that they satisfy second-order partial differential equations, known as the BPZ equations. In the four-point case, these equations can be reduced by conformal covariance to a (hypergeometric) ODE. In general, such reduction is not possible, and clever methods were developed for solving BPZ equations \cite{DotsenkoFateev, Felder} yielding solutions in terms of contour integrals. Whether BPZ equations can be always solved by those methods is still an active topic of research \cite{Flores-Kleban, KytolaPeltola, FloresPeltola, Sussman}.   

Another way of computing the correlation functions, that can be applied to all minimal models, is the conformal bootstrap approach. In the case of the Ising model, the method that has eventually lead to the nicest explicit formulae for the correlations in the full plane and on a torus \cite{DiFSZ} and in the half-plane \cite{BG} is \emph{bosonization}; see also \cite[Chapter 12]{DFMS}. The meaning of bosonization is that the correlation functions of the critical 2d Ising model can be related to correlation functions of a free bosonic field theory. In the case of the full plane and half-plane (with homogeneous boundary conditions), the relevant theory is just a \emph{Gaussian free field}, whose correlations are explicit in the strongest possible sense -- thus so are the Ising correlations.  

From the mathematical standpoint, most of the CFT arguments are still lacking rigorous grounds. In particular, how the Ising model (as a critical lattice model) leads to a minimal model of CFT is still not completely understood; see \cite{HonKytVik, AmKytetc, AmKytetc2} for steps in this direction. The situation is somewhat better on the level of correlation functions, as techniques of discrete complex analysis allowed to prove the existence, the conformal covariance, and universality of their scaling limits. In \cite{CHI1,CHI2}, it was proven that scaling limits of the correlations (of primary fields) in a critical Ising model on finitely connected planar domains exist and are conformally covariant. Part of the proof is a description of the limits, which, as the reader might already have anticipated, is different from the descriptions obtained by other methods. Namely, in \cite{CHI2}, the scaling limits of correlations are expressed in terms of solutions to Riemann boundary value problems. 

In nice geometries, such as a half-plane or an annulus, these boundary value problems can be reduced to solving linear systems of equations, whose coefficients are explicit algebraic (quadratic irrational) or elliptic functions, and whose size grows linearly with the number of points considered. The spin or spin-disorder correlation functions (which is the most difficult case) are then given by exponentials of integrals of the solutions to those systems. Although in some sense, this may count as an ``explicit" answer, it is not quite as explicit as the formulae in \cite{DiFSZ,BG}. In some cases, these expressions can be then simplified -- this has been done in the simply-connected case and for spin one-point function in the doubly connected case, see \cite[Theorem 1.2 and Appendix A2]{CHI1} and \cite[Section 7]{CHI2}; of course, in the simply connected case, the result is in agreement with the bosonization predictions, see \cite[Remark 7.3]{CHI2}.

As explained in \cite[Section 7]{CHI2}, once the spin-disorder correlations are computed, other correlations of primary fields can, in principle, be derived from those. However, in practice the procedure outlined in \cite[Section 7]{CHI2} quickly gets quite messy; what's more, there's more than one way to write down any given correlation, which are not obviously equivalent. As an extreme example of that, the prescription for the fermionic correlation  $\ccor{\psi_{z_1}\dots\psi_{z_{2n}}}_{\C}$ is to consider the leading term of the asymptotics of a $4n$-point spin-disorder correlation $\ccor{\sigma_{z_1}\dots\sigma_{z_{2n}}\mu_{u_1}\dots\mu_{u_{2n}}}_\C$ as $u_i\to z_i$; it is far from straightforward that the result is given simply by $\ccor{\psi_{z_1}\dots\psi_{z_{2n}}}_{\C}=\Pf\left[\frac{2}{z_i-z_j}\right]$.

The goal of this article is to provide a systematic way of writing down explicit formulae for the scaling limits of the critical Ising correlations in finitely-connected domains, as computed in \cite{CHI2}, by proving the bosonization prescriptions. Indeed, in Theorem \ref{th:main} below, we show that the \emph{squares} of correlation functions of primary fields of the Ising model on multiply connected planar domains are equal to certain correlations for the \emph{compactified free field}. Those, in their turn, are given by explicit expressions involving the domain's period matrix, Green's function, harmonic measures of boundary components and boundary arcs, and derivatives thereof. To be completely precise, the expressions involve exponentials of Green's function, polynomials in its derivatives, and multivariate theta functions of the harmonic measures (a.k.a. the Abel-Jacobi map). 

We expect that the bosonization procedure can be carried out even in greater generality, such as general Riemann surfaces and even for the off-critical Ising model, but we do not discuss these generalizations further here.

Concerning mathematically rigorous analysis of bosonization in the setting of the Ising model, we mention the results on the exact bosonization on the lattice \cite{Dubedat, Hugoetc}, see the discussion after Theorem \ref{th:main}, and \cite{JSW}, where bosonization of the spin field of the Ising model in a simply connected domain is studied from the point of view of random generalized functions. For bosonizing free massless fermions (which are also relevant for the critical Ising model), there is of course also the classical article \cite{ML}.

\subsection{The setup and the main result}
\label{sec:isingintro}
For the Ising correlations, we will follow the setup and notation of \cite{CHI2}. Let $\Omega\subset \R^2$ denote a bounded finitely connected planar domain 
 equipped with \emph{boundary conditions}, defined by a subdivision of $\pa\Omega$ into two subsets $\fixed$ and $\free$, each consisting of finitely many open arcs (the points $b_1,\dots,b_{2k}\in \pa \Omega$ separating them belong to neither of them). We will assume that the notation $\Omega$ incorporates both the domain and the boundary conditions. See Figure \ref{fig:freewired} for an illustration of this setting.

In \cite[Section 5.2]{CHI2}, the following critical Ising correlation functions on $\Omega$ were defined:
\begin{equation}
\label{eq: ccor}
\ccor{\Op_{z_1}\cdot\dots\cdot\Op_{z_N}}_{\Omega},
\end{equation}
where $z_i$ are distinct points in $\Omega$, and $\Op_{z_i}\in\{\sigma_{z_i},\mu_{z_i},\en_{z_i},\psi_{z_i},\psi^\star_{z_i}\}$ are (labels for) the \emph{primary fields} in the Ising model, called spin, disorder, energy, fermion, and conjugate fermion, respectively. (We will quite often drop the domain $\Omega$ from the notation.) Although the term ``primary field" comes from Conformal Field Theory literature, we do not claim that the correlations \eqref{eq: ccor} are actually correlations in a CFT constructed in any rigorous sense; for the purpose of this paper, this is simply a collection of functions (more precisely, two-valued functions defined up to sign) of several variables in $\Omega$. The justification for calling them ``correlation functions" stems from \cite[Theorem 1.2]{CHI2}, which shows that they are limits of suitably renormalized observables in the critical Ising model on fine mesh discretizations $\Od$ of $\Omega$, with free boundary conditions on $\free$ and locally monochromatic ones on $\fixed$, i.e., the spins are conditioned to be the same on all wired boundary arcs on the same connected component of $\pa\Omega$.\footnote{To be precise, \cite[Theorem 1.2]{CHI2}, features ``real fermions'' $\psi_z^{[\eta]}$ with $\eta\in \C$, which can be expressed in terms of $\psi$ and $\psi^*$ in the following manner (see \cite[(5.8)]{CHI2}): for any $\Op$ which is a chain of primaries as in \eqref{eq: ccor}
\[
\langle \psi_z^{[\eta]}\Op\rangle_\Omega=\frac{1}{2}\eta\langle \psi_z^* \Op\rangle_\Omega+\frac{1}{2}\bar \eta\langle \psi_z \Op\rangle_\Omega. 
\]
By varying $\eta$, one can recover the correlation functions \eqref{eq: ccor} from \cite[Theorem 1.2]{CHI2}.
}  
In \cite[Theorem 5.20]{CHI2} it was shown that the correlation functions \eqref{eq: ccor} obey the following simple covariance rule under conformal maps:
\begin{equation}
\label{eq: conf_cov}
\ccor{\Op_{z_1}\cdot\dots\cdot\Op_{z_N}}_{\Omega}=\prod^N_{i=1}\Psi'(z_i)^{\Delta_i} \prod^N_{i=1}\overline{\Psi'(z_i)}^{\Delta'_i}\ccor{\Op_{\Psi(z_1)}\cdot\dots\cdot\Op_{\Psi(z_N)}}_{\Psi(\Omega)},
\end{equation}
with the \emph{conformal weights} given by the following table:

\medskip
\renewcommand{\arraystretch}{1.2}
\begin{center}
\begin{tabular}{|c|c|c|}
\hline
$\Op_{z_i}$&$\Delta_i$&$\Delta'_i$\\
$\sigma_{z_i}$ & $\frac{1}{16}$ & $\frac{1}{16}$ \\
$\mu_{z_i}$ & $\frac{1}{16}$ & $\frac{1}{16}$ \\
$\eps_{z_i}$ & $\frac{1}{2}$ & $\frac{1}{2}$ \\
$\psi_{z_i}$ & $\frac{1}{2}$ & $0$ \\
$\psi^\star_{z_i}$ & $0$ & $\frac{1}{2}$ \\
\hline
\end{tabular}
\end{center}
\medskip

Because of \eqref{eq: conf_cov}, we may restrict our attention to domains whose boundary consists of disjoint analytic Jordan curves, or even \emph{circular} domains whose boundaries consist of disjoint circles, as any domain can be mapped to one of those, see e.g. \cite[Theorem 7.9]{Conway}. Thus, in the simply connected (respectively, doubly connected) case, one only needs to compute the correlations in the full plane $\C$ and the upper half-plane $\HH$ (respectively, annuli). In higher connectivity, it will be sometimes convenient to assume the domain to be circular, but this is not really essential for our arguments.

\begin{figure}
    \centering
\includegraphics{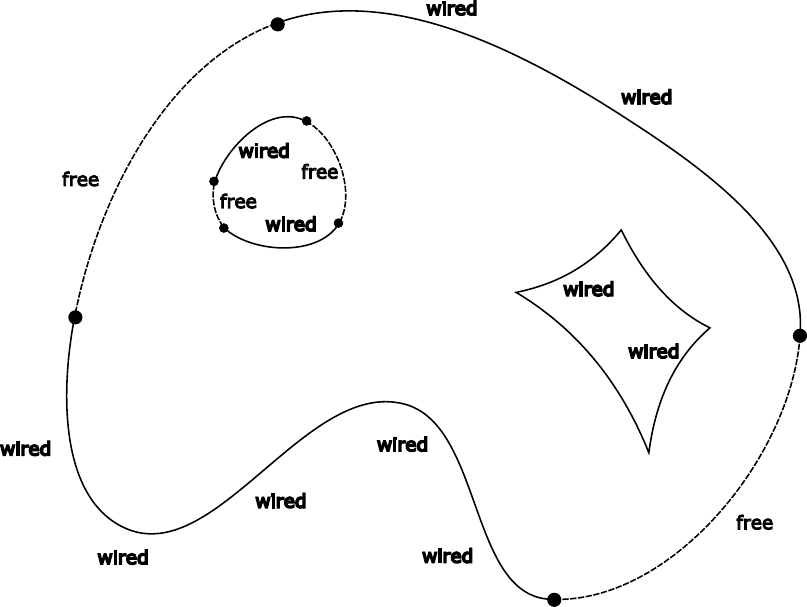}
    \caption{An illustration of a triply connected planar domain with boundary conditions. The dashed part of the boundary forms the set $\{\tt{free}\}$ and solid parts form the set $\{\tt{wired}\}$. The points where the boundary conditions changes from fixed to free are marked as dots.}
    \label{fig:freewired}
\end{figure}


As indicated earlier, the main goal of this article is to relate the correlation functions \eqref{eq: ccor} to correlations in a free bosonic theory, namely, that of a \emph{compactified free field}. The latter is defined as a sum of two \emph{independent} components, 
\[
\Phi=\varphi+\xi.
\]
Here $\varphi$ is the \emph{Gaussian free field} in $\Omega$ with Dirichlet boundary conditions, i.e., the centered Gaussian field in $\Omega$ with covariance given by the Green's function $G_\Omega(x,y)=\left(\frac{-\Delta}{2\pi}\right)^{-1}$, see e.g. \cite{Sheffield} and Section \ref{sec: bosonic}. The \emph{instanton component} $\xi$ is a harmonic function in $\Omega$ picked at random from the countable set of all harmonic functions with the following boundary conditions:
\begin{itemize}
\item $\xi(x)\in \sqrt{2}\pi\Z$ on $\fixed$, $\xi(x)\in \sqrt2\pi(\Z+\frac12)$ on $\free$;
\item $\xi$ is constant on each wired and each free arc, moreover, it assumes the same values along $\fixed$ arcs in the same boundary component. 
\item the value of $\xi$ on adjacent fixed and free arcs differ by $\pm \frac{\sqrt{2}}{2}\pi$
\item one of the boundary arcs is marked, and the value of $\xi$ is fixed there (e.g., to $0$ if the arc is wired or to $\frac{\sqrt{2}}{2}\pi$ if the arc is free).
\end{itemize}
The probability to choose a particular $\xi$ is proportional to 
\begin{equation}
\label{eq: proba_xi}
\exp\left(-\frac{1}{4\pi}\langle\nabla \xi,\nabla \xi\rangle_{\Omega,\mathrm{reg}}\right),
\end{equation}
where $\langle\nabla \xi,\nabla \xi\rangle_{\Omega,\mathrm{reg}}$ is the \emph{regularized Dirichlet energy} of $\xi$, see Section \ref{sec: bosonic} below, cf. \cite{Dubedat09}. The need for regularization comes from divergence of the Dirichlet energy at the jump points $b_1,\dots,b_{2k}$; in particular, if there are none, then it is the usual Dirichlet energy $\int_\Omega|\nabla \xi|^2.$ We note that the measure \eqref{eq: proba_xi} and all the observables we will consider will be invariant under a shift of $\xi$ by $\sqrt2 \pi n$, $n\in\Z$, which is why the choice of a marked arc above is unimportant. 

Let us motivate the term ``compactified free field", cf. \cite[Section 2]{Dubedat15}; the arguments in this paragraph are heuristic. Assume for simplicity that there are no free arcs, and recall that the Gaussian free field is a standard Gaussian over the Sobolev space $H_{0}^1(\Omega)$ with zero Dirichlet boundary conditions, i.e., it can be thought of as being sampled from all functions with probability proportional to $\exp\left(-\frac{1}{4\pi}\int_\Omega |\nabla \varphi|^2\right)$. Note that integration by parts shows that the harmonic functions $\xi$ are orthogonal to $H_{0}^1(\Omega)$, i.e., we can write 
$$
\int_\Omega |\nabla \Phi|^2=\int_\Omega |\nabla (\varphi+\xi)|^2=\int_\Omega|\nabla\varphi|^2+\int_\Omega|\nabla\xi|^2.
$$
Therefore, one can think of $\Phi$ as being sampled from all possible functions with the boundary conditions as above, with probability proportional to $\exp\left(-\frac{1}{4\pi}\int_\Omega |\nabla \Phi|^2\right)$. Since we don't care about shifts by integer multiples of $\sqrt2 \pi$, we can consider instead $\Psi=\frac{1}{\sqrt{2}}\exp(\sqrt2 \i \Phi)$, which is a ``field" with values on the circle $\frac{1}{\sqrt2}\T$, sampled with probability proportional to $\exp\left(-\frac{1}{4\pi}\int_\Omega |\nabla \Psi|^2\right)$, since $|\nabla \Psi|\equiv|\nabla \Phi|$. The boundary conditions simply become $\Psi\equiv \frac{1}{\sqrt2}$ on $\pa \Omega$.

Our main result is as follows:
\begin{thm}
We have the identity 
\label{th:main}
\begin{equation}
\label{eq: Thm1}
\ccor{\Op_{z_1}\dots\Op_{z_N}}^2_{\Omega}=\ccor{\hOp_{z_1}\dots\hOp_{z_N}}_{\Omega}.
\end{equation}
whenever both $|\{i:\Op_{z_i}\in\{\sigma_{z_i},\psi_{z_i},\psi^\star_{z_i}\}\}|$ and $|\{i:\Op_{z_i}\in\{\mu_{z_i},\psi_{z_i},\psi^\star_{z_i}\}\}|$ are even, and each pair $\Op_{z_i}$, $\hOp_{z_i}$  fits into a row of the following table:
\medskip
 
\begin{center}
\begin{tabular}{|c|c|c|}
\hline
$\Op_{z_i}$&$\hOp_{z_i}$\\
\hline
$\sigma_{z_i}$&$\bsigma{z_i}$\\
$\mu_{z_i}$&$\bmu{z_i}$\\
$\en_{z_i}$&$\ben{z_i}$\\
$\psi_{z_i}$&$\bpsi{z_i}$\\
$\psi^\star_{z_i}$&$\bpstar{z_i}$\\
\hline
\end{tabular}
\end{center}
\end{thm}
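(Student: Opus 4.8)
The plan is to prove \eqref{eq: Thm1} by recognizing that, after squaring, the two sides obey the same system of constraints, and then matching them on a small set of base cases. Squaring \eqref{eq: conf_cov} doubles the weights $\Delta_i,\Delta_i'$, and the doubled weights are exactly the conformal weights of the bosonic fields $\hOp_{z_i}$ (for instance $2\cdot\tfrac1{16}=\tfrac18$ for $\bsigma{z_i}$ and $2\cdot\tfrac12=1$ for $\bpsi{z_i}$), so both sides of \eqref{eq: Thm1} transform identically under conformal maps; by \eqref{eq: conf_cov} I may therefore reduce to circular domains and pass to the Schottky double $\Surf$, where Szeg\H{o} kernels, Abelian differentials and theta functions live. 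The organizing principle is that each side is determined by (i) this covariance, (ii) holomorphicity (resp. antiholomorphicity) in the fermionic insertions $\psi,\psi^\star$ and $\partial\Phi,\bar\partial\Phi$, (iii) the operator product expansions controlling collisions of insertions, and (iv) a normalization fixed in a degenerate limit.

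\emph{The fermionic base case.} I would first treat the purely fermionic correlations. In the scaling limit of \cite{CHI2} these are Pfaffians of the fermionic two-point function, which is a real section of the Szeg\H{o} kernel $S(z,w)$ on $\Surf$ for the spin structure selected by the boundary conditions on $\fixed$ and $\free$. On the bosonic side, Wick's theorem expresses $\ccor{\bpsi{z_1}\dots}_\Omega$ through the two-point function $\ccor{\bpsi{z}\bpsi{w}}_\Omega$, whose Gaussian part is the Bergman kernel $(2\sqrt2\,\i)^2\partial_z\partial_w G_\Omega(z,w)$ and whose remaining part comes from the mean field $\partial\xi$ summed over the instanton sectors. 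The crux is the pointwise identity
\[
S(z,w)^2=(2\sqrt2\,\i)^2\,\partial_z\partial_w G_\Omega(z,w)+(\text{instanton corrections}),
\]
which is the limiting form of the Hejhal--Fay identity expressing the square of a Szeg\H{o} kernel through the Bergman kernel and second logarithmic derivatives of theta functions. The multipoint identity then follows from the quadratic (Fay) identities satisfied by $S$: these are precisely what convert the determinant $\det\!\big[\ccor{\psi_{z_i}\psi_{z_j}}\big]=\ccor{\psi_{z_1}\dots}_\Omega^2$ into the bosonic sum over pairings of $\ccor{\bpsi{z_i}\bpsi{z_j}}_\Omega$, with the instanton/theta-function terms supplying the remaining contributions when $\Omega$ is multiply connected. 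The energy correlations come for free by fusing $\psi_z\psi^\star_w$ as $w\to z$ and matching $\ben{z}$ against the regularized product of $\partial\Phi$ and $\bar\partial\Phi$.

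\emph{Spins and disorders by OPE propagation.} For correlations containing $\sigma$ and $\mu$ I would use the mechanism of \cite[Section 7]{CHI2}: the logarithmic derivative $\partial_{z_j}\log\ccor{\sigma_{z_1}\dots}_\Omega$ equals a coefficient $\coefA_j$ read off from the expansion of a fermionic observable near $z_j$. The vertex operators $\bsigma{z_j}$ and $\bmu{z_j}$ obey the matching operator product expansion: $\partial\Phi(z)\,{:}e^{\pm\frac{\sqrt2}{2}\i\Phi(w)}{:}$ has a simple pole whose residue reproduces the same coefficient. Hence $\partial_{z_j}\log\ccor{\sigma_{z_1}\dots}_\Omega^2$ and $\partial_{z_j}\log\ccor{\bsigma{z_1}\dots}_\Omega$ coincide by the fermionic identity already established; integrating, the two sides of \eqref{eq: Thm1} agree up to a position-independent constant and a choice of instanton sector. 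These are then pinned down by a degeneration argument—sending two insertions together and invoking the fusions $\sigma_z\mu_w\to\psi,\psi^\star$ and $\sigma_z\sigma_w\to 1$, or comparing with the simply connected normalization of \cite{CHI1} through \cite[Remark 7.3]{CHI2}.

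\emph{The main obstacle} is precisely the spin--disorder case on a multiply connected domain. The OPE/integration argument is local and therefore controls the correlations only up to the global data: the multivaluedness of $\sigma,\mu$ around the handles and boundary components, and the discrete instanton sectors summed in the compactified field. The hard part is to show that the discrete sum defining the bosonic side reproduces, sector by sector, the monodromy and spin-structure data entering the Riemann boundary value problem of \cite{CHI2}, and that the regularized Dirichlet energy \eqref{eq: proba_xi} matches the Ising normalization including its additive regularization constants. This is exactly where the theta-function form of the limiting Hejhal--Fay identity is indispensable, and it is also where the signs introduced by squaring—reflecting the two-valued nature of the Ising correlations—must be tracked carefully.
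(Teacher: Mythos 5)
Your outline inverts the paper's logic, and in doing so leaves the crucial step unproven. The OPE-and-integration mechanism you invoke for spins and disorders requires the fermionic two-point function \emph{in the presence of the $\sigma/\mu$ insertions}: what is needed is exactly \eqref{eq: for_log_der}, the identity between $\bigl(\ccor{\psi_z\psi_w\Op_{v_1}\dots\Op_{v_n}}/\ccor{\Op_{v_1}\dots\Op_{v_n}}\bigr)^2$ and the bosonic current correlation with vertex insertions. Your ``fermionic base case'' concerns pure fermion correlations with no $\sigma,\mu$ present, and the sentence ``coincide by the fermionic identity already established'' silently substitutes the former for the latter. The pure-fermion identity does not imply the with-insertion one: in the presence of spins and disorders the fermion acquires $-1$ monodromy around each $v_i$, and this ramification at \emph{interior} points cannot be realized by any choice of spin structure on the Schottky double $\Surf$ itself --- spin line bundles on $\Surf$ only prescribe signs along homology classes, and likewise they cannot produce the square-root behaviour at free-arc endpoints $b_j$ (so even your base case as stated is only correct when $k=0$). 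The idea missing from your proposal is the surface $\Surfeps$ of Definition \ref{def: surfeps}: one attaches $n+k$ degenerating handles at the $v_i$ and at the free-arc endpoints, chooses the half-integer characteristic of Definition \ref{def: marking} whose $M$-entries distinguish $\sigma$ from $\mu$, applies the Hejhal--Fay identity \eqref{eq: bosonization} on $\Surfeps$, and then proves convergence of both sides under pinching: Lemma \ref{lem: LHS_limit} identifies the limiting Szeg\H{o} kernel with the Ising ratio via a Riemann boundary value problem and its uniqueness, while Lemmas \ref{limitlemma} and \ref{lem: gausscorr} identify the limiting theta sums, sector by sector, with the instanton average of the compactified field. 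You correctly name this sector-matching as ``the main obstacle'' and assert that Hejhal--Fay ``is indispensable'', but you give no mechanism for it; that mechanism (Sections \ref{sec: pinching}) is the bulk of the paper, not a loose end.

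Two secondary points. First, your base case, even if completed, is heavier than necessary: converting $\mathrm{Pf}[S]^2$ into the bosonic pairing sum plus instanton corrections requires multipoint Fay-type determinant identities with theta corrections, which the paper deliberately avoids --- it proves the spin--disorder case (Theorem \ref{th:ds}) first and then obtains fermions and energies \emph{from} it by the fusions $\mu_{z_1}\sigma_{z_2}\to\psi,\psi^\star$ and $\psi_z\psi^\star_w\to\en$, so that only the two-point identity \eqref{eq: for_log_der} is ever needed (the introduction notes the squared-Pfaffian identity is a by-product, not an input). Second, the pieces of your plan that do align with the paper --- reduction to circular domains by conformal covariance, matching logarithmic derivatives as in \eqref{eq: log_der}, fixing the multiplicative constant by $\sigma_z\sigma_w\to 1$ fusion --- are carried out in Sections \ref{sec:ff} and \ref{sec: conclusion}, where the paper additionally needs a real-analyticity argument to work around possible vanishing of the denominators in \eqref{eq: for_log_der}, a point your integration step also has to address.
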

\medskip

\begin{rem}
    If any of the parity conditions is not satisfied, then the correlation on the left-hand side of \eqref{eq: Thm1} is zero. The correlation on the right-hand side may not be zero, and in that case it corresponds to an Ising correlation with different boundary conditions, see Section \ref{sec: extensions}.
\end{rem}

Several remarks are in order. First, as is the case with the Ising correlations, we \emph{do not} view the correlations in the right-hand side of \eqref{eq: Thm1} as expectations of actual random variables, or correlations in a quantum or statistical field theory constructed in any rigorous sense, although results in this direction covering some of our observables exist \cite{KM,JSW}. The issue, perhaps familiar to the reader, is that $\Phi$ is too rough for the objects like $\cos\left(\frac{\sqrt{2}}{2}\Phi(z)\right)$ or $|\nabla\Phi(z)|^2$ to make sense as random variables (in the latter case, not even when averaged over a test function). Their correlations, however, can be defined by a standard procedure of first making $\Phi$ into a smooth field (by convolution with a mollifier), and then passing to a limit. In some of the cases, for the limit to exist, we need to renormalize the field (by a multiplicative factor for $\sin$ and $\cos$, and by an additive term for $|\nabla\Phi(z)|^2$); the colons $:\cdot:$ indicate that this procedure has been applied. This corresponds to normal (or Wick) ordering in quantum field theory, although we will not use this connection in earnest. At the end of the day, the right-hand side of \eqref{eq: Thm1} is given by a concrete formula in terms of the Green's function and harmonic measures in $\Omega$, see Definition \ref{def: bos_corr}, and this formula is the only thing we need to know about it.

Second, there is a combinatorial explanation underlying Theorem \ref{th:main}. A squared correlation function as in \eqref{eq: Thm1} can be interpreted as a correlation $\ccor{\Op_{z_1}\tilde{\Op}_{z_1}\dots\Op_{z_N}\tilde{\Op}_{z_N}}_{\Omega}$, where $\Op_{z_i}$ and $\tilde{\Op}_{z_i}$ are the corresponding observables in \emph{two independent copies} of the Ising model. It has been shown \cite{Dubedat, Hugoetc} how, already at the discrete level, two such independent copies correspond to a dimer model whose height function is known to converge to the Gaussian free field; moreover, spin and disorder correlations have been shown to correspond, again on the discrete level, to so-called \emph{electric correlators} in the height function picture. In \cite{Dubedat15}, Dub\'edat proved convergence of these electric correlators to the corresponding free field observables, which, combined with \cite{Dubedat}, lead to an alternative proof of convergence of the spin and disorder Ising correlations in the case $\Omega=\C$, which manifestly gives \eqref{eq: Thm1}; see also \cite{Hugoetc}. Carrying out this program more generally could also lead to a proof of Theorem \ref{th:main}. The convergence of height functions to compactified GFF was recently proven in an independent work of Basok \cite{Basok}, although it does not seem that his result fully covers the boundary conditions considered in this paper. In addition, going from convergence of height function to the convergence of all of the required observables involves a lot of work. To sum up, even if the program just described is feasible, a purely analytic proof of Theorem \ref{th:main}, building on the already existing convergence results of \cite{CHI2}, is also of interest.

Third, the simplest case of \eqref{eq: Thm1}, corresponding to $\Omega$ being the upper half-plane $\HH$ and  $\Op_{z_i}=\psi_{z_i}$ for all $i$, is already interesting; indeed, it states that $$\ccor{\psi_{z_1}\dots\psi_{z_{2M}}}^2_{\Omega}=(-8)^M\ccor{\pa\varphi(z_1)\dots\pa
\varphi(z_{2M})}_{\Omega}.$$
The fermionic correlators satisfy the fermionic Wick's rule, i.e., they are given by a Pfaffian of two-point correlations, whereas $\pa\varphi$ are Gaussians and hence satisfy the bosonic Wick's rule, i.e., they are given by what is sometimes called a Hafnian. All in all, this leads to the well known identity \cite[Eq. 12.53]{DFMS}:
$$
\left(\sum_{p}(-1)^{i(p)}\prod_{{a,b}\in p}\frac{1}{z_a-z_{b}}\right)^2=\sum_{p}\prod_{{a,b}\in p}\frac{1}{(z_a-z_{b})^2},
$$
where the sums are over all pairings (i.e, partitions of $\{1,\dots,2M\}$ into two-element sets) and the sign $i(p)$ is the parity of the pairing (equal to the number of intersections if $p$ is drawn as a planar link pattern). This identity can be derived from the Cauchy determinant; curiously, we do not use this identity, or its analogs for domains of higher connectivity, yielding an independent proof.

\subsection{Outline of the proof}

Our proof of Theorem \ref{th:main} relies on the following particular case, which we state as a separate theorem:

\begin{thm}\label{th:ds}
Theorem \ref{th:main} holds true in the particular case when for every $i$, either $\Op_{z_i}=\sigma_{z_i}$, or $\Op_{z_i}=\mu_{z_i}$. (We are still assuming that $|\{i:\Op_{z_i}=\sigma_{z_i}\}|$ and $|\{i:\Op_{z_i}=\mu_{z_i}\}|$ are both even.)
\end{thm}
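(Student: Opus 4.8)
The plan is to prove the identity by matching logarithmic derivatives in the point positions, together with a degeneration (fusion) argument that pins down the overall constant. On a fixed connected component of the configuration space $\{z_1,\dots,z_N \text{ distinct}\}\subset\Omega^N$, both sides are smooth, non-vanishing functions of the $z_i$ and $\overline{z}_i$, so it suffices to show that (i) their logarithmic derivatives $\pa_{z_i}\log(\cdot)$ and $\pa_{\overline z_i}\log(\cdot)$ coincide, and (ii) the two sides agree at one configuration, which I reach by an OPE induction on $N$. Since only spins and disorders occur here, the fermion fields enter merely as an auxiliary differentiation tool.

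For the left-hand side I would use the characterization of the Ising correlations from \cite{CHI2}: the logarithmic derivative $\pa_{z_i}\log\ccor{\sigma_{z_1}\dots}_\Omega$ is given by an expansion coefficient $\coefA_i$ of the continuous fermionic (spinor) observable $w\mapsto\ccor{\psi_w\,\sigma_{z_1}\dots}_\Omega/\ccor{\sigma_{z_1}\dots}_\Omega$, which solves a Riemann boundary value problem on $\Omega$ encoding the free/wired boundary conditions, while the conjugate fermion $\psi^\star$ produces $\pa_{\overline z_i}\log(\cdot)$. The structural input I need is that this observable is identified with the Szegő kernel on the Schottky double $\Surf$ carrying the spin structure dictated by the jump points $b_1,\dots,b_{2k}$. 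Using the two-copy interpretation $\ccor{\sigma_{z_1}\dots}_\Omega^2=\ccor{\sigma_{z_1}\tilde\sigma_{z_1}\dots}_\Omega$ alluded to in the introduction, the logarithmic derivative $2\coefA_i$ of the squared correlation becomes a regularized diagonal value of a product of two such Szegő kernels, obtained in the confluent limit where the kernel arguments merge at $z_i$.

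For the right-hand side I would start from the explicit formula of Definition \ref{def: bos_corr}. Expanding each $\bsigma{z_i}$ and $\bmu{z_i}$ into its two exponential (charge $\pm$) contributions reduces $\ccor{\hOp_{z_1}\dots}_\Omega$ to a finite sum, over sign patterns $s=(s_1,\dots,s_N)\in\{\pm1\}^N$, of Gaussian vertex-operator correlations; each summand is a product of exponentials of the Green's function $G_\Omega(z_i,z_j)$ times a theta-function factor coming from the instanton sum over $\xi$. Its logarithmic derivatives are then explicit combinations of $\pa_{z_i}G_\Omega$, regularized diagonal derivatives of $G_\Omega$, harmonic measures, and logarithmic derivatives of multivariate theta functions on $\Surf$. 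The core of the proof is to identify this bosonic expression with $2\coefA_i$, and this is exactly the bridge furnished by the limiting Hejhal–Fay identity: Fay's identity equates the square of the Szegő kernel with the fundamental bidifferential plus $\pa\pa\log\theta$ of the theta function with the relevant characteristic, so the confluent bilinear-Szegő object computing $2\coefA_i$ matches, term by term, the $\pa\pa G_\Omega$, harmonic-measure and theta-derivative data on the bosonic side. Matching holomorphic and antiholomorphic parts then gives equality of all $\pa_{z_i}\log(\cdot)$ and $\pa_{\overline z_i}\log(\cdot)$.

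Finally, to fix the constant I would induct on $N$ using operator product expansions. Sending $z_1\to z_2$ with $\Op_{z_1}=\Op_{z_2}$, the fusion $\sigma_{z_1}\sigma_{z_2}\to|z_1-z_2|^{-1/4}\mathbf 1+\dots$ (and identically $\mu\times\mu$) reduces the squared left-hand side, with leading singularity $|z_1-z_2|^{-1/2}$, to the squared correlation of the remaining $N-2$ fields; the vertex-operator fusion $\bsigma{z_1}\bsigma{z_2}\to|z_1-z_2|^{-1/2}\mathbf 1+\dots$ produces the same reduction with matching leading coefficient on the right, so the two constants propagate together. The parity hypotheses guarantee that the identity channel is available at each fusion, so the induction closes at the base case $N=2$ (a single same-type pair), which follows from a direct computation and agrees with the simply connected case of \cite{CHI1,CHI2}. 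I expect the main obstacle to be the matching step: formulating the confluent/limiting Hejhal–Fay identity with the correct spin structure on $\Surf$ and tracking every multiplicative constant and sign — the $\sqrt 2$'s in the vertex operators, the sign ambiguity of the two-valued Ising correlations, and the branch choices in the theta functions — through to an exact equality rather than equality up to an undetermined constant.
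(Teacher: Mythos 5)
Your skeleton (match logarithmic derivatives in the marked points, then fix the multiplicative constant by OPE induction) is exactly the paper's, and your extraction of $\pa_{v_i}\log$ from the confluent limit of the $\psi_z\psi_w$ insertion matches its Section 5. But the step you yourself flag as the main obstacle --- a ``confluent/limiting Hejhal--Fay identity with the correct spin structure on $\Surf$'' --- cannot be taken as structural input, and it is where essentially all the work lies. The fermionic observable has monodromy $-1$ around each $v_i$, so it is \emph{not} a section of any spin line bundle on the Schottky double $\Surf$: the $v_i$ are ramification points, and no off-the-shelf Hejhal--Fay identity applies to such a ramified spinor. The paper resolves this by compactifying rather than by a confluent identity: it attaches $n+k$ genuine handles (gluing small cuts at each pair $v_i,v_i^\star$ and at each pair of jump points $b_{2i-1},b_{2i}$) to obtain a compact surface $\Surfeps$ of genus $g+n+k$, applies the honest Hejhal--Fay identity there with a half-integer characteristic encoding $\sigma$ versus $\mu$ and free versus wired, and then pinches the handles using Fay's degeneration asymptotics for the Abelian differentials and the period matrix. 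Two consequences you miss: first, after pinching, the coefficient is no longer $\pa\pa\log\theta$ of a single theta function --- the diagonal periods diverge like $\tfrac12\log\eps$, only minimal-weight terms survive, and what remains is a ratio of \emph{finite} sums over $\setSn$, which is then identified with the bosonic correlation ratio (this is how the instanton sum appears; your ``term by term'' matching with one theta function is not how it goes). Second, identifying the $\eps\to0$ limit of the Szeg\H{o} kernel with the Ising ratio is not a citation to \cite{CHI2}: one must verify that the limit solves the same Riemann boundary value problem --- including the delicate residue condition coupling the two endpoints of each free arc, which the paper derives via an auxiliary genus-one Szeg\H{o} kernel --- and then prove uniqueness for that BVP. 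None of this machinery is present in your outline.

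There are two further genuine gaps. Your premise that on each component of configuration space ``both sides are smooth, non-vanishing'' is false once disorders are present: with $\mu$ insertions both $\ccor{\Op_{v_1}\dots\Op_{v_n}}_\Omega$ and the bosonic side (a \emph{signed} sum of exponential correlations) can vanish, and the log-derivative argument breaks on the zero set. The paper's workaround is to first prove the identity in the all-spin case, where both sides are strictly positive, deduce real-analyticity of all spin--disorder correlations from the explicit bosonic formula, then establish the mixed case locally (fusing in a $\mu\mu$ pair near a configuration where the $(n-2)$-point correlation is nonzero) and extend across zeros by real analyticity; your proposal needs some such mechanism. Finally, your base case $N=2$ is not ``a direct computation'' in a multiply connected domain --- the two-point function there already involves theta functions and is precisely an instance of the theorem; the induction should instead terminate at the empty correlation $N=0$, where both sides are normalized to $1$, which one more fusion step ($\sigma\sigma\to|z_1-z_2|^{-1/4}\mathbf{1}$ against $\bsigma{z_1}\bsigma{z_2}\to|z_1-z_2|^{-1/2}$) supplies.
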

Given Theorem \ref{th:ds}, the derivation of Theorem \ref{th:main} is based on the observation that other Ising fields featured therein can be obtained from $\sigma$ and $\mu$ by fusion, i.e., by considering the asymptotic expansions as some of the points merge. Thus, the proof boils down to checking the fusion rules, or operator product expansions, on the bosonic side, and comparing them with known results on the Ising side \cite[Section 6]{CHI2}. We carry this out in Section \ref{sec:ff}.

The remainder of the article is concerned with the proof of Theorem \ref{th:ds}. The main step in the proof is a derivation of the identity 
\begin{equation}
\label{eq: for_log_der}
\left(\frac{\ccor{\psi_z\psi_w\Op_{v_1}\dots\Op_{v_n}}_\Omega}{\ccor{\Op_{v_1}\dots\Op_{v_n}}_\Omega}\right)^2=-8\frac{\ccor{\pa\Phi(z)\pa\Phi(w)\hOp_{v_1}\dots\hOp_{v_n}}_\Omega}{\ccor{\hOp_{v_1}\dots\hOp_{v_n}}_\Omega},
\end{equation}
 where each $\Op_{v_i}$ is either $\sigma_{v_i}$, or $\mu_{v_i}$, and we assume that the denominators are non-vanishing. The identity \eqref{eq: for_log_der} turns out to be a limiting form of a classical identity in the theory of Riemann surfaces, due to Hejhal \cite{Hejhal} and independently Fay \cite{Fay}, that expresses squares of Szeg\H{o} kernels of a compact Riemann surface in terms of Abelian differentials. The Riemann surface in question is the Schottky double of the domain $\Omega$, to which $n+k$ small handles are attached, where $n$ is the number of spins and disorders in $\eqref{eq: for_log_der}$ and $k$ is the number of free arcs, see Section \ref{sec: gluing}. In Section \ref{sec:hej}, we review the theory of Riemann surfaces needed to state the Hejhal--Fay identity. In Section \ref{sec: pinching}, we check that as we pinch the handles, the two sides of the Hejhal--Fay identity converge to the respective sides of \eqref{eq: for_log_der}, thus proving the latter. It is likely that one could adapt one of the existing proofs of Hejhal--Fay identity to give a direct proof of \eqref{eq: for_log_der}, however we found it easier to give a limiting argument.
 
 Using the operator product expansions again, \eqref{eq: for_log_der} implies 
 \begin{equation}
\label{eq: log_der}
 \pa_{v_1}\log\ccor{\Op_{v_1}\dots\Op_{v_n}}_\Omega=\frac12\pa_{v_1}\log\ccor{\hOp_{v_1}\dots\hOp_{v_n}}_\Omega.
 \end{equation}
This is the same as \eqref{eq: Thm1} up to a global normalization factor, which can be fixed using the operator product expansions on both sides once more. Some gimmicks are needed to work around the possible vanishing of the denominators. The route from \eqref{eq: for_log_der} to \eqref{eq: log_der} to \eqref{eq: Thm1} is closely parallel to the proof of convergence of the Ising correlations in \cite{CHI1,CHI2}. We carry out this part of the proof in Section \ref{sec: conclusion}. 
 
\subsection{Acknowledgements}

T.V. and C.W.  were supported by the Emil Aaltonen Foundation. C.W. was supported by the Academy of Finland through the grant 348452. T.V. is also grateful for the financial support from the Doctoral Network in Information Technologies and Mathematics at Åbo Akademi University. B. B. and K.I. were supported by Academy of Finland through academy project ``Critical phenomena in dimension two". We are grateful to Mikhail Basok for useful remarks.

\medskip 

\section{Bosonic correlation functions and operator product expansions: Proof of Theorem \ref{th:main} given Theorem \ref{th:ds}}\label{sec:ff}

In this section, we define the correlation functions appearing on the right-hand side of \eqref{eq: Thm1}, and study their properties, most importantly, the \emph{Operator Product Expansions (OPE)} -- these are the asymptotic expansions when two marked points collide. These, together with the Ising OPE of \cite[Section 6]{CHI2}, are key to the derivation of Theorem \ref{th:main} from Theorem \ref{th:ds}, but they are also used in the proof of the latter. 

Similar OPE were studied in the literature -- see in particular \cite[Lecture 3]{KM}. However, some of the fields we encounter (such as $:|\nabla \Phi|^2:$) differ from those studied in \cite{KM}, and the focus of \cite{KM} is on simply connected domains and the purely Gaussian case. For these reasons, we give a self-contained presentation.

\subsection{The instanton component}
\label{sec: instanton}

As we discussed in the introduction, the \emph{instanton component} $\xi$ is a random harmonic function in $\Omega$, which we now describe somewhat more concretely. We assume for simplicity that the $\fixed$ part of $\pa\Omega$ is non-empty, and $\xi$ is normalized to be zero on one of the wired boundary arcs, and let $h_1,\dots,h_g$ denote the harmonic measures of other boundary components of $\Omega$, and $\hat{h}_1,\dots,\hat{h}_{k}$ denote the harmonic measures of the free boundary arcs. (If all boundary conditions are free, the only change is to normalize $\xi$ to be $\frac{\sqrt{2}\pi}{2}$ rather than $0$ on one of the components) Any realization of $\xi$ can be uniquely written as 
\begin{equation}
\xi(z)=\frac{\gap}{2}\left(\sum_{i=1}^{g}s_ih_i(z)+\sum_{i=1}^k \hat{s}_i\hat{h}_{i}(z)\right)
\label{eq: instanton}
\end{equation}   
where $\gap=\sqrt{2}\pi$, $s_i=2m_i+N_i$ with $m=(m_1,\dots,m_g)\in\Z^g$ and $\hat{s}=(\hat{s}_1\dots \hat{s}_k)\in\{\pm 1\}^k$, and $N_i=1$ if the $i$-th boundary component is entirely free, and $N_i=0$ otherwise. Note that $s$ and $\hat{s}$ can also be viewed as random variables.

The \emph{Dirichlet energy} $\langle\nabla\xi,\nabla\xi\rangle_{\Omega}=\int_\Omega|\nabla \xi|^2$ is in general infinite, because of the divergence near the points where boundary conditions have a jump. For a harmonic function $g$ which is piecewise constant at the boundary except for jumps at $b_1,\dots,b_{r}$ of size $\alpha_1,\dots,\alpha_r$ respectively, we can define the \emph{regularized Dirichlet energy} of $g$ by subtracting the leading divergent term at each $b_1,\dots,b_{r}$, i.e., we put 
$$
\langle\nabla g,\nabla g \rangle_{\Omega,\mathrm{reg}}=\lim_{\eps\to 0}\left(\langle\nabla g,\nabla g\rangle_{\Omega\setminus\cup_{i=1}^{r}B_\eps(b_i)}+\sum_{i=1}^{r}2{\alpha_i}^2\log\eps\right).
$$
Checking the existence of the limit is left to the reader. It is important to note for $g=\xi$, the last term equals $2\pi^2 k\log\eps$, i.e., it is non-random, since all the jumps are always $\pm\frac{\gap}{2}$. We define the corresponding bilinear form by polarization: $$\langle\nabla g,\nabla f \rangle_{\Omega,\mathrm{reg}}=\frac{1}{4}\langle\nabla (g+f),\nabla (g+f) \rangle_{\Omega,\mathrm{reg}}-\frac14\langle\nabla (g-f),\nabla (g-f) \rangle_{\Omega,\mathrm{reg}};$$ this will only involve regularizing terms for those boundary points where \emph{both} $f$ and $g$ have jumps. We then have 
\begin{multline}
\langle\nabla\xi,\nabla\xi\rangle_{\Omega,\mathrm{reg}}=
\frac{\gap^2}{4}\sum_{i,j=1}^{g}s_is_j\langle\nabla h_i,\nabla h_j\rangle_{\Omega}\\ +\frac{\gap^2}{4}\sum_{i=1}^{g}\sum_{j=1}^ks_i\hat{s}_j\langle\nabla h_i,\nabla \hat{h}_j\rangle_{\Omega}+\frac{\gap^2}{4}\sum_{i=1}^{k}\sum_{j=1}^k\hat{s}_i\hat{s}_j\langle\nabla \hat{h}_i,\nabla \hat{h}_j\rangle_{\Omega,\mathrm{reg}}\\
=:-4\pi(Q_\Omega(s)+B_\Omega(s,\hat{s})+\hat{Q}_\Omega(\hat{s})), 
\end{multline}
where $Q_\Omega,\hat{Q}_\Omega$ are quadratic forms, and $B_\Omega$ is a bilinear form, depending on the conformal class of $\Omega$, we are able to drop the subscript ``$\mathrm{reg}$'' in the first two sums due to the above observation; in fact, we also have $\langle\nabla \hat{h}_i,\nabla \hat{h}_j\rangle_{\Omega,\mathrm{reg}}=\langle\nabla \hat{h}_i,\nabla \hat{h}_j\rangle_{\Omega}$ for $i\neq j$. 

Crucially, $Q_\Omega$ is strictly negative definite, which in particular ensures that $$Z=\sum_{\xi}\exp(-\frac{1}{4\pi}\langle\nabla\xi,\nabla\xi\rangle_{\Omega,\mathrm{reg}})<\infty,$$ so that $\P(\xi)=Z^{-1}\exp(-\frac{1}{4\pi}\langle\nabla\xi,\nabla\xi\rangle_{\Omega,\mathrm{reg}})$ is a well defined probability measure. Moreover, we have the following lemma: 
\begin{lem}
The expectation $\E\exp\left(\gamma_1\xi(z_1)+\dots+\gamma_n\xi(z_n)\right)$
is finite for all $\gamma_1,\dots,\gamma_n\in \C$, and defines a real-analytic function of its variables $\gamma_1,\dots,\gamma_n$, $\z_1,\dots,z_n$. Moreover, the expectation can be exchanged with derivatives of all orders acting on these variables.
\end{lem}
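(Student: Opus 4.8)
The plan is to use the explicit description of $\xi$ from \eqref{eq: instanton}. Since $\xi(z)=\frac{\gap}{2}\left(\sum_{i=1}^{g}s_ih_i(z)+\sum_{i=1}^k\hat{s}_i\hat{h}_i(z)\right)$ is a \emph{linear} combination of fixed harmonic functions $h_i,\hat{h}_i$ with random integer (or $\pm1$) coefficients, the random variable $\gamma_1\xi(z_1)+\dots+\gamma_n\xi(z_n)$ is itself linear in the discrete vector $(s,\hat{s})$. Writing $\ell(s,\hat{s})=\sum_{j}\gamma_j\xi(z_j)$, the expectation in question is the lattice sum
\begin{equation}
\E e^{\ell(s,\hat{s})}=Z^{-1}\sum_{m\in\Z^g,\ \hat{s}\in\{\pm1\}^k}\exp\!\left(\ell(s,\hat{s})-\tfrac{1}{4\pi}\langle\nabla\xi,\nabla\xi\rangle_{\Omega,\mathrm{reg}}\right),
\end{equation}
where $s_i=2m_i+N_i$. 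Recalling $-\tfrac{1}{4\pi}\langle\nabla\xi,\nabla\xi\rangle_{\Omega,\mathrm{reg}}=Q_\Omega(s)+B_\Omega(s,\hat{s})+\hat{Q}_\Omega(\hat{s})$, the finite sum over $\hat{s}\in\{\pm1\}^k$ is harmless, so everything reduces to a multivariate Gaussian-type theta sum over $m\in\Z^g$.

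First I would isolate the $m$-sum. The summand is $\exp\!\left(Q_\Omega(s)+(\text{terms linear in }m)\right)$, where the linear terms collect both $\ell$ and the cross term $B_\Omega(s,\hat{s})$. The key structural fact, already established in the excerpt, is that $Q_\Omega$ is \emph{strictly negative definite}. Hence for each fixed $\hat{s}$ and each fixed $(\gamma_j,z_j)$, the summand decays like $e^{-c|m|^2}$ as $|m|\to\infty$; the linear terms, even with $\gamma_j\in\C$, only contribute a factor whose modulus grows like $e^{C|m|}$, which is dominated by the quadratic Gaussian decay. This gives absolute convergence of the series and hence finiteness of the expectation for all $\gamma_j\in\C$.

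For the analyticity and the exchange of expectation with derivatives, I would invoke a standard Weierstrass-type argument for series of holomorphic (resp.\ real-analytic) functions: each summand is manifestly an entire function of $(\gamma_1,\dots,\gamma_n)$ and a real-analytic function of $(z_1,\dots,z_n)$ (as $h_i,\hat{h}_i$ are harmonic, hence real-analytic, in $\Omega$). To conclude, I would show the series converges \emph{locally uniformly}. The point is that the Gaussian bound $e^{-c|m|^2}$ is \emph{uniform} on compact sets of the parameters: if the $\gamma_j$ range over a bounded set and the $z_j$ over a compact $K\subset\Omega$, then the constants $c>0$ and $C$ above can be chosen uniformly, so the tail of the series is dominated by a convergent series of constants independent of the parameters. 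Local uniform convergence of a series of real-analytic functions yields a real-analytic limit and legitimizes termwise differentiation to all orders, interpreting differentiation in the $z_j$ variables via the real-analytic (equivalently, harmonic-conjugate) structure.

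The main obstacle, and the only place requiring genuine care, is making the uniform domination airtight when $\gamma_j$ are \emph{complex}: one must verify that the oscillatory/linear-exponential contribution from $\im(\ell)$ and the modulus growth from $\re(\ell)$ are both absorbed by the negative-definite quadratic form $Q_\Omega$ uniformly over compacts. Concretely, I would diagonalize $Q_\Omega$, complete the square in $m$, and bound $|e^{\ell(s,\hat{s})}|\le e^{C(K)|m|}$ with $C(K)$ depending only on $\max_j|\gamma_j|$ and $\sup_{z\in K,i}(|h_i(z)|,|\hat h_i(z)|)$; since these are finite and locally uniform, the estimate $|{\rm summand}|\le e^{-c|m|^2+C(K)|m|}$ delivers the needed uniform tail bound. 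Everything else is routine.
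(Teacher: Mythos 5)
Your route is the same as the paper's: rewrite the expectation as the lattice sum over $(m,\hat{s})$, use the strict negative definiteness of $Q_\Omega$ to obtain the Gaussian domination $e^{-c|m|^2+C(K)|m|}$ (the paper packages this as $\P(\max_i\{|s_i|\}>M)\le e^{-cM^2}$ together with the maximum-principle bound $\max_\Omega|\xi|\le\frac{\alpha}{2}(\max_i\{|s_i|\}+\frac12)$), and conclude via locally uniform convergence. Your finiteness argument is complete and correct. The one step that fails as stated is the assertion that ``local uniform convergence of a series of real-analytic functions yields a real-analytic limit and legitimizes termwise differentiation to all orders.'' That principle is false for real-analytic (as opposed to holomorphic) functions: by the Weierstrass approximation theorem, \emph{every} continuous function on a compact set is a uniform limit of polynomials, so a uniformly convergent series of real-analytic terms can have a limit that is nowhere differentiable, let alone analytic; for the same reason uniform convergence alone does not permit differentiating such a series term by term.

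What rescues the argument --- and what the paper's proof explicitly invokes with the phrase ``radii of convergence uniformly bounded from below'' --- is quantitative analyticity of the summands, uniform in $(m,\hat{s})$. Concretely, each summand is entire in $(\gamma_1,\dots,\gamma_n)$, and since the $z$-dependence enters only through the finitely many \emph{fixed} harmonic functions $h_i,\hat{h}_j$, each summand extends holomorphically to a fixed complex neighborhood of any compact $K\subset\Omega^n\times\C^n$ (complexify via $h_i=\frac12(f_i+\overline{f_i})$ with $f_i$ holomorphic near $K$, treating $z_j$ and $\bar{z}_j$ as independent variables), and your bound $e^{-c|m|^2+C(K)|m|}$ persists on that neighborhood because the complexified $h_i$ remain bounded there. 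Uniform convergence on the complexified neighborhood, together with the genuine Weierstrass theorem for holomorphic functions and Cauchy estimates, then yields real analyticity of the sum and termwise differentiation of all orders. Alternatively, for the exchange of derivatives the paper dominates the differentiated series directly: since $\xi$ is harmonic, the Poisson formula gives $\max_{z\in D}|\pa^n\xi(z)|\le C_{D,n}(\max_i\{|m_i|\}+\frac12)$ on any disk $D$ compactly contained in $\Omega$, and the dominated convergence theorem applies. Adding either of these ingredients closes the gap; the rest of your proposal is sound.
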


\begin{proof}
Observe that by positivity of $-Q_\Omega$, we have $\P(\max\{|s_i|\}>M)\leq\exp(-cM^2)$ for some $c>0$. Furthermore, note that by the maximum principle, $\max_\Omega{|\xi|}\leq\frac{\alpha}{2}(\max\{|s_i|\}+\frac12)$. It follows that $\E\exp\left(\gamma_1\xi(z_1)+\dots+\gamma_n\xi(z_n)\right)$ is a locally uniformly convergent series of real-analytic functions with radii of convergence uniformly bounded from below, and hence it is real-analytic.

Moreover, since $\xi$ is harmonic, for each disk $D\subset \Omega$, we have $\max_{z\in D}|\pa^n \xi(z)|\leq C_{D,n} (\max\{|m_i|\}+\frac12),$ by expressing $\xi(z)$ by Poisson formula in a larger disc $D'\subset \Omega$ and differentiating. Therefore, dominated convergence theorem yields the second claim.
\end{proof}
In fact, the expectations $\E\exp\left(\gamma_1\xi(z_1)+\dots+\gamma_n\xi(z_n)\right)$, which feature prominently in what follows, can be written quite explicitly in terms of \emph{theta functions}, see Section \ref{sec:hejhal-Fay}. Indeed,

    \begin{multline}
       \E\exp\left(\gamma_1\xi(z_1)+\dots+\gamma_n\xi(z_n)\right)=\\
       =\frac{1}{Z}\sum_{s, \hat{s}}\exp\left(Q_{\Dom}(s)+B_\Dom (s,\hat{s})+\hat Q_\Dom(\hat{s})+
       \frac{\gap}{2}\sum_{i=1}^n \gamma_i\left(\sum_{j=1}^g s_j h_j(z_i)+\sum_{j=1}^k\hat{s}_{j}\hat h_j(z_i)\right)\right), 
    \end{multline}
where the partition function $Z$ is given by the same sum with all $\gamma_i=0.$ Summing first over $\hat{s}$ and putting $\nu=(\frac{N_1}{2},\dots,\frac{N_g}{2})$, so that $s=2(m+\nu)$, we obtain

\begin{multline}
       \E\exp\left(\gamma_1\xi(z_1)+\dots+\gamma_n\xi(z_n)\right)=\\
       =\frac{1}{Z}\sum_{ \hat{s}\in \{\pm 1\}^k}\hat{H}(\hat{s},\{\gamma_i\},\{z_i\})
       \sum_{m\in\Z^g}\exp\left(4Q_{\Dom}(m+\nu)+2(m+\nu)\cdot v \right)= \\
      =\frac{1}{Z}\sum_{ \hat{s}\in \{\pm 1\}^k}\hat{H}(\hat{s},\{\gamma_i\},\{z_i\})\cdot\theta_{Q_\Omega}(v|H),
    \end{multline}
where $\hat{H}(\hat{s},\{\gamma_i\},\{z_i\})=\exp\left(\hat Q_\Dom(\hat s)+\frac{\gap}{2}\sum_{i=1}^n\gamma_i\sum_{j=1}^k\hat{s}_{j}\hat h_j(z_i)\right)$, the characteristic (see Section \ref{sec:hejhal-Fay} for this terminology and notation) is given by $H=\binom{0}{\nu}$, and the vector $v=v(\hat{s},{\gamma_i},{z_i})$ is given by its coordinates as
\[
v_j=\frac{\gap}{2}\sum_{i=1}^n \gamma_ih_j(z_i)-\frac{\gap^2}{16\pi}\sum_{i=1}^k \hat{s}_{i}\langle\nabla h_j,\nabla \hat{h}_i\rangle_{\Omega}.
\]

We remark here that the matrix of $Q_\Omega$ used in the definition of the theta function is in fact the period matrix $\tau$ of the Schottky double $\widehat{\Omega}$ of $\Omega$, and that $h_i,\hat{h}_j$, and the coefficients of $B_\Omega$, $\hat{Q}_\Omega$ can be in fact written in terms of Abelian integrals on $\widehat{\Omega}$, see Lemma \ref{lem: prob_instanton}.

\subsection{The correlations of \texorpdfstring{$\Phi$}{}.}
\label{sec: bosonic}
We start by setting up some notation and normalization conventions. 
\begin{defn}
For $w\in \Omega$, the Green's function $G_\Omega(\cdot,w)$ is defined to be the unique harmonic function in $\Omega\setminus\{w\}$, vanishing at $\pa\Omega$, and satisfying the expansion
\[
G_\Omega(z,w)=-\log|z-w|+O(1),\quad z\to w.
\]
We denote $g_\Omega(z,w)=G_\Omega(z,w)+\log|z-w|$.
\end{defn}

We recall that $G_\Omega(z,w)\equiv G_\Omega(w,z)$, and hence $g_\Omega(z,w)\equiv g_\Omega(w,z)$. The Gaussian free field $\varphi$ is a random distribution in $\Omega$ which is a centered Gaussian field with covariance $G_\Omega$, see \cite{Sheffield}. We will actually not use any probabilistic aspects of the theory of GFF, only the structure of the related correlation functions, which we now describe. 
\begin{defn}
\label{def: bos_corr}
Given arbitrary complex numbers $\gamma_1,\dots,\gamma_n$ and distinct $z_1,\dots,z_n\in\Omega$, the correlations of the normal-ordered exponentials of $\varphi$ are defined as
\begin{equation}
\label{eq: defnexp}
\ccor{\nexp{\gamma_1}{z_1}\dots\nexp{\gamma_n}{z_n}}_\Omega=\exp\left(\sum_{i<j}\gamma_i\gamma_j G_\Omega(z_i,z_j)+\sum_{i=1}^n\frac{\gamma^2_i}{2}g_\Omega(z_i,z_i)\right)
\end{equation}
The similar correlations of $\Phi=\varphi+\xi$ are defined as 
\begin{equation}
\label{eq: defNexp}
\ccor{\Nexp{\gamma_1}{z_1}\dots\Nexp{\gamma_n}{z_n}}_\Omega=\ccor{\nexp{\gamma_1}{z_1}\dots\nexp{\gamma_n}{z_n}}_\Omega\E(e^{\gamma_1\xi(z_1)}\dots e^{\gamma_n\xi(z_n)})
\end{equation}
Finally, for each choice of $\hOp_{z_i}\in \{\Nexp{\gamma_i}{z_i}, \pa\Phi(z_i), \bar{\pa}\Phi(z_i), \nord{|\nabla \Phi(z_i)|^2}\}$, we define 
\begin{equation}
\label{eq: defOp}
\ccor{\hOp_{z_1}\dots\hOp_{z_n}}=\mathcal{D}_{z_1}\dots\mathcal{D}_{z_n}\ccor{\Nexp{\gamma_1}{z_1}\dots\Nexp{\gamma_n}{z_n}},
\end{equation}
where the linear operators $\mathcal{D}_{z_i}$ act by combination of differentiation and substitution of $\gamma_i=0$, as follows:
\begin{align}
\mathcal{D}_{z_i}=\mathrm{Id}&\text{ if } \hOp_{z_i}=:e^{\gamma_i \Phi(z_i)}:, \\
\mathcal{D}_{z_i}=\pa_{\gamma_i}\pa_{z_i}|_{\gamma_i=0}&\text{ if } \hOp_{z_i}=\pa\Phi(z_i),\\
\mathcal{D}_{z_i}=\pa_{\gamma_i}\bar{\pa}_{z_i}|_{\gamma_i=0}&\text{ if } \hOp_{z_i}=\bar{\pa}\Phi(z_i),\\
\mathcal{D}_{z_i}=2\pa^2_{\gamma_i}\pa_{z_i}\bar{\pa}_{z_i}|_{\gamma_i=0}&\text{ if } \hOp_{z_i}=\nord{|\nabla \Phi(z_i)|^2}.
\end{align} 
Finally, the correlations involving $\nord{\sin(\gamma\Phi)}$ or $\nord{\cos(\gamma\Phi)}$ are defined using correlations of exponentials and linearity, e.g. 
\[
\ccor{\nord{\sin(\gamma\Phi(z_1))}\hOp}=\frac{1}{2\i}\left(\ccor{\Nexp{\i\gamma}{z_1}\hOp}-\ccor{\Nexp{-\i\gamma}{z_1}\hOp}\right)
\]
for any string $\hOp=\hOp_{z_2}\dots\hOp_{z_n}$ as above.
\end{defn}

We record the following simple, but important observation:
\begin{lem}
The function $\ccor{\Nexp{\gamma_1}{z_1}\dots\Nexp{\gamma_n}{z_n}}_\Omega$ is a real-analytic function of its $2n$ variables as long as $z_i$ are distinct, i.e., it is real-analytic in the region 
\begin{equation}
\label{eq: distinct}
\{\gamma_1,\dots,\gamma_n\in\C,\quad z_1,\dots,z_n\in\Omega:z_i\neq z_j \text{ if } i\neq j\}.
\end{equation}
Moreover, the function 
\begin{equation}
\label{eq: def_F}
F(\gamma_1,\dots\gamma_n,z_1,\dots,z_n):=|z_1-z_2|^{\gamma_1\gamma_2}\ccor{\Nexp{\gamma_1}{z_1}\dots\Nexp{\gamma_n}{z_n}}
\end{equation} extends real-analytically to $z_1=z_2$, i.e., to the set 
\[
\{\gamma_1,\dots,\gamma_n\in\C,\quad z_1,\dots,z_n\in\Omega:z_i\neq z_j \text{ if } i\neq j\text{ and }\{i,j\}\neq\{1,2\}\}.
\] 
\end{lem}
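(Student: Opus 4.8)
The plan is to read off both statements directly from the product structure of the correlation in \eqref{eq: defNexp}, isolating the only source of singular behaviour, namely the logarithmic singularity of $G_\Omega$ on the diagonal. Recall that by \eqref{eq: defNexp} the correlation factors as
\[
\ccor{\Nexp{\gamma_1}{z_1}\dots\Nexp{\gamma_n}{z_n}}_\Omega=\ccor{\nexp{\gamma_1}{z_1}\dots\nexp{\gamma_n}{z_n}}_\Omega\cdot\E\!\left(e^{\gamma_1\xi(z_1)}\dots e^{\gamma_n\xi(z_n)}\right),
\]
and the second factor is already known to be real-analytic in all of $\gamma_1,\dots,\gamma_n,z_1,\dots,z_n$ (with \emph{no} restriction on the $z_i$, since $\xi$ is harmonic) by the preceding Lemma. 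Thus everything reduces to the purely Gaussian factor \eqref{eq: defnexp}, and throughout ``real-analytic'' is understood in the underlying real variables $\re z_i,\im z_i,\re\gamma_i,\im\gamma_i$.

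For the first claim I would invoke the standard regularity of the Green's function. Since by the conformal covariance \eqref{eq: conf_cov} we may assume $\Omega$ has analytic boundary, $G_\Omega(z,w)$ is jointly real-analytic in $(z,w)$ off the diagonal, while its regular part $g_\Omega(z,w)=G_\Omega(z,w)+\log|z-w|$ is harmonic in each variable — indeed $\Delta_z g_\Omega=-2\pi\delta_w+2\pi\delta_w=0$ — and hence jointly real-analytic on all of $\Omega\times\Omega$, \emph{including} the diagonal (for the disk and half-plane one even has the explicit formulas $g_\D(z,w)=\log|1-\bar w z|$ and $g_\HH(z,w)=\log|z-\bar w|$, which are manifestly real-analytic there). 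Consequently the exponent in \eqref{eq: defnexp}, being a quadratic polynomial in the $\gamma_i$ whose coefficients are the $G_\Omega(z_i,z_j)$ with $i<j$ and the $g_\Omega(z_i,z_i)$, is real-analytic on the region \eqref{eq: distinct}; exponentiating it and multiplying by the instanton factor preserves real-analyticity, which gives the first claim.

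For the second claim the idea is to extract the only factor that degenerates as $z_1\to z_2$. Writing $G_\Omega(z_1,z_2)=-\log|z_1-z_2|+g_\Omega(z_1,z_2)$ and splitting off the $(1,2)$-term from the double sum in \eqref{eq: defnexp}, I would record
\[
\ccor{\Nexp{\gamma_1}{z_1}\dots\Nexp{\gamma_n}{z_n}}_\Omega=|z_1-z_2|^{-\gamma_1\gamma_2}\,\exp\!\left(\gamma_1\gamma_2\,g_\Omega(z_1,z_2)\right)\cdot R,
\]
where $R$ gathers the remaining cross terms $G_\Omega(z_i,z_j)$ with $\{i,j\}\neq\{1,2\}$, the diagonal terms $g_\Omega(z_i,z_i)$, and the instanton factor. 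All of these stay real-analytic when only $z_1$ and $z_2$ are allowed to collide, so $R$ extends real-analytically to the enlarged region. Multiplying by $|z_1-z_2|^{\gamma_1\gamma_2}$ in the definition \eqref{eq: def_F} of $F$ cancels the prefactor $|z_1-z_2|^{-\gamma_1\gamma_2}$ exactly, leaving $F=\exp(\gamma_1\gamma_2 g_\Omega(z_1,z_2))\cdot R$, which is manifestly real-analytic across $z_1=z_2$.

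The only genuinely non-formal input is the regularity of $G_\Omega$ and $g_\Omega$, so I expect the main obstacle — really the only point requiring care — to be the assertion that $g_\Omega$ is jointly real-analytic up to and across the diagonal $z=w$. This is precisely where the reduction to analytic-boundary (or circular) domains afforded by \eqref{eq: conf_cov} does the work: the explicit model formulas above settle the disk and half-plane, and these propagate to a general $\Omega$ through the conformal map, whose derivative is itself analytic and enters $F$ only through the covariance factors. Everything else is bookkeeping of which terms remain bounded under the two collision regimes.
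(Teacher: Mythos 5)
Your proposal is correct and takes essentially the same route as the paper's proof: factor off the instanton expectation (real-analytic by the preceding lemma), use harmonicity---hence real-analyticity---of $G_\Omega$ and of $g_\Omega$ in each variable, the latter including across the diagonal, and observe that multiplying by $|z_1-z_2|^{\gamma_1\gamma_2}$ converts the only singular factor $\exp(\gamma_1\gamma_2 G_\Omega(z_1,z_2))$ into $\exp(\gamma_1\gamma_2 g_\Omega(z_1,z_2))$. One cosmetic caveat: a multiply connected $\Omega$ is not conformally equivalent to the disk or half-plane, so your ``propagation'' of the explicit model formulas should be phrased locally (e.g.\ $g_\Omega-g_D$ is separately harmonic on $D\times D$ for a small disk $D\subset\Omega$ containing the collision point), but this does not affect the argument since the harmonicity of $g_\Omega$ you already established suffices.
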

\begin{proof}
The real analyticity of $\E(\exp(\gamma_1\xi(z_1)+\dots+\gamma_n\xi(z_n)))$ is already noted above. The real analyticity of $\ccor{\nexp{\gamma_1}{z_1}\dots\nexp{\gamma_n}{z_n}}_\Omega$ follows from the harmonicity, and hence real analyticity, of $G_\Omega$ and $g_\Omega$ in both variables, the latter one including on the diagonal. The only singular term at $z_1=z_2$ comes from $\exp(\gamma_1\gamma_2G_\Omega(z_1,z_2))$, however, we have $|z_1-z_2|^{\gamma_1\gamma_2}\exp(\gamma_1\gamma_2G_\Omega(z_1,z_2))=\exp(\gamma_1\gamma_2g_{\Omega}(z_1,z_2))$, which is real analytic.
\end{proof}
\begin{rem}
\label{rem: analyticity}
In view of the above lemma, we can view the operators $\mathcal{D}_{z_i}$ as simply picking out some of the coefficients in the Taylor expansion of $\ccor{\Nexp{\gamma_1}{z_1}\dots\Nexp{\gamma_n}{z_n}}_\Omega$. In particular, the general correlations $\ccor{\hOp_{z_1}\dots\hOp_{z_n}}$ are also real analytic over \eqref{eq: distinct}.
\end{rem}

Although we will not need it, Definition \ref{def: bos_corr} can be motivated by the following regularization procedure. This is the only part of the article where we view $\Phi$ as truly being a random generalized function, and in particular, that its mollification produces a random smooth function. Let $\rho:\C\to\R_{\geq0}$ be a smooth, rotationally symmetric, compactly
supported function with $\int_{\C}\rho=1.$ For $\eps>0,$ we put
$\rho_{\eps}(z)=\eps^{-2}\rho(\frac{z}{\eps})$ and define the regularized bosonic field to be $\Phi_{\eps}=\Phi\star\rho_{\eps}=\varphi_{\eps}+\xi_{\eps},$
where $\xi_{\eps}=(\xi\star\rho_{\eps})$ and $\varphi_{\eps}=\varphi\star\rho_\eps$. Moreover, we define the regularized counterparts of the fields $\hOp_{z_i}$ by
\[
\hOp^\eps_z:=\begin{cases}\pa\Phi_\eps(z),&\hOp_z=\pa\Phi(z)\\ 
\bar{\pa}\Phi_\eps(z),&\hOp_z=\bar{\pa}\Phi(z)\\
\left(c_\rho\eps\right)^\frac{\gamma^2}{2}e^{\gamma\Phi_\eps(z)},&\hOp_z=\Nexp{\gamma}{z}\\
|\nabla\Phi_\eps(z)|^2-\eps^{-2}d_\rho&\hOp_{z}=\nord{|\nabla\Phi(z)|^2}\\
\end{cases}
\] 
where 
\[
c_{\rho}=\exp\left(\int_{\C^{2}}\rho(w_{1})\rho(w_{2})\log(|w_{1}-w_{2}|)dw_{1}dw_{2}\right);\quad d_{\rho}=2\pi\int_{\C^{2}}\rho^{2}.
\]
Note that on one hand, $\xi_\eps(z)=(\xi\star\rho_\eps)(z)=\xi(z)$ for $\eps$ small enough (depending on $z$), since $\xi$ is almost surely harmonic, on the other hand, $\varphi_{\eps}$ is a smooth random Gaussian field. Hence $\hOp_z^\eps$ are actually \emph{random variables} defined on the same probability space (they are measurable functions of $\Phi$). 
\begin{lem}
As $\eps\to 0,$ we have for any distinct $z_1,\dots,z_n\in\Omega$,
\[
\E(\hOp^\eps_{z_1}\dots\hOp^\eps_{z_n})\to\ccor{\hOp_{z_1}\dots\hOp_{z_n}}.
\]
\end{lem}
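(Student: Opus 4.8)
The plan is to reduce the whole statement to the convergence of the regularized correlations of \emph{exponentials}, and then recover the remaining fields by applying the differential operators $\mathcal{D}_{z_i}$ of \eqref{eq: defOp}. Fix distinct $z_1,\dots,z_n\in\Omega$ and restrict to $\eps$ below a threshold depending only on $\min_i\dist(z_i,\pa\Omega)$, on $\min_{i\neq j}|z_i-z_j|$, and on the support radius of $\rho$. Two structural facts organize the argument. First, $\varphi$ and $\xi$ are independent, so every expectation factors into a Gaussian part and an instanton part. Second, since $\xi$ is almost surely harmonic, the mean value property gives $\xi_\eps\equiv\xi$ (hence $\nabla\xi_\eps\equiv\nabla\xi$ and $\pa\bar\pa\xi_\eps\equiv0$) on a neighborhood of each $z_i$ for such $\eps$; thus the instanton contribution is \emph{exact} rather than a limit, and its expectations are the real-analytic functions of $(\gamma_i,z_i)$ controlled by the first lemma of Section~\ref{sec: instanton}.

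The preliminary step is to prove that
\[
\Theta_\eps:=\E\Big(\prod_{i=1}^n (c_\rho\eps)^{\gamma_i^2/2}e^{\gamma_i\Phi_\eps(z_i)}\Big)
\]
converges, together with all its derivatives in the $z_i$ and $\gamma_i$, to $\ccor{\Nexp{\gamma_1}{z_1}\dots\Nexp{\gamma_n}{z_n}}$. By independence $\Theta_\eps$ factors as the $\eps$-independent instanton factor $\E(e^{\gamma_1\xi(z_1)+\dots+\gamma_n\xi(z_n)})$ times the Gaussian factor $\prod_i(c_\rho\eps)^{\gamma_i^2/2}\exp\big(\tfrac12\sum_{i,j}\gamma_i\gamma_j C_\eps(z_i,z_j)\big)$, where $C_\eps(z,w)=\iint\rho_\eps(z-a)\rho_\eps(w-b)G_\Omega(a,b)\,da\,db$. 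For $i\neq j$ this is a double mollification of the smooth kernel $G_\Omega$ near $(z_i,z_j)$, so $C_\eps(z_i,z_j)\to G_\Omega(z_i,z_j)$ in $C^\infty_{\mathrm{loc}}$; on the diagonal, splitting $G_\Omega=-\log|\cdot|+g_\Omega$ and rescaling shows $C_\eps(z,z)=-\log\eps-\log c_\rho+g_\Omega(z,z)+o(1)$, with the error tending to $0$ in $C^\infty_{\mathrm{loc}}$. This is exactly what the normalization $c_\rho$ arranges, so that $C_\eps(z,z)+\log(c_\rho\eps)\to g_\Omega(z,z)$. Since $\Theta_\eps$ is an explicit function of these covariances (entire in $\gamma$, smooth in $z$), the convergence upgrades to convergence of all $z$- and $\gamma$-derivatives, and the limit is precisely \eqref{eq: defNexp}.

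Next I would establish the \emph{exact} finite-$\eps$ identity $\E(\hOp^\eps_{z_1}\dots\hOp^\eps_{z_n})=\mathcal{D}_{z_1}\dots\mathcal{D}_{z_n}\Theta_\eps$. Pushing the deterministic operators $\mathcal{D}_{z_i}$ through the expectation, the right-hand side equals $\E(\prod_i\tilde{\Op}^\eps_{z_i})$, where $\tilde{\Op}^\eps_{z_i}$ is obtained by applying $\mathcal{D}_{z_i}$ to the single random factor $(c_\rho\eps)^{\gamma_i^2/2}e^{\gamma_i\Phi_\eps(z_i)}$. For $\hOp_{z_i}=\Nexp{\gamma_i}{z_i}$ this is tautologically $\hOp^\eps_{z_i}$, and for $\hOp_{z_i}\in\{\pa\Phi(z_i),\bar\pa\Phi(z_i)\}$ a direct computation — using that $(c_\rho\eps)^{\gamma^2/2}$ equals $1$ with vanishing first $\gamma$-derivative at $\gamma=0$ — gives $\tilde{\Op}^\eps_{z_i}=\pa\Phi_\eps(z_i)$, resp. $\bar\pa\Phi_\eps(z_i)$, matching $\hOp^\eps_{z_i}$. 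The only discrepancy is at the points with $\hOp_{z_i}=\nord{|\nabla\Phi(z_i)|^2}$: carrying out $\mathcal{D}_{z_i}=2\pa^2_{\gamma_i}\pa_{z_i}\bar\pa_{z_i}|_{\gamma_i=0}$ yields $\tilde{\Op}^\eps_{z_i}=|\nabla\Phi_\eps(z_i)|^2+4\,\pa\bar\pa\varphi_\eps(z_i)\,\Phi_\eps(z_i)$, which differs from $\hOp^\eps_{z_i}=|\nabla\Phi_\eps(z_i)|^2-\eps^{-2}d_\rho$ by $\Xi_i:=4\,\pa\bar\pa\varphi_\eps(z_i)\,\Phi_\eps(z_i)+\eps^{-2}d_\rho$.

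The heart of the matter, and the step I expect to be the main obstacle, is to show that this discrepancy is invisible under the expectation: $\E(\Xi_i\cdot R)=0$ for any product $R$ of the remaining factors. Conditioning on $\xi$ and applying Gaussian integration by parts expresses $\E(\pa\bar\pa\varphi_\eps(z_i)\,\Phi_\eps(z_i)\,R)$ as a sum of covariances of $\pa\bar\pa\varphi_\eps(z_i)$ against the Gaussian functionals in $\Phi_\eps(z_i)R$. Using $\Delta_aG_\Omega(a,b)=-2\pi\delta(a-b)$ and two integrations by parts one finds $\mathrm{Cov}(\pa\bar\pa\varphi_\eps(z_i),\varphi_\eps(z_i))=-\tfrac{\pi}{2}\eps^{-2}\!\int\rho^2$; meanwhile $\mathrm{Cov}(\pa\bar\pa\varphi_\eps(z_i),\pa\varphi_\eps(z_i))$ and $\mathrm{Cov}(\pa\bar\pa\varphi_\eps(z_i),\bar\pa\varphi_\eps(z_i))$ vanish because $\int\rho_\eps\,\pa\rho_\eps=0$, and every covariance with a functional localized at some $z_j\neq z_i$ vanishes for $\eps$ below the threshold, the mollifier supports being disjoint. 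Hence only the self-contraction with the explicit $\varphi_\eps(z_i)$ survives, giving $4\,\E(\pa\bar\pa\varphi_\eps(z_i)\,\Phi_\eps(z_i)\,R)=-\eps^{-2}d_\rho\,\E(R)$, where the constant $d_\rho=2\pi\int\rho^2$ is exactly tuned so that $\E(\Xi_i R)=0$. Telescoping this replacement over all points carrying $\nord{|\nabla\Phi|^2}$ establishes the exact identity $\E(\prod_i\hOp^\eps_{z_i})=\mathcal{D}_{z_1}\dots\mathcal{D}_{z_n}\Theta_\eps$. Together with the $C^\infty_{\mathrm{loc}}$-convergence of the previous paragraph and the definition \eqref{eq: defOp}, this yields, as $\eps\to0$,
\[
\E(\hOp^\eps_{z_1}\dots\hOp^\eps_{z_n})=\mathcal{D}_{z_1}\dots\mathcal{D}_{z_n}\Theta_\eps\longrightarrow\mathcal{D}_{z_1}\dots\mathcal{D}_{z_n}\ccor{\Nexp{\gamma_1}{z_1}\dots\Nexp{\gamma_n}{z_n}}=\ccor{\hOp_{z_1}\dots\hOp_{z_n}},
\]
which is the claim.
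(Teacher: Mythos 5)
Your proof is correct, and while it shares the paper's overall skeleton --- restrict to $\eps$ below a separation threshold, use that off the diagonal $G_{\Omega,\eps}$ agrees with $G_\Omega$ and on the diagonal equals $g_\Omega(z,z)-\log(c_\rho\eps)$ (in fact these hold \emph{exactly} for small $\eps$, by harmonicity and rotational symmetry of $\rho$, not merely up to $o(1)$ as you state; this is harmless), and reduce everything to the exact finite-$\eps$ identity \eqref{eq: Op_eps_via_L} --- your handling of the crux, the $\nord{|\nabla\Phi(z_i)|^2}$ insertions, is genuinely different from the paper's. The paper works at the level of correlation functions: it point-splits, computing $\ccor{\pa\varphi_\eps(z_0)\bar{\pa}\varphi_\eps(z_1)\cdots}$ at distinct points, merges $z_0=z_1$, extracts the counterterm from $\pa_{z_0}\pa_{\bar z_1}G_{\Omega,\eps}=\pa_{z_0}\pa_{\bar z_1}g_\Omega+\frac{\pi}{2}\int\rho_\eps(\cdot-z_0)\rho_\eps(\cdot-z_1)$, and then needs a somewhat involved ``doubling''/commuting-operators argument to allow several $\nord{|\nabla\Phi|^2}$ insertions at once. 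You instead differentiate the random exponential pathwise, which correctly produces $4\pa\Phi_\eps\bar\pa\Phi_\eps+4\pa\bar\pa\varphi_\eps\,\Phi_\eps=|\nabla\Phi_\eps|^2+4\pa\bar\pa\varphi_\eps\,\Phi_\eps$ (the prefactor $(c_\rho\eps)^{\gamma^2/2}$ indeed does not contribute at order $\gamma^2$ against the $O(\gamma)$ factor), and then kill the discrepancy $\Xi_i=4\pa\bar\pa\varphi_\eps(z_i)\Phi_\eps(z_i)+\eps^{-2}d_\rho$ under the expectation by Gaussian integration by parts: the only surviving contraction is $\mathrm{Cov}\bigl(\pa\bar\pa\varphi_\eps(z_i),\varphi_\eps(z_i)\bigr)=\tfrac14\Delta_a G_{\Omega,\eps}(a,z_i)\big|_{a=z_i}=-\tfrac{\pi}{2}\eps^{-2}\int\rho^2$ --- the same $\Delta G_\Omega=-2\pi\delta$ computation as in the paper, repackaged --- while all cross-contractions vanish because the mollifier supports at distinct $z_j$ are disjoint for small $\eps$. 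This buys you two simplifications over the paper: multiple $\nord{|\nabla\Phi|^2}$ insertions are dispatched by plain telescoping, with no doubling trick, and the instanton component is absorbed systematically by conditioning on $\xi$ throughout, whereas the paper treats it at the end with details left to the reader. The only steps you assert rather than justify are the standard interchanges --- pushing the deterministic operators $\mathcal{D}_{z_i}$ through $\E$, and the validity of the Gaussian integration by parts --- which require the usual domination (the paper invokes Borell--TIS for the analogous interchange); both are routine here, since every quantity involved is a polynomial-times-exponential functional of a smooth Gaussian field with all moments finite.
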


\begin{proof}

Observe that $\varphi_{\eps}$ is a centered
Gaussian field with covariance 
\begin{equation}
G_{\Omega,\eps}(z_{1},z_{2})=\int_{\C^{2}}\rho_{\eps}(w_{1}-z_{1})\rho_{\eps}(w_{2}-z_{2})G_{\Omega}(w_{1},w_{2})dw_{1}dw_{2}\label{eq: Green_eps}
\end{equation}
 satisfying the following two properties, as easily seen using harmonicity of $G_\Omega$:
\begin{itemize}
\item For any $z_{1}\neq z_{2}$, one has $G_{\Omega,\eps}(z_{1},z_{2})=G_{\Omega}(z_{1},z_{2})$
for $\eps$ small enough;
\item For a given $z\in\Omega$, one has $G_{\Omega,\eps}(z,z)=g_{\Omega}(z,z)-\log(c_\rho\eps)$
for $\eps$ small enough.
\end{itemize}
This immediately implies the statement
of the lemma in the case $\hat{\Op}_{z_{i}}=:\exp(\gamma_{i}z_{i}):$
for all $i$, since for $\eps$ small enough (depending on $z_{i}$), writing $:\exp \gamma \Phi_\eps(z):=(c_\rho \eps)^{\frac{\gamma^2}{2}}e^{\gamma \Phi_\eps(z)}$,
we have 
\begin{multline*}
\ccor{:\exp\gamma_{1}\Phi_{\eps}(z_{1}):\dots:\exp\gamma_{n}\Phi_{\eps}(z_{n}):}\\
=(c_\rho\eps)^{\frac{\gamma_{1}^{2}+\dots+\gamma_{n}^{2}}{2}}\E\exp(\gamma_{1}\varphi_{\eps}(z_{1})+...+\gamma_{n}\varphi_{\eps}(z_{n}))\E\exp(\gamma_{1}\xi_{\eps}(z_{1})+...+\gamma_{n}\xi_{\eps}(z_{n}))\\
=(c_\rho\eps)^{\frac{\gamma_{1}^{2}+\dots+\gamma_{n}^{2}}{2}}\exp\left(\frac{1}{2}\sum_{i,j}\gamma_{i}\gamma_{j}G_{\Omega,\eps}(z_{i},z_{j})\right)\E\exp(\gamma_{1}\xi(z_{1})+...+\gamma_{n}\xi(z_{n}))\\
=\exp\left(\sum_{i<j}\gamma_{i}\gamma_{j}G_{\Omega}(z_{i},z_{j})+\sum_{i}\frac{\gamma_{i}^{2}}{2}g_{\Omega}(z_{i},z_{i})\right)\E\exp(\gamma_{1}\xi(z_{1})+...+\gamma_{n}\xi(z_{n})).
\end{multline*}
Therefore, to prove the lemma, it suffices to show that (in the notation of Definition \ref{def: bos_corr})
\begin{equation}
\ccor{\hat{\Op}_{z_{1}}^{\eps}\dots\hat{\Op}_{z_{n}}^{\eps}}=\mathcal{D}_{z_{1}}\dots\mathcal{D}_{z_{n}}\ccor{:\exp\gamma_{1}\Phi_{\eps}(z_{1}):\dots:\exp\gamma_{n}\Phi_{\eps}(z_{n}):}\label{eq: Op_eps_via_L}
\end{equation}
This is immediate if for each $i,$ either $\hat{\Op}_{z_{i}}=\pa\Phi(z_{i})$, $\hat{\Op}_{z_{i}}=\bar{\pa}\Phi(z_{i})$ or $\hat{\Op}_{z_{i}}=:\exp(\gamma_{i}\Phi(z_{i})):$,
taking into account that the correlations $\E\exp(\gamma_{1}\varphi_{\eps}(z_{1})+...+\gamma_{n}\varphi_{\eps}(z_{n}))$
can be readily differentiated under the expectation sign, e.g., by a standard application of Borell--TIS inequality,
see \cite[Chapter 2]{Adler}. 

Thus, it remains to handle $:|\nabla\Phi(z_{i})|^{2}:$. We start
with the computation for the case $\hat{\Op}_{z_{1}}=:|\nabla\Phi(z_{1})|^{2}:$
and $\hat{\Op}_{z_{i}}=:\exp(\gamma_{i}\Phi(z_{i})):$ for $i\geq 2$ and ignore
the instanton component for the moment. Recalling that $\mathcal{D}_{z_{1}}=2\pa_{\gamma_{1}}^{2}\pa_{z_{1}}\pa_{\bar z_{1}}|_{\gamma_{1}=0}$,
a straightforward computation yields (again for small enough $\eps$)
\begin{multline*}
\mathcal{D}_{z_{1}}\ccor{:\exp\gamma_{1}\varphi_{\eps}(z_{1}):\dots:\exp\gamma_{n}\varphi_{\eps}(z_{n}):}=\mathcal{D}_{z_{1}}\ccor{:\exp\gamma_{1}\varphi(z_{1}):\dots:\exp\gamma_{n}\varphi(z_{n}):}\\
=\left(4\left(\sum_{i=2}^{n}\gamma_{i}\pa_{z_{1}}G(z_{1},z_{i})\right)\left(\sum_{i=2}^{n}\gamma_{i}\pa_{\bar z_{1}}G(z_{1},z_{i})\right)+2\pa_{z_{1}}\pa_{\bar z_{1}}g_{\Omega}(z_{1},z_{1})\right)\\
\times\ccor{:\exp\gamma_{2}\varphi(z_{2}):\dots:\exp\gamma_{n}\varphi(z_{n}):}
\end{multline*}

For $z_{0},z_{1},z_{2},\dots,z_{n}$ distinct,
we can calculate 
\begin{multline*}
\ccor{\pa_{z_0}\varphi_{\eps}(z_{0})\pa_{\bar z_{1}}\varphi_{\eps}(z_{1}):\exp\gamma_{2}\varphi_{\eps}(z_{2}):\dots:\exp\gamma_{n}\varphi_{\eps}(z_{n}):}\\
=\left.\pa_{\gamma_{0}}\pa_{\gamma_{1}}\pa_{z_{0}}\pa_{z_{1}}\right|_{\gamma_{0}=\gamma_{1}=0}\ccor{:\exp\gamma_{0}\varphi_{\eps}(z_{0}):\dots:\exp\gamma_{n}\varphi_{\eps}(z_{n}):}\\
=\left(\left(\sum_{i\neq0,1}\gamma_{i}\pa_{z_{0}}G_{\Omega,\eps}(z_{0},z_{i})\right)\left(\sum_{i\neq0,1}\gamma_{i}\pa_{\bar z_{1}}G_{\Omega,\eps}(z_{1},z_{i})\right)+\pa_{z_{0}}\pa_{\bar z_{1}}G_{\Omega,\eps}(z_{0},z_{1})\right)\\
\times\ccor{:\exp\gamma_{2}\varphi_{\eps}(z_{2}):\dots:\exp\gamma_{n}\varphi_{\eps}(z_{n}):}.
\end{multline*}
If we now put $z_{0}=z_{1},$ the left-hand side becomes 
\[
\frac{1}{4}\ccor{|\nabla\varphi_{\eps}(z_{1})|^{2}:\exp\gamma_{2}\varphi_{\eps}(z_{2}):\dots:\exp\gamma_{n}\varphi_{\eps}(z_{n}):}.
\]
For the right-hand side, compute, using (\ref{eq: Green_eps}) and
integrating by parts: 
\begin{multline*}
\pa_{z_{0}}\pa_{\bar z_{1}}G_{\Omega,\eps}(z_{0},z_{1})=\int_{\C^{2}}\pa_{z_{0}}\rho_{\eps}(w_{0}-z_{0})\pa_{\bar z_{1}}\rho_{\eps}(w_{1}-z_{1})G_{\Omega}(w_{0},w_{1})dw_{0}dw_{1}\\
=\int_{\C^{2}}\rho_{\eps}(w_{0}-z_{0})\rho_{\eps}(w_{1}-z_{1})\pa_{w_{0}}\pa_{\bar w_{1}}G_{\Omega}(w_{0},w_{1})dw_{0}dw_{1}\\
=\int_{\C^{2}}\rho_{\eps}(w_{0}-z_{0})\rho_{\eps}(w_{1}-z_{1})\left(\pa_{\bar w_{1}}\frac{1}{2(w_{1}-w_{0})}+\pa_{w_{0}}\pa_{\bar w_{1}}g_{\Omega}(w_{0},w_{1})\right)dw_{0}dw_{1}\\
=\pa_{z_{0}}\pa_{\bar z_{1}}g_{\Omega}(z_{0},z_{1})+\frac{\pi}{2}\int_{\C^{2}}\rho_{\eps}(w_{0}-z_{0})\rho_{\eps}(w_{0}-z_{1})dw_{0},
\end{multline*}
where in the last equality we have used $\pa_{\bar w_{1}}\frac{1}{(w_{1}-w_{0})}=\pi\delta(w_{1}-w_{0})$
and the fact that all the derivatives of $g$ are harmonic. Plugging
in $z_{0}=z_{1}$ and noticing that $\pa_{z_{0}}\pa_{\bar z_{1}}g_{\Omega}(z_{0},z_{1})|_{z_{0}=z_{1}}=\frac{1}{2}\pa_{z_{1}}\pa_{\bar z_{1}}g_{\Omega}(z_{1},z_{1}),$
we conclude that 
\begin{multline*}
\mathcal{D}_{z_{1}}\ccor{:\exp\gamma_{1}\varphi_{\eps}(z_{1}):\dots:\exp\gamma_{n}\varphi_{\eps}(z_{n}):}\\
=\ccor{\left(|\nabla\varphi_{\eps}(z_{1})|^{2}-\eps^{-2}d_{\rho}\right):\exp\gamma_{2}\varphi_{\eps}(z_{2}):\dots:\exp\gamma_{n}\varphi_{\eps}(z_{n}):}.
\end{multline*}
To extend this to an arbitrary string $\hat{\Op}_{z_{2}},\dots,\hat{\Op}_{z_{n}}$
not including $\hat{\Op}_{z_{i}}=:|\nabla\varphi_{\eps}(z_{i})|^{2}:$,
simply apply $\mathcal{D}_{z_{2}}\dots\mathcal{D}_{z_{n}}$ to the
above identity. To allow for $\hat{\Op}_{z_{i}}=:|\nabla\varphi_{\eps}(z_{i})|^{2}:,$
notice that, renaming $z_{0}$ into $z_{1}'$, the above computation
can be re-interpreted as stating that 
\[
\mathcal{\tilde{D}}_{z_{1}}\ccor{:\exp\gamma'_{1}\varphi_{\eps}(z_{1}')::\exp\gamma_{1}\varphi_{\eps}(z_{1}):\dots:\exp\gamma_{n}\varphi_{\eps}(z_{n}):}\equiv0,
\]
where $\mathcal{\tilde{D}}_{z_{1}}$ is a linear map taking as
input a real-analytic function of $z_{1}',z_{1}\dots,z_{n},$ $\gamma_{1}',\gamma_{1}\dots,\gamma_{n}$
and outputting a real-analytic function of $z_{1},\dots,z_{n},\gamma_{2},\dots,\gamma_{n}$;
$\mathcal{\tilde{D}}_{z_{1}}$ is constructed out of derivatives with
respect to $\gamma_{0},\gamma_{1},z_{1}',z_{1}$, multiplication by
functions of $\gamma_{0},\gamma_{1}$, and specializing $\gamma_{1}'$
or $\gamma_{1}$ to zero and $z_{0}$ to $z_{1}$. Crucially, if we
now similarly ``double'' each of the points $z_{i}$ such that $\hat{\Op}_{z_{i}}=:|\nabla\varphi_{\eps}(z_{i})|^{2}:$,
then each of the operators comprising $\tilde{\mathcal{D}}_{z_{i}}$
commutes with ones comprising $\mathcal{\tilde{D}}_{z_{j}}$ for $i\neq j$.
After unwinding the algebra, this observation implies (\ref{eq: Op_eps_via_L})
whenever, for each $i$, $\hat{\Op}_{z_{i}}=:|\nabla\varphi(z_{i})|^{2}:$
and $\hat{\Op}_{z_{i}}=:\exp(\gamma_{i}\varphi(z_{i})):.$ The fully
general case readily follows
as above.

To include the instanton component, note that $\mathcal{D}_{z_{1}}\ccor{:\exp\gamma_{1}\Phi_{\eps}(z_{1}):\dots:\exp\gamma_{n}\Phi_{\eps}(z_{n}):}$
will include the term $\left(\mathcal{D}_{z_{1}}\ccor{:\exp\gamma_{1}\varphi_{\eps}(z_{1}):\dots:\exp\gamma_{n}\varphi_{\eps}(z_{n}):}\right)\E e^{\gamma_{1}\xi(z_{1})+\dots+\gamma_{n}\xi(z_{n})},$
which is handled above, and the terms containing at most first-order
derivatives acting on the GFF correlations, which are readily seen
to be equal to their counterparts in the left-hand side of \eqref{eq: Op_eps_via_L},
with no ``renormalization'' needed. We leave further details to the
reader.
\end{proof}

\subsection{Operator product expansions for the bosonic correlations}

In this subsection, we derive the asymptotics of the correlations $\ccor{\hOp_{z_1}\dots\hOp_{z_n}}$ as $z_1\to z_2$. We start with the case of exponentials, on which all the rest will be based.
\begin{lem}
\label{lem: fuse_exps}
For any fixed distinct $z_2,\dots,z_n\in\Omega$, any $\Op=\hOp_{z_3}\dots\hOp_{z_n}$, we have the following expansion as $z_1\to z_2$, as long as $\gamma_1\neq-\gamma_2$:
\begin{multline}
\label{eq: fuse_exps}
\ccor{\Nexp{\gamma_1}{z_1}\Nexp{\gamma_2}{z_2}\Op}=|z_1-z_2|^{-\gamma_1\gamma_2}\bigg(\ccor{\Nexp{(\gamma_1+\gamma_2)}{z_2}\Op}\\
+\frac{\gamma_1}{\gamma_1+\gamma_2}(z_1-z_2)\pa_{z_2}\ccor{\Nexp{(\gamma_1+\gamma_2)}{z_2}\Op}\\
+\frac{\gamma_1}{\gamma_1+\gamma_2}(\bar{z}_1-\bar{z}_2)\bar{\pa}_{z_2}\ccor{\Nexp{(\gamma_1+\gamma_2)}{z_2}\Op}
+O(z_1-z_2)^2\bigg).
\end{multline}
If $\gamma_1=-\gamma_2$, the expansion reads 
\begin{multline}
\label{eq: fuse_exp_bis}
\ccor{\Nexp{\gamma_1}{z_1}\Nexp{\gamma_2}{z_2}\Op}=|z_1-z_2|^{-\gamma_1\gamma_2}\\
\times\left(\ccor{\Op}+\gamma_1(z_1-z_2)\ccor{\pa\Phi(z_2)\Op}+\gamma_1(\bar{z}_1-\bar{z}_2)\ccor{\bar{\pa}\Phi(z_2)\Op}
+O(z_1-z_2)^2\right).
\end{multline}
Moreover, all further Taylor coefficients of $|z_1-z_2|^{\gamma_1\gamma_2}\ccor{\Nexp{\gamma_1}{z_1}\Nexp{\gamma_2}{z_2}\Op}$ at $z_1=z_2$ vanish at $\gamma_1=0$, while the mixed second derivative 
\[
\bar{\pa}_{z_1}\pa_{z_1}\left(|z_1-z_2|^{\gamma_1\gamma_2}\ccor{\Nexp{\gamma_1}{z_1}\Nexp{\gamma_2}{z_2}\Op}\right)|_{z_1=z_2}
\]
vanishes to the second order at $\gamma_1=0$.
\end{lem}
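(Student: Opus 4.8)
The plan is to reduce everything to the purely exponential (``master'') correlation and then read off the expansion from a direct Taylor computation, exploiting that all the $z_1$-dependence carries a factor $\gamma_1$. By Definition \ref{def: bos_corr}, $\ccor{\Nexp{\gamma_1}{z_1}\Nexp{\gamma_2}{z_2}\Op}$ is obtained from the fully exponential correlation by applying $\mathcal{D}_{z_3}\dots\mathcal{D}_{z_n}$; since each $\mathcal{D}_{z_i}$ acts only on $\gamma_i,z_i,\bar{z}_i$ with $i\geq3$, it commutes with $\pa_{z_1},\bar\partial_{z_1},\pa_{z_2},\bar\partial_{z_2}$ and with multiplication by $|z_1-z_2|^{\pm\gamma_1\gamma_2}$. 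Thus I would first prove both expansions for the master correlation and then apply $\mathcal{D}_{z_3}\dots\mathcal{D}_{z_n}$ termwise; real-analyticity from the previous lemma ensures the $O((z_1-z_2)^2)$ remainder survives, being the tail of a convergent Taylor series in $(z_1-z_2)$ whose coefficients depend real-analytically on the frozen variables.

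For the master correlation, write $F:=|z_1-z_2|^{\gamma_1\gamma_2}\ccor{\Nexp{\gamma_1}{z_1}\dots}=e^{A}I$, where $A$ is the exponent in \eqref{eq: defnexp} with the singular $-\log|z_1-z_2|$ removed and $I=\E\exp(\sum_j\gamma_j\xi(z_j))$; by the previous lemma $F$ is real-analytic across $z_1=z_2$, and every $z_1$-dependent term of $A$ and $I$ carries a factor $\gamma_1$, namely $\gamma_1\gamma_2 g_\Omega(z_1,z_2)$, $\tfrac{\gamma_1^2}{2}g_\Omega(z_1,z_1)$, $\gamma_1\sum_{i\geq3}\gamma_i G_\Omega(z_1,z_i)$, and $\gamma_1\xi(z_1)$ inside $I$. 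Setting $z_1=z_2$ and recombining the three $g_\Omega(z_2,z_2)$-terms into $\tfrac{(\gamma_1+\gamma_2)^2}{2}g_\Omega(z_2,z_2)$ and the two $G_\Omega(z_2,z_i)$-terms into $(\gamma_1+\gamma_2)\gamma_i G_\Omega(z_2,z_i)$, together with $\xi(z_1)\to\xi(z_2)$, identifies $F|_{z_1=z_2}=\ccor{\Nexp{(\gamma_1+\gamma_2)}{z_2}\Op}$. Differentiating $e^{A}I$ once in $z_1$ at $z_1=z_2$, the same recombination gives $\pa_{z_1}A|_{z_1=z_2}=\tfrac{\gamma_1}{\gamma_1+\gamma_2}\pa_{z_2}A_0$ and, differentiating under the expectation (legitimate by the lemma on $\E e^{\sum\gamma_i\xi(z_i)}$), $\pa_{z_1}I|_{z_1=z_2}=\tfrac{\gamma_1}{\gamma_1+\gamma_2}\pa_{z_2}I_0$, where $A_0,I_0$ are the exponent and instanton factor of the merged correlation; this produces the coefficient $\tfrac{\gamma_1}{\gamma_1+\gamma_2}\pa_{z_2}\ccor{\Nexp{(\gamma_1+\gamma_2)}{z_2}\Op}$ of \eqref{eq: fuse_exps}, the $\bar\partial_{z_1}$ case being identical.

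When $\gamma_1=-\gamma_2$ the prefactor $\tfrac{\gamma_1}{\gamma_1+\gamma_2}$ degenerates and $F|_{z_1=z_2}=\ccor{\Op}$; here I would instead match the first-order coefficient against the definition $\ccor{\pa\Phi(z_2)\Op}=\pa_{\gamma}\pa_{z_2}|_{\gamma=0}\ccor{\Nexp{\gamma}{z_2}\Op}$, verifying $\pa_{z_1}F|_{z_1=z_2}=\gamma_1\ccor{\pa\Phi(z_2)\Op}$ and giving \eqref{eq: fuse_exp_bis}. For the last two assertions I would use the factor-of-$\gamma_1$ observation systematically: since every $z_1$-derivative of $A$ and $I$ of order $\geq1$ is divisible by $\gamma_1$, Leibniz and Fa\`a di Bruno applied to $e^{A}I$ show that every Taylor coefficient of $F$ at $z_1=z_2$ of order $\geq1$ is divisible by $\gamma_1$, hence vanishes at $\gamma_1=0$. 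For the mixed derivative $\bar\partial_{z_1}\pa_{z_1}F|_{z_1=z_2}$ I claim a stronger $O(\gamma_1^2)$: expanding $\bar\partial_{z_1}\pa_{z_1}(e^{A}I)$, the three product terms $(\pa_{z_1}A)(\bar\partial_{z_1}A)$, $(\pa_{z_1}A)\bar\partial_{z_1}I$, $(\bar\partial_{z_1}A)\pa_{z_1}I$ are manifestly $O(\gamma_1^2)$, while in the two single terms harmonicity of $g_\Omega(z_1,z_2)$, $G_\Omega(z_1,z_i)$ and $\xi(z_1)$ in $z_1$ annihilates the order-$\gamma_1$ pieces, leaving only $\tfrac{\gamma_1^2}{2}\,\bar\partial_{z_1}\pa_{z_1}g_\Omega(z_1,z_1)$ from $A$ and $\gamma_1^2\,\E[\pa\xi\,\bar\partial\xi\,e^{\cdots}]$ from $I$, both $O(\gamma_1^2)$.

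I expect the main difficulty to be bookkeeping rather than ideas: recombining the $g_\Omega$/$G_\Omega$ contributions at the diagonal into precisely the merged exponent and its $z_2$-derivative, and keeping the instanton factor consistent under differentiation inside the expectation. The one genuinely non-formal input is the harmonicity argument in the last step, which upgrades the vanishing of $\bar\partial_{z_1}\pa_{z_1}F$ from first to second order in $\gamma_1$; this is exactly the property that will later make the energy-field OPE, extracted via $\mathcal{D}=2\pa_\gamma^2\pa_z\bar\partial_z|_{\gamma=0}$, well behaved.
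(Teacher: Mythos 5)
Your proposal is correct and follows essentially the same route as the paper's proof: the same factorization of $F=|z_1-z_2|^{\gamma_1\gamma_2}\ccor{\Nexp{\gamma_1}{z_1}\Nexp{\gamma_2}{z_2}\Op}$ into a Green's-function exponential times the instanton expectation, the same recombination of the $g_\Omega$/$G_\Omega$ terms (with the diagonal symmetry $\pa_{z_2}g_\Omega(z_2,z_2)=2\pa_{z_1}g_\Omega(z_1,z_2)|_{z_1=z_2}$) for the zeroth- and first-order coefficients, the same harmonicity argument for $g_\Omega$, $G_\Omega$ and $\xi$ upgrading the mixed-derivative vanishing to order $\gamma_1^2$, and the same commutation of $\mathcal{D}_{z_3},\dots,\mathcal{D}_{z_n}$ to reach general strings. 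The only (harmless) deviations are that you obtain \eqref{eq: fuse_exp_bis} by matching directly against the definition of $\ccor{\pa\Phi(z_2)\Op}$ at $\gamma_1=-\gamma_2$, whereas the paper takes the limit $\gamma_1\to-\gamma_2$ in \eqref{eq: fuse_exps} using real-analyticity in $\gamma_1$, and that you prove the vanishing of higher Taylor coefficients at $\gamma_1=0$ by tracking explicit $\gamma_1$ factors rather than the paper's simpler observation that at $\gamma_1=0$ the function is independent of $z_1$.
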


\begin{proof}
We first check \eqref{eq: fuse_exps} in the particular case $\hOp_{z_i}=\Nexp{\gamma_i}{z_i}$ for $i=3,\dots,n$, when it is a statement about the beginning of a Taylor expansion of the real-analytic function $F(z_1)$ given by \eqref{eq: def_F} at $z_1=z_2$; we view $\gamma_1,\dots,\gamma_n,z_2,\dots,z_n$ as fixed parameters. We can write by definition
\begin{equation}
F(z_1)=F_1(z_1)\cdot F_2(z_1)
\end{equation}
where 
\[
F_1(z_1)=C\exp\left(\gamma_{1}\gamma_2 g_\Omega(z_1,z_2)+\sum_{i=3}^n \gamma_1\gamma_iG_\Omega(z_1,z_i) +\frac{\gamma_1^2}{2}g_\Omega(z_1,z_1)\right),
\]
\[
F_2(z_1)=\E (\exp(\gamma_1\xi(z_1)+\gamma_2\xi(z_2)+\dots\gamma_n\xi(z_n))),
\]
and $C=\exp\left(\sum_{2\leq i<j\leq n}\gamma_i\gamma_j G_\Omega(z_i,z_j)+\sum_{i=2}^n\frac{\gamma^2_i}{2}g_\Omega(z_i,z_i)\right)$ does not depend on $z_1$. Denote also 
\begin{align*}
H_1(z_2,\dots,z_n)&=\ccor{\nexp{(\gamma_1+\gamma_2)}{z_2}\nexp{\gamma_3}{z_3}\dots\nexp{\gamma_n}{z_n}},\\
H_2(z_2,\dots,z_n)&=\E (\exp((\gamma_1+\gamma_2)\xi(z_2)+\gamma_3\xi(z_3)+\dots\gamma_n\xi(z_n))).
\end{align*}
By putting $z_1=z_2$ and comparing with \eqref{eq: defNexp}, one readily checks that $F_1(z_2)=H_1$ and $F_2(z_2)=H_2$, that is, that 
\[
F(z_2)=H_1H_2=\ccor{\Nexp{(\gamma_1+\gamma_2)}{z_2}\Nexp{\gamma_3}{z_3}\dots\Nexp{\gamma_n}{z_n}}.
\]
Moreover, we have 
\begin{multline}
\label{eq: pa_z_1}
\pa_{z_1}F(z_1)=\left(\gamma_1\gamma_2\pa_{z_1}g_\Omega(z_1,z_2)+\frac{\gamma^2_1}{2}\pa_{z_1}g_\Omega(z_1,z_1)+\sum_{i=3}^n\gamma_1\gamma_i\pa_{z_1}G_\Omega(z_1,z_i)\right) F(z_1)\\
+F_1(z_1)\pa_{z_1}F_2(z_1),
\end{multline}
while by definition 
\begin{multline}
\label{eq: pa_z_1_bis}
\pa_{z_2}(H_1H_2)=\left(\frac{(\gamma_1+\gamma_2)^2}{2}\pa_{z_2}g_\Omega(z_2,z_2)+\sum_{i=3}^n(\gamma_1+\gamma_2)\gamma_i\pa_{z_2}G_\Omega(z_2,z_i)\right)H_1H_2\\+
H_1\pa_{z_2}H_2.
\end{multline}
Note that $g_\Omega(z_1,z_2)$ is a real-analytic function which is symmetric in $z_1$ and $z_2$, therefore $\pa_{z_2}g_\Omega(z_2,z_2)=2\pa_{z_1}g_\Omega(z_1,z_2)|_{z_1=z_2}$. Also, we have \begin{align}
\pa_{z_1}F_2(z_1)|_{z_1=z_2}&=\gamma_1\E(\pa \xi(z_2)e^{(\gamma_1+\gamma_2)\xi(z_2)+\dots+\gamma_n\xi(z_n)}),\\
\pa_{z_2}H_2&=(\gamma_1+\gamma_2)\E(\pa\xi(z_2)e^{(\gamma_1+\gamma_2)\xi(z_2)+\dots+\gamma_n\xi(z_n)}).
\end{align}
Putting these observations together and comparing \eqref{eq: pa_z_1} with \eqref{eq: pa_z_1_bis}, we conclude 
\[
\pa_{z_1}F(z_1)|_{z_1=z_2}=\frac{\gamma_1}{\gamma_1+\gamma_2}\pa_{z_2}\ccor{\Nexp{(\gamma_1+\gamma_2)}{z_2}\Nexp{\gamma_3}{z_3}\dots\Nexp{\gamma_n}{z_n}}.
\]
Using that $\overline{\pa_{z_1}F(z_1)}=\bar{\pa}_{z_1}\overline{F(z_1)}$, we see that the same identity holds with $\bar{\pa}_{{z}_1}$ and $\bar{\pa}_{z_2}$ instead of $\pa_{z_1}$ and $\pa_{z_2}$ respectively. This proves \eqref{eq: fuse_exps} in the case $\hOp_{z_i}=\Nexp{\gamma_i}{z_i}$ for $i=3,\dots,n$. The general case follows immediately from the observation that $\mathcal{D}_{z_3},\dots,\mathcal{D}_{z_n}$ commute both with multiplication by $|z_1-z_2|^{\gamma_1\gamma_2}$ and taking derivatives with respect to $z_1$. For $\hOp_{z_i}=\nord{\cos(\gamma\Phi(z_i))}$ or $\hOp_{z_i}=\nord{\sin(\gamma\Phi(z_i))}$, $i=3,\dots,n$, we simply use linearity.

The expansion \eqref{eq: fuse_exp_bis} follows from \eqref{eq: fuse_exps} by taking the limit $\gamma_1\to -\gamma_2$ and recalling the definition of correlations involving $\pa \Phi$ and $\bar{\pa}\Phi$; again this is fully justified since $F$ (and $\mathcal{D}_{z_3}\dots\mathcal{D}_{z_n}F$) is a real-analytic function in all its parameters, particularly its Taylor coefficients with respect to $z_1$ are also real-analytic functions of $\gamma_1$. 

For the final claim, since we are talking about convergent Taylor series, we can first plug $\gamma_1=0$ and then evaluate the derivatives with respect to $z_1$; they vanish identically since the function does not depend on $z_1$ when $\gamma_1=0$. The stronger assertion about the mixed second derivative requires more work. We go back to \eqref{eq: pa_z_1} and apply $\bar{\pa}_{z_1}$ to both sides, taking into account that both $G_\Omega(z_1,z_2)$ and $g_\Omega(z_1,z_2)$ are harmonic and therefore annihilated by $\bar{\pa}_{z_1}\pa_{z_1}$. This gives
\begin{multline*}
\bar{\pa}_{z_1}\pa_{z_1}F(z_1)=\frac{\gamma_1^2}{2}\bar{\pa}_{z_1}\pa_{z_1}g_{\Omega}(z_1,z_1)F_1(z_1)F_2(z_1)+\frac{\pa_{z_1}F_1(z_1)}{F_1(z_1)}(\bar{\pa}_{z_1}F_1(z_1))F_2(z_1)\\
+\pa_{z_1}F_1(z_1)\bar{\pa}_{z_1}F_2(z_1)+\bar{\pa}_{z_1}F_1(z_1)\pa_{z_1}F_2(z_1)+F_1(z_1)\bar{\pa}_{z_1}\pa_{z_1}F_2(z_1).
\end{multline*}
As we have seen above, each of the first derivatives in the above expression (at $z_1=z_2$) vanish to the first order at $\gamma_1=0$, and the same is true of the $\frac{\pa_{z_1}F_1}{F_1}$ (which is of course just the prefactor in \eqref{eq: pa_z_1}). It remains to treat $\bar{\pa}_{z_1}\pa_{z_1}F_2|_{z_1=z_2}$. We have 
\begin{multline*}
\bar{\pa}_{z_1}\pa_{z_1}F_2=\bar{\pa}_{z_1}\pa_{z_1}\E (\exp(\gamma_1\xi(z_1)+\dots\gamma_n\xi(z_n)))\\
=\gamma_1^2\E (\pa_{z_1}\xi(z_1)\bar{\pa}_{z_1}\xi(z_1)\exp(\gamma_1\xi(z_1)+\dots\gamma_n\xi(z_n)))\\+\gamma_1\E (\pa_{z_1}\bar{\pa}_{z_1}\xi(z_1)\exp(\gamma_1\xi(z_1)+\dots\gamma_n\xi(z_n))),
\end{multline*}
but the last term vanishes because $\xi$ is almost surely harmonic. This concludes the proof that $\bar{\pa}_{z_1}\pa_{z_1}F=\gamma_1^2R(z_1),$ where $R(z_1)$ is real-analytic at $z_1=z_2$ and $\gamma_1=0$, i.e., we have proven the required assertion in the case  $\Op=\Nexp{\gamma_3}{z_3}\dots\Nexp{\gamma_n}{z_n}$. It is then extended to the general case exactly as above.
\end{proof}

\begin{rem}
For a reader familiar with Conformal Field Theory, let us mention that in general, the $n$-th order terms in the expansion \eqref{eq: fuse_exps} can be expressed in terms of Virasoro descendants of $\Nexp{(\gamma_1+\gamma_2)}{z_2}$. In particular, the lemma above handles the case of the level 2 descendant $\bar{L}_{-1}L_{-1}\Nexp{(\gamma_1+\gamma_2)}{z_2}$; the reader can check that explicitly, 
\[
\bar{\pa}_{z_1}\pa_{z_1}F(z_1)|_{z_1=z_2}=\frac{\gamma_1^2}{(\gamma_1+\gamma_2)^2}\bar{\pa}_{z_2}\pa_{z_2}\ccor{\Nexp{(\gamma_1+\gamma_2)}{z_2}\Nexp{\gamma_3}{z_3}\dots\Nexp{\gamma_n}{z_n}}.
\] 
\end{rem}

The identities \eqref{eq: fuse_exps}--\eqref{eq: fuse_exp_bis} are examples of what is known as \emph{Operator Product Expansions} (OPE). It will be convenient to abbreviate the expansions of this type by suppressing $\ccor{\,\cdot\,\hOp}$, e.g. with this convention \eqref{eq: fuse_exps} can be written as
\begin{multline}
\label{eq: fuse_exps_short}
\Nexp{\gamma_1}{z_1}\Nexp{\gamma_2}{z_2}=|z_1-z_2|^{-\gamma_1\gamma_2}\bigg(\Nexp{(\gamma_1+\gamma_2)}{z_2}\\
+\frac{\gamma_1}{\gamma_1+\gamma_2}(z_1-z_2)\pa_{z_2}\Nexp{(\gamma_1+\gamma_2)}{z_2}\\
+\frac{\gamma_1}{\gamma_1+\gamma_2}(\bar{z}_1-\bar{z}_2)\bar{\pa}_{z_2}\Nexp{(\gamma_1+\gamma_2)}{z_2}
+O(z_1-z_2)^2\bigg).
\end{multline}
The meaning of such identity is that they become valid expansions if we take a correlation of both sides with any string $\hOp=\hOp_{z_3}\dots\hOp_{z_n}$, with $z_2,\dots,z_n$ distinct, and use linearity to make sense of the resulting expression (in particular, taking derivatives outside the correlations). 

\begin{lem}
\label{lem: fuse_pa_exp}
We have the following expansions as $z_1\to z_2$:
\begin{equation}
\pa\Phi(z_1)\Nexp{\gamma_2}{z_2}=-\frac{\gamma_2}{2(z_1-z_2)}\Nexp{\gamma_2}{z_2}+\frac{1}{\gamma_2}\pa_{z_2}\Nexp{\gamma_2}{z_2}+O(z_1-z_2)\label{eq: fuse_pa_exp}
\end{equation}
\begin{equation}
\label{eq: fuse_pa_pa_bar}
\pa\Phi(z_1)\bar{\pa}\Phi(z_2)=\frac14\nord{|\nabla\Phi(z_2)|^2}+O(z_1-z_2).
\end{equation}
\end{lem}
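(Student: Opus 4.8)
The plan is to derive both expansions by applying the defining differential operators $\mathcal{D}_{z_1},\mathcal{D}_{z_2}$ (which turn exponentials $\Nexp{\gamma}{z}$ into $\pa\Phi$, $\bar{\pa}\Phi$ and $\nord{|\nabla\Phi|^2}$) to the exponential OPE of Lemma \ref{lem: fuse_exps}. Two facts already in place make this legitimate: these operators commute with multiplication by $|z_1-z_2|^{\pm\gamma_1\gamma_2}$ and with the spectator derivatives $\pa_{z_1},\bar{\pa}_{z_2}$, and by Remark \ref{rem: analyticity} every correlation in sight is a Taylor coefficient, in the $\gamma$'s, of the real-analytic function $F$ of \eqref{eq: def_F}. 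Throughout I write $F=F(z_1)=|z_1-z_2|^{\gamma_1\gamma_2}\ccor{\Nexp{\gamma_1}{z_1}\Nexp{\gamma_2}{z_2}\Op}$, real-analytic at $z_1=z_2$, and recall $F|_{\gamma_1=0}=\ccor{\Nexp{\gamma_2}{z_2}\Op}$ (independent of $z_1$) and $F|_{\gamma_1=\gamma_2=0}=\ccor{\Op}$ (independent of $z_1,z_2$).

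For \eqref{eq: fuse_pa_exp} I apply $\mathcal{D}_{z_1}=\pa_{\gamma_1}\pa_{z_1}|_{\gamma_1=0}$ to $\ccor{\Nexp{\gamma_1}{z_1}\Nexp{\gamma_2}{z_2}\Op}=|z_1-z_2|^{-\gamma_1\gamma_2}F$. Differentiating the prefactor gives $\pa_{z_1}|z_1-z_2|^{-\gamma_1\gamma_2}=-\tfrac{\gamma_1\gamma_2}{2(z_1-z_2)}|z_1-z_2|^{-\gamma_1\gamma_2}$; the subsequent $\pa_{\gamma_1}|_{\gamma_1=0}$ turns this into the singular term $-\tfrac{\gamma_2}{2(z_1-z_2)}F|_{\gamma_1=0}=-\tfrac{\gamma_2}{2(z_1-z_2)}\ccor{\Nexp{\gamma_2}{z_2}\Op}$, while the cross term involving $\pa_{z_1}F|_{\gamma_1=0}$ drops out because $F|_{\gamma_1=0}$ is independent of $z_1$. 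The regular term is $\pa_{z_1}\pa_{\gamma_1}F|_{\gamma_1=0}$; feeding in the first-order expansion of $F$ from Lemma \ref{lem: fuse_exps}, the only contribution surviving $\pa_{z_1}$ comes from $\pa_{\gamma_1}\big(\tfrac{\gamma_1}{\gamma_1+\gamma_2}\big)\big|_{\gamma_1=0}=\tfrac{1}{\gamma_2}$ times $\pa_{z_1}\big[(z_1-z_2)\pa_{z_2}\ccor{\Nexp{\gamma_2}{z_2}\Op}\big]=\pa_{z_2}\ccor{\Nexp{\gamma_2}{z_2}\Op}$ (the antiholomorphic piece is killed by $\pa_{z_1}$), yielding $\tfrac{1}{\gamma_2}\pa_{z_2}\ccor{\Nexp{\gamma_2}{z_2}\Op}+O(z_1-z_2)$, which is exactly \eqref{eq: fuse_pa_exp}.

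For \eqref{eq: fuse_pa_pa_bar} I write $\bar{\pa}\Phi(z_2)=\mathcal{D}_{z_2}\Nexp{\gamma_2}{z_2}$ with $\mathcal{D}_{z_2}=\pa_{\gamma_2}\bar{\pa}_{z_2}|_{\gamma_2=0}$ and apply $\mathcal{D}_{z_1}\mathcal{D}_{z_2}$ to $|z_1-z_2|^{-\gamma_1\gamma_2}F$. Since we differentiate in $\gamma_1$ and $\gamma_2$ only once each, the prefactor contributes only the term $-\log|z_1-z_2|\cdot F|_{\gamma_1=\gamma_2=0}=-\log|z_1-z_2|\cdot\ccor{\Op}$; the decisive point is that $\pa_{z_1}\bar{\pa}_{z_2}\log|z_1-z_2|=0$ away from the diagonal and $\ccor{\Op}$ is constant in $z_1,z_2$, so this term is annihilated. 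Hence $\ccor{\pa\Phi(z_1)\bar{\pa}\Phi(z_2)\Op}=\pa_{z_1}\bar{\pa}_{z_2}\pa_{\gamma_1}\pa_{\gamma_2}F|_{\gamma_1=\gamma_2=0}$ is real-analytic up to $z_1=z_2$, which simultaneously establishes the absence of a singular term and makes the $O(z_1-z_2)$ remainder meaningful.

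It remains to identify the value at $z_1=z_2=:z$ with $\tfrac14\ccor{\nord{|\nabla\Phi(z)|^2}\Op}$, and this is the step requiring the most care. I would compute both quantities directly from the generating function, splitting $\Phi=\varphi+\xi$ exactly as in the computation in Section \ref{sec: bosonic}. The coincidence limit of the left-hand side produces a ``contracted'' term $\pa_{z_1}\bar{\pa}_{z_2}G_\Omega(z_1,z_2)|_{z_1=z_2}$ together with products of one-point derivatives and the corresponding $\xi$-expectations. Using that the singular part of $G_\Omega$ is annihilated by $\pa_{z_1}\bar{\pa}_{z_2}$, that $g_\Omega$ and $\xi$ are harmonic in each variable (so $\pa\bar{\pa}g_\Omega=0$ and $\pa\bar{\pa}\xi=0$), and the symmetry $g_\Omega(z_1,z_2)=g_\Omega(z_2,z_1)$, one obtains the key relation $\pa_{z_1}\bar{\pa}_{z_2}g_\Omega(z_1,z_2)|_{z_1=z_2}=\tfrac12\pa_z\bar{\pa}_z g_\Omega(z,z)$. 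This factor $\tfrac12$ is precisely what reconciles the coincidence limit with the right-hand side, whose defining operator carries the factor $2\pa_{\gamma}^2$ and whose claimed coefficient is $\tfrac14$. A term-by-term match of the purely Gaussian, the mixed $\varphi$--$\xi$, and the pure-instanton contributions then gives \eqref{eq: fuse_pa_pa_bar}. The main obstacle is exactly this bookkeeping of numerical factors through the symmetry-and-harmonicity argument, rather than any genuine analytic difficulty.
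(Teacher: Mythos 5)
Your proof of \eqref{eq: fuse_pa_exp} is essentially the paper's: both apply $\mathcal{D}_{z_1}$ to the exponential OPE of Lemma \ref{lem: fuse_exps}, extract the singular term from the prefactor, note that the cross term dies because $F|_{\gamma_1=0}$ is $z_1$-independent, and obtain $\tfrac{1}{\gamma_2}\pa_{z_2}$ from $\pa_{\gamma_1}\bigl(\tfrac{\gamma_1}{\gamma_1+\gamma_2}\bigr)\big|_{\gamma_1=0}$. For \eqref{eq: fuse_pa_pa_bar} you take a genuinely different route. The paper stays inside the OPE formalism: it applies $\mathcal{D}_{z_2}$ to \eqref{eq: fuse_pa_exp}, which forces it to first refine the error term of \eqref{eq: fuse_pa_exp} to $2(z_1-z_2)\pa_{\gamma_1}\pa_{z_1}^2\hat{F}|_{z_1=z_2,\gamma_1=0}+O(z_1-z_2)^2$ with \emph{no} $(\bar{z}_1-\bar{z}_2)$ component — this is precisely what the ``moreover'' clause of Lemma \ref{lem: fuse_exps} (second-order vanishing in $\gamma_1$ of the mixed derivative $\bar\pa_{z_1}\pa_{z_1}$) is for, since an antiholomorphic first-order error would contaminate the result at $O(1)$ after $\bar\pa_{z_2}$; the factor $\tfrac14$ then comes from $\pa_{\gamma_2}(\gamma_2^{-1}\,\cdot)|_{\gamma_2=0}=\tfrac12\pa^2_{\gamma_2}|_{\gamma_2=0}$ applied to a quantity vanishing to second order in $\gamma_2$. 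You instead bypass \eqref{eq: fuse_pa_exp} altogether: applying $\pa_{\gamma_1}\pa_{\gamma_2}\pa_{z_1}\bar\pa_{z_2}|_{\gamma_1=\gamma_2=0}$ directly to $|z_1-z_2|^{-\gamma_1\gamma_2}F$, observing that the sole surviving prefactor term $-\log|z_1-z_2|\,\ccor{\Op}$ is annihilated by $\pa_{z_1}\bar\pa_{z_2}$, and invoking the extension lemma for $F$ to conclude that $\ccor{\pa\Phi(z_1)\bar\pa\Phi(z_2)\Op}$ is real-analytic up to the diagonal — an elegant way to kill the singular term and the error-propagation issue in one stroke. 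The price is that identifying the diagonal value with $\tfrac14\ccor{\nord{|\nabla\Phi(z_2)|^2}\Op}$ becomes a direct generating-function computation, but that computation is sound: your key relation $\pa_{z_1}\bar\pa_{z_2}g_\Omega(z_1,z_2)|_{z_1=z_2}=\tfrac12\pa_z\bar\pa_z g_\Omega(z,z)$ (from symmetry plus one-variable harmonicity of $g_\Omega$) and the harmonicity-of-$\xi$ step are exactly the ones the paper itself uses in the mollification lemma of Section \ref{sec: bosonic}, where it verifies that $\tfrac14\mathcal{D}_z$, with $\mathcal{D}_z=2\pa_\gamma^2\pa_z\bar\pa_z|_{\gamma=0}$, reproduces the coincidence limit of $\pa\varphi\,\bar\pa\varphi$ — so your normalization $\tfrac14$ checks out, Gaussian, mixed, and instanton pieces included. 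One cosmetic caution: your opening claim that the operators ``commute with multiplication by $|z_1-z_2|^{\pm\gamma_1\gamma_2}$'' is false for $\mathcal{D}_{z_1},\mathcal{D}_{z_2}$ themselves (it holds only for the spectators $\mathcal{D}_{z_3},\dots,\mathcal{D}_{z_n}$), but your actual computations never rely on it — you correctly apply the product rule to the prefactor throughout.
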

\begin{proof}
The expansion \eqref{eq: fuse_pa_exp} is obtained by applying $\mathcal{D}_{z_1}=\pa_{\gamma_1}\pa_{z_1}|_{\gamma_1=0}$ to \eqref{eq: fuse_exps}; since the right-hand side (after the prefactor) is a convergent power series, it is clear that we can just differentiate term by term. We have, taking into account the ``moreover" claim of Lemma \ref{lem: fuse_exps},  
\begin{multline}
\label{eq: for_fusions}
\pa_{z_1}\Nexp{\gamma_1}{z_1}\Nexp{\gamma_2}{z_2}=\\-\frac{\gamma_1\gamma_2}{2(z_1-z_2)}|z_1-z_2|^{-\gamma_1\gamma_2}\left(\Nexp{(\gamma_1+\gamma_2)}{z_2}+\gamma_1\cdot(\dots)\right)\\
+|z_1-z_2|^{-\gamma_1\gamma_2}\left(\frac{\gamma_1}{\gamma_1+\gamma_2}\pa_{z_2}\Nexp{(\gamma_1+\gamma_2)}{z_2}+O(z_1-z_2)\right).
\end{multline}
Taking the derivative with respect to $\gamma_1$ at $\gamma_1=0$ (which will kill the term abbreviated as $(\dots)$ in the second line above) yields \eqref{eq: fuse_pa_exp}. In fact, we have a bit more information: inside a correlation with $\hOp$, the $O(z_1-z_2)$ error term in \eqref{eq: for_fusions} is given by  
\[
2(z_1-z_2)\pa_{z_1}^2\hat{F}(z_1)|_{z_1=z_2}+2(\bar{z}_1-\bar{z_2})\pa_{z_1}\bar{\pa}_{z_1}\hat{F}(z_1)|_{z_1=z_2}+O(z_1-z_2)^2,
\]
where $\hat{F}(z_1)=|z_1-z_2|^{\gamma_1\gamma_2}\ccor{\Nexp{\gamma_1}{z_1}\Nexp{\gamma_2}{z_2}\hOp}$. Lemma \ref{lem: fuse_exps} guarantees that when taking the derivative with respect to $\gamma_1$ at $\gamma_1=0$ this becomes 
\[
2(z_1-z_2)\pa_{\gamma_1}\pa_{z_1}^2\hat{F}(z_1)|_{z_1=z_2,\gamma_1=0}+O(z_1-z_2)^2,
\]
which is a refined form of the \eqref{eq: fuse_pa_exp} error term. Crucially, there is no $(\bar{z}_1-\bar{z}_2)$ part. 

To prove \eqref{eq: fuse_pa_pa_bar} we apply $\mathcal{D}_{z_2}=\pa_{\gamma_2}\bar{\pa}_{z_2}|_{\gamma_2=0}$ to \eqref{eq: fuse_pa_exp}, with this refined error term. The first term vanishes and we get, as required, 
\begin{multline}
\pa\Phi(z_1)\bar{\pa}\Phi(z_2)=\pa_{\gamma_2}\left.\left(\frac{1}{\gamma_2}\bar{\pa}_{z_2}\pa_{z_2}\Nexp{\gamma_2}{z_2}\right)\right|_{\gamma_2=0}+O(z_1-z_2)\\=\frac12\pa^2_{\gamma_2}\left.\bar{\pa}_{z_2}\pa_{z_2}\Nexp{\gamma_2}{z_2}\right|_{\gamma_2=0}+O(z_1-z_2),
\end{multline}
where in the last equality we have used the observation that $\bar{\pa}_{z_2}\pa_{z_2}\ccor{\Nexp{\gamma_2}{z_2}\Op}$ vanishes to the second order at $\gamma_2=0$, which can be readily checked from the definition when $\Op=\Nexp{\gamma_3}{z_3}\dots\Nexp{\gamma_n}{z_n}$ and then extended to the general case in the familiar way.
\end{proof}

\begin{lem} For all $\gamma\in\C$, we have the following operator product expansions:
\begin{multline}
\pa\Phi(z_1)\nord{\cos(\gamma\Phi(z_2))}\\=\frac{\gamma}{2(z_1-z_2)}\nord{\sin(\gamma\Phi(z_2))}+\frac{1}{\gamma}\pa_{z_2}\nord{\sin(\gamma\Phi(z_2))}+O(z_1-z_2)\label{eq: fuse_pa_cos}
\end{multline}
\begin{multline}
\pa\Phi(z_1)\nord{\sin(\gamma\Phi(z_2))}\\=-\frac{\gamma}{2(z_1-z_2)}\nord{\cos(\gamma\Phi(z_2))}-\frac{1}{\gamma}\pa_{z_2}\nord{\cos(\gamma\Phi(z_2))}+O(z_1-z_2)\label{eq: fuse_pa_sin}
\end{multline}
\begin{multline}
\label{eq: fuse_sin_cos}
\nord{\sin(\gamma\Phi(z_1))}\nord{\cos(\gamma\Phi(z_2))}=\frac{1}{2}|z_1-z_2|^{\gamma^2}(\nord{\sin({2}\gamma\Phi(z_2))}+O(z_1-z_2))\\
+\frac{\gamma}{2}|z_1-z_2|^{-\gamma^2}\left((z_1-z_2)\pa_{z_2}\Phi(z_2)+(\bar{z}_1-\bar{z}_2)\bar{\pa}_{z_2}\Phi(z_2)+O(z_1-z_2)^2\right)\\
\end{multline}
\begin{multline}
\label{eq: fuse_cos_cos}
\nord{\cos(\gamma\Phi(z_1))}\nord{\cos(\gamma\Phi(z_2))}=\frac12|z_1-z_2|^{\gamma^2}(\nord{\cos(2\gamma\Phi(z_2))}+O(z_1-z_2))\\
+|z_1-z_2|^{-\gamma^2}\left(\frac12+O(z_1-z_2)^2\right).
\end{multline}
\begin{multline}
\label{eq: fuse_sin_sin}
\nord{\sin(\gamma\Phi(z_1))}\nord{\sin(\gamma\Phi(z_2))}=-\frac12|z_1-z_2|^{\gamma^2}(\nord{\cos(2\gamma\Phi(z_2))}+O(z_1-z_2))\\
+|z_1-z_2|^{-\gamma^2}\left(\frac12+O(z_1-z_2)^2\right).
\end{multline}
\end{lem}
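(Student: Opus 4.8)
The plan is to reduce every one of the five expansions to the exponential operator product expansions already established in Lemmas \ref{lem: fuse_exps} and \ref{lem: fuse_pa_exp}, using the defining linear relations
\[
\nord{\cos(\gamma\Phi(z))}=\frac12\left(\Nexp{\i\gamma}{z}+\Nexp{-\i\gamma}{z}\right),\qquad \nord{\sin(\gamma\Phi(z))}=\frac1{2\i}\left(\Nexp{\i\gamma}{z}-\Nexp{-\i\gamma}{z}\right)
\]
from Definition \ref{def: bos_corr}, together with the linearity of the correlations $\ccor{\,\cdot\,\hOp}$ in the convention following \eqref{eq: fuse_exps_short}. The only algebraic identities needed throughout are $\frac12(\Nexp{\i\gamma}{z}-\Nexp{-\i\gamma}{z})=\i\nord{\sin(\gamma\Phi(z))}$ and $\frac12(\Nexp{\i\gamma}{z}+\Nexp{-\i\gamma}{z})=\nord{\cos(\gamma\Phi(z))}$, together with their $2\gamma$ analogues that recombine $\Nexp{\pm 2\i\gamma}{z}$ into $\nord{\cos(2\gamma\Phi(z))}$ and $\nord{\sin(2\gamma\Phi(z))}$.

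For the two $\pa\Phi$-expansions \eqref{eq: fuse_pa_cos} and \eqref{eq: fuse_pa_sin}, I would substitute $\gamma_2=\i\gamma$ and $\gamma_2=-\i\gamma$ into \eqref{eq: fuse_pa_exp} and take the appropriate combination. Half the sum of the two specializations produces $\pa\Phi(z_1)\nord{\cos(\gamma\Phi(z_2))}$; since the singular prefactor $-\frac{\gamma_2}{2(z_1-z_2)}$ carries opposite signs for $\gamma_2=\pm\i\gamma$, the combination turns $\frac12(\Nexp{\i\gamma}{z_2}-\Nexp{-\i\gamma}{z_2})=\i\nord{\sin(\gamma\Phi(z_2))}$ into exactly $\frac{\gamma}{2(z_1-z_2)}\nord{\sin(\gamma\Phi(z_2))}$, and likewise the $\frac1{\gamma_2}\pa_{z_2}$ term becomes $\frac1\gamma\pa_{z_2}\nord{\sin(\gamma\Phi(z_2))}$; the $O(z_1-z_2)$ error is inherited directly. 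Equation \eqref{eq: fuse_pa_sin} is obtained identically from $\frac{1}{2\i}$ times the difference of the two specializations.

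For the three bilinear expansions \eqref{eq: fuse_sin_cos}--\eqref{eq: fuse_sin_sin}, I would expand each product into the four exponential fusions $\Nexp{\pm\i\gamma}{z_1}\Nexp{\pm\i\gamma}{z_2}$. The two \emph{like-sign} products have $\gamma_1+\gamma_2=\pm2\i\gamma\neq0$ and are governed by \eqref{eq: fuse_exps}: they carry the prefactor $|z_1-z_2|^{\gamma^2}$ and their leading terms $\Nexp{\pm2\i\gamma}{z_2}$ recombine into $\nord{\cos(2\gamma\Phi(z_2))}$ or $\nord{\sin(2\gamma\Phi(z_2))}$, producing the first line of each expansion with its $O(z_1-z_2)$ remainder. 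The two \emph{opposite-sign} products have $\gamma_1=-\gamma_2=\pm\i\gamma$ and are governed instead by \eqref{eq: fuse_exp_bis}: they carry the prefactor $|z_1-z_2|^{-\gamma^2}$ and contribute the second line.

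The one genuinely delicate point, and where I would be most careful, is the bookkeeping of the subleading terms in the $|z_1-z_2|^{-\gamma^2}$ channel. In \eqref{eq: fuse_exp_bis} the first-order term is $\gamma_1(z_1-z_2)\pa\Phi(z_2)+\gamma_1(\bar{z}_1-\bar{z}_2)\bar{\pa}\Phi(z_2)$ with $\gamma_1=\pm\i\gamma$ for the two opposite-sign products. For $\nord{\cos}\,\nord{\cos}$ and $\nord{\sin}\,\nord{\sin}$ these two products enter with equal weight, so the constant leading terms add to $\frac12$ while the first-order $\pa\Phi,\bar{\pa}\Phi$ terms cancel (opposite $\gamma_1$), which is exactly why the sharper remainder $O(z_1-z_2)^2$ appears there. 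For $\nord{\sin}\,\nord{\cos}$ the relative sign is flipped, so instead the constants cancel and the first-order terms survive, adding to $\frac{\gamma}{2}\left((z_1-z_2)\pa_{z_2}\Phi(z_2)+(\bar{z}_1-\bar{z}_2)\bar{\pa}_{z_2}\Phi(z_2)\right)$, matching the second line of \eqref{eq: fuse_sin_cos}. Tracking these $\pm$ signs and matching each contribution to the correct fractional power of $|z_1-z_2|$ is the whole content; the remaining manipulations are routine applications of linearity, valid inside any correlation with a string $\hOp$.
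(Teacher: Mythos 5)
Your proposal is correct and follows essentially the same route as the paper's proof: there too, \eqref{eq: fuse_pa_cos}--\eqref{eq: fuse_pa_sin} are obtained by taking $\tfrac12$ of the sum (respectively $\tfrac{1}{2\i}$ of the difference) of \eqref{eq: fuse_pa_exp} specialized to $\gamma_2=\pm\i\gamma$, and the bilinear expansions are obtained by splitting the product of exponentials into the like-sign channel, governed by \eqref{eq: fuse_exps}, and the opposite-sign channel, governed by \eqref{eq: fuse_exp_bis}. Your explicit tracking of the sign cancellations in the $|z_1-z_2|^{-\gamma^2}$ channel --- explaining why the first-order $\pa\Phi,\bar\pa\Phi$ terms cancel for $\nord{\cos}\,\nord{\cos}$ and $\nord{\sin}\,\nord{\sin}$ (yielding the $O(z_1-z_2)^2$ remainder) but survive for $\nord{\sin}\,\nord{\cos}$ --- is exactly the bookkeeping the paper compresses into the remark that the remaining cases differ ``only in some signs and constants.''
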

\begin{proof}
The expansions \eqref{eq: fuse_pa_cos}--\eqref{eq: fuse_pa_sin} are readily obtained by adding (respectively subtracting) \eqref{eq: fuse_pa_exp} with $\gamma_2=\pm \i\gamma$ and multiplying by $\tfrac{1}{2}$ (respectively $\tfrac{1}{2\i}$). To prove \ref{eq: fuse_sin_cos}, we write by linearity
\begin{multline}
\nord{\sin(\gamma\Phi(z_1))}\nord{\cos(\gamma\Phi(z_2))}=\frac{1}{4\i}\left(\Nexp{\i\gamma}{z_1}\Nexp{\i\gamma}{z_2}-\Nexp{- \i\gamma}{z_1}\Nexp{-\i\gamma}{z_2}\right)\\
+\frac{1}{4\i}\left(\Nexp{ \i\gamma}{z_1}\Nexp{-\i\gamma}{z_2}-\Nexp{-\i\gamma}{z_1}\Nexp{\i\gamma}{z_2}\right)
\end{multline}
and apply \eqref{eq: fuse_exps} to the first term and \eqref{eq: fuse_exp_bis} to the second one. The expansions \eqref{eq: fuse_cos_cos} and \eqref{eq: fuse_sin_sin} are proven in exactly the same way; only some signs and constants differ.
\end{proof}

We mention for completeness the remaining operator product expansions (that we do not need in the remainder of the article):
\begin{align*}
\pa\Phi(z_1)\pa\Phi(z_2)=&-\frac{1}{2(z_1-z_2)^2}+O(1),\\
\nord{|\nabla\Phi(z_1)|^2}\Nexp{\gamma_2}{z_2}=&\frac{1}{|z_1-z_2|^2}\Nexp{\gamma_2}{z_2}+O(|z_1-z_2|^{-1}),\\
\pa\Phi(z_1)\nord{|\nabla\Phi(z_2)|^2}=&-\frac{2}{(z_1-z_2)^2}\pa_{\bar{z}_2}\Phi(z_2)+O(1), \\
\nord{|\nabla\Phi(z_1)|^2}\nord{|\nabla\Phi(z_2)|^2}=&4|z_1-z_2|^{-4}+O(|z_1-z_2|^{-3}).
\end{align*}
These can be derived from \eqref{eq: fuse_exps} similarly to the proof of Lemma \ref{lem: fuse_pa_exp}.

\subsection{Proof of Theorem \ref{th:main} given Theorem \ref{th:ds}.}
We are ready to derive Theorem \ref{th:main} from Theorem \ref{th:ds}.
\begin{proof}
We start by extending the identity \eqref{eq: Thm1} to the case when some of the $\Op_{z_i}$ are $\psi_{z_i}$ or $\psi_{z_i}^\star$, by induction on the number of such $z_i$. Assume on the one hand that we already know the identity 
\[
\ccor{\mu_{z_1}\sigma_{z_2}\Op}^2=\ccor{\bmu{z_1}\bsigma{z_2}\hOp}.
\]
We compute the leading term of the asymptotics of both sides as $z_1\to z_2$. The asymptotics of the right-hand side is given by \eqref{eq: fuse_sin_cos}, specifically
\begin{multline}
\ccor{\bmu{z_1}\bsigma{z_2}\hOp}\\=|z_1-z_2|^\frac{1}{2}\ccor{\nord{\sin(\sqrt{2}\Phi(z_2))}\hOp}+\frac{\sqrt{2}}{2}\frac{z_1-z_2}{|z_1-z_2|^\frac12}\ccor{\pa\Phi(z_2)\hOp}\\+\frac{\sqrt{2}}{2}\frac{\bar{z}_1-\bar{z}_2}{|z_1-z_2|^\frac12}\ccor{\bar{\pa}\Phi(z_2)\hOp}+O(z_1-z_2)^\frac32.
\end{multline}
On the other hand \cite[Eq. (6.11) together with (6.10) and (5.8)]{CHI2} gives
\begin{multline}
\ccor{\mu_{z_1}\sigma_{z_2}\Op}^2=|z_1-z_2|^\frac12\left(\frac{e^{-\frac{ \i\pi}{4}}}{2}\frac{(z_1-z_2)^\frac12}{|z_1-z_2|^\frac12}\ccor{\psi_{z_2}\Op}+\frac{e^{\frac{ \i\pi}{4}}}{2}\frac{(\bar{z}_1-\bar{z}_2)^\frac12}{|z_1-z_2|^\frac12}\ccor{\psi^\star_{z_2}\Op}+O(z_1-z_2)\right)^2\\
=\frac12|z_1-z_2|^\frac12\ccor{\psi_{z_2}\Op}\ccor{\psi^\star_{z_2}\Op}-\frac{\i}{4}\frac{z_1-z_2}{|z_1-z_2|^\frac12}\ccor{\psi_{z_2}\Op}^2+\frac{\i}{4}\frac{\bar{z}_1-\bar{z}_2}{|z_1-z_2|^\frac12}\ccor{\psi^\star_{z_2}\Op}^2+O(z_1-z_2)^\frac32.
\end{multline} 
Comparing the coefficients yields 
\[
\ccor{\psi_{z_2}\Op}^2=\i 2\sqrt{2}\ccor{\pa\Phi(z_2)\hOp} \quad\text{and}\quad \ccor{\psi^\star_{z_2}\Op}^2=-\i 2\sqrt{2}\ccor{\bar{\pa}\Phi(z_2)\hOp},
\]
that is, we get the bosonization identity with one more fermion than the string $\Op$ already contained, and with the parity conditions preserved. This enables the inductive argument.

In a similar manner we can extend the result to energies, by passing to the limit $z_1\to z_2$ in the identity $\ccor{\psi_{z_1}\psi^\star_{z_2}\Op}^2=8\ccor{\pa\Phi(z_1)\bar{\pa}\Phi(z_2)\hOp}$, using \eqref{eq: fuse_pa_pa_bar} and comparing with \cite[Eq. 6.3]{CHI2} yields
\[
\ccor{\en_{z_2}\Op}^2=-\frac{1}{2}\ccor{\nord{|\nabla\Phi(z_2)|^2}\hOp}.
\]
\end{proof}

\begin{rem} \label{rem: prod_boson}
As a by-product of the above proof we get another bosonization prescription not covered directly by Theorem \ref{th:main}; namely we see that, for $\Op=\Op_{z_1}\dots\Op_{z_N}$ 
\[
\ccor{\psi_{z}\Op}\ccor{\psi^\star_{z}\Op}=\ccor{\nord{2\sin(\sqrt{2}\Phi(z))}\hOp}
\]
when $|\{i:\Op_{z_i}\in\{\sigma_{z_i},\psi_{z_i},\psi^\star_{z_i}\}\}|$ and $|\{i:\Op_{z_i}\in\{\mu_{z_i},\psi_{z_i},\psi^\star_{z_i}\}\}|$ are both odd.
Also, putting $\gamma=\frac{1}{\sqrt{2}}$ in \eqref{eq: fuse_cos_cos} and comparing with \cite[Eq. 6.12]{CHI2} gives 
\[
\ccor{\en_{z}\Op}\ccor{\Op}=\ccor{\nord{2\cos(\sqrt{2}\Phi(z))}\hOp}
\]
when $|\{i:\Op_{z_i}\in\{\sigma_{z_i},\psi_{z_i},\psi^\star_{z_i}\}\}|$ and $|\{i:\Op_{z_i}\in\{\mu_{z_i},\psi_{z_i},\psi^\star_{z_i}\}\}|$ are both even. These identities can be naturally interpreted as claiming that $\nord{2\sin(\sqrt{2}\Phi(z))}$ (respectively, $\nord{2\cos(\sqrt{2}\Phi(z))}$) corresponds to $\psi_z\tilde{\psi}^\star_z$ (respectively, $\frac12(\en_z+\tilde{\en}_z)$) in the two independent copies of the Ising model, as discussed after the statement of Theorem \ref{th:main}.
\end{rem}

\subsection{Extensions}
\label{sec: extensions}
In this section, we describe an extension of Theorem \ref{th:main} to the boundary conditions $\mathcal B$ defined by dividing the boundary into finitely many arcs, with boundary conditions on each arc set to $\plus$, $\minus$, or $\free$. This is the setup of \cite[Theorem 1.3]{CHI2}, where the scaling limits $\langle \Op_{z_1}\dots\Op_{z_n}\rangle_{\Omega,\mathcal B}$ of such correlations are computed, where $\Op_{z_i}\in\{\sigma_{z_i},\en_{z_i}\}$ for all $i$. Following \cite{CHI2}, we focus on correlation functions involving spin and energy only, although as discussed after \cite[Theorem 1.3]{CHI2}, it is in principle possible to include fermions and disorders, as well as boundary fields. We state the result somewhat loosely:

\begin{prop}\label{pr:pmfreeboso}
The (continuous) correlation functions $\langle \Op_{z_1}\dots\Op_{z_n}\rangle_{\Omega,\mathcal B}$, where $\Op_{z_i}\in\{\sigma_{z_i},\en_{z_i}\}$, featuring in \cite[Theorem 1.3]{CHI2}, can be expressed in terms of the bosonic correlation functions.
\end{prop}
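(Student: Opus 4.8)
The plan is to reduce the statement to Theorem \ref{th:main} by re-expressing the plus/minus/free boundary data, on both the Ising and the bosonic side, as a \emph{parity refinement} of the wired/free data. The guiding observation is the one recorded in the Remark after Theorem \ref{th:main}: the wired condition is locally monochromatic but leaves the common sign of the boundary spins on each component summed over, and it is exactly this residual sign that the plus/minus prescription fixes. On the Ising side I would first record, from the locally monochromatic definition of the wired boundary conditions together with \cite[Theorem 1.3]{CHI2}, that resolving each summed-over global sign into a definite $+$ or $-$ (and, in the presence of free arcs subdividing a boundary component, allowing distinct wired arcs to carry distinct signs) produces precisely the plus/minus/free configurations, and that summing $\ccor{\Op_{z_1}\dots\Op_{z_n}}_{\Omega,\mathcal B}$ over all sign patterns compatible with a given wired partition returns the wired correlation.

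On the bosonic side, the correlations $\ccor{\hOp_{z_1}\dots\hOp_{z_n}}_\Omega$ are assembled from the instanton sum over windings $m\in\Z^g$ and free-arc signs $\hat s\in\{\pm1\}^k$ described in Section \ref{sec: instanton}. Fixing the sign of a boundary component amounts to projecting this sum onto a fixed parity class $m_i\equiv \epsilon_i \pmod 2$, that is, to inserting the idempotent $\tfrac12\bigl(1+(-1)^{m_i+\epsilon_i}\bigr)$. For real argument the alternating sum $\sum_m (-1)^{m_i}\exp\bigl(4Q_\Omega(m+\nu)+2(m+\nu)\cdot v\bigr)$ is again a real theta series, namely the one obtained from $\theta_{Q_\Omega}(v\,|\,H)$ by shifting the characteristic $H=\binom{0}{\nu}$ by a half-integer vector. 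Consequently each plus/minus/free correlation is a finite linear combination of the wired/free bosonic correlations with shifted characteristics, and is therefore once more explicit in terms of the period matrix, Green's function and harmonic measures. Equivalently, one may define directly a modified instanton $\xi_{\mathcal B}$, piecewise constant on the arcs with values in $2\sqrt2\pi\Z$, $\sqrt2\pi+2\sqrt2\pi\Z$, $\tfrac{\sqrt2}{2}\pi+\sqrt2\pi\Z$ on plus, minus and free arcs respectively; the finiteness of the regularised Dirichlet energy and the negative-definiteness needed for a well-defined law follow verbatim from Section \ref{sec: instanton}, as they depend only on the conformal class of $\Omega$.

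It then remains to match the Ising-side decomposition with the bosonic-side projection. The cleanest route is through the two independent copies interpretation of the discussion following Theorem \ref{th:main}: writing $\ccor{\Op_{z_1}\dots\Op_{z_n}}^2_{\Omega,\mathcal B}=\ccor{\Op_{z_1}\tilde{\Op}_{z_1}\dots\Op_{z_n}\tilde{\Op}_{z_n}}_{\Omega,\mathcal B}$ linearises the square, the field $\Phi$ describes the two copies jointly, and the boundary value of $\Phi$ modulo $\sqrt2\pi$ records the wired/free type while its finer value modulo $2\sqrt2\pi$ records the plus/minus sign. Applying Theorem \ref{th:main} to each parity-projected piece then yields the bosonic formula for $\ccor{\Op_{z_1}\dots\Op_{z_n}}_{\Omega,\mathcal B}$ with $\Op_{z_i}=\sigma_{z_i}$; the extension to $\Op_{z_i}=\en_{z_i}$ is obtained by the same $\psi\psi^\star\to\en$ fusion, via \eqref{eq: fuse_pa_pa_bar}, used in the proof of Theorem \ref{th:main}, since the operator product expansions of Section \ref{sec:ff} are purely local and insensitive to the boundary conditions imposed on $\xi$.

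The delicate point, which I expect to be the main obstacle, is precisely the interaction between squaring and the parity decomposition: a square does not split linearly over sign patterns, so the matching cannot be a naive term-by-term identification. Controlling the resulting cross-terms --- equivalently, verifying that the relative sign of the two Ising copies is faithfully tracked by the parity of the instanton winding, and that the instanton partition functions are correctly normalised in each sector --- is where the two-copy picture and careful bookkeeping are essential.
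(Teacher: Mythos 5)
Your proposal correctly locates the crux --- and then leaves it unresolved, which is a genuine gap. The parity-projection idea (inserting $\tfrac12(1+(-1)^{m_i+\epsilon_i})$ into the instanton sum) cannot by itself isolate $\langle \Op_{z_1}\dots\Op_{z_n}\rangle^2_{\Omega,\mathcal B}$ for a fixed sign pattern $\mathcal B$: in the two-copies picture the field $\Phi$ describes the pair of Ising configurations jointly, and the parity of the instanton winding tracks only the \emph{relative} sign of the two copies on each boundary component, not the individual signs. So a parity-projected bosonic correlation is a sum of cross-products $\langle\Op\rangle_{\tau}\langle\Op\rangle_{\tau'}$ over pairs of sign patterns with fixed product $\tau\tau'$ (weighted by the sector partition functions), never a pure square for one $\mathcal B$. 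This is exactly the phenomenon visible in Remark \ref{rem: prod_boson}, where the odd-parity bosonic sectors produce \emph{products} of two different Ising correlations rather than squares. Your closing paragraph acknowledges this ("a square does not split linearly over sign patterns") but offers no mechanism to control the cross-terms or to normalise the sector weights, and no such mechanism is available within the purely instanton-side decomposition you propose; the step "applying Theorem \ref{th:main} to each parity-projected piece" is therefore not justified.

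The paper's proof avoids squaring at the decomposition stage entirely, which is why it goes through. It expresses $\langle \Op_{z_1}\dots\Op_{z_n}\rangle_{\Omega,\mathcal B}$ itself (not its square) in terms of locally monochromatic correlations via three reductions from \cite[Section 5]{CHI2}: (i) boundary spin insertions obtained as limits \eqref{eq:bdry} of bulk spins weighted by $2^{-\frac14}\mathrm{crad}_\Omega(\hat w_i)^{\frac18}$; (ii) auxiliary boundary conditions $\widetilde{\mathcal B}$ created by fusing fermions to the boundary points $b_i$ as in \eqref{eq:aux}, which is what actually forces the sign changes; and (iii) the Fourier--Walsh expansion \eqref{eq:pmfree} of the indicator $I_\tau$, whose ratio structure automatically handles the sector-weight normalisation that your sketch leaves open. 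Bosonization (Theorem \ref{th:main}, extended to boundary spins in Proposition \ref{prop: boson_bdry_spin} --- note that there the boundary spin does become the instanton-parity observable $(-1)^{\xi(w_i)/\sqrt2\pi}$, vindicating your intuition at the level of a single insertion) is then applied only to these monochromatic building blocks, and the $\pm/$free correlation is assembled from square roots, ratios and limits of bosonic quantities. If you want to salvage your route, you would first have to prove an identity expressing each fixed-relative-sign two-copy correlation bosonically and then invert the resulting linear system in the individual signs, with the sector partition functions computed bosonically --- at which point you would essentially have reconstructed steps (i)--(iii).
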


\begin{proof}
The proof is based on the fact that $\langle \mathcal \Op_{z_1}\dots\Op_{z_n}\rangle_{\Omega,\mathcal B}$ can be expressed in terms of the correlations $\langle \Op_{w_1}\dots\Op_{w_N}\rangle_\Omega$ with locally monochromatic boundary condtions, $\Op_{w_i}\in\{\en_{w_i},\psi_{w_i},\sigma_{w_i}\}$ by a series of passages to a limit, see \cite[Section 5]{CHI2}. We now recall the procedure. 

First, one defines the correlation $\langle \Op_{w_1}\dots\Op_{w_N}\rangle_\Omega$, still with the locally monochromatic boundary conditions, where some spins are allowed to be on the $\fixed$ part of the the boundary, see \cite[Proposition 5.4, Remark 5.5, Definition 5.6, Definition 5.7, Definition 5.9, Remark 5.11, and Definition 5.17]{CHI2}. The bottom line of these definitions is that if, say, $w_1,\dots,w_l\in\Omega$ and $w_{l+1},\dots,w_n\in\fixed$, then
\begin{multline}\label{eq:bdry}
\langle \Op_{w_1}\dots\Op_{w_l}\sigma_{w_{l+1}}\dots\sigma_{w_N}\rangle_{\Omega}\\=\lim_{\hat{w}_i\to w_i}\prod_i2^{-\frac14}\mathrm{crad}_\Omega(\hat{w}_i)^{\frac18}\langle \Op_{w_1}\dots\Op_{w_N}\sigma_{\hat{w}_{l+1}}\dots\sigma_{\hat{w}_N}\rangle_{\Omega}.
\end{multline}
(The factor of $2^{-\frac14}$ is missing in \cite{CHI2} by mistake.) The conformal radius in a multiply connected domain $\Omega$ is defined by replacing it with a simply-connected domain $\Omega'\subset\hat{\C}$, removing all boundary components except the one containing $w_i$; it is easy to see that then as $\hat w_i\to w_i$, $\log\crad_{\Omega'}(\hat{w}_i)=g_{\Omega'}(\hat{w}_i,\hat{w}_i)=g_{\Omega}(\hat{w}_i,\hat{w}_i)+o(1)$. We note that these correlations are a scaling limit of the discrete correlations with locally monochromatic boundary conditions, and they in fact do not depend on the positions of $w_{l+1},\dots,w_N$ within the $\fixed$ part of their boundary components. This is valid for arbitrary fields $\Op_{w_i}\in \{\sigma_{w_i},\mu_{w_i},\psi_{w_i},\psi^\star_{w_i},\en_{w_i}\}$, but we do require $|\{i\leq l:\Op_{w_i}\in\{\sigma_{w_i},\psi_{w_i},\psi^\star_{w_i}\}\}|+N-l$ to be even, else the correlation is defined to be zero.

 The next step is to define the correlations with  \emph{auxiliary boundary conditions} $\widetilde {\mathcal B}$, see \cite[Section 5]{CHI2}. These are specified, in addition to the partition of $\pa\Omega$ into $\fixed$ and $\free$, by a choice of distinct points $b_1,...,b_q$ on $\partial \Omega$, such that there's an even number of them on each connected component of $\partial \Omega$, and the points either belong to $\fixed$, or are endpoints of free arcs. One then defines (see \cite[Definition 5.24]{CHI2} and the discussion thereafter)
 \begin{equation}\label{eq:aux}
\langle \Op_{w_1}\dots\Op_{w_N}\rangle_{\Omega,\widetilde{\mathcal B}}:=\lim_{z_1\to b_1,...,z_q\to b_q}\frac{\langle \psi_{z_1}\cdots \psi_{z_q}\Op_{w_1}\dots\Op_{w_N}\rangle_\Omega}{\langle \psi_{z_1}\cdots \psi_{z_q}\rangle_\Omega}.
\end{equation}

In the discrete picture, this corresponds to the situation where the boundary spins are still random and ``locally monochromatic" away from $b_i$, but forced to change sign at each $b_i\in\fixed$ and across each free arc that has exactly one of its endpoints in $\{b_1,\dots,b_q\}$.

Finally, given a boundary condition $\mathcal{B}$ specified by a partition of $\pa\Omega$ into $\plus$, $\minus$ and $\free$ arcs, we define the auxilliary boundary condition $\widetilde{\mathcal B}$ by including into $\{b_1,\dots,b_q\}$ all points separating a $\plus$ arc from a $\minus$ arc, and one endpoint of each free arc separating $\plus$ from $\minus$. Further, for each boundary component that has at least one $\plus$ or $\minus$ arc, we pick a point on one of those arcs; denote by $w_1,\dots,w_d$ the resulting collection of points. Note that to recover $\mathcal{B}$ from $\widetilde{\mathcal{B}}$, we need to specify the spin at each $w_i$. Define $\tau\in \{-1,1\}^d$ as follows: $\tau_j=1$ if the arc $j$ is a subset of $\{$\texttt{plus}$\}$ and $\tau_j=-1$ if the arc $j$ is a subset of $\{$\texttt{minus}$\}$. We then define

\begin{equation}\label{eq:pmfree}
\langle \Op_{z_1}\dots\Op_{z_n}\rangle_{\Omega,\mathcal B}=\frac{\sum_{S\subset \{1,...,d\}} \alpha_S(\tau)\langle \prod_{j\in S}\sigma_{w_j}\Op_{z_1}\dots\Op_{z_n}\rangle_{\Omega,\widetilde{\mathcal B}} }{\sum_{S\subset \{1,...,d\}}\alpha_S(\tau)\langle \prod_{j\in S}\sigma_{w_j}\rangle_{\Omega,\widetilde{\mathcal B}}},
\end{equation}
where the coefficients $\alpha_S(\tau)$ are defined as follows. Consider the function $I_\tau:\{\pm 1\}^d \to \R$,
\[
I_\tau(\sigma)=\begin{cases}
1, & \sigma=\tau,\\
0, & \text{else},
\end{cases}
\]and let $\alpha_S(\tau)$ be its Fourier--Walsh coefficients (see also \cite[(2.18)]{CHI2}), i.e., the unique coefficients $(\alpha_S(\tau))_{S\subset \{1,...,d\}}$ such that for each $\sigma\in \{\pm 1\}^d$,
\[
I_\tau(\sigma)=\sum_{S\subset \{1,...,d\}}\alpha_S(\tau)\prod_{j\in S}\sigma_j.
\]
By \cite[Lemma 5.25]{CHI2}, the correlation \eqref{eq:pmfree} is well defined, i.e., the denominator does not vanish. The fact that $\langle \Op_{z_1}\dots\Op_{z_n}\rangle_{\Omega,\mathcal B}$ thus defined do correspond to a scaling limit of correlation functions of the discrete Ising model with the $+/-/$\texttt{free} boundary conditions is the content of \cite[Theorem 1.3]{CHI2}. Note that in the left-hand side of \eqref{eq:pmfree}, we allow arbitrary parity of the spin insertions -- some terms simply vanish if they contain an odd amount of bulk and boundary spins in total.

Starting with the bosonization identity $$\langle \Op_{w_1}\dots\Op_{w_N}\rangle_{\Omega}^2=\langle \hOp_{w_1}\dots\hOp_{w_N}\rangle_{\Omega}$$ and applying all the above step, we can express $\langle \Op_{z_1}\dots\Op_{z_n}\rangle_{\Omega,\mathcal B}$ in terms of the (limits of) bosonic correlations.
\end{proof}

Let us remark that in the simply-connected case, the third step of the above procedure is trivial, and boundary conditions $\widetilde{\mathcal B}$ are essentially equivalent to $\mathcal{B}$. Indeed, let $z_1,\dots,z_n\in \Omega$. If $|\{i:\Op_{z_i}\in\{\sigma_{z_i},\psi_{z_i},\psi^\star_{z_i}\}\}|$ is even, then $\langle \Op_{z_1}\dots\Op_{z_n}\rangle_{\Omega,\widetilde{\mathcal B}}=\langle \Op_{z_1}\dots\Op_{z_n}\rangle_{\Omega,\mathcal B},$
by spin flip symmetry. If $|\{i:\Op_{z_i}\in\{\sigma_{z_i},\psi_{z_i},\psi^\star_{z_i}\}\}|$ is odd, we need to insert one boundary spin, say at $w_1$, and so, we have $d=1$, $I_\tau(\sigma)=\frac12(1+\tau_1\sigma_1)=\alpha_\emptyset(\tau_1)+\alpha_{\{1\}}(\tau_1)\sigma_1$, where $\tau_1=\pm1$ is the boundary condition at $w_1$, and
$$
\langle \Op_{z_1}\dots\Op_{z_n}\rangle_{\Omega,\mathcal B}=
\frac{\frac{1}{2}\cdot 0+\frac12\cdot\tau_1\langle \sigma_{w_1}\Op_{z_1}\dots\Op_{z_n}\rangle_{\Omega,\widetilde{\mathcal B}}}{\frac{1}{2}+\frac12\cdot 0}=\tau_1\langle \sigma_{w_1}\Op_{z_1}\dots\Op_{z_n}\rangle_{\Omega,\widetilde{\mathcal B}}.$$

Coming back the the general case, let us give a more concrete description of the first two steps in the above procedure; we do not expect the third step to get any more concrete in general.

For the first step, we have the following interesting extension of the main theorem:
\begin{prop}
\label{prop: boson_bdry_spin}
We have the following bosonization prescription for $w_1,\dots w_l\in\Omega$  and $w_{l+1},\dots,w_N \in \fixed$:
$$
\langle \Op_{w_1}\dots\Op_{w_l}\sigma_{w_{l+1}}\dots\sigma_{w_N}\rangle^2_{\Omega}=\langle \hOp_{w_1}\dots\hOp_{w_N}\rangle_{\Omega},
$$
where for $i=1,\dots,l$, each pair $\Op_{w_i},\hOp_{w_i}$ is as in Theorem \ref{th:main} while  for $i=l+1,\dots,N$, we have $\hOp_{w_i}=(-1)^{\frac{\Phi(w_i)}{\sqrt{2}\pi}}=(-1)^{\frac{\xi(w_i)}{\sqrt{2}\pi}}$. 
\end{prop}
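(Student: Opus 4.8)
The plan is to deduce the statement from the bulk bosonization identity of Theorem~\ref{th:main} by letting the spins $\sigma_{w_{l+1}},\dots,\sigma_{w_N}$ approach the wired boundary exactly along the normalization built into the definition \eqref{eq:bdry}. Concretely, I would start from the interior identity
\[
\ccor{\Op_{w_1}\dots\Op_{w_l}\sigma_{\hat w_{l+1}}\dots\sigma_{\hat w_N}}_\Omega^2=\ccor{\hOp_{w_1}\dots\hOp_{w_l}\bsigma{\hat w_{l+1}}\dots\bsigma{\hat w_N}}_\Omega,
\]
valid for $\hat w_i\in\Omega$, multiply both sides by $\prod_{i=l+1}^N 2^{-\frac12}\crad_\Omega(\hat w_i)^{\frac14}=\prod_{i=l+1}^N\left(2^{-\frac14}\crad_\Omega(\hat w_i)^{\frac18}\right)^2$, and take the limit $\hat w_i\to w_i\in\fixed$. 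On the Ising side, \eqref{eq:bdry} and continuity of squaring identify this limit with $\ccor{\Op_{w_1}\dots\Op_{w_l}\sigma_{w_{l+1}}\dots\sigma_{w_N}}_\Omega^2$, so the whole content of the proof is to compute the corresponding limit on the bosonic side. Since the bulk operators $\mathcal{D}_{w_1},\dots,\mathcal{D}_{w_l}$ act on variables and $\gamma$'s that are untouched by the boundary limit, and the correlations are real analytic away from collisions, these operators commute with the limit and the bulk fields $\hOp_{w_1},\dots,\hOp_{w_l}$ simply come along unchanged.

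The key computation is the boundary behaviour of a single normalized factor. With $\alpha=\sqrt2\pi$ and $\bsigma{\hat w}=\sqrt2\,\nord{\cos(\tfrac{\sqrt2}{2}\Phi(\hat w))}$, the squared normalization $2^{-\frac12}\crad_\Omega(\hat w)^{\frac14}$ absorbs the $\sqrt2$ and leaves $\crad_\Omega(\hat w)^{\frac14}\,\nord{\cos(\tfrac{\sqrt2}{2}\Phi(\hat w))}$. Expanding the cosine into $\Nexp{\pm\frac{\i\sqrt2}{2}}{\hat w}$ via Definition~\ref{def: bos_corr} and inserting into a correlation, the Gaussian part carries a self-interaction factor $\exp(-\tfrac14 g_\Omega(\hat w,\hat w))$ coming from $(\pm\tfrac{\i\sqrt2}{2})^2=-\tfrac12$; since $g_\Omega(\hat w,\hat w)=\log\crad_\Omega(\hat w)+o(1)$ as $\hat w\to\pa\Omega$ (with $\crad$ defined through the simply connected replacement, as in Section~\ref{sec: extensions}), this divergence is cancelled exactly by the $\crad_\Omega(\hat w)^{\frac14}$ prefactor. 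The remaining Gaussian cross-terms $\exp(\pm\tfrac{\i\sqrt2}{2}\gamma_j G_\Omega(\hat w,z_j))$ and $\exp(\mp\tfrac12 G_\Omega(\hat w_i,\hat w_j))$ tend to $1$, because $G_\Omega$ vanishes as either argument approaches the boundary. Hence the Gaussian factor converges to the correlation of the remaining (bulk) insertions alone.

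The instanton factor is where the sign field appears. As $\xi$ is almost surely harmonic and continuous up to $\pa\Omega$, one has $\xi(\hat w_i)\to\xi(w_i)$, and on a wired arc \eqref{eq: instanton} gives $\xi(w_i)\in\sqrt2\pi\Z$; writing $\xi(w_i)=\sqrt2\pi n_i$ yields $e^{\pm\frac{\i\sqrt2}{2}\xi(w_i)}=e^{\pm\i\pi n_i}=(-1)^{n_i}=(-1)^{\xi(w_i)/\sqrt2\pi}$ for \emph{both} signs, the $\sin$-type parts dropping out because $\sin(\pi n_i)=0$. Thus the $\tfrac12$ of the cosine collapses the two exponentials into one factor, and expanding the product of $N-l$ cosines into $2^{N-l}$ terms, every sign choice contributes the same instanton insertion $\prod_{i>l}(-1)^{\xi(w_i)/\sqrt2\pi}$. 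Passing the limit inside $\E$ is justified by dominated convergence, since these insertions are bounded by $1$ and the exponential moments of $\xi$ are finite by the Lemma of Section~\ref{sec: instanton}. Combining the two factors identifies the bosonic limit with $\ccor{\hOp_{w_1}\dots\hOp_{w_l}(-1)^{\xi(w_{l+1})/\sqrt2\pi}\dots(-1)^{\xi(w_N)/\sqrt2\pi}}_\Omega$, which is the asserted right-hand side because $\varphi$ contributes nothing to $\Phi(w_i)/\sqrt2\pi$ at the boundary, so that $(-1)^{\Phi(w_i)/\sqrt2\pi}=(-1)^{\xi(w_i)/\sqrt2\pi}$.

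The main obstacle I anticipate is the careful bookkeeping of the normalization constants together with the asymptotics $g_\Omega(\hat w,\hat w)=\log\crad_\Omega(\hat w)+o(1)$ in the multiply connected setting, and the justification that all Green's-function cross-terms vanish \emph{uniformly} as several points approach distinct boundary points simultaneously. Confirming that the $2^{N-l}$-fold expansion of the cosines recombines into a single product of signs — including the vanishing of all $\sin$-type contributions on the wired boundary — is routine but is the step that must be written out in full to rule out leftover terms.
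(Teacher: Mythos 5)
Your proposal is correct and follows essentially the same route as the paper's proof: square \eqref{eq:bdry}, apply the bulk bosonization identity inside the limit, let the $\crad_\Omega(\hat w_i)^{1/4}$ normalization cancel the $e^{-\frac14 g_\Omega(\hat w_i,\hat w_i)}$ self-interaction terms while the Green's-function cross-terms vanish at the boundary, and reduce the instanton factor to $(-1)^{\xi(w_i)/\sqrt2\pi}$ using $\xi\in\sqrt2\pi\Z$ on wired arcs, with the $\mathcal D_{w_i}$ commuting with the limit for general bulk insertions. The only cosmetic difference is that you kill the sine-type contributions on the instanton side via $\sin(\pi n_i)=0$, whereas the paper regroups via \eqref{eq: cos_sum} and kills them through the vanishing limit of the Gaussian factor — two equivalent bookkeepings of the same $2^{N-l}$-term expansion.
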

\begin{proof}
We will square \eqref{eq:bdry} and apply the bosonization formula of Theorem \ref{th:main} inside the limit on the right-hand side. To compute the limit, we first study the limiting behavior of correlations of exponentials. Let $\check{\Op}_{w_{i}}=\nexp{\gamma_i}{w_{i}}$ for $i=1,\dots,l,$ and $\check{\Op}_{\hat{w}_{i}}=\nexp{\gamma_i}{\hat{w}_{i}}$ for $i=l+1,\dots,N$, then 
\begin{equation}
\label{eq: exp_to_bdry}
\lim_{\hat{w_i}\to w_i}\prod^N_{i=l+1}\crad_\Omega(\hat{w}_i)^{-\frac{\gamma^2_i}{2}} \ccor{\check{\Op}_{w_{1}}\dots\check{\Op}_{w_{l}}\check{\Op}_{\hat{w}_{l+1}}\dots\check{\Op}_{\hat{w}_{N}}}_\Omega=\ccor{\check{\Op}_{w_{1}}\dots\check{\Op}_{w_{l}}}_\Omega,
\end{equation}
using the definition \eqref{eq: defnexp} and observing that $g_\Omega(\hat{w}_i,\hat{w}_i)=\log\crad_\Omega(\hat{w}_i)+o(1)$ while $G_\Omega(w_i,\hat{w}_j)\to 0$ (respectively, $G_\Omega(\hat{w}_i,\hat{w}_j)\to 0$) for each $i=1,\dots,l$ and $j=l+1,\dots,N$  (respectively, $l+l\leq i\neq j\leq N$). We specialize this to $\gamma_{l+1}=\dots=\gamma_N=\i\frac{\sqrt 2}{2},$ and observe that by linearity, \eqref{eq: exp_to_bdry} also holds when $\check{\Op}_{\hat{w}_i}=\nord{\cos\left(\frac{\sqrt2}{2}\varphi(\hat{w}_i)\right)}$ for all $i=l+1,\dots,N$, while if some of them are $\nord{\sin\left(\frac{\sqrt2}{2}\varphi(\hat{w}_i)\right)}$ and others $\nord{\cos\left(\frac{\sqrt2}{2}\varphi(\hat{w}_i)\right)}$, the limit vanishes.   

Now, let's include the instanton component: let $\check{\Op}_{w_{i}}=\Nexp{\gamma_i}{w_{i}}$ for $i=1,\dots,l,$ and $\check{\Op}_{w_{i}}=\nord{\cos\left(\frac{\sqrt2}{2}\Phi(\hat{w}_i)\right)}$ for $i=l+1,\dots,N$; we then just expand each cosine into exponentials, pass to the limit term by term, and then regroup terms. To help with the bookkeeping, we note that inside any correlation, 
\begin{multline}
\label{eq: cos_sum}
\nord{\cos\left(\frac{\sqrt2}{2}\Phi(\hat{w}_i)\right)}=\nord{\cos\left(\frac{\sqrt2}{2}\varphi(\hat{w}_i)\right)}\cos\left(\frac{\sqrt2}{2}\xi(\hat{w}_i)\right)\\-\nord{\sin\left(\frac{\sqrt2}{2}\varphi(\hat{w}_i)\right)}\sin\left(\frac{\sqrt2}{2}\xi(\hat{w}_i)\right),
\end{multline}
since by definition, the usual Euler identities are valid for normal ordered sines and cosines. Plugging \eqref{eq: cos_sum} into \eqref{eq: exp_to_bdry} and expanding, by the above observation, only the term with all cosines survives in the limit, i.e., we get  
\begin{multline*}
\lim_{\hat{w_i}\to w_i}\prod^N_{i=l+1}\crad_\Omega(w_i)^{-\frac{\gamma^2_i}{2}} \ccor{\check{\Op}_{w_{1}}\dots\check{\Op}_{w_{l}}\check{\Op}_{\hat{w}_{l+1}}\dots\check{\Op}_{\hat{w}_{N}}}_\Omega\\
=\ccor{\nord{e^{\gamma_1\varphi(w_1)}}\dots \nord{e^{\gamma_1\varphi(w_l)}}}_\Omega \E\left(\prod_{i=1}^le^{\gamma_i\xi(w_i)}\prod_{i=l+1}^N\cos\left(\frac{\sqrt2}{2}\xi(w_i)\right)\right).
\end{multline*}
It remains to note that  $\cos\left(\frac{\sqrt2}{2}\xi(w_i)\right)=(-1)^{\frac{\xi(w_i)}{\sqrt 2 \pi}}$ since $\xi\in\sqrt{2}\pi \Z$ on the fixed part of the boundary. The same result holds if we replace $\check{\Op}_{w_i}=\Nexp{\gamma_i}{w_{i}}$ with the bosonic counterparts of $\hOp_{w_i}$ of any fields $\Op_{w_i}\in\{\sigma_{w_i},\mu_{w_i},\psi_{w_i},\psi^\star_{w_i},\en_{w_i}\}$, either by linearity (for $\sigma_{w_i},\mu_{w_i}$) or by checking that the above limiting procedure commutes with corresponding $\mathcal{D}_{w_i}$ (for $\psi_{w_i},\psi^\star_{w_i},\en_{w_i}$). We leave further details to the reader.
\end{proof}

\begin{rem}
Note that, in the context of Theorem \ref{th:main}, if $|\{i:\Op_{z_i}\in\{\sigma_{z_i},\psi_{z_i},\psi^\star_{z_i}\}\}|$ is odd, we can make it even by adding a spin on the boundary arc where $\xi$ is normalized to be zero. This has no effect on the right-hand side of \eqref{eq: Thm1}. By the discussion before Proposition \ref{prop: boson_bdry_spin}, the effect of spin insertion on the left-hand side it tantamount to setting boundary conditions to $+$ on that boundary component. In particular, in the simply-connected case, Theorem \eqref{th:main} can be formulated without the restriction on the parity of $|\{i:\Op_{z_i}\in\{\sigma_{z_i},\psi_{z_i},\psi^\star_{z_i}\}\}|$ and with boundary conditions set to $+$ on all fixed arcs. This type of remarks, for purely spin correlations, have been already made in \cite{CHI1,CHI2,Dubedat,JSW}.
\end{rem}

For the second step of the above procedure, namely, the boundary conditions $\widetilde{\mathcal B}$, we can use two alternative bosonization prescriptions. The first one is to simply square \eqref{eq:aux}; this leads to 
\[
\langle \Op_{w_1}\dots\Op_{w_N}\rangle_{\Omega,\widetilde{\mathcal B}}^2=\lim_{z_1\to b_1,...,z_q\to b_q}\frac{\langle \partial \Phi(z_1)\cdots \partial \Phi(z_q) \hOp_{w_1}\dots\hOp_{w_N}\rangle_\Omega }{\langle \partial \Phi(z_1)\cdots \partial \Phi(z_q) \rangle_\Omega}.
\]

Alternatively, if $\Op_{w_i}\in\{\sigma_{v_i},\en_{v_i}\}$, then they are real fields, and due to \cite[Proposition 5.18]{CHI2} we can alternatively use $\psi^\star_{z_i}$ in \eqref{eq:aux}. This gives by Remark \ref{rem: prod_boson}
\begin{multline}
\langle \Op_{w_1}\dots\Op_{w_N}\rangle_{\Omega,\widetilde{\mathcal B}}^2=\lim_{z_i\to b_i}\frac{\langle \psi_{z_1}\cdots \psi_{z_q}\Op_{w_1}\dots\Op_{w_N}\rangle_\Omega\langle \psi^\star_{z_1}\cdots \psi^\star_{z_q}\Op_{w_1}\dots\Op_{w_N}\rangle_\Omega}{\langle \psi_{z_1}\cdots \psi_{z_q}\rangle_\Omega\langle \psi^\star_{z_1}\cdots \psi^\star_{z_q}\rangle_\Omega}\\
=\lim_{z_i\to b_i}\frac{\langle \nord{2\sin(\sqrt{2}\Phi(z_1))}\cdots \nord{2\sin(\sqrt{2}\Phi(z_q))}\hOp_{w_1}\dots\hOp_{w_N}\rangle_\Omega }{\langle \nord{2\sin(\sqrt{2}\Phi(z_1))}\cdots \nord{2\sin(\sqrt{2}\Phi(z_q))} \rangle_\Omega}.
\end{multline}
In both of the above representations, the effect of forcing a $\pm1$ boundary change is akin to tilting the bosonic field measure by a boundary field; note however that these fields are not positive. 

\section{Hejhal--Fay identity on compact Riemann surfaces}\label{sec:hej}


Our proof of Theorem \ref{th:ds} is based on an identity, discovered by Hejhal \cite{Hejhal}  and independently by Fay \cite{Fay}, which expresses the square of a Szeg\H{o} kernel on a compact Riemann surface in terms of Abelian differentials. In this section we review concepts needed to state this identity (Theorem \ref{thm: Hejhal}): Riemann surfaces, Abelian differentials, spinors, Szeg\H{o} kernels, and theta functions. No material here is original; the only bit for which we have not found an explicit reference in the literature is Lemma \ref{lem: wind}. 

\subsection{Riemann surfaces and Abelian differentials.}
\label{sec: Riemann surfaces}
We start with the basics of the theory of Riemann surfaces that we will need in what follows. The material is standard and can be found in many sources, see e.g. \cite{Ahlfors, FK}. 

\begin{defn}
A Riemann surface $\M$ is a connected two (real) dimensional smooth manifold equipped with an atlas $(\mathcal U_\alpha,z_\alpha)_{\alpha\in \mathcal A}$, where 
\begin{itemize}
    \item the \emph{charts} $(\mathcal U_\alpha)_{\alpha\in \mathcal A}$ form an open cover of $\M$,
    \item the \emph{local coordinate map} $z_\alpha$ is a homeomorphism between $\mathcal U_\alpha$ and some open subset of $\C$,
    \item the \emph{transition functions} $\varphi_{\alpha,\beta}=z_\beta \circ z_\alpha^{-1}: z_\alpha(\mathcal U_\alpha \cap \mathcal U_\beta)\to z_\beta(\mathcal U_\alpha\cap \mathcal U_\beta)$ are holomorphic functions when $\mathcal U_\alpha\cap \mathcal U_\beta\neq \emptyset$.
\end{itemize}
Two atlases are equivalent if their union is also an atlas, and Riemann surfaces with equivalent atlases are not distinguished. 
\end{defn}

A Riemann surface is automatically oriented. The \emph{genus} of a compact Riemann surface is defined to be 1/2 the rank of its first homology group $H_1(\M, \Z)$, see e.g. \cite[Chapter I.2.5]{FK}). Topologically, a compact Riemann surface of genus $g$ is a sphere with $g$ handles attached. A \emph{Torelli marking} of a Riemann surface is a choice of generators $\{A_j,B_j\}_{j=1}^g$ of $H_1(\M, \Z)$ so that their intersection numbers (see \cite[Chapter III.1]{FK}) satisfy:
\[
A_{j}\sharp B_{k}=\delta_{jk}, \qquad A_{j}\sharp A_{k}=0, \quad \text{and} \quad B_{j}\sharp B_{k}=0.
\]

To understand this condition, the reader may think that the generators $\{A_j,B_j\}_{j=1}^g$ can be represented by simple closed oriented smooth loops, mutually disjoint except for $A_j$ and $B_j$, which intersect transversally exactly once so that their tangent vectors form a positive frame, see Figure \ref{fig:hombas}. 

\begin{figure}
    \centering
        \begin{tikzpicture}[scale=2]
        \draw (0,0) ellipse (3.2 and 1.8);
        \draw[rounded corners=28pt] (-2.6,.1)--(-1.5,-0.6)--(-.4,.1);
        \draw[rounded corners=24pt] (-2.43,0)--(-1.5,0.6)--(-.58,0);
        
        \draw[rounded corners=28pt] (.4,.1)--(1.5,-0.6)--(2.6,.1);
        \draw[rounded corners=24pt] (0.58,0)--(1.5,0.6)--(2.43,0);

       \draw[blue,very thick] (-1.5,0) ellipse (1.4 and 0.7);
       \draw[blue,very thick] (1.5,0) ellipse (1.4 and 0.7);
      
       \draw[red, very thick] (1.95,0.3) to[out=60, in=180] (2.85,0.8);
       \draw[red, very thick, dashed] (1.95,0.3) to[out=0, in=270] (2.85,0.8);
       
       \draw[red, very thick] (-1.95,0.3) to[out=120, in=0] (-2.85,0.8);
       \draw[red, very thick, dashed] (-1.95,0.3) to[out=180, in=270] (-2.85,0.8);

       \draw[-{Latex[length=5mm, width=2mm, blue]}] (1.5,-0.7) -- (1.7,-0.7);
       \draw[-{Latex[length=5mm, width=2mm, blue]}] (-1.5,-0.7) -- (-1.3,-0.7);

      \draw[-{Latex[length=5mm, width=2mm, red]}] (2.53,0.75) -- (2.58,0.76);
      \draw[-{Latex[length=5mm, width=2mm, red]}] (-2.53,0.75) -- (-2.58,0.76);

      \node at (1.5,-1.1) {${\color{blue}B_2}$};
      \node at (-1.5,-1.1) {${\color{blue}B_1}$};

      \node at (-2.2,1) {${\color{red}A_1}$};
      \node at (2.2,1) {${\color{red}A_2}$};

      \node at (0,-1.1) {$\mathcal M$};
    \end{tikzpicture}
    \caption{A genus $2$ Riemann surface with a canonical homology basis $\{A_j,B_j\}_{j=1}^2$.}
    \label{fig:hombas}
\end{figure}
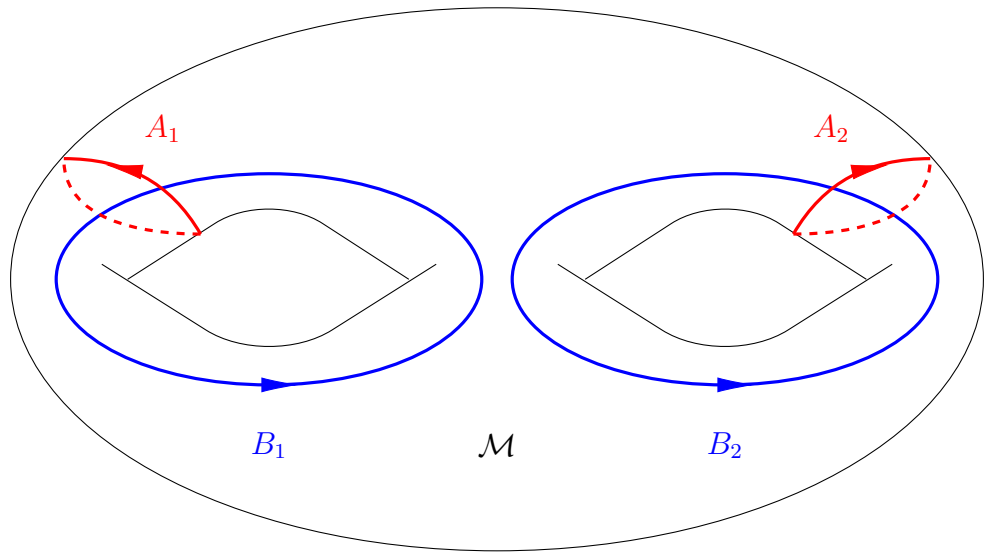

We will consider both compact Riemann surfaces as well as Riemann surfaces with boundaries, such that $\M\cup\pa \M$ is compact. For surfaces with boundary, the transition functions are assumed to be holomorphic up to the boundary.

\begin{defn}
A function $f$ from a Riemann surface $\M$ to the Riemann sphere $\hat{\C}$ is called meromorphic if $f\circ z^{-1}_{\alpha}$ is meromorphic for any coordinate map $\alpha$. A 1-form $\omega$ is meromorphic (holomorphic) if in any coordinate chart $z_\alpha:\mathcal{U}_\alpha\to U$, it is represented as $f_\alpha\circ z_\alpha\,dz_\alpha$, where $f_\alpha:U\to \hat{\C}$ is meromorphic (holomorphic). Meromorphic 1-forms are also called \emph{Abelian differentials}.
\end{defn}

In other words, a meromorphic 1-form is specified by a collection of its \emph{elements} $f_\alpha$, assigned to each coordinate chart and satisfying $f_{\alpha}(z)=f_{\beta}(\varphi_{\alpha,\beta}(z))\cdot \varphi'_{\alpha,\beta}(z)$ on the overlaps. This point of view will be useful later when we discuss spinors. While in planar domains, there is a bijective correspondence $f(z)\mapsto f(z)\,dz$ between meromorphic functions and 1-forms, on Riemann surfaces they are genuinely different objects.

Being 1-forms, Abelian differentials can be integrated along contours; in particular, if $P\in \M$ is a pole of an Abelian differential $u$, then one can define a residue of $u$ at $P$ by $\res_P u:=\frac{1}{2\pi \i}\oint_P u$, where the integral is over a small circle around $P$. Other coefficients of the Laurent expansion of a differential at a point are not well defined, i.e., depend on the
coordinate chart. However, the \emph{order} of a pole (or zero) is well defined. On a compact Riemann surface, any Abelian differential has only finitely many poles, and the sum of its residues is zero.

A (non-trivial) fundamental result concerns existence of Abelian differentials on a compact Riemann surface \cite[Chapter III]{FK}:
\begin{itemize}
\item The space of holomorphic Abelian differentials is $g$-dimensional; the map $u\mapsto (\oint_{A_1}u,\dots,\oint_{A_g}u)$ is an isomorphism of that space to $\C^g$.
\item For any $P\in \M$, there exists an Abelian differential $\beta_P$ with a double pole at $P$ (and no other poles). 
\item For any $P,Q\in \M$, there exists an Abelian differential $\omega_{P,Q}$ with simple poles at $P,Q$ (and no other poles) and $\res_P \omega_{P,Q}=-\res_Q \omega_{P,Q} =1$.
\end{itemize}
These differentials are sometimes called Abelian differentials of the first, second, and third kind respectively\footnote{More generally, Abelian differentials of the second kind are those with all residues vanishing, and of the third kind, those with only simple poles}. It is clear that $\omega_{P,Q}$ is uniquely defined up to an addition of a  holomoprhic differential, while $\beta_{P}$ is defined up to scaling and an addition of a holomorphic differential. We now fix this normalization.

\begin{defn}\label{def:normdiffs}
We denote by $u_1,\dots,u_g$ the holomorphic Abelian differentials uniquely specified by the normalization condition 
    \begin{equation}\label{eq:ab1norm}
        \oint_{A_j}u_k=\pi \i \delta_{j,k}.
    \end{equation}
    The \emph{Abel map} (with basepoint $P_0$) is a map from the universal cover of $\M$ to $\C^g$, given by 
    $$
    \Ab(P)=\left(\int^P_{P_0}u_1,\dots,\int^P_{P_0}u_g\right)
    $$
For $P,Q\in\M$, we denote by $\omega_{P,Q}$ the unique meromorphic differential with simple poles at $P,Q$ (no other poles) satisfying\footnote{The condition \eqref{eq: omega_A_loops} depends not only on the Torelli marking but also on their concrete representation by loops, which must avoid $P,Q$. In what follows, we will specify such a representation.}
\begin{align}
 \res_P \omega_{P,Q}=-\res_Q \omega_{P,Q} &=1,\\
 \oint_{A_1}\omega_{P,Q}=\dots=\oint_{A_g}\omega_{P,Q}&=0.\label{eq: omega_A_loops}
\end{align}Finally, given $Q\in\M$ and a coordinate chart $\mathcal{U}_\alpha\ni Q$, we define $\beta_{Q,\alpha}$ to be the unique Abelian differential with a double pole at $Q$ and no other poles, whose element in that chart has the expansion 
\begin{equation}
 (\beta_{Q,\alpha})_\alpha(z)=\frac{1}{(z-z_\alpha(Q))^2}+O(1)
 \end{equation}
 and satisfies the normalization
 \begin{equation}
 \label{eq: beta_A_loops}
 \oint_{A_1}\beta_{Q,\alpha}=\dots=\oint_{A_g}\beta_{Q,\alpha}=0.
\end{equation}
\end{defn}

Once the normalization (\ref{eq:ab1norm}, \ref{eq: omega_A_loops}, \ref{eq: beta_A_loops}) is fixed, we have no freedom to choose the integrals over the $B$-loops. We move on to some important notation:

\begin{defn}
The \emph{period matrix} of $\M$ is the matrix $\tau=(\tau_{ij})_{i,j=1}^g$ with entries given by 
\begin{equation}\label{eq:tau}
    \tau_{ij}=\oint_{B_i}u_j.
\end{equation}
\end{defn}

The period matrix determines uniquely a marked Riemann surface, up to conformal equivalence. Other properties of the Abelian differentials are summarised in the following proposition: 

\begin{prop}
\label{prop: Riemann_relations}
    (Riemann bilinear relations) We have the following properties of the period matrix and Abelian differentials:
    \begin{enumerate}

    \item The period matrix $\tau$ is symmetric, i.e.,  $\tau_{ij}=\tau_{ji}$ for all $i,j$, and its real part is strictly negative definite, i.e., $\sum_{i,j=1}^g \alpha_i \alpha_j \re \tau_{i,j}<0$ for any $(\alpha_1,...,\alpha_g)\in \R^g\setminus \{0\}$.
    \item The differential $\beta$ is symmetric, that is, for any choice of coordinate charts $ \mathcal{U}_\alpha\ni Q$, $\mathcal{U}_{\alpha'}\ni P$, one has
    \begin{equation}\label{eq: beta_symmetry}
    (\beta_{Q,\alpha})_{\alpha'}(z_{\alpha'}(P))=(\beta_{P,\alpha'})_{\alpha}(z_\alpha(Q)).
    \end{equation}
    \item One has, in any coordinate chart $\mathcal{U}_\alpha\ni Q$, 
    \begin{equation}\label{eq: omega_beta}
    \int_Q^P\beta_{R,\alpha}=(\omega_{P,Q})_\alpha(z_\alpha(R))   
    \end{equation}
     where the path of integration is chosen to avoid the loops $A_i,B_i$.
\end{enumerate}    
\end{prop}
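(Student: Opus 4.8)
The plan is to derive all three statements from the single \emph{Riemann bilinear identity} obtained by integrating over the canonical dissection of $\M$. Cutting $\M$ along representatives of the loops $A_1,B_1,\dots,A_g,B_g$ (chosen, as in the footnote to Definition \ref{def:normdiffs}, to avoid all poles involved) produces a simply connected $4g$-gon $\Delta$ with oriented boundary word $\prod_k A_kB_kA_k^{-1}B_k^{-1}$. On $\Delta$ a closed $1$-form $\omega$ with residue-free poles admits a single-valued primitive $F=\int\omega$, and across each pair of identified edges the values of $F$ differ by the relevant period of $\omega$. The standard edge-identification bookkeeping then gives, for any meromorphic $\eta$ whose poles lie in the interior of $\Delta$,
\begin{equation}
\label{eq: master}
\oint_{\pa\Delta}F\,\eta=\sum_{k=1}^{g}\left(\oint_{A_k}\omega\oint_{B_k}\eta-\oint_{B_k}\omega\oint_{A_k}\eta\right),
\end{equation}
while the residue theorem evaluates the left-hand side as $2\pi\i\sum_p\res_p(F\eta)$, the sum running over the finitely many poles of $F\eta$ in $\Delta$ (including any pole of $F$ itself). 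When $\eta$ is merely smooth and closed, e.g. $\eta=\bar\omega$, the same bookkeeping yields \eqref{eq: master} with the left-hand side replaced by $\iint_\Delta\omega\wedge\eta$ via Stokes. Establishing \eqref{eq: master} carefully --- in particular getting the signs right and permitting $F$ to have a residue-free pole where $\omega$ does --- is the one genuinely delicate point; I would quote the standard treatment in \cite[Chapter III]{FK}. A convenient internal check on the sign convention is that it must be the one making $\re\tau$ negative definite below.

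For part $(1)$ I would take $\omega=u_i$, $\eta=u_j$: both are holomorphic, so $\omega\wedge\eta=0$, the left-hand side of \eqref{eq: master} vanishes, and inserting $\oint_{A_k}u_i=\pi\i\delta_{ki}$ together with $\oint_{B_k}u_j=\tau_{kj}$ collapses the right-hand side to $\pi\i(\tau_{ij}-\tau_{ji})$, giving symmetry. For negative definiteness I take $\omega=\sum_i\alpha_iu_i$ with $\alpha\in\R^g$ and $\eta=\bar\omega$. Writing $\omega=f\,dz$ locally, the left-hand side is $\iint_\Delta\omega\wedge\bar\omega=-2\i\iint_\Delta|f|^2\,dx\,dy$, which is a strictly negative imaginary number unless $\omega\equiv0$. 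Using $\oint_{A_k}\bar\omega=-\pi\i\alpha_k$ and $\oint_{B_k}\bar\omega=\sum_i\alpha_i\bar\tau_{ki}$, the right-hand side equals $2\pi\i\sum_{k,i}\alpha_k\alpha_i\re\tau_{ki}$. Equating and dividing by $2\pi\i$ gives $\sum_{k,i}\alpha_k\alpha_i\re\tau_{ki}=-\pi^{-1}\iint_\Delta|f|^2<0$ for $\alpha\neq0$, which is the assertion.

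Parts $(2)$ and $(3)$ both exploit that the $A$-periods of $\beta$ and of $\omega_{P,Q}$ all vanish, so the entire right-hand side of \eqref{eq: master} is zero and the residue sum must vanish. For part $(2)$ take $\omega=\beta_{Q,\alpha}$, with primitive $F_Q=\int\beta_{Q,\alpha}=-(z_\alpha-z_\alpha(Q))^{-1}+O(1)$ near $Q$, and $\eta=\beta_{P,\alpha'}$. The interior poles of $F_Q\,\eta$ are at $P$ and $Q$: pairing the double pole of $\beta_{P,\alpha'}$ with the Taylor expansion of $F_Q$ in chart $\alpha'$ gives $\res_P=(\beta_{Q,\alpha})_{\alpha'}(z_{\alpha'}(P))$, and pairing the simple pole of $F_Q$ with $\beta_{P,\alpha'}$ gives $\res_Q=-(\beta_{P,\alpha'})_\alpha(z_\alpha(Q))$; since they sum to zero we obtain \eqref{eq: beta_symmetry}. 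For part $(3)$ take $\omega=\beta_{R,\alpha}$, with primitive $F_R$, and $\eta=\omega_{P,Q}$. Now $F_R\,\eta$ has interior poles at $R$, $P$, $Q$, with residues $-(\omega_{P,Q})_\alpha(z_\alpha(R))$ (from the pole of $F_R$), $+F_R(P)$ and $-F_R(Q)$ (from the simple poles of $\omega_{P,Q}$ of residues $\pm1$). As $F_R(P)-F_R(Q)=\int_Q^P\beta_{R,\alpha}$ and the residues again sum to zero, we conclude $\int_Q^P\beta_{R,\alpha}=(\omega_{P,Q})_\alpha(z_\alpha(R))$, i.e. \eqref{eq: omega_beta}. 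Everything past \eqref{eq: master} is a short residue computation, so the whole proposition rests on that one identity.
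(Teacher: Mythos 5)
Your proof is correct and takes essentially the same route as the paper: both derive all three claims from the Riemann bilinear identity obtained by integrating a primitive of one differential against another over the boundary of the canonical dissection, and then evaluating via the residue theorem (for the symmetry of $\tau$, of $\beta$, and the $\omega$--$\beta$ relation, using exactly the same choices of differentials and the vanishing of $A$-periods). The one point where you go beyond the paper's sketch is the strict negative definiteness of $\re\tau$, which the paper delegates to the standard references but you establish explicitly with $\eta=\bar{\omega}$ and Stokes' theorem; your computation there, and your sign conventions throughout, are internally consistent and correct.
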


\begin{proof}
These results are standard and can be derived by applying the Stokes theorem or the residue formula to a suitable Abelian differential in the simply connected region $\M'$ obtained by cutting $\M$ along the $A$ and $B$ loops, see e.g. \cite[Chapter III.2 and III.3]{FK} or \cite[Section 18]{Ahlfors}. (Note that our normalization involves an extra factor of $\i$ in \eqref{eq:ab1norm} compared to \cite{FK}). For convenience of the reader, we sketch a proof of the identities. If $\nu,\xi$ are two Abelian differentials and $\Xi(z)=\int^z\xi$, then $\Xi$ is a well-defined function on $\M'$, and if the loops representing $A_i$ and $B_i$ avoid the residues of both $\nu$ and $\xi$, we have 
\begin{equation}
\label{eq: bilinear}
2\pi \i \sum_p\res_p (\nu\Xi)=\int_{\pa\M'}\nu\Xi=\sum_i\left(\int_{A_i}\nu\int_{B_i}\xi-\int_{A_i}\xi\int_{B_i}\nu\right).
\end{equation}
To derive the symmetry $\tau_{ij}=\tau_{ji}$, plug in $\nu=u_i$ and $\xi=u_j$ and note that the left-hand side vanishes. For \eqref{eq: beta_symmetry}, use $\nu=\beta_{Q,\alpha}$ and $\xi=\beta_{P,\alpha'}$, notice that now the \emph{right-hand side} vanishes, and $\nu\Xi$ has two poles at $P,Q$, whose residues are conveniently evaluated in charts $\alpha'$ and $\alpha$ as $(\beta_{Q,\alpha})_{\alpha'}(z_{\alpha'}(P))$ and $-(\beta_{P,\alpha'})_{\alpha}(z_\alpha(Q))$ respectively. For \eqref{eq:  omega_beta}, use $\nu=\omega_{P,Q}$ and $\xi=\beta_{R,\alpha}$; again the right-hand side of \eqref{eq: bilinear} vanishes, and $\nu\Xi$ now has three  poles: at $P$ with residue $\Xi(P)$, at $Q$ with residue $-\Xi(Q)$, and at $R$ with residue $-(\omega_{P,Q})_\alpha(z_\alpha(R))$ (since $\Xi$ has a simple pole with residue $-1$ in the chart $\alpha$). 
\end{proof}

We note that (\ref{eq: beta_symmetry}) means that $\beta$ also behaves as an Abelian differential in its second argument, namely, the location of the pole. It is sometimes called the \emph{fundamental normalized bi-differential}, or the \emph{Bergman kernel} on $\M$. 

To lighten the notation, it is customary to write points of the Riemann surface as arguments of Abelian differentials, even though the latter have no well-defined values at points. It is to be understood that if the same point appears several times in an equation, then the same coordinate chart is used in every instance. Using such notation (\ref{eq: beta_symmetry}) and (\ref{eq: omega_beta}) become $\beta_Q(P)=\beta_P(Q)$ and $\int_Q^P\beta_{R}=\omega_{P,Q}(R)$, respectively. At this point it will be convenient to re-denote $\beta(P,Q):=\beta_P(Q)$.

\begin{exa}
\label{exa: riemann_sphere}
The Riemann sphere $\hat{\C}=\C\cup \{\infty\}$ has genus $0$, thus it has no Abelian differentials of the first kind. Other Abelian differentials are given (in the  coordinate chart $\C$) by 
\[
\omega_{P,Q}(z)=\left(\frac{1}{z-P}-\frac{1}{z-Q}\right)\,dz\quad\text{and}\quad \beta(P,Q)=\frac{1}{(P-Q)^2}\,dPdQ.
\]
On a torus $\C/(\Z+\nu\Z),$ where $\im \nu>0$, we have the unique Abelian differential of the first kind given by $\pi \i \,dz$. The only entry of the period matrix is $\tau_{11}=\pi \i \nu$. The Abelian differentials $\omega_{P,Q}$ and $\beta$ can be written in terms of Jacobi and Weierstrass elliptic functions, or Jacobi theta functions.
\end{exa}

When necessary, we will incorporate the underlying surface $\M$ into the notation, as in $\u_{\M,i}(P),\beta_{\M}(P,Q), \tau_\M$ etc.

\subsection{Spinors and Szeg\H{o} kernels on a Riemann surface} 
\label{sec: spinors_Szego}
To give an elementary introduction to spinors, or half-order differentials, we will introduce a class of finite atlases suited for our purpose. Given a triangulation $T$ of $\M$ we construct, in an obvious way, an atlas with \emph{simply-connected} charts $\mathcal{U}_\alpha$ labeled by vertices $\alpha$ of $T$ such that:
\begin{itemize}
\item $\alpha \in \mathcal{U}_\alpha$
\item The overlap $\mathcal{U}_{\alpha_1}\cap\mathcal{U}_{\alpha_2}$ is simply connected and non-empty if and only if $(\alpha_1\alpha_2)$ is an edge of $T$.
\item The overlap $\mathcal{U}_{\alpha_1}\cap\mathcal{U}_{\alpha_2}\cap \mathcal{U}_{\alpha_3}$ is non-empty if and only if $(\alpha_1\alpha_2\alpha_3)$ is a triangle of $T$.
\item The four-fold overlaps are all empty.
\end{itemize}
\begin{defn}
Assume that a holomorphic branch of $(\varphi_{\alpha_1\alpha_2}')^\frac12$ is chosen for each transition map $\varphi_{\alpha_1\alpha_2}$ between overlapping charts. The choice is called \emph{coherent} if 
\begin{equation}
    (\varphi_{\alpha_1\alpha_2}')^\frac12\cdot\left((\varphi_{\alpha_2\alpha_1}')^\frac12\circ\varphi_{\alpha_{1}\alpha_2}\right)\equiv 1
\end{equation}
and
\begin{equation}
((\varphi_{\alpha_3\alpha_1}')^\frac12\circ \varphi_{\alpha_2\alpha_3}\circ\varphi_{\alpha_1\alpha_2})\cdot((\varphi_{\alpha_2\alpha_3}')^\frac12\circ \varphi_{\alpha_1\alpha_2}) \cdot(\varphi_{\alpha_1\alpha_2}')^\frac12\equiv 1
\end{equation}
on every overlap and triple overlap, respectively. 
\end{defn}

Given \emph{two} coherent choices we define a $\Z_2$-valued function on the edges by $s(\alpha_1\alpha_2)=0$ or $s(\alpha_1\alpha_2)=1$ if, in the two choices, the branches of $(\varphi_{\alpha_1\alpha_2}')^\frac12$ agree or disagree, respectively. Extending $s$ to $\Z_2$-chains in $T$ by linearity, we see that the coherency condition ensures that $s$ vanishes on boundaries, i.e., it is a cocycle. We say that two coherent choices are \emph{equivalent} if $[s]=0$ in $H^1(T,\Z_2)$, i.e., if $s$ vanishes on all cycles. 

\begin{defn}
A \emph{spin line bundle} on $\M$ is a coherent choice of branches of $(\varphi_{\alpha_1\alpha_2}')^\frac12$, modulo the above equivalence. A \emph{meromorphic (respectively, holomorphic) spinor} on $\M$ is a meromorphic (holomorphic) section of a spin line bundle, that is, given a coherent choice of branches $\varphi'_{\alpha\beta}(z)^\frac12$, it is a collection of meromorphic (holomorphic) functions (``elements") $f_\alpha:U_\alpha\to\C$, where $U_\alpha=z_\alpha(\mathcal{U}_\alpha)$, satisfying on the overlaps
\begin{equation}
f_\alpha(z)=\varphi'_{\alpha\beta}(z)^\frac12f_{\beta}(\varphi_{\alpha\beta}(z)).\label{eq: 1/2form}    
\end{equation}
\end{defn}

If two coherent choices are equivalent, then the map $f_{\alpha}\mapsto s(\gamma_{\alpha_0\alpha})f_{\alpha}$, where $\alpha_0$ is any fixed vertex and $\gamma_{\alpha_0\alpha}$ is any path in $T$ from $\alpha_0$ to $\alpha$, is well defined and turns a collection satisfying (\ref{eq: 1/2form}) with one choice to a collection satisfying (\ref{eq: 1/2form}) with the other one. By construction, $H^1(T,\Z_2)\cong H^1(\M,\Z_2)$ acts freely and transitively on the set of spin line bundles. That is, there are $|H^1(T,\Z_2)|=2^{2g}$ spin line bundles, provided that there's at least one coherent choice. A simple cohomological argument for the existence of a coherent choice is given in \cite[Section 7]{HS}.

If $f$ is a meromorphic spinor, then the collection $\{f^2_\alpha\}_\alpha$ defines an Abelian differential whose poles and zeros are all of even order. Conversely, given such an Abelian differential $\omega$, taking element-wise square roots defines a spinor, and in particular, \emph{a spin line bundle}. Namely, in each chart $\mathcal{U}_\alpha$, choose arbitrarily a branch of the square root $(\omega_{\alpha})^\frac12$ of the element of $\omega$. Then, we can \emph{define} the branches of $\varphi'_{\alpha\beta}(z)^\frac12$ by $(\omega_{\alpha})^\frac12=(\varphi'_{\alpha\beta})^\frac12\cdot(\omega_{\beta})^\frac12\circ\varphi_{\alpha\beta}$; this choice is manifestly coherent. We note that meromorphic sections of a given spin line bundle form a vector space, and a product of two sections of \emph{the same} spin line bundle is an Abelian differential.

As an example, consider the twice punctured Riemann sphere $\C\setminus\{0\}$ and choose $\HH$, $e^{2\pi\i/3}\HH$ and $e^{4\pi\i/3}\HH$ to be coordinate charts, with identity coordinate maps\footnote{The atlas here does not really correspond to a triangulation, as the Riemann surface here is not compact, but otherwise all of the above considerations are valid.}. For the transition maps $(\varphi_{\alpha\beta}')^\frac12\equiv\sqrt{1}=\pm 1$ there are two spin line bundles $\SpStr$ and $\tilde\SpStr$, corresponding to choosing the $(-1)$ sign on an even and odd number of overlaps, respectively. Holomorphic sections of these bundles are given for example by $dz^\frac12$ and $\sqrt{z}\, dz^\frac12$, i.e., the elements are given by $\pm 1$ and $\pm\sqrt{z}$ with appropriate choice of signs.

Even though the set of spin line bundles can be parametrized by $H^1(\M,\Z_2)\cong \Z_2^{2g}$, this parametrization is not canonical, in that there's no distinguished spin line bundle that would correspond to the origin. (To see that this is indeed the case, consider the above example with coordinate maps $z\mapsto \log z$ instead.) A topological datum \emph{naturally} associated to a spin line bundle is instead a \emph{spin structure}. A spin structure is a way of assigning a winding (of the tangent vector) modulo $4\pi$ to any smooth simple closed loop, invariant under isotopies and satisfying certain other natural conditions that we will not go into, see \cite{Johnson}. To fix a spin structure, it is enough to specify the $2g$ bits $\wind(A_1)/2\pi \mod 2,\dots,\wind(B_g)/2\pi \mod 2$, where the generators are represented by simple loops. For a proof of equivalence of spin structures and spin line bundles, see \cite[Section 3]{Atiyah}; here we will explain how to go from the latter to the former. Given a meromorphic section $f$ of the bundle in question, $f^2$ is a 1-form,  which we can identify with a vector field, by choosing a Riemannian metric compatible with the conformal structure on $\M$. The vector field depends on the metric, but its direction does  not; in the chart $\mathcal{U}_\alpha$, the direction of the vector field is given by the direction of $\overline{f^2_{\alpha}(z_\alpha)}$. Now, if $\gamma$ avoids zeros and poles of $f$, we simply compute the winding of this vector field with respect to the tangent vector field along $\gamma$, modulo $4\pi$. 

\medskip 

\begin{defn}
 \cite{HS}, \cite[Section 2]{Fay},  Let $\M$ be a compact Riemann surface and let $\SpStr$ be a spin line bundle on $\M$. Let $Q\in \M$, and pick a coordinate chart $\mathcal{U}_\alpha \ni Q$. A \emph{Szeg\H{o} kernel} $\Lambda_{\M,\SpStr,\alpha}(\cdot,Q)$ is a meromorphic section of $\SpStr$, with a simple pole at $Q$ of residue $1$ in the chart $\mathcal{U}_\alpha$ and no other poles; in other words, its element $(\Lambda_{\M,\SpStr,\alpha}(\cdot,Q))_\alpha$ satisfies 
 $$
 (\Lambda_{\M,\SpStr,\alpha}(\cdot,Q))_\alpha(z_\alpha(P)) =\frac{1}{z_\alpha(P)-z_\alpha(Q)}+O(1),\quad P\to Q.
 $$
\end{defn}

The Szeg\H{o} kernel does not always exist, but it is possible to show \cite[Theorem 23]{Hejhal} that given $\M,\SpStr$, the following alternative holds: either 
\begin{enumerate}
\item the Szeg\H{o} kernel exists for all $Q,\alpha$, or
\item $\SpStr$ admits a non-trivial holomorphic section.
\end{enumerate}
 Let us explain the easy part of the alternative. If $h$ is a holomorphic section of $\SpStr$, then $\Lambda_{\M,\SpStr,\alpha} (\cdot,Q)h(\cdot)$ is an Abelian differential with the only pole at $Q$. Since the residues must sum up to zero, we see that $h(Q)=0$. In other words, if (2) holds, then a Szeg\H{o} kernel can exist only when $Q$ is a zero of $h$, i.e., for at most finitely many $Q$. A difference of two Szeg\H{o} kernels with the same $Q,\alpha$ is a holomorphic section, hence, in the case (1), the Szeg\H{o} kernel is unique. Cases (1) and (2) are determined by vanishing of a certain $\theta$-constant \cite[Theorem 23]{Hejhal}. We remark here that there is an invariant of a spin line bundle, called \emph{parity}; for \emph{odd} spin line bundles the case (2) above always holds, while for \emph{even} ones, case (1) holds generically, but case (2) may hold for special moduli, see \cite{Atiyah}, \cite{Hejhal}, \cite{Johnson}. 

Now, assuming we are in case $1$, consider the Abelian differential $\Lambda_{\M,\SpStr,\alpha} (\cdot,Q)\Lambda_{\M,\SpStr,\alpha'} (\cdot,Q')$. It has two simple poles at $Q$ and $Q'$, and the residues must be negatives of each other.  This leads to the relation, in any coordinate charts,  
$$
(\Lambda_{\M,\SpStr,\alpha} (\cdot,Q))_{\alpha'}(z_{\alpha'}(Q'))=-(\Lambda_{\M,\SpStr,\alpha'} (\cdot,Q'))_{\alpha}(z_\alpha(Q)).
$$
In particular, it follows that the Szeg\H{o} kernel also behaves as a $\SpStr$-spinor with respect to its second variable, the position of the pole $Q$. Thus, abusing notation as explained in the end of Section \ref{sec: Riemann surfaces}, we will write the Szeg\H{o} kernel simply as $\Lambda_{\M,\SpStr}(P,Q)$. The above anti-symmetry relation then becomes 
$$
\Lambda_{\M,\SpStr}(P,Q)=-\Lambda_{\M,\SpStr}(Q,P).
$$
\begin{exa}
    On the Riemann sphere $\hat{\C}$, there's just one spin line bundle, and its Szeg\H{o} kernel is given by $\frac{1}{P-Q}dP^\frac12 dQ^\frac12$, meaning that, its element in the chart $\C$ is given by $\frac{1}{z-w}$. On the torus $\C/(\Z+\nu\Z)$, there are four spin line bundles. One of them is odd and has a non-trivial holomorphic section, given by $dP^\frac12$, and no Szeg\H{o} kernel. The other three are even, and the elements of their Szeg\H{o} kernels are, in terms of Jacobi elliptic functions,
    $$
    2K\cdot\cs(2K(z-w),k),\quad 2K\cdot\ds(2K(z-w),k),\quad 2K\cdot\ns(2K(z-w),k),
    $$
    where $k$ is the elliptic modulus and $K$ is the complete elliptic integral of the first kind, see \cite[Chapters 19, 22]{DLMF}. 
\end{exa}

\subsection{Riemann theta functions and Hejhal--Fay identity} 

\label{sec:hejhal-Fay}
As the final ingredient of Hejhal--Fay identity, we need to recall some basic facts involving Riemann theta functions. Given a symmetric $g\times g$ matrix $\tau$, we denote by $Q_\tau$ the associated quadratic form multiplied by $\frac14$ (a choice that will prove convenient later), i.e., 
\begin{equation}
\label{eq: qtau}
   Q_\tau(m)=\frac14\sum_{i,j=1}^g\tau_{ij}m_{i}m_{j},\quad m\in \C^g. 
\end{equation}

We also use the dot product $m\cdot n=\sum_{i=1}^gm_i n_i$.
\begin{defn}
The theta function of $g$ complex variables associated with any symmetric matrix $\tau$ with negative definite real part is given by
\begin{equation}\label{eq:thetadef}
\theta_\tau:\C^{g}\rightarrow\C,\quad\theta_\tau(z)=\sum_{m\in\Z^{g}}\exp\left(4Q_\tau(m)+2m\cdot z\right), \quad z=(z_1,\dots,z_g).
\end{equation}
\end{defn}
One readily checks that $\theta$ is an entire function of $z$. For us, $\tau=\tau_\M$ will be the period matrix of a marked Riemann surface \eqref{eq:tau}, which is symmetric and has a strictly negative real part by Proposition \ref{prop: Riemann_relations}.

Furthermore $\theta$ has the following quasiperiodical properties (for a proof, see e.g. \cite[Chapter VI.1.2]{FK}, though note the difference in the conventions): for $j,k=1,...,g$
\begin{equation}\label{eq:quasi}
    \theta_\tau(z+\pi \i e_j)=\theta_\tau(z) \qquad \text{and} \qquad \theta_\tau(z+\tau_k)=e^{-2z_k-\tau_{kk}}\theta_\tau(z),
\end{equation}
where $e_j$ denotes the $j$th column of the $g\times g$ identity matrix and $\tau_k$ denotes the $k$th column of the period matrix $\tau$. 

\begin{defn}
The theta function with characteristic 
$$
H=\begin{pmatrix}
    \mu\\
    \nu 
\end{pmatrix}=\begin{pmatrix}
    \mu_1 & \dots & \mu_g\\
    \nu_1 & \dots & \nu_g
\end{pmatrix}\in \R^{2g}
$$ is given by
\begin{equation}
\label{eq: thetachar}
\theta_\tau(z;H)=\sum_{m\in\Z^{g}}\exp\left(4Q_\tau\left(m+\nu\right)+\left(m+\nu\right)\cdot (2z+2\pi \i \mu)\right).
\end{equation}
\end{defn}

We will be concerned with half-integer characteristics, $\mu=\frac12 M$ and $\nu=\frac12 N$ with $M_i,N_i\in \{0, 1\}$. Such characteristics are in a natural correspondence $H\mapsto\SpStr(H)$ with spin line bundles on a marked Riemann surface $\M$, see \cite[Section 1]{Fay} and Lemma \ref{lem: wind} below, which we now explain in the case of \emph{non-singular} characteristics $H$, i.e., such that $\theta_{\tau_\M}(\mathcal{U}(\cdot);H)$, where $\mathcal{U}$ is the Abel map as in Definition \ref{def:normdiffs}, and is not identically zero. Recall that a \emph{divisor} on a Riemann surface $\M$ is a formal linear combination of finitely many points of $\M$ with integer coefficients. In particular, the divisor $\mathrm{Div}(h)$ of a meromorphic function (or section of a line bundle, e.g., a spinor, etc.) $h$ on $\mathcal{M}$ is a formal sum of its zeros minus the formal sum of its poles, with multiplicitites. For two divisors $\Delta_{1,2},$ we write $\Delta_1\simeq\Delta_2$ if they differ by a divisor of a meromorphic function on $\mathcal{M}$. In these terms, the relation between $H$ and $\mathfrak{c}(H)$ goes as follows \cite[Section 1, especially Theorem 1.1.]{Fay}: there exists a divisor $\text{\ensuremath{\Delta}}_{H}$ (of
degree $g-1$) such that 
\begin{itemize}
\item A meromorphic spinor $h$ is a section of $\mathfrak{c}(H)$ if and only if $\mathrm{Div}(h)\simeq \Delta_H$,
\item One has, for any $Q\in\mathcal{M}$, $\Delta_H\simeq \mathrm{Div}(\theta_{\tau_M}(\Ab(\cdot)-\Ab(Q);H))-Q$. 
\end{itemize}
We note that $H$ is said to be even or odd depending on whether $\sum_{k=1}^{g}M_{k}N_{k}$
is even or odd, respectively, and this corresponds to odd and even spin line bundles mentioned above.

We now have all the relevant concepts to state the Hejhal--Fay bosonization identity \cite[Theorem 32 and Section XI]{Hejhal} or \cite[Corollary 2.12]{Fay}:

\begin{thm}
\label{thm: Hejhal}
Let $\M$ be a marked compact Riemann surface with genus
$g\geq1$, $H$ a half-integer characteristic, and $\SpStr=\SpStr(H)$ the corresponding spin line bundle. Assume that $\theta_\tau(0;H)\neq 0$. Then, the Szeg\H{o} kernel $\Lambda_{\M,\SpStr}$ exists, and

\begin{equation}
\Lambda_{\M,\SpStr}(P,Q)^{2}=\beta_\M(P,Q)+\sum_{i,j=1}^{g}\frac{\partial_{z_i}\partial_{z_j}\theta_{\tau_\M}(0;H)}{\theta_{\tau_\M}(0;H)}u_{\M,i}(P)u_{\M,j}(Q)
\label{eq: bosonization}.
\end{equation}
\end{thm}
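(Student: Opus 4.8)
The plan is to treat both sides of \eqref{eq: bosonization} as meromorphic bidifferentials on $\M\times\M$, match their singular and period data, and thereby reduce the whole statement to a single theta-function computation. First I would fix $Q$ and view $\Lambda_{\M,\SpStr}(\cdot,Q)^2$ as an Abelian differential in $P$. Since $\Lambda_{\M,\SpStr}(\cdot,Q)$ is a section of $\SpStr$ with a single simple pole of residue $1$ at $P=Q$ and no other pole, its square has a double pole there and no other pole. Writing the local expansion $\Lambda_{\M,\SpStr}(P,Q)=(z_P-z_Q)^{-1}+c_0+c_1(z_P-z_Q)+\dots$ and using the antisymmetry $\Lambda_{\M,\SpStr}(P,Q)=-\Lambda_{\M,\SpStr}(Q,P)$ established after the definition of the Szeg\H{o} kernel, one gets $c_0=0$, so $\Lambda_{\M,\SpStr}(P,Q)^2$ has principal part $(z_P-z_Q)^{-2}$ and, crucially, vanishing residue. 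The Bergman kernel $\beta_\M(P,Q)$ has the same principal part and also no residue, so $\Lambda_{\M,\SpStr}(P,Q)^2-\beta_\M(P,Q)$ is holomorphic in $P$, and by the symmetry $\Lambda_{\M,\SpStr}(P,Q)^2=\Lambda_{\M,\SpStr}(Q,P)^2$ together with symmetry of $\beta_\M$ it is holomorphic in $Q$ as well. Hence $\Lambda_{\M,\SpStr}(P,Q)^2-\beta_\M(P,Q)=\sum_{i,j=1}^g C_{ij}\,u_{\M,i}(P)u_{\M,j}(Q)$ for a symmetric matrix $C$, and the theorem reduces to proving $C_{ij}=\pa_{z_i}\pa_{z_j}\theta_{\tau_\M}(0;H)/\theta_{\tau_\M}(0;H)$.

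To access $C$, I would bring in the explicit theta form of the Szeg\H{o} kernel. Using the divisor relation $\Delta_H\simeq\mathrm{Div}(\theta_{\tau_\M}(\Ab(\cdot)-\Ab(Q);H))-Q$ recalled above, together with the prime form $E(P,Q)$ (the antisymmetric $-\tfrac12$-differential with a single simple zero on the diagonal, satisfying $\beta_\M(P,Q)=\pa_P\pa_Q\log E(P,Q)$), one verifies that
\[
\Lambda_{\M,\SpStr}(P,Q)=\frac{\theta_{\tau_\M}(\Ab(P)-\Ab(Q);H)}{\theta_{\tau_\M}(0;H)\,E(P,Q)}
\]
is a section of $\SpStr$ — the correct monodromy around $A_k,B_k$ follows from the quasi-periodicity \eqref{eq:quasi} of $\theta$ and the transformation law of $E$ — with a simple pole of residue $1$ at $P=Q$, hence is the Szeg\H{o} kernel by uniqueness.

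Next I would expand near the diagonal. The hypothesis $\theta_{\tau_\M}(0;H)\neq0$ forces $H$ to be an even characteristic, so $\theta_{\tau_\M}(\cdot;H)$ is an even function and $\pa_{z_i}\theta_{\tau_\M}(0;H)=0$; this is consistent with $c_0=0$ and makes the first nontrivial term of $\theta_{\tau_\M}(\Ab(P)-\Ab(Q);H)^2/\theta_{\tau_\M}(0;H)^2$ quadratic in $z_P-z_Q$, with coefficient $\sum_{ij}(\pa_{z_i}\pa_{z_j}\theta_{\tau_\M}(0;H)/\theta_{\tau_\M}(0;H))U_iU_j$, where $U_i$ is the local element of $u_{\M,i}$ (the second-order term of the Abel map multiplies a vanishing first derivative and so drops out). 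Since $\beta_\M=\pa\pa\log E$, the constant term of $E(P,Q)^{-2}$ reproduces exactly the regular part of $\beta_\M$, so on the diagonal $\sum_{ij}C_{ij}u_{\M,i}u_{\M,j}$ and $\sum_{ij}(\pa_{z_i}\pa_{z_j}\theta_{\tau_\M}(0;H)/\theta_{\tau_\M}(0;H))u_{\M,i}u_{\M,j}$ agree as quadratic differentials.

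Diagonal agreement already settles the low-genus cases where the products $u_{\M,i}u_{\M,j}$ are linearly independent; in general they satisfy relations, so to isolate each entry of $C$ I would compute the $A$-periods directly from the theta representation, using $\oint_{A_k}u_{\M,j}=\pi\i\delta_{kj}$ and \eqref{eq: beta_A_loops} to obtain $\oint_{A_k}\Lambda_{\M,\SpStr}(P,Q)^2=\pi\i\sum_j C_{kj}u_{\M,j}(Q)$, and matching against the period of the theta side via \eqref{eq:quasi}. The hard part will be exactly this theta computation: establishing the explicit representation of the Szeg\H{o} kernel with the requisite prime-form properties, and then extracting both the quadratic Taylor coefficient on the diagonal and the $A$-periods, is where all the analytic work sits, whereas the complex-analytic reduction in the first paragraph is routine by comparison.
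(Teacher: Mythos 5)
The paper does not actually prove Theorem \ref{thm: Hejhal}: it quotes it from Hejhal \cite[Theorem 32 and Section XI]{Hejhal} and Fay \cite[Corollary 2.12]{Fay}, and the only argument it supplies is the remark, immediately after the statement, that an identity of this shape with \emph{some} symmetric coefficients $a_{ij}$ follows easily from the fact that $\Lambda_{\M,\SpStr}(P,Q)^2-\beta_\M(P,Q)$ is holomorphic and symmetric in both variables. Your first paragraph reproduces that remark correctly (the residue-killing via antisymmetry is fine, as is the symmetry argument for $C_{ij}=C_{ji}$), and the next two steps of your plan are also sound and are indeed the classical skeleton of Fay's treatment: the theta-quotient representation $\Lambda_{\M,\SpStr}(P,Q)=\theta_{\tau_\M}(\Ab(P)-\Ab(Q);H)/\bigl(\theta_{\tau_\M}(0;H)E(P,Q)\bigr)$, evenness of $H$ forced by $\theta_{\tau_\M}(0;H)\neq 0$, and the diagonal Taylor expansion in which the second-order term of the Abel map drops against the vanishing first derivatives. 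You are also right, and it is to your credit that you noticed it, that the diagonal restriction underdetermines $C$ in general: for $g\geq 4$ (and hyperelliptic $g=3$) the products $u_{\M,i}u_{\M,j}$ satisfy quadric relations, so matching quadratic differentials on the diagonal does not pin down the symmetric matrix $C$.

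The genuine gap is in the step you propose to close this with. The formula $\oint_{A_k}\Lambda_{\M,\SpStr}(\cdot,Q)^2=\pi\i\sum_j C_{kj}u_{\M,j}(Q)$, which follows from \eqref{eq:ab1norm} and \eqref{eq: beta_A_loops}, is correct and would indeed determine $C$ completely (making the diagonal computation redundant); but ``matching against the period of the theta side via \eqref{eq:quasi}'' is not an argument. The quasi-periodicity \eqref{eq:quasi} only tells you that the theta quotient has trivial monodromy along $A_k$, i.e.\ that $\Lambda_{\M,\SpStr}^2$ is a well-defined differential there; it gives no means of \emph{evaluating} the contour integral $\oint_{A_k}\Lambda_{\M,\SpStr}(\cdot,Q)^2$, which is an honest period of a differential of the second kind with a double pole, not the integral of $d\log$ of anything to which the transformation laws of $\theta$ and $E$ apply. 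If those periods could be read off directly from the representation, the whole identity would be a two-line corollary, which it is not: this evaluation is precisely the content of the theorem. Fay closes this step by an entirely different device: he first proves the identity for the \emph{twisted} kernel $\theta(\Ab(P)-\Ab(Q)+e)\theta(\Ab(Q)-\Ab(P)+e)/\bigl(\theta(e)^2E(P,Q)^2\bigr)$ with an arbitrary nonsingular $e\in\C^g$, by regarding both sides as meromorphic functions of $e$ on the Jacobian, matching their double poles along the theta divisor (via the Riemann vanishing theorem) and concluding by Liouville plus a specialization; Corollary 2.12 then follows by setting $e$ equal to the half-period attached to $H$ and using evenness. To make your proof complete you must replace the period-matching step by such an argument (or a heat-equation/variational one, or Hejhal's kernel-function machinery); as written, the crux of the theorem is asserted rather than proved.
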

We remark here that the existence of such an identity with \emph{some coefficients} $a_{ij}=a_{ji}$ in the sum (depending only on $\SpStr$ and $\tau_\M$), follows easily from the fact that $\Lambda_{\M,\SpStr}(P,Q)^{2}-\beta(P,Q)$ is a holomorphic differential in each variable, symmetric in the exchange of the variables. The fact that correspondence between $H$ and $\SpStr(H)$ as described above indeed holds for Szeg\H{o} kernels featuring in Theorem \ref{thm: Hejhal} follows from \cite[eq. 103]{Hejhal} (or \cite[page 12]{Fay}), or from examining the proofs. 

We will need the following explicit correspondence between $H$ and $\SpStr(H)$, whose proof we postpone to Section \ref{sec: limit_right}:
\begin{lem}
\label{lem: wind}
The spin structure $\SpStr(H)$ associated to a non-singular characteristics $H$ assigns the following winding to the simple loops representing the classes $A_i$ and $B_i$: 
$$
\wind(A_i)/2\pi=1-N_i\mod 2,\quad \wind(B_i)/2\pi=1-M_i\mod 2.
$$
\end{lem}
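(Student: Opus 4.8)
The plan is to determine the dependence of the two winding numbers on the characteristic $H=\binom{\mu}{\nu}=\binom{M/2}{N/2}$ by isolating two inputs: the quasi-periodicity of the theta function, which controls how the windings change as $H$ varies, and one explicit base computation, which fixes the additive constant.

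First I would record the transformation of $\theta_\tau(\cdot\,;H)$ under the two families of periods. Starting from \eqref{eq: thetachar} and using \eqref{eq:quasi} (completing the square in the summation variable to shift $m\mapsto m-e_k$ for the $B$-periods, and directly for the $A$-periods), one obtains
\begin{equation*}
\theta_\tau(z+\pi\i e_j;H)=(-1)^{N_j}\theta_\tau(z;H),\qquad
\theta_\tau(z+\tau_k;H)=(-1)^{M_k}e^{-2z_k-\tau_{kk}}\theta_\tau(z;H).
\end{equation*}
By the normalizations \eqref{eq:ab1norm} and \eqref{eq:tau}, as $P$ traverses the loop $A_j$ the Abel map $\Ab(P)$ of Definition \ref{def:normdiffs} is shifted by $\pi\i e_j$, and around $B_k$ by $\tau_k$. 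Writing the Szeg\H{o} kernel in Fay's form $\Lambda_{\M,\SpStr(H)}(P,Q_0)=\theta_\tau(\Ab(P)-\Ab(Q_0);H)\bigl(\theta_\tau(0;H)E(P,Q_0)\bigr)^{-1}$, where the prime form $E$ and the scalar $\theta_\tau(0;H)$ carry no $\gamma$-monodromy in $P$ beyond a fixed, $H$-independent sign coming from the auxiliary odd characteristic, I conclude that the \emph{only} $H$-dependent part of the monodromy of the section $f=\Lambda_{\M,\SpStr(H)}(\cdot,Q_0)$ around $A_j$ (resp.\ $B_k$) is the sign $(-1)^{N_j}$ (resp.\ $(-1)^{M_k}$).

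Next I would feed this into the difference-cocycle formalism of Section \ref{sec: spinors_Szego}. Two spin line bundles differ by the class $[s]\in H^1(\M,\Z_2)$, and $\langle[s],[\gamma]\rangle$ is precisely the ratio of the monodromies around $\gamma$ of corresponding sections. By the previous paragraph, replacing $N_j$ by $N_j+1$ (other entries fixed) changes $\SpStr(H)$ by the class $c$ with $\langle c,[A_j]\rangle=1$ and all other pairings against the basis $\{A_i,B_i\}$ equal to $0$; flipping $M_k$ gives $\langle c,[B_k]\rangle=1$ with the rest zero. Invoking the standard fact (Johnson \cite{Johnson}, Atiyah \cite{Atiyah}) that the spin-structure winding, equivalently the associated quadratic refinement of the intersection form, is a torsor under $H^1(\M,\Z_2)$ with $\wind_{s'}(\gamma)/2\pi-\wind_{s}(\gamma)/2\pi\equiv\langle s'-s,[\gamma]\rangle\bmod 2$, this shows that $\wind(A_i)/2\pi\bmod2$ is an affine function of $N_i$ alone with slope $1$, and $\wind(B_i)/2\pi\bmod2$ an affine function of $M_i$ alone with slope $1$. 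This already yields the lemma up to one additive constant per generator.

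Finally I would pin the constant down by reducing to an explicit model. Using the handle-pinching degeneration of Section \ref{sec: limit_right}, under which each handle carries one pair $A_i,B_i$, the off-diagonal data degenerate away and the $\Z_2$ winding invariant is locally constant, it suffices to treat the torus $\C/(\Z+\nu\Z)$. There the odd characteristic $H=\binom{1}{1}$ corresponds to the spin line bundle carrying the nowhere-vanishing holomorphic section $dz^{1/2}$; for this section $f^2=dz$ has constant direction in the flat coordinate, so its winding relative to the tangent of the straight loops representing $A$ and $B$ vanishes, giving $\wind(A)/2\pi\equiv\wind(B)/2\pi\equiv0=1-N_1=1-M_1\bmod2$. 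Together with the affine dependence established above, this fixes the constant to $1$ on every generator, proving the lemma. The main obstacle is exactly this last step: one must verify that the \emph{geometric} winding relative to the tangent (a priori a Gauss--Bonnet-type quantity involving the total turning of the tangent field) localizes, modulo $2$, onto the individual handles in the degeneration of Section \ref{sec: limit_right}, and that the components $(M_i,N_i)$ of the characteristic match the torus spin structure on the $i$-th pinched handle in the limit.
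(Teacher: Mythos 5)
Your first two steps are correct and are, in substance, exactly the paper's own reduction: the quasi-periodicity computation $\theta_\tau(z+\pi\i e_j;H)=(-1)^{N_j}\theta_\tau(z;H)$, $\theta_\tau(z+\tau_k;H)=(-1)^{M_k}e^{-2z_k-\tau_{kk}}\theta_\tau(z;H)$ is what the paper packages into the ratio $g_{H,H'}(\cdot,Q)=\theta_\tau(\Ab(\cdot)-\Ab(Q);H)/\theta_\tau(\Ab(\cdot)-\Ab(Q);H')$, a two-valued meromorphic function realizing the difference cocycle of Section \ref{sec: spinors_Szego} with $s(A_i)=(-1)^{N_i-N_i'}$, $s(B_i)=(-1)^{M_i-M_i'}$; together with the torsor property of windings this reduces the lemma to a single base characteristic per marked surface. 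Two small cautions here: the prime-form representation of the Szeg\H{o} kernel you invoke requires $\theta_\tau(0;H)\neq 0$, so for odd (or singular even) characteristics you should argue with $g_{H,H'}$ and an arbitrary meromorphic section rather than with $\Lambda$; and flipping one entry of $H$ at a time risks passing through singular characteristics, so compare each non-singular $H$ to the base characteristic in one step.

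The genuine gap is your base case, and it sits precisely where you flag it. First, the local constancy of the windings as the surface varies is asserted but not established; the paper obtains it from Fay's \emph{explicitly constructed} holomorphic section $h_H$ of $\SpStr(H)$ for an odd non-singular characteristic, which depends continuously on the surface, so that the set of marked surfaces where the formula holds is open and closed in the connected Teichm\"uller space. Second, and more seriously, your reduction to the flat torus requires passing to the fully pinched (nodal) limit and showing both that the mod-$2$ winding of $A_i,B_i$ localizes onto the $i$-th handle and that $(M_i,N_i)$ matches the torus spin structure carried by that handle; this needs quantitative control of the section through the degeneration (Fay's Chapter 3 asymptotics for theta functions and the prime form), which your proposal does not supply. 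The paper avoids the nodal limit entirely: its base computation takes place on the \emph{smooth} surface $\Surfeps$ for small $\eps>0$ with $\Omega=\HH$, $N_i=1$, $M_i=0$, where Theorem \ref{thm: Hejhal}, Lemma \ref{limitlemma} and Lemma \ref{lem: gausscorr} identify the limit of the Szeg\H{o} kernel's element with a square root of a bosonic correlation, and the OPE \eqref{eq: fuse_pa_cos} exhibits the $c_i(z-v_i)^{-1}$ singularity, making $\wind_{\SpStr(H)}(A_i)=0\bmod 4\pi$ immediate since $A_i$ is a small planar loop around $v_i$; the $B$-loop statement then follows not from a second computation but from the homology-basis change $A_i\mapsto B_i$, $B_i\mapsto -A_i$, which swaps $M_i\leftrightarrow N_i$ by \cite[Eq. 12]{Fay}. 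Your torus endgame (the odd bundle carrying $dz^{1/2}$, hence zero winding along straight representatives, consistent with $1-N_1\equiv 1-M_1\equiv 0$) is a correct verification point, but to turn it into a proof you would have to carry out the degeneration analysis you yourself name as the main obstacle; as written, the argument is incomplete at exactly that step.
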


\section{The surface \texorpdfstring{$\Surfeps$}{} and pinching the handles}
\label{sec: pinching}
Throughout this section, we fix a domain $\Omega$, its boundary conditions, distinct points $v_1,\dots,v_n\in\Omega$, and a choice $\Op_{v_i}\in\{\sigma_{v_i},\mu_{v_i}\}$. From now on, we assume the domain $\Omega$ to be a circular domain, which does not lose generality because of conformal covariance of the correlations.

\subsection{Gluing Riemann surfaces and the surface \texorpdfstring{$\Surfeps$}.}
\label{sec: gluing}
A key step in the proof of Theorem \ref{th:ds} for a $(g+1)$-connected domain $\Omega$ will be provided by taking a limit of (\ref{eq: bosonization}) on a compact Riemann surface $\Surfeps$ of genus $g+n+k$ as $n+k$ handles degenerate at a rate controlled by 
a small parameter $\eps$. 
Here $n$ is the number of spins or disorders and $2k$ is the number of points where boundary conditions change from free to fixed. In this section, we describe the construction of this surface.

We start with a standard procedure of gluing together two boundary components of a (not necessarily connected) Riemann surface with a boundary, see \cite[Chapter 2.2]{SS} or \cite[Chapter 11-3]{Cohn} or \cite[Chapter II.4.5]{FK} or \cite[Chapter 3]{Fay}. Let $\gamma_1$ and $\gamma_2$ be two distinct connected components of $\pa \M$ homeomorphic to circles. Let $\mathcal{U}_{1,2}$ be (small, disjoint) neighborhoods of $\gamma_{1,2}$ respectively, and let $z_{i}:\mathcal{U}_{i}\to \C,$ $i=1,2$ be coordinate maps such that $z_1(\mathcal{U}_1\setminus \gamma_1)\subset \D$, $z_2(\mathcal{U}_2\setminus \gamma_2)\subset \C\setminus \D$, and $z_i(\gamma_i)=\pa \D$. Such coordinate maps can be constructed e.g. by uniformizing annular neighborhoods of $\gamma_{1,2}$; then the uniformization maps $z_{1,2}$ extend continuously to a bijection of $\gamma_{1,2}$ to $\pa\D$ respectively. We construct the topological surface $\M_{\gamma_1\leftrightarrow\gamma_2}$ by identifying each point $w\in \gamma_1$ with the unique point $w'\in\gamma_2$ such that $z_1(w)=z_2(w')$, and endow it with a conformal structure by adding to the original atlas the chart $\mathcal{U}=\mathcal{U}_1\cup\mathcal{U}_2$ endowed with the coordinate map 
$$z(w)=\begin{cases}z_1(w),&w\in \mathcal{U}_1;\\
z_2(w),&w\in \mathcal{U}_2.
\end{cases}
$$
The resulting Riemann surface $\M_{\gamma_1\leftrightarrow\gamma_2}$ depends on the choice of the local coordinates\footnote{Actually, only on the welding homeomorphism $z_2\circ z_1^{-1}:\gamma_1\to\gamma_2$.} $z_{1,2}$, but this will not be an issue as we will always describe this choice concretely. It is clear that several pairs of boundary components can be glued in that way, in any order.

A convenient version of the above construction is \emph{gluing along straight cuts}. Let $\hat{z}_1,\hat{z}_2$ be local coordinates in neighborhoods $\mathcal{U}_{1,2}$ of $\gamma_{1,2}$, respectively, that map $\gamma_{1,2}$ to the segment $[-1,1]$. Then, we can glue $\gamma_1$ with $\gamma_2$ by identifying the upper sides of the two cuts $[-1,1]$ together using the map $x\mapsto -x$, and similarly for the lower sides. Formally, let $\mathcal{Z}_{\mathrm{in},\mathrm{out}}(w)=w\pm\sqrt{w^2-1}$ be the two branches of the inverse Zhoukowski map, mapping $\C\setminus [-1,1]$ conformally to $\D$ and $\C\setminus\overline{\D}$ respectively, and apply the above construction to $z_1=\mathcal{Z}_{\mathrm{out}}\circ \hat{z}_1$ and $z_2=-\mathcal{Z}_{\mathrm{in}}\circ \hat{z}_2$. 

A particular case of the gluing construction is the \emph{Schottky double} $\Surf$ of a circular domain $\Dom$. Let $\Domr$ be a copy of $\Dom$, and take $\M$ to be the disjoint union $\Dom\sqcup\Domr$. We endow $\M$ with a conformal structure by taking the coordinate maps to be all conformal maps from subsets of $\Dom$ to $\C$ and all \emph{anti-holomorphic} maps from subsets of $\Domr$ to $\C$, i.e., maps $z:\Domr\to \C$ such that $\overline{z}$ is holomorphic. The Schottky double $\Surf$ is now obtained by applying the above construction to each pair of the corresponding components in $\pa \Dom$ and $\pa \Domr$, taking $z_1$ and $z_2$, respectively, to be the identity map and the inversion with respect to the corresponding component (i.e., $z_2(w)=\frac{r^2}{\bar{w}-\bar{c}}+c$ for a component that is a circle of radius $r$ with centre $c$). This way, $\Surf$ is a Riemann surface (without boundary), and the mapping $u\mapsto u^\star$ that maps corresponding points on the two copies to one another is an anti-holomorphic involution. Also as mentioned, the genus of $\Surf$ is $g$ if $\Omega$ is $(g+1)$-connected; see Figure~\ref{fig:schottky}.

\begin{figure}
    \centering
\includegraphics[width=0.9\textwidth]{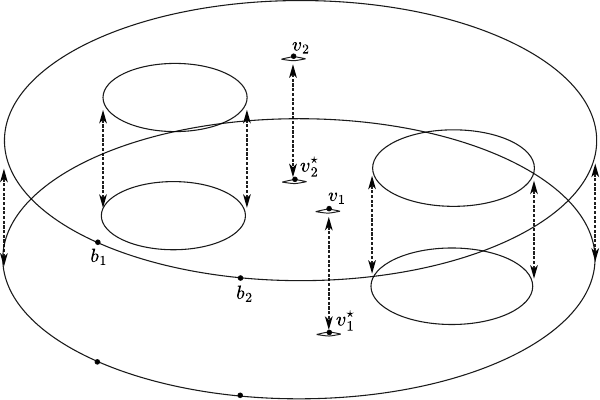}

\vspace{0.5cm}

\includegraphics[width=0.9\textwidth]{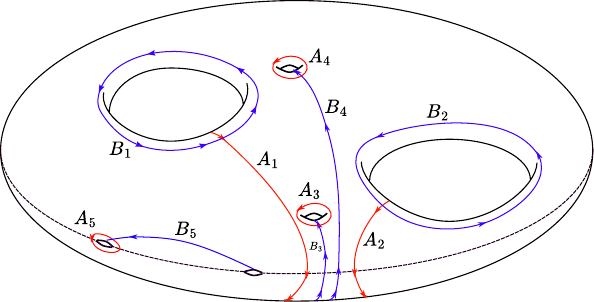}
\caption{A construction of the surface $\Surfeps$ in the case $g=2$, $n=2$, $k=1$. Two copies $\Dom_\eps, \Domr_\eps$ with small cuts near $v_{1,2}$ and $v^\star_{1,2}$ are glued together, forming small ``holes". The Torelli marking is drawn with $A$-loops in red and $B$-loops in blue. In the final stage, small cuts near $b_1,b_2$ are glued together, so that the curve $B_5$ becomes a closed loop.}   
    \label{fig:schottky}
\end{figure}

\begin{defn} We construct the surface $\Surfeps$ as follows:
\label{def: surfeps}
\begin{itemize}
    \item Starting from a circular domain $\Dom$, construct its Schottky double $\Surf$; 
    \item For each marked point $v_1,\dots,v_n\in\Omega$, make straight cuts $[v_i-\eps,v_i+\eps]$ and $[v^\star_i-\eps,v^\star_i+\eps]$ on $\Surf$ and glue them as described above, using the coordinates $\hat{z}_1(u)=\eps^{-1}(u-v_i)$ on $\Dom$ and $\hat{z}_2(u)=\eps^{-1}({u}^\star-v^\star_i)$ on $\Domr$;
    \item Let $(b_{1},b_{2})\subset \pa \Dom$ be a free boundary arc. We can choose a local coordinate $(\mathcal{U},z)$ in which $(b_{1},b_{2})$ becomes an interval in $\R$, i.e., that $(b_{1},b_{2})\subset \mathcal{U}$, $z([b_1,b_2])\subset \R$, and $z(u^\star)=\bar{z(u)}$ on $\mathcal{U}$. In this coordinate chart, cut the surface along the segments $[z(b_1)-\eps,z(b_1)+\eps]$ and $[z(b_2)-\eps,z(b_2)+\eps]$, and glue the cuts together as described above.
    \item Repeat for other free arcs $(b_3,b_4),\dots,(b_{2k-1},b_{2k})$
\end{itemize}
\end{defn}
The details of the above construction are not important; for example, the first two steps could be replaced by taking $\Dom\setminus\cup B_{\eps}(v_i)$ and then constructing its Schottky double. Of course, $\eps$ is assumed to be small enough that the cuts remain disjoint and fit into the corresponding coordinate charts.

In what follows, we assume that $\fixed \neq \emptyset$. The case when all boundary conditions are free can be treated by Kramers-Wannier duality, or by a simple modification of the choices made below. We mark a point $w_0$ of $\fixed$, and assume without loss of generality that it is on the outer boundary component of $\Omega$. We also mark points $w_1,\dots w_g\notin\{b_1,\dots b_{2k}\}$, one on each of the inner boundary components; we choose them on $\fixed$ whenever possible. 

\begin{defn}\label{def:markchar}We choose the following marking of $\Surfeps$:  
\label{def: marking}
\begin{enumerate}
\item the loop $A_i$, $i=1,\dots,g$, runs from $w_i$ to $w_0$ in $\Dom$ and then back in $\Domr$. The loop $B_i$ runs counterclockwise along the boundary of $i$-th inner boundary component of $\Omega$.
\item the loop $A_{g+i}$, $i=1,\dots,n$, is a simple loop in $\Dom$ surrounding the cut at $v_i$ counterclockwise. The loop $B_{g+i}$, $i=1,\dots,n$ runs from $w_0$ to the cut at $v_i$ in $\Omega$, and then back to $w_0$ in $\Domr$.
\item the loop $A_{g+n+i}$, $i=1,\dots,k$, is a (small) simple loop on $\Surfeps$ surrounding the cut at $b_{2i-1}$ in counterclockwise order, or, equivalently, surrounding the cut at $b_{2i}$ in clockwise order. The loop $B_{g+n+i}$ is represented by a simple curve connecting $b_{2i}$ and $b_{2i-1}$ in $\Omega$.
\end{enumerate}
To this marked surface, we associate a half-integer theta characteristic $H$ as follows: 
\begin{enumerate}
\item for $i=1,\dots,g$, we set $M_i=N_i=0$, unless $w_i\in\free$, in which case we set $M_i=0$, $N_i=1$.
\item for $i=g+1,\dots,g+n$, we set $N_i=1$ and $M_i=0$ (respectively, $M_i=1$) if $\Op_{v_{i-g}}=\sigma_{v_{i-g}}$ (respectively, $\Op_{v_{i-g}}=\mu_{v_{i-g}}$).
\item for $i=g+n+1,\dots,g+n+k$, we set $M_i=0$ and $N_i=1$.
\end{enumerate}
\end{defn}

\subsection{The Szeg\H{o} kernel on \texorpdfstring{$\Surfeps$}{}}
\label{sec: Szego on surfeps}

It will be convenient to describe the Szeg\H{o} kernel on $\Surfeps$ using its ``elements'' in the coordinate charts $\Dom_\eps:=\Omega\setminus \cup_{i=1}^n[v_i-\eps;v_i+\eps]$ and $\Domr_\eps=\Domr\setminus \cup_{i=1}^n[v^\star_i-\eps;v^\star_i+\eps]$. Since these charts are not simply connected, the ``elements'' themselves will be $(\frac12,\frac12)$--forms, which however can be identified with two-valued \emph{functions} by ``dividing by $dz^\frac12$ or $d\bar{z}^\frac12$". Also, since the charts $\Dom_\eps,$ $\Domr_\eps$ do not form an open cover of $\Surfeps$, there will be \emph{boundary conditions} that ensure that the two functions together indeed give rise to a $(\frac12,\frac12)$--form. The discussion is similar to \cite[Section I]{Hejhal}.

Thus, let $\SKDom(z,w)$ be the element $(\Lambda_{\Surfeps,\SpStr,\alpha})_\alpha$ of the Szeg\H{o} kernel for some simply connected chart $\mathcal{U_\alpha}$, where $\mathcal{U}_\alpha\subset\Omega_\eps$ and $z_\alpha$ is the identity map; recall that the spin structure $\SpStr=\SpStr(H)$ is the one associated via Lemma \ref{lem: wind} to the characteristic $H$ as per Definition \ref{def: marking}. We continue it analytically to a function of two variables on a double cover of $\Omega_\eps$ whose two values differ by a sign between sheets. The resulting function is analytic for $z,w$ not in a fiber of the same point, anti-symmetric, and has a simple pole of residue $1$ at $z=w$. 

Similarly, let $\SKDr(z,w)=(\Lambda_{\Surfeps,\SpStr,\alpha'})_{\alpha}$ be the element of $\Lambda_{\Surfeps,\SpStr}$ for two symmetric simply connected charts  $\mathcal{U}_\alpha\subset\Dom_\eps$ and $\mathcal{U}_{\alpha'}\subset \Domr_\eps$, with the coordinate maps $z_\alpha: z\mapsto z$ and $z_{\alpha'}: w\mapsto \bar{w}$. It can be continued analytically in $z$ and anti-analytically in $w$ to a two-valued function on $\Omega_\eps$. Similar constructions can be made when $z,w\in\Domr_\eps$ and $z\in \Domr_\eps$, $w\in\Dom_\eps$; the symmetries of our setup and the uniqueness of Szeg\H{o} kernel ensure that they result in $\overline{\SKDom}$ and $\overline{\SKDr}$, respectively. We note that while $\SKDom$ is uniquely determined by the expansion as $z\to w$, $\SKDr$ is only defined up to a global sign. The whole construction can be informally summarized as follows:
$$
\Lambda_{\Surfeps,\SpStr}(z,w)=\begin{cases}
    \SKDom(z,w)\,dz^\frac12 dw^\frac12,& z\in \Dom_\eps,w\in\Dom_\eps;\\
    \SKDr(z,w)\,dz^\frac12 d\bar{w}^\frac12,& z\in \Dom_\eps,w\in\Domr_\eps;\\
    \overline{\SKDr(z,w)}\,d\bar{z}^\frac12 dw^\frac12,& z\in \Domr_\eps, w\in \Dom_\eps;\\
    \overline{\SKDom(z,w)}\,d\bar{z}^\frac12 d\bar{w}^\frac12,& z\in\Domr_\eps,w\in\Domr_\eps.
\end{cases}
$$

We now note these four elements must analytically continue into each other in an appropriate sense: again informally, we must have $\SKDom(z,w)dw^\frac12=\SKDr(z,w)d\bar{w}^\frac12$ when $w\in \pa\Dom_\eps$, and $\SKDom(z,w)dz^\frac12=\overline{\SKDr(z,w)}d\bar{z}^\frac12$ when $z\in \pa\Dom_\eps$. A precise statement is given by the following lemma. Let the set $\fixed^\dagger\subset\Dom_\eps$ (respectively $\free^\dagger\subset\Dom_\eps$) comprise the original $\fixed$ (respectively, $\free$) part of the boundary of $\Omega$ with the cuts around endpoints \emph{excluded}, and the cuts $[v_i-\eps,v_i+\eps]$ with $\Op_{v_i}=\sigma_{v_i}$ (respectively, $\Op_{v_i}=\mu_{v_i}$) \emph{included}. 
\begin{lem}
\label{lem: Szego_elements}
   With the choice of the half-integer characteristic $H$ as in Definition \ref{def:markchar} and the corresponding spin line bundle $\SpStr(H)$, the two-valued functions $\SKDom(z,w)$ and $\SKDr(z,w)$ pick up a $-1$ sign along the loops $A_{g+i}$, $i=1,\dots,n$ but not along $B_i$, $i=1,\dots,g$. Moreover, we have the following boundary conditions for $\SKDom,\SKDr$:
    \begin{eqnarray}
        \tau_w\SKDom(z,w)&=\SKDr(z,w),\quad w\in\fixed^\dagger\label{eq: Szego_bc1} \\ 
        \tau_w\SKDom(z,w)&=-\SKDr(z,w),\quad w\in\free^\dagger\label{eq: Szego_bc2}\\ 
        \tau_z\SKDom(z,w)&=\overline{\SKDr(z,w)},\quad z\in\fixed^\dagger\label{eq: Szego_bc3}\\
        \tau_z\SKDom(z,w)&=-\overline{\SKDr(z,w)},\quad z\in\free^\dagger\label{eq: Szego_bc4}
    \end{eqnarray}
 where $\tau_z\in\C$ denotes the positively oriented unit tangent to $\Omega$ at $z$.
\end{lem}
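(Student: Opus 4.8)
The plan is to extract both the monodromy and the boundary behaviour of the Szeg\H{o} kernel directly from its being a section of the spin line bundle $\SpStr(H)$, translating everything into windings by means of Lemma \ref{lem: wind}. The one elementary input I would isolate first is the dictionary between the sign picked up by a two-valued spinor element and the winding of the associated line field: for a simple loop $\gamma$ avoiding the zeros and poles, analytic continuation of $\SKDom(\cdot,w)$ once around $\gamma$ multiplies it by $(-1)^{1+\wind(\gamma)/2\pi}$. One checks this on the model spinors $dz^{\frac12}$ (no sign, $\wind/2\pi\equiv 1$) and $\sqrt{z}\,dz^{\frac12}$ (sign flip, $\wind/2\pi\equiv 0$) on the punctured plane, using the recipe for reading off the winding from $\overline{f^2}$ recalled in Section \ref{sec: spinors_Szego}. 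Since $\Dom_\eps$ carries the identity coordinate, the two-valuedness of $\SKDom$ and $\SKDr$ around a loop lying in $\Dom_\eps$ is nothing but the $\Z_2$-holonomy of $\SpStr(H)$, so it is governed by this dictionary. The loops $A_{g+i}$ (encircling the slit at $v_i$) and $B_i$ (around the $i$-th inner boundary component) lie in $\Dom_\eps$, and Lemma \ref{lem: wind} together with Definition \ref{def:markchar} gives $\wind(A_{g+i})/2\pi=1-N_{g+i}=0$ and $\wind(B_i)/2\pi=1-M_i=1$. The dictionary then produces a sign flip along $A_{g+i}$ and none along $B_i$, which is the first assertion of the lemma.

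For the boundary identities I would fix their shape before their signs. Parametrising a boundary arc by arclength one has $dw=\tau_w^2\,d\bar w$, hence $dw^{\frac12}=\tau_w\,d\bar w^{\frac12}$ for a suitable branch; matching the $\Dom_\eps$-element $\SKDom(z,w)\,dw^{\frac12}$ with the $\Domr_\eps$-element $\SKDr(z,w)\,d\bar w^{\frac12}$ across the weld forces
\[
\tau_w\SKDom(z,w)=\epsilon(w)\,\SKDr(z,w),\qquad \epsilon(w)\in\{\pm1\},
\]
where $\epsilon(w)$ is the transition sign of $\SpStr(H)$ along the equator $\pa\Omega$. This is exactly the shape of \eqref{eq: Szego_bc1}--\eqref{eq: Szego_bc2}. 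The remaining identities \eqref{eq: Szego_bc3}--\eqref{eq: Szego_bc4} then follow formally: combining the antisymmetry $\SKDom(z,w)=-\SKDom(w,z)$ with the relation $\SKDr(z,w)=-\overline{\SKDr(w,z)}$ (read off from the four-region description of $\Lambda_{\Surfeps,\SpStr}$ via antisymmetry), one swaps $z\leftrightarrow w$ in \eqref{eq: Szego_bc1} and conjugates to land precisely on \eqref{eq: Szego_bc3}, and likewise \eqref{eq: Szego_bc2}$\,\Rightarrow\,$\eqref{eq: Szego_bc4}.

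It remains to evaluate $\epsilon(w)$, and this is where I expect the actual work to be. The sign $\epsilon$ is locally constant on each arc and on each slit, so it suffices to normalise it once and track its flips. Using the residual freedom in the global sign of $\SKDr$, I would set $\epsilon\equiv+1$ on the marked $\fixed$ arc through $w_0$. Every other value is then pinned by the holonomy of a loop crossing the equator through the arc in question and returning through $w_0$: the anti-holomorphic involution $z\mapsto z^\star$ makes the two legs of such a loop carry conjugate, hence equal, signs, so the holonomy reduces to the product of $\epsilon$ over the crossed arcs, and equals $(-1)^{1+\wind/2\pi}$ by the dictionary. Concretely, $A_i$ transports the normalisation to the $i$-th inner component (no flip when $w_i\in\fixed$, i.e.\ $N_i=0$; a flip when the component is entirely free, $N_i=1$); the handle loop $A_{g+n+i}$, which encircles the slit at a free-arc endpoint and has $\wind/2\pi\equiv0$, forces $\epsilon$ to flip across each $b_j$, so that $\epsilon=+1$ on $\fixed$ arcs and $\epsilon=-1$ on $\free$ arcs; and $B_{g+i}$, with $\wind(B_{g+i})/2\pi=1-M_{g+i}$, gives $\epsilon=+1$ on the slit at $v_i$ when $\Op_{v_i}=\sigma_{v_i}$ ($M_{g+i}=0$) and $\epsilon=-1$ when $\Op_{v_i}=\mu_{v_i}$ ($M_{g+i}=1$). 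These are exactly the signs demanded by the definition of $\fixed^\dagger$ and $\free^\dagger$.

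The main obstacle is the global consistency of this bookkeeping. One has to verify that propagating the normalisation around each boundary component (each carrying an even number of change points $b_j$) and through every slit returns a single-valued sign with no monodromy obstruction — equivalently, that the signs produced really are the transition data of the one bundle $\SpStr(H)$ and not of a different spin structure. This is exactly what Lemma \ref{lem: wind} guarantees, but checking it cleanly requires choosing the representing loops carefully so that their windings are computed consistently with the chosen coordinate charts and with the reflection symmetry, and keeping track of the half-integer contributions concentrated at the points $b_j$.
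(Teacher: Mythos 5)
Your proposal is correct and follows essentially the same route as the paper's proof: you establish the identities \eqref{eq: Szego_bc1}--\eqref{eq: Szego_bc2} up to sign from the transition maps of the welding (the paper computes the prefactor $\pm\tau_w^{-1}$ from the gluing coordinates $z_1,\overline{z_2}$, which is your $dw^{\frac12}=\tau_w\,d\bar w^{\frac12}$ matching), fix the sign $\epsilon$ by normalizing at $w_0$ via the global sign freedom of $\SKDr$ and propagating it along the loops $A_i$, $B_{g+i}$, $A_{g+n+j}$ through the winding dictionary of Lemma \ref{lem: wind}, and obtain \eqref{eq: Szego_bc3}--\eqref{eq: Szego_bc4} by antisymmetry. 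Your explicit dictionary $(-1)^{1+\wind(\gamma)/2\pi}$, verified on the model spinors, is exactly the content of the paper's inline computation $e^{-\i\wind_{\SpStr(H)}(\gamma)/2}=\frac{\SKDom(z,w_i)}{\SKDom(z,w_0)}e^{\i\wind(\gamma)/2}$ together with $\wind(\overline{\gamma})=-\wind(\gamma)$ for the mirrored leg.
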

\begin{proof}
We first explain why (\ref{eq: Szego_bc1}--\ref{eq: Szego_bc4}) hold up to signs (which are of course constant along each free or fixed boundary arc). Let $w_0\in\fixed^\dagger$, and let $z_{1,2}$ be coordinate maps near $w_0$ in the gluing construction described in Section \ref{sec: gluing}. Let $\tilde{K}(z,w)$ denote the element of $\Lambda_{\Surfeps,\SpStr}$ in the $w$-chart $\mathcal{U}$ (coming from the gluing), and some arbitrary $z$-chart. We have
$$(z_1'(w_0))^{-\frac12}\SKDom(z,w_0)=\tilde{K}(z,z_1(w_0))=\tilde{K}(z,z_2(w_0))=\left(\overline{z_2'(w_0)}\right)^{-\frac12}\SKDr(z,w_0),$$ since the coordinate maps between $z_1$ (resp. $\overline{z_2}$) are the relevant coordinate change maps. Hence $\SKDom(z,w_0)=\left(z'_1(w_0)/\overline{z_2'(w_0)}\right)^\frac12\SKDr(z,w_0).$ It remains to note that the pre-factor is $\pm \tau^{-1}_z$, proving (\ref{eq: Szego_bc1}) up to sign. The proof of (\ref{eq: Szego_bc2}--\ref{eq: Szego_bc4}) up to sign is similar. 

The rest of the proof amounts to showing that the spin line bundle $\SpStr(H)$, with our choice of $H$, indeed corresponds to the choice of signs as above.  By Lemma \ref{lem: wind}, we have $N_{g+i}=1$ implies $\wind_{\SpStr}{A_{g+i}}=0$, $i=1,\dots,n$, and working in the chart $\Dom_\eps$, we see that it means that $\SKDom(z,w)$ must pick up a $-1$ sign as $z$ traces $A_i$, and respectively for $B_i$, $i=1,\dots,g$. We can ensure that \eqref{eq: Szego_bc1} holds at $w_0$, by a choice of a global sign for $\SKDr$. Let $\gamma$ be a path connecting $w_0$ to one of $w_i$, $1\leq i\leq g$ in $\Dom_\eps$, so that $\gamma\cup\tilde{\gamma}\simeq A_i$ is a smooth simple loop in $\Surfeps$; we assume that $\gamma$ avoids $z$ and the zeros of $\SKDom(z,\cdot), \SKDr(z,\cdot)$. Then, $\wind_{\SpStr(H)}(\gamma\cup\tilde{\gamma})=\wind_{\SpStr(H)}(\gamma)+\wind_{\SpStr(H)}(\tilde{\gamma})$, where the two terms compute the rotation number of $\overline{\SKDom(z,\cdot)}^2$ (respectively, $\overline{\SKDr(z,\overline{\cdot})}^2$) with respect to the tangent vector of $\gamma$ (respectively, $\overline{\gamma}$ traversed backwards); we use the coordinate map $z\mapsto \bar{z}$ for the second term. But since $\SKDom(\cdot,w)$ is a \emph{single-valued function} in a simply connected neighborhood of $\gamma$, we have 
$$
e^{-i\wind_{\SpStr(H)}(\gamma)/2}=\frac{\SKDom(z,w_i)}{\SKDom(z,w_0)}e^{i\wind(\gamma)/2},
$$
where $\wind(\gamma)$ is simply the winding with respect to the flat metric in $\C$. A similar argument applies to $\wind_{\SpStr(H)}(\tilde{\gamma})$, and using $\wind(\overline{\gamma})=-\wind(\gamma)$, we conclude 
$$
e^{-i\wind_{\SpStr(H)}(\gamma\cup\tilde{\gamma})}=\frac{\SKDom(z,w_i)\SKDr(z,w_0)}{\SKDom(z,w_0)\SKDr(z,w_i)}e^{i\wind(\gamma)}=-\frac{\SKDom(z,w_i)\SKDr(z,w_0)}{\SKDom(z,w_0)\SKDr(z,w_i)}\frac{\tau_{w_i}}{\tau_{w_0}}.
$$
Now it follows that \eqref{eq: Szego_bc1} or \eqref{eq: Szego_bc2} holds on the boundary arc containing $w_i$, by combining Lemma \ref{lem: wind} with our choice of the characteristics in Definition \ref{def: marking}. Exactly the same argument applied to $B_{g+i}$ loops, $i=1,\dots,g$, yields \eqref{eq: Szego_bc1} or \eqref{eq: Szego_bc2} on each of the cuts at $v_i$. For a component containing several arcs, we have \eqref{eq: Szego_bc1} on the ``wired" arc $(b_{2k},b_{2j-1})\cap \fixed^\dagger\ni w_i$; applying the above argument to the loops $A_{g+n+j}$ and $A_{g+n+k}$, we get \eqref{eq: Szego_bc2} on the neighboring free arcs $(b_{2k-1},b_{2k})\cap \free^\dagger$ and $(b_{2j-1},b_{2j})\cap \free^\dagger$, and similarly propagate it further. The equations (\ref{eq: Szego_bc3}), (\ref{eq: Szego_bc4}) follow by anti-symmetry.
\end{proof}

\subsection{Limits of the Abelian differentials and the period matrix on \texorpdfstring{$\Surfeps$}{}.}\label{sec:shrink}
Another important ingredient in the proof of Theorem \ref{th:ds} will be taking the limits of Abelian differentials and the period matrix as $\eps\to 0$.

\begin{defn}
\label{def: u_j_all}
        We introduce the following unifying notation for the Abelian differentials of the \emph{first and third kind} on the Schottky double $\Surf$ of the domain $\Dom$: 
    \begin{align}
     u_j(z)&:=u_{\Surf,j}(z),& 1\leq j\leq g,\\
     u_{g+j}(z)&:=\frac{1}{2}\omega_{\Surf,v_{j},v^\star_{j}}(z),& j=1, \dots, n,\\
     u_{g+n+j}(z)&:=\frac{1}{2}\omega_{\Surf,b_{2j-1},b_{2j}}(z),& j=1, \dots, k.
    \end{align}
\end{defn}

\begin{lem}\label{lem:abdifasy} \cite[Proposition 3.7 and Corollary 3.8]{Fay} The Abelian differentials on $\Surfeps$ have the following limits as $\eps \rightarrow 0$:
\label{lem: limit_Abelian}
\begin{align}
u_{\Surfeps,j}(P)&\longrightarrow u_j(P), \quad j=1,\dots, g+n+k,\label{eq: conv_first}\\
\beta_{\Surfeps}(P,Q)&\longrightarrow \beta_{\Surf}(P,Q)\label{eq: conv_second}.
\end{align}
\end{lem}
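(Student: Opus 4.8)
The plan is to treat Lemma \ref{lem:abdifasy} as a standard degeneration (pinching) statement for normalized Abelian differentials, adapted to the explicit gluing of Definition \ref{def: surfeps}. The guiding principle is that each of the $n+k$ handles we attach carries a \emph{vanishing} $A$-cycle---namely $A_{g+j}$ ($j=1,\dots,n$) encircling the cut at $v_j$, and $A_{g+n+j}$ ($j=1,\dots,k$) encircling the cut at $b_{2j-1}$---whose length shrinks to $0$ as $\eps\to 0$. In the limit, the $j$-th such handle collapses to a node identifying its two gluing points, and a holomorphic differential normalized by $\oint_{A}u=\pi\i$ over a vanishing cycle degenerates into a differential of the \emph{third} kind with simple poles at that node. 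Since $\omega_{\Surf,P,Q}$ has residues $\pm 1$, a small loop around $P$ contributes $2\pi\i$, so the normalization $\pi\i$ forces the residue $\tfrac12$ at the node; this is exactly the content of the factors $\tfrac12$ in Definition \ref{def: u_j_all} and pins down the claimed limits.

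The first step is to establish uniform control and extract limits by a normal-families argument. On each $\Surfeps$ the holomorphic differentials form a space of dimension $g+n+k$, and one works with the normalized basis $u_{\Surfeps,\ell}$. Fixing the gluing coordinates $\hat z_i$ near each $v_i$ and each $b_j$, one shows that the elements of $u_{\Surfeps,\ell}$ are uniformly bounded on compact subsets of $\Surf$ that avoid the gluing points, with uniform estimates on the thin necks governed by the inverse Zhoukowski maps $\mathcal{Z}_{\mathrm{in}},\mathcal{Z}_{\mathrm{out}}$. From these bounds one extracts, along any sequence $\eps\to 0$, a subsequential limit that is holomorphic on $\Surf$ away from $v_i,v_i^\star,b_j$ and, by the structure theory of degenerating families, has at worst simple poles at these points.

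The second step is to identify the limit by verifying that the defining normalizations survive. For $\ell\le g$ the cycles $A_\ell,B_\ell$ stay bounded away from the handles, so the periods $\oint_{A_\ell}u_{\Surfeps,\ell}=\pi\i\,\delta_{\ell,\cdot}$ of \eqref{eq:ab1norm} are preserved; the limit acquires no residue (because $A_\ell$ links no handle), hence is genuinely holomorphic on $\Surf$, and by uniqueness of the normalized basis of holomorphic differentials on $\Surf$ it equals $u_{\Surf,\ell}$, giving \eqref{eq: conv_first} for these indices. For $\ell=g+j$ (and likewise $\ell=g+n+j$) the vanishing cycle $A_\ell$ encircles the collapsing handle, and tracking the fixed $A_\ell$-period $\pi\i$ through the limit shows the limiting differential has simple poles of residue $+\tfrac12$ and $-\tfrac12$ at the two gluing points together with vanishing periods over all surviving $A$-cycles; these are precisely the normalizing properties of $\tfrac12\omega_{\Surf,v_j,v_j^\star}$ (resp.\ $\tfrac12\omega_{\Surf,b_{2j-1},b_{2j}}$), so uniqueness again identifies the limit. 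The convergence $\beta_{\Surfeps}\to\beta_{\Surf}$ in \eqref{eq: conv_second} follows the same pattern: $\beta_{\Surfeps}$ is the unique bidifferential with a normalized double pole on the diagonal and vanishing $A$-periods, and the same estimates and period-tracking show the limit keeps this double-pole singularity off the nodes and the zero $A$-periods, hence coincides with $\beta_{\Surf}$.

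The main obstacle will be the analysis on the pinching necks: one must rule out that a nonzero amount of period or residue is trapped in the collapsing handle, and one must read off the residues at the emerging nodes correctly. This calls for an explicit local model of the two glued straight cuts, where the geometry is governed by $\mathcal{Z}_{\mathrm{in}},\mathcal{Z}_{\mathrm{out}}$, together with Cauchy-type estimates on the thin annular neck that bound the deviation of each differential from its limiting local form by a power of $\eps$. Once these neck estimates are secured, the period- and residue-bookkeeping in the identification step is routine, and the uniqueness of normalized (bi)differentials upgrades subsequential convergence to convergence of the full families; alternatively, one may simply invoke \cite[Proposition 3.7 and Corollary 3.8]{Fay}, of which this lemma is a direct specialization.
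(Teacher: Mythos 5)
Your proposal is correct in outline, but it takes a genuinely different route from the paper, which in fact offers no independent proof of this lemma at all: the statement is imported wholesale from \cite[Proposition 3.7 and Corollary 3.8]{Fay}, and the only original content is the remark following Lemma \ref{lem: limit_period}, which deals with the fact that Fay proves these asymptotics for a \emph{single} pinched handle. There the authors observe that Fay's proofs show holomorphicity of the differentials in each pinching parameter $\eps_i$ separately, so Hartogs' theorem on separate holomorphicity gives joint holomorphicity, and the $n+k$ handles may be pinched in any order or simultaneously. Your sketch, by contrast, is a direct plumbing/degeneration argument: uniform neck estimates and normal families to extract subsequential limits with at worst simple poles at the emerging nodes, followed by identification via the surviving normalizations --- the periods $\pi\i\,\delta_{j,\ell}$ over $A_1,\dots,A_g$, the vanishing periods over the vanishing cycles forcing the limits of $u_{\Surfeps,j}$ ($j\le g$) and of $\beta_{\Surfeps}$ to be pole-free at the nodes, and the fixed period $\pi\i$ over the vanishing cycle $A_{g+j}$ forcing residues $\pm\tfrac12$ (since $\oint=2\pi\i\cdot\mathrm{res}$), exactly matching the factors $\tfrac12$ in Definition \ref{def: u_j_all} --- and finally uniqueness of normalized (bi)differentials to upgrade subsequential to full convergence. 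If carried out, this would be self-contained and would treat all handles at once, so no Hartogs step would be needed. What it would not deliver, and what Fay's quantitative analysis gives for free, is the refined expansion actually used in Lemma \ref{lem: limit_period}: the $\tfrac12\log\eps$ divergence on the diagonal of $\tau_\eps$ and the $O(\eps)$ error term require the explicit neck analysis, not mere compactness.

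One caveat on your closing fallback: the lemma is \emph{not} a direct specialization of \cite[Prop. 3.7, Cor. 3.8]{Fay}, precisely because Fay's statements concern one degenerating handle while here $n+k$ handles shrink simultaneously. If you take the citation route, you must supply the bridge the paper supplies --- either the Hartogs argument or sequential pinching of one handle at a time --- otherwise the invocation of Fay is incomplete.
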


\begin{rem}
To make sense of this convergence statement 
about Abelian differentials on different surfaces, note that  $\Surfeps,\Surf$ are made up of the same charts $\Dom,\Domr$, except near the cuts like $[v_i-\eps, v_i+\eps]$. Thus, we can evaluate both sides of \eqref{eq: conv_first} and \eqref{eq: conv_second} in the same chart. The limits are in fact locally uniform.
\end{rem}

We also have an expansion for the period matrix:

\begin{lem}
\label{lem: limit_period}
    \cite[Corollary 3.8]{Fay} As $\eps \rightarrow 0$, the period matrix $\tau_\eps=\tau_{\Surfeps}$ of $\Surfeps$ has the following expansion:

\begin{equation}
    \tau_\eps=\frac{1}{2}\log{\eps} \cdot \left(0_{g \times g} \oplus I_{(n+k) \times (n+k)}\right)+\tau_0+O(\eps)
\end{equation}
where the symmetric matrix $\tau_0$ is given by
    \begin{align*}
    (\tau_0)_{ij}&=(\tau_{\Surf})_{ij}, & 1\leq i,j\leq g\\
    (\tau_0)_{jj}&= c_j, & j=g+1,\dots,g+n+k\\
    (\tau_0)_{ij}&= \int^{e_{j-g}}_{e^\star_{j-g}}u_i, & j=g+1,\dots, g+n+k ,\quad 1\leq i< j .
    \end{align*}
    Here $\tau_{\Surf}$ is the period matrix of $\Surf$, $c_j$ are some constants, and 
    $$
    (e_j,e^\star_j)=\begin{cases}(v_{j},v^\star_{j}),&1\leq j\leq n\\
    (b_{2(j-n)-1},b_{2(j-n)}),&n+1\leq j \leq n+k. 
    \end{cases} 
    $$ 
\end{lem}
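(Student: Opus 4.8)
The plan is to compute each entry $(\tau_\eps)_{ij}=\oint_{B_i}u_{\Surfeps,j}$ directly from the definition, using the convergence of Abelian differentials from Lemma \ref{lem: limit_Abelian} together with a local analysis of the pinching necks. Since $\tau_\eps$ is symmetric by Proposition \ref{prop: Riemann_relations}, it suffices to treat $i\le j$, and the organizing principle is that a logarithmic divergence can arise only when the $B$-loop we integrate over crosses the very neck at which the integrand concentrates its pole.

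First I would dispose of the block $1\le i,j\le g$. Here the loops $B_i$ run along the inner boundary components of $\Omega$ and stay uniformly away from all the cuts, while $u_{\Surfeps,j}\to u_{\Surf,j}$ locally uniformly there by \eqref{eq: conv_first}; hence $(\tau_\eps)_{ij}\to(\tau_{\Surf})_{ij}$ with no divergent term. For the off-diagonal entries with $i<j$ and $j>g$, I would use symmetry and compute instead $(\tau_\eps)_{ji}=\oint_{B_j}u_{\Surfeps,i}$. By Definition \ref{def: marking} the loop $B_j$ runs from $w_0$ through the single neck at $e_{j-g}$ and back, meeting no other neck; since $i<j$, the limiting differential $u_i$ (equal to $u_{\Surf,i}$ for $i\le g$, or to $\tfrac12\omega_{\Surf,e_{i-g},e^\star_{i-g}}$ for $g<i<j$) has all its poles away from $e_{j-g}$ and is therefore regular along $B_j$. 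Passing to the limit term by term gives $\oint_{B_j}u_{\Surfeps,i}\to\int_{w_0}^{e_{j-g}}u_i+\int_{e^\star_{j-g}}^{w_0}u_i=\int_{e^\star_{j-g}}^{e_{j-g}}u_i$, which is precisely $(\tau_0)_{ij}$.

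The crux is the diagonal entry $(\tau_\eps)_{jj}=\oint_{B_j}u_{\Surfeps,j}$ for $j>g$, where $B_j$ crosses the neck at $e_{j-g}$ exactly where $u_{\Surfeps,j}$ concentrates: near that cut $u_{\Surfeps,j}\approx\tfrac12\omega_{\Surf,e_{j-g},e^\star_{j-g}}\approx\tfrac12\,\frac{du}{u-e_{j-g}}$, compatibly with the normalization $\oint_{A_j}u_{\Surfeps,j}=\pi\i$. Here I would split $B_j$ into an outer part, on which $u_{\Surfeps,j}$ converges to $\tfrac12\omega$ and contributes a finite (cutoff-regularized) amount, and a neck part, which I would evaluate in the gluing coordinate $z=\mathcal{Z}_{\mathrm{out}}\circ\hat z_1$ of Section \ref{sec: gluing}; there $u_{\Surfeps,j}\sim\tfrac12\,\frac{dz}{z}$, and integrating across the plumbing collar produces the logarithmic term. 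Matching the outer and neck contributions on both the $\Dom$ and $\Domr$ sides, the cutoff-dependent pieces cancel and one is left with $\tfrac12\log\eps+c_j$, the coefficient of $\log\eps$ being fixed by the normalization $\oint_{A_j}u_j=\pi\i$ together with the relation between $\eps$ and the collar modulus, and the constant $c_j$ absorbing the finite parts (and depending on the choice of local coordinates $\hat z_i$). The free-arc necks $j=g+n+1,\dots,g+n+k$ are treated identically, with $e_{j-g},e^\star_{j-g}$ the endpoints $b_{2m-1},b_{2m}$ of the corresponding arc.

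The main obstacle is this diagonal computation: one must track the logarithmic coefficient through the Zhoukowski plumbing coordinates carefully enough to pin it down, and then show that after subtracting $\tfrac12\log\eps$ the remainder converges to a finite $c_j$ with an $O(\eps)$ error. Both the error estimate and the local uniformity needed to justify the term-by-term limits above rely on the quantitative degeneration estimates underlying Lemma \ref{lem: limit_Abelian}, i.e.\ \cite[Proposition 3.7 and Corollary 3.8]{Fay}, which I would invoke rather than reprove.
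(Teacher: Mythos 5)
The first thing to note is that the paper does not actually prove this lemma: it is quoted wholesale from \cite[Corollary 3.8]{Fay}, and the only original content is the Remark immediately following it, which reduces the \emph{simultaneous} pinching of the $n+k$ handles to Fay's single-handle statement by introducing separate parameters $\eps_1,\dots,\eps_{n+k}$, noting that the relevant quantities are holomorphic in each $\eps_i$ separately, and invoking Hartogs' theorem. Measured against this, your treatment of the $g\times g$ block and of the off-diagonal entries is a reasonable reconstruction of why the expansion has the stated shape (modulo the neck-crossing estimates, which you correctly flag as needing quantitative input: locally uniform convergence of $u_{\Surfeps,i}$ away from the cuts does not by itself control the portion of $B_j$ inside the shrinking collar, even when the limit differential is regular there).

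There are, however, two genuine gaps. First, at the only hard point --- the diagonal entries --- your argument is circular: the ``quantitative degeneration estimates underlying Lemma \ref{lem: limit_Abelian}'' that you propose to invoke are \cite[Proposition 3.7 and Corollary 3.8]{Fay}, and Corollary 3.8 \emph{is} the statement being proved; so for the coefficient of $\log\eps$, the constants $c_j$, and all the $O(\eps)$ error terms, your proposal reduces to citing the result itself. Relatedly, you assert rather than derive the coefficient: executing your own collar computation with the paper's gluing (cuts of half-length $\eps$, coordinates $\hat z_i=\eps^{-1}(\,\cdot\,)$, Zhoukowski maps), the neck is an annulus of modulus $\frac{1}{\pi}\log\frac{1}{\eps}+O(1)$ --- equivalently, the effective plumbing parameter is of order $\eps^2$, since a slit of half-length $\eps$ has capacity $\eps/2$ --- so crossing it against $u_{\Surfeps,j}\approx\tfrac12\,\frac{dz}{z}$ contributes $\log\eps+O(1)$, not $\tfrac12\log\eps$, and the outer legs are $\eps$-independent. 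Thus the coefficient cannot simply be ``read off'' as you claim; reconciling your sketch with the stated $\tfrac12\log\eps$ requires exactly the bookkeeping between $\eps$ and Fay's degeneration parameter that the proposal waves at. (The precise positive constant in front of $\log\eps$ is immaterial for the later use in Lemma \ref{limitlemma}, where only strict positivity of $d(s)-d_0$ off the minimal set matters, but a proof of the lemma as stated must pin it down.) Second, Fay's results are proved for a single pinched handle; your phrase ``treated identically'' for the several necks skips the reduction of simultaneous pinching to the single-handle case, which is the one step the paper does supply via separate holomorphy plus Hartogs, and which a direct multi-neck computation would otherwise have to rebuild from scratch.
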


\begin{rem} In \cite[Prop. 3.7, Cor. 3.8]{Fay}, the result of Lemmas \ref{lem:abdifasy} -- \ref{lem: limit_period} are stated and proven  only for a single pinched handle. But they imply the result for simultaneous pinching of multiple handles as follows: introducing a separate  parameter $\eps_1,\dots,\eps_{n+k}$ for each cut, and denoting $\vec{\eps}=(\eps_1,\dots,\eps_{n+k})$,  the proofs in \cite{Fay} proceed by showing that $u_{\Surf_{\vec{\eps}},j}$ (respectively, $(\tau_{\vec{\eps}})_{ii}-\frac12 \log\eps_{i-g}$ for $i=g+1,\dots,g+n$) are holomorphic in each $\eps_i$ near the origin. But then, by Hartogs' theorem on separate holomorphicity \cite{Fuks}, they are jointly holomorphic in $\eps_1,\dots,\eps_{n+k}$, and those can be taken to zero in any order or simultaneously, producing the same result. We remark here that the arguments below can be alternatively carried out by sending $\eps_i$ to zero sequentially; we choose a single $\eps$ for notational convenience only.
\end{rem}

\subsection{The limit of the right-hand side of (\ref{eq: bosonization}) and the bosonic correlations.} 
\label{sec: limit_right}

In this section, we will obtain a limit of the right-hand side of \eqref{eq: bosonization} and find an interpretation in terms of bosonic correlation functions. Recall from  Definition \ref{def:markchar} the vectors $M,N\in\{0,1\}^{g+n+k}$ that fix the theta characteristics $H$. For $1\leq p\leq g+n+k$, we introduce the relation 
$s_p=2m_p+N_p$ between the parameters $m_p$ and $s_p$. We denote 
\begin{align*}
\setS&:=(2\Z+N_1)\times\dots\times(2\Z+N_{g+n+k})\\
&\; =(2\Z+N_1)\times\dots\times(2\Z+N_{g})\times(2\Z+1)^{n+k}\\
\setSn&:=(2\Z+N_1)\times\dots\times(2\Z+N_{g})\times\{\pm 1\}^{n+k},
\end{align*}
so that, as $m$ runs through $\Z^{g+n+k}$, $s$ runs through $\setS$.  

\begin{lem}
\label{limitlemma}

If $\ccor{\hOp_{v_1}\dots\hOp_{v_n}}\neq 0$, then $\theta_{\tau_\eps}(0;H)\neq 0$ for $\eps$ small enough, and in particular, the identity (\refeq{eq: bosonization}) is valid. As $\eps\to 0$, the right-hand side of this identity converges, locally uniformly, to 

\begin{align}
\beta_{\Surf}(P,Q)+ \sum_{p,q=1}^{g+n+k}\frac{\sum_{s\in\setSn} s_p s_q\exp (Q_{\tau_0}(s)+(\pi \i/2) (s\cdot M)))}{\sum_{s\in\setSn} \exp (Q_{\tau_0}(s)+(\pi \i/2) (s\cdot M)))} u_p(P) u_q(Q).
\label{eq: RHS_limit}
\end{align}
where $u_p$ and $\tau_0$ are as given in Definition \ref{def: u_j_all} and Lemma \ref{lem: limit_period}, and the quadratic form $Q_{\tau_0}$ is given by \eqref{eq: qtau}. 
\end{lem}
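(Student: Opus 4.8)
The plan is to compute the $\eps\to0$ limit of the right-hand side of \eqref{eq: bosonization} on $\Surfeps$ (which has genus $g+n+k$) by treating the Abelian differentials and the theta-quotient separately. For the differentials, Lemma \ref{lem:abdifasy} gives $u_{\Surfeps,p}(P)\to u_p(P)$ and $\beta_{\Surfeps}(P,Q)\to\beta_{\Surf}(P,Q)$ locally uniformly, so the whole content of the lemma reduces to controlling the coefficients $\partial_{z_p}\partial_{z_q}\theta_{\tau_\eps}(0;H)/\theta_{\tau_\eps}(0;H)$. First I would record the elementary identity obtained by differentiating \eqref{eq: thetachar}: writing $s=2m+N$, so that $m$ runs over $\Z^{g+n+k}$ as $s$ runs over $\setS$, one has
\begin{equation*}
\partial_{z_p}\partial_{z_q}\theta_{\tau}(0;H)=\sum_{s\in\setS}s_ps_q\exp\left(Q_{\tau}(s)+\tfrac{\pi\i}{2}\,s\cdot M\right),
\end{equation*}
and the same sum without $s_ps_q$ for $\theta_\tau(0;H)$ itself.

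Next I would substitute the period-matrix expansion of Lemma \ref{lem: limit_period}. Since the divergent part is $\tfrac12\log\eps\,(0_{g\times g}\oplus I)$, using \eqref{eq: qtau} we get
\begin{equation*}
Q_{\tau_\eps}(s)=\frac{\log\eps}{8}\sum_{i=g+1}^{g+n+k}s_i^2+Q_{\tau_0}(s)+O(\eps\,|s|^2).
\end{equation*}
By Definition \ref{def:markchar} we have $N_i=1$ for $i>g$, so each admissible $s_i$ with $i>g$ is odd; hence $s_i^2\geq1$ with equality exactly when $s_i=\pm1$, that is, exactly on $\setSn$. Factoring the common power $\eps^{(n+k)/8}$ out of numerator and denominator, each term then carries $\eps^{\frac18(\sum_{i>g}s_i^2-(n+k))}$, which tends to $1$ for $s\in\setSn$ and to $0$ otherwise. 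Passing to the limit term by term produces exactly the quotient appearing in \eqref{eq: RHS_limit}.

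The main work is to justify exchanging the limit with the infinite lattice sum, which I would do by dominated convergence; this is the hard part. The subtlety is that the sum is over an infinite lattice while $\re\tau_\eps$ carries a $\log\eps$ divergence, so one must separate the handle directions $i>g$, where $\re\tau_\eps\to-\infty$ over-controls the sum, from the first $g$ directions, where decay is governed by $\re Q_{\tau_0}=\re Q_{\tau_{\Surf}}$, strictly negative definite by Proposition \ref{prop: Riemann_relations}. For $\eps$ below some $\eps_0$, the $O(\eps\,|s|^2)$ correction does not spoil this negative definiteness, so after factoring $\eps^{(n+k)/8}$ each term is dominated, uniformly in $\eps\in(0,\eps_0)$, by $|s_ps_q|\exp(\re Q'(s))$ for a fixed form $Q'$ negative definite in the first $g$ coordinates (the handle coordinates contribute only factors $\eps^{\ge0}\le1$). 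This bound is summable over $\setS$ because the handle directions retain at least the $\eps_0$-power decay while the surface directions decay Gaussianly; dominated convergence then yields the stated limit \eqref{eq: RHS_limit}, locally uniform in $P,Q$ since these enter only through the differentials.

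Finally, for the nonvanishing claim I would use the leading asymptotics just established, namely $\theta_{\tau_\eps}(0;H)=\eps^{(n+k)/8}\big(\sum_{s\in\setSn}\exp(Q_{\tau_0}(s)+\tfrac{\pi\i}{2}s\cdot M)+o(1)\big)$, and identify the surviving sum with the bosonic correlation. The $s_i=\pm1$ choices in the handle directions, together with the phases $e^{\frac{\pi\i}{2}s_iM_i}$ (equal to $1$ when $\Op_{v_{i-g}}=\sigma_{v_{i-g}}$ and to $\pm\i$ when $\Op_{v_{i-g}}=\mu_{v_{i-g}}$, by Definition \ref{def:markchar}), reproduce the $\cos$/$\sin$ structure of $\bsigma{v_{i-g}}$/$\bmu{v_{i-g}}$, while the quadratic form $Q_{\tau_0}$ together with its cross terms $(\tau_0)_{ij}=\int_{e^\star_{j-g}}^{e_{j-g}}u_i$ reproduces the instanton sum. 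Thus $\sum_{s\in\setSn}\exp(Q_{\tau_0}(s)+\tfrac{\pi\i}{2}s\cdot M)$ equals $\ccor{\hOp_{v_1}\dots\hOp_{v_n}}$ up to the explicit nonzero Gaussian normalization computed in Section \ref{sec: instanton}. Under the hypothesis $\ccor{\hOp_{v_1}\dots\hOp_{v_n}}\neq0$ this limit is nonzero, so $\theta_{\tau_\eps}(0;H)\neq0$ for small $\eps$ and the Hejhal--Fay identity \eqref{eq: bosonization} applies. Beside the uniform domination across the two groups of lattice directions, the bookkeeping identifying the limiting $\setSn$-sum with the bosonic correlation is the other place requiring care.
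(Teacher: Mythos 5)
Your proposal follows essentially the same route as the paper's proof: the same differentiation of \eqref{eq: thetachar} to write the theta quotient as a ratio of sums over $s\in\setS$, the same use of Lemma \ref{lem: limit_period} to extract the weights $\eps^{d(s)}$ with $d(s)=\frac18\sum_{i>g}s_i^2$ minimized exactly on $\setSn$, a tail estimate equivalent to the paper's (the paper writes $d(s)-d_0\geq\alpha+\beta d(s)$ to get an explicit $\eps^{\alpha}$ prefactor times a series dominated by a fixed negative definite Gaussian form, which is your dominated-convergence bound with $\eps_0$ in slightly different clothing --- note that, as you implicitly do, the dominating form must include the $\eps_0$-power decay in the handle directions, since negative definiteness in the first $g$ coordinates alone would not give summability over $\setS$), and the same deferral of the nonvanishing claim to the identification of $\sum_{s\in\setSn}\exp\left(Q_{\tau_0}(s)+\frac{\pi\i}{2}\,s\cdot M\right)$ with $\ccor{\hOp_{v_1}\dots\hOp_{v_n}}$ up to a nonzero factor, which the paper carries out via \eqref{eq: expD} and \eqref{eq: exp_sines_cosines} in Lemma \ref{lem: gausscorr}. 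The argument is correct as sketched and no step differs in substance from the paper's.
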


\begin{proof}
The claim that $\theta_{\tau_\eps}(0;H)\neq 0$ will be justified in the course of the proof of the current lemma and Lemma \ref{lem: gausscorr}, as follows. We will see below that the leading term of the asymptotics of  $\theta_{\tau_{\eps}}(0;H)$ is proportional to the denominator $\sum_{s\in\setSn} \exp (Q_{\tau_0}(s)+(\pi \i/2) (s\cdot M))$. In \eqref{eq: exp_sines_cosines} below, we identify the latter quantity with $\ccor{\hOp_{v_1}\dots\hOp_{v_n}}$, up to a non-zero factor. Thus $\sum_{s\in\setSn} \exp (Q_{\tau_0}(s)+(\pi \i/2) (s\cdot M))\neq 0$ whenever $\ccor{\hOp_{v_1}\dots\hOp_{v_n}}\neq 0$, and so $\theta_{\tau_{\eps}}(0;H)\neq 0$ for $\eps$ small enough.

Due to Lemma \ref{lem: limit_Abelian}, we have $\beta_{\Surfeps}\to \beta_{\Surf}$ and $u_{\Surfeps,j}\to u_j$. Therefore, we only need to handle the coefficients $\pa_{z_p}\pa_{z_q}\theta_{\tau_\eps}(0;H)/\theta_{\tau_\eps}(0;H)$. Differentiating the definition \eqref{eq: thetachar} of the theta-function, one gets

\begin{equation}
\label{eq: theta ratio}
\frac{\pa_{z_p}\pa_{z_q}\theta_{\tau_\eps}(0;H)}{\theta_{\tau_\eps}(0;H)}=\frac{\sum_{s\in\setS} s_p s_q\exp (Q_{\tau_\eps}(s)+(\pi \i/2) (s\cdot M))}{\sum_{s\in\setS} \exp (Q_{\tau_\eps}(s)+(\pi \i/2)(s\cdot M))},
\end{equation}
where $M$ is as in Definition \ref{def:markchar}. We now apply Lemma \ref{lem: limit_period} to obtain
\begin{equation}
\label{eq: expA}
\exp(Q_{\tau_\eps}(s))=\eps^{d(s)}\exp\left(Q_{\tau_0+O(\eps)}(s)\right).
\end{equation}
where 
\begin{equation}
\label{eq: d(s)}
d(s)=\sum_{p=g+1}^{g+n+k}\frac{s_p^2}{8}.
\end{equation}
Let $d_0=\frac{n+k}{8}$ be the minimal possible value of $d(s)$. We can write 
\[
\frac{\pa_{z_p}\pa_{z_q}\theta_{\tau_\eps}(0;H)}{\theta_{\tau_\eps}(0;H)}=\frac{\sum_{s\in\setS} \eps^{d(s)-d_0}s_p s_q\exp (Q_{\tau_0+O(\eps)}(s)+(\pi \i/2) (s\cdot M))}{\sum_{s\in\setS} \eps^{d(s)-d_0}\exp (Q_{\tau_0+O(\eps)}(s)+(\pi \i/2)(s\cdot M))}
\]
Note that $d(s)=d_0$ if and only if $s\in\setSn$. If we formally take the limit $\eps\to 0$ in the numerator and the denominator, terms with $d(s)\neq d_0$ vanish, leading to \eqref{eq: RHS_limit}. Since the sums are infinite, we have to be a bit more careful. Observe that there exist constants $\alpha,\beta>0$ such that if $d(s)\neq d_0$, then $d(s)-d_0\geq \alpha +\beta d(s)$: indeed, there are only finitely many $d(s)$ with $d(s)<2(d_0+1)$, so we can choose $\alpha<1,\beta<\frac12$ small enough so that the inequality holds for those $s$, but then also $d(s)-d_0\geq 1 +\frac12 d(s)\geq \alpha+\beta d(s)$ whenever $d(s)\geq 2(d_0+1)$. This implies that 
$$
\eps^{d(s)-d_0}\exp (Q_{\tau_0+O(\eps)}(s))\leq \eps^\alpha \exp (Q_{\rho_\eps +\tau_0+O(\eps)}(s)), 
$$
where the matrix $\rho_\eps$ is given by $\frac{\beta}{2}(\log\eps)\cdot \left(0_{g\times g}\oplus I_{(n+k)\times (n+k)}\right)$. Recall also that the block $(\tau_0)_{1\leq i\leq g, 1\leq j\leq g}$ is strictly negative definite, being the period matrix of $\Surf$. It is not hard to see that this implies that for $\eps$ small enough, $\rho_\eps +\tau_0+O(\eps)$ is strictly negative definite, and moreover bounded from above by a fixed strictly negative definite matrix $T$. Therefore, we can write 
\[
\left|\sum_{s\in\setS \setminus \setSn} \eps^{d(s)-d_0}s_p s_q\exp (Q_{\tau_0+O(\eps)}(s)+(\pi \i/2)(s\cdot M))\right|\leq \eps^\alpha\sum_{s\in\setS \setminus \setSn} |s_p| |s_q|\exp (Q_{T}(s))\stackrel{\eps\to 0}{\longrightarrow} 0,
\]
the last sum being finite, and similarly for the denominator. Since, as we explained above, the limit of the denominator is non-zero, the proof is complete.
\end{proof}

In order to re-interpret the right-hand side of (\ref{eq: RHS_limit}), we need to recall the following (well known) relation of Abelian integrals on $\Surf$ with harmonic functions on $\Dom$. Recall the definitions from Section \ref{sec: Riemann surfaces} and Sections \ref{sec: instanton}, \ref{sec: bosonic}.
 
\begin{lem}
\label{lem: harmgreen}
We have  
\begin{align}
\label{eq: green_Abelian}G_\Dom(z,w)&=-\frac{1}{2}\int_{z^\star}^{z} \omega_{\Surf,w,w^\star}&&=-\re \int_{w_0}^z\omega_{\Surf,w,w^\star},\\
\label{eq: harm_Abelian} h_p(z)&=-\frac{1}{\pi \i}\int_{z^\star}^{z}u_{\Surf,p}&&=-\frac{2}{\pi}\im  \int_{w_0}^z u_{\Surf,p},\\
\hat h_p(z)&=-\frac{1}{2\pi \i}\int_{ z^\star}^{z}\omega_{\Surf, b_{2p-1},b_{2p}}&&=-\frac{1}{\pi}\im\int_{w_0}^z\omega_{\Surf, b_{2p-1},b_{2p}}, \label{eq: harm_free}
\end{align}
where  the first integral in \eqref{eq: green_Abelian} (respectively, in \eqref{eq: harm_Abelian}, in \eqref{eq: harm_free} is along any symmetric path connecting $z^\star$ and $z$ on $\Surf$, (respectively, any symmetric path avoiding $p$-th boundary component, $p$-th boundary arc), while the second integral is along any path connecting $w_0$ and $z$ in $\Omega$.
\end{lem}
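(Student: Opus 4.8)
The plan is to characterize each right-hand side by the defining properties of the object on the left --- harmonicity, the prescribed logarithmic singularity (or its absence), and the boundary values --- and then to invoke uniqueness of the Green's function and of harmonic measures. The bridge between the two expressions in each line, namely the symmetric-path integral $\int_{z^\star}^z$ and the real/imaginary part of the boundary-anchored integral $\int_{w_0}^z$, is supplied entirely by the anti-holomorphic involution $\star$ of the Schottky double $\Surf$, which fixes $\pa\Omega$ pointwise.

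The key tool is to compute $\star^*$ of each differential. Since the normalized differentials are characterized by their residues and the vanishing of their $A$-periods, and since $\eta\mapsto\overline{\star^*\eta}$ sends holomorphic $(1,0)$-forms to holomorphic $(1,0)$-forms while conjugating residues and transporting pole locations by $\star$, uniqueness pins down the symmetry type. Using $\star A_j\simeq -A_j$ and $\star B_j\simeq B_j$ --- which I will verify directly from the loops of Definition~\ref{def: marking}, consistently with $A_j\sharp B_j=1$ and the orientation reversal of $\star$ --- I expect $\star^* u_{\Surf,j}=\overline{u_{\Surf,j}}$ and $\star^*\omega_{\Surf,b_{2p-1},b_{2p}}=\overline{\omega_{\Surf,b_{2p-1},b_{2p}}}$, but $\star^*\omega_{\Surf,w,w^\star}=-\overline{\omega_{\Surf,w,w^\star}}$. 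The sign in the last case is forced by the fact that $\star$ \emph{swaps} the two poles $w\leftrightarrow w^\star$, whereas in the first two cases the relevant data (no poles, respectively poles lying on the fixed locus $\pa\Omega$) is preserved. Splitting a symmetric path from $z^\star$ to $z$ at its crossing point $w_0\in\pa\Omega$ and applying $\int_{\star\gamma}\eta=\int_\gamma\star^*\eta$ then yields $\int_{z^\star}^z\omega_{\Surf,w,w^\star}=2\re\int_{w_0}^z\omega_{\Surf,w,w^\star}$ and $\int_{z^\star}^z u_{\Surf,p}=2\i\,\im\int_{w_0}^z u_{\Surf,p}$ (and likewise for $\omega_{\Surf,b_{2p-1},b_{2p}}$), which already reconciles the two forms in each line, constants $\tfrac12,\tfrac1{\pi\i},\tfrac1{2\pi\i}$ included.

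It then remains to check the defining properties. For \eqref{eq: green_Abelian}: the function $-\re\int_{w_0}^z\omega_{\Surf,w,w^\star}$ is harmonic in $\Omega\setminus\{w\}$, the local expansion $\omega_{\Surf,w,w^\star}\sim dz/(z-w)$ produces the singularity $-\log|z-w|+O(1)$, and the pole at $w^\star\in\Domr$ is invisible from $\Omega$. The reality $\star^*\omega=-\overline\omega$ forces $\omega$ to be purely imaginary along $\pa\Omega$, so $\re\int$ is locally constant there; the normalization $\oint_{A_j}\omega_{\Surf,w,w^\star}=0$ combined with the $\star$-symmetry forces this constant to be $0$ on \emph{every} boundary component (the crucial use of the $A$-period normalization, and the reason the outer and inner components are handled simultaneously), so uniqueness of the Green's function applies. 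For \eqref{eq: harm_Abelian}, $-\tfrac2\pi\im\int_{w_0}^z u_{\Surf,p}$ is harmonic with no singularity; single-valuedness across the $j$-th inner component follows from $\im\oint_{B_j}u_{\Surf,p}=\im\tau_{jp}=0$, the reality of $\tau$ being itself a consequence of $\star^* u=\overline u$ and $\star B_j\simeq B_j$; and the boundary values $\delta_{jp}$ are read off from $\oint_{A_j}u_{\Surf,p}=\pi\i\,\delta_{jp}$ via the same path decomposition, matching harmonic measure. Line \eqref{eq: harm_free} is identical with $\omega_{\Surf,b_{2p-1},b_{2p}}$ in place of $u_{\Surf,p}$: single-valuedness now uses that the $B$-periods are real (by $\star^*\omega=\overline\omega$), and the $\arg$-type jump of $\im\int$ across the two boundary poles $b_{2p-1},b_{2p}$ produces the $0/1$ boundary values of the harmonic measure of the arc, normalized by $1/\pi$.

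The only genuinely delicate point, and the step I expect to require the most care, is the orientation and sign bookkeeping of $\star$ on homology: establishing $\star A_j\simeq -A_j$, $\star B_j\simeq B_j$ from the concrete loops of Definition~\ref{def: marking}, and the resulting residue-conjugation rules that distinguish the swapped-pole case ($\omega_{\Surf,w,w^\star}$, sign $-$) from the fixed-pole and pole-free cases (sign $+$). Once these signs are fixed, the reality of $\tau$ and of the relevant $B$-periods (hence single-valuedness), the vanishing of the boundary constants, and the identification of the two integral forms all follow mechanically, with no analytic difficulty beyond the standard local behaviour of Abelian integrals near their poles.
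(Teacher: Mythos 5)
Your proposal is correct and takes essentially the same route as the paper's own proof: both determine the symmetry types $\overline{\inv^{*}u_{\Surf,p}}=u_{\Surf,p}$, $\overline{\inv^{*}\omega_{\Surf,b_{2p-1},b_{2p}}}=\omega_{\Surf,b_{2p-1},b_{2p}}$ and $\overline{\inv^{*}\omega_{\Surf,w,w^{\star}}}=-\omega_{\Surf,w,w^{\star}}$ from uniqueness of the normalized differentials together with the action of the involution on residues and $A$-cycles, split the symmetric path at $w_0$ to reconcile the two integral expressions in each line, and then identify the resulting real harmonic functions via their singularities and boundary values (read off from the $A$-periods, with the jump computation at $b_{2p-1},b_{2p}$ for the free arc) and uniqueness of the Green's function and harmonic measures. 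Your minor variations --- deducing local constancy on $\pa\Omega$ from pointwise reality/imaginarity of the differentials along the fixed locus, and single-valuedness from reality of the $B$-periods --- are equivalent to the paper's cancellation argument for the boundary-running pieces of $\gamma$ and $\gamma\circ\inv$.
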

\begin{proof}
 Consider the involution $\inv:z\mapsto z^\star$ on $\Surf$, and denote by $\inv^* \omega$ the pull-back of an Abelian differential $\omega$. Then, $\inv^* \omega$ is an anti-meromorphic $1$-form, and hence $\overline{\inv^* \omega}$ is again an Abelian differential. Its residues and periods can be readily computed using that $\int_{\gamma\circ\inv} \overline{\inv^* \omega}=\overline{\int_{\gamma\circ\inv} \inv^* \omega}=\overline{\int_\gamma\omega}$ for any curve $\gamma$. We have $A_i\circ \inv=-A_i$, $\gamma_{b_{2k-1}}\circ \inv=-\gamma_{b_{2k-1}}$, and $\gamma_w\circ \inv = -\gamma_{w^\star},$ where $\gamma_z$ denotes a small positively oriented loop around $z$. Combining this with the uniqueness of of Abelian differentials, we conclude that 
$$
\overline{\inv^* u_{\Surf,p}}=u_{\Surf,p}, \quad \overline{\inv^*\omega_{\Surf, b_{2p-1},b_{2p}}}=\omega_{\Surf, b_{2p-1},b_{2p}}, \quad \overline{\inv^*\omega_{\Surf, w,w^\star}}=-\omega_{\Surf, w,w^\star}.
$$
Consider a path  $\gamma$ from $w_0$ to $z$ in $\Omega$, concatenated to its symmetric counterpart $-\gamma\circ \inv$ from $z^\star$ to $w_0$, avoiding $w$. The above symmetries ensure that 
\begin{align*}
\frac12 \int_{-\gamma\circ \inv+\gamma}\omega_{\Surf,w,w^\star}&=\frac12 \int_{\gamma}\omega_{\Surf,w,w^\star}+\frac12 \int_{\gamma\circ \inv}\overline{\inv^\star\omega_{\Surf,w,w^\star}}&&=\re \int_{\gamma}\omega_{\Surf,w,w^\star},\\
\frac{1}{2\i}\int_{-\gamma\circ \inv+\gamma}u_{\Surf,p}&=\frac{1}{2\i}\int_{\gamma}u_{\Surf,p}-\frac{1}{2\i}\int_{\gamma\circ \inv}\overline{\inv^\star u_{\Surf,p}}&&=\im \int_{\gamma}u_{\Surf,p},\\
\frac{1}{2\i}\int_{-\gamma\circ \inv+\gamma}\omega_{\Surf, b_{2p-1},b_{2p}}&=\frac{1}{2\i}\int_{\gamma}\omega_{\Surf, b_{2p-1},b_{2p}}-\frac{1}{2\i}\int_{\gamma\circ \inv}\overline{\inv^\star\omega_{\Surf, b_{2p-1},b_{2p}}}&&=\im \int_{\gamma}\omega_{\Surf, b_{2p-1},b_{2p}}.
\end{align*}
This proves the second identities in (\ref{eq: green_Abelian}--\ref{eq: harm_free}), at least for the contours of integration as chosen. We now note that the right-hand sides $H_1,H_2,H_3$ of (\ref{eq: green_Abelian}--\ref{eq: harm_free}) are real-valued and harmonic in $z$ (except for $H_1$ at $z=w$). They are also locally constant on $\pa\Dom$, except for $H_3$ at $b_{2p-1},b_{2p}$, since if we take a piece of $\gamma$ run along $\pa\Omega$, then the contribution of that piece cancels with the corresponding piece of $\gamma\circ\inv$. It follows that $H_1,H_2,H_3$ are single-valued. Their value at the boundary component (arc) containing $w_0$ is clearly $0$. Their values at other boundary components are given by $A$--periods of $\omega_{\Surf,w,w^\star}$, $u_{\Surf,p}$, $\omega_{\Surf, b_{2p-1},b_{2p}}$, which are all zeros, except for $\int_{A_p}u_{\Surf,p}=\pi \i$. This in particular implies that $H_1,H_2,H_3$ are independent of the choice of the contours of integration in (\ref{eq: green_Abelian}--\ref{eq: harm_free}), subject to the stated restrictions.

Note that if $z$ is on the $p$-th inner boundary component, we can choose $\gamma$ so that $A_p=-\gamma+\gamma\circ\inv$. Hence $H_2\equiv 1$ on the $p$-th inner boundary component while $H_2\equiv 0$ elsewhere on the boundary, and \eqref{eq: harm_Abelian} follows. Using the chart $\Omega$, we can calculate $H_1(z)=-\re\int^z\left(\frac{dz}{z-w}+O(1)\right)=-\log|z-w|+O(1)$ as $z\to w$. Since $G(\cdot,w)$ is the only harmonic function in $\Omega\setminus \{w\}$ satisfying that expansion and vanishing on the boundary, \eqref{eq: green_Abelian} follows. Finally, near $b=b_{2p-1}$, respectively $b_{2p}$, we have 
$$H_3(z)=\pm\frac{1}{\pi}\im\int_{w_0}^z\left(\frac{dz}{z-b}+c_1+o(1)\right)=\pm\frac{1}{\pi}\im \log(z-b)+c_2+o(1),$$ hence the harmonic function $H_3$ is bounded and jumps by $\pm 1$ at $b=b_{2p-1}$ and $b_{2p}$ respectively, so that $H_3\equiv 1$ on the arc $(b_{2p-1},b_{2p})$, proving \eqref{eq: harm_free}.  
\end{proof}

We will also need the following identity:
\begin{lem}
\label{lem: Green_sec_kind}
    We have, for $z,w\in \Omega,$ 
    \begin{align}
    \pa_z\pa_w G_\Omega(z,w)dzdw=-\frac12 \beta_{\Surf,w}(z),
    \end{align}
    or, in other words, $\pa_z\pa_w G(z,w)=-\frac12(\beta_{\Surf,w})_\alpha(z)$ in the chart $\mathcal{U}_\alpha=\Dom$ with the identity coordinate map.
\end{lem}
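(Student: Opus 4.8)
The plan is to prove the identity by differentiating, once in each variable, the representation of the Green's function as an Abelian integral of the third kind furnished by Lemma \ref{lem: harmgreen}, and then to trade the differential $\omega_{\Surf,w,w^\star}$ for the bidifferential $\beta_\Surf$ using relation \eqref{eq: omega_beta} of Proposition \ref{prop: Riemann_relations}. Since mixed partials commute, it suffices to compute $\pa_w\pa_z G_\Omega(z,w)$.

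First I would use the second formula in \eqref{eq: green_Abelian}, namely $G_\Omega(z,w)=-\re\int_{w_0}^{z}\omega_{\Surf,w,w^\star}$, the path being taken in $\Omega$. Writing $f_w$ for the element of $\omega_{\Surf,w,w^\star}$ in the chart $\Dom$ (identity coordinate), the function $F(z):=\int_{w_0}^{z}\omega_{\Surf,w,w^\star}$ is holomorphic in $z$ on $\Omega\setminus\{w\}$, and path-independence up to locally constant (imaginary) ambiguities — established in the proof of Lemma \ref{lem: harmgreen} from the vanishing of the relevant $A$- and $B$-periods — makes $\pa_z F=f_w$ well defined. From $G_\Omega=-\tfrac12(F+\overline F)$ and $\pa_z\overline F=0$ I obtain $\pa_z G_\Omega(z,w)=-\tfrac12 f_w(z)=-\tfrac12(\omega_{\Surf,w,w^\star})_\alpha(z)$ with $\alpha$ the chart $\Dom$. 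Next I would rewrite this via the abbreviated form $\omega_{P,Q}(R)=\int_Q^{P}\beta_R$ of \eqref{eq: omega_beta}, taking $P=w$, $Q=w^\star$, $R=z$, to get $(\omega_{\Surf,w,w^\star})_\alpha(z)=\int_{w^\star}^{w}\beta_{\Surf,z}$, the integration being in the second variable. Splitting the path as $\int_{w^\star}^{w_0}+\int_{w_0}^{w}$ and applying $\pa_w$: the first piece is a holomorphic function of the endpoint $w^\star$ precomposed with the anti-holomorphic involution $w\mapsto w^\star$, hence anti-holomorphic in $w$ and killed by $\pa_w$; the second piece differentiates to the element $(\beta_{\Surf,z})_\alpha(w)=\beta_\Surf(z,w)$. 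Thus $\pa_w\pa_z G_\Omega(z,w)=-\tfrac12\beta_\Surf(z,w)$, and by the symmetry \eqref{eq: beta_symmetry} this equals $-\tfrac12(\beta_{\Surf,w})_\alpha(z)$; multiplying by $dz\,dw$ gives $\pa_z\pa_w G_\Omega(z,w)\,dz\,dw=-\tfrac12\beta_{\Surf,w}(z)$, as claimed.

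The main subtlety, and the only place that needs care, is the bookkeeping of which variable each differential lives in together with the two vanishing claims: that the anti-holomorphic parts $\overline F$ and $\int_{w^\star}^{w_0}\beta_{\Surf,z}$ are annihilated by $\pa_z$ and $\pa_w$ respectively. Both reduce to $\pa_\zeta\overline{h(\zeta)}=0$ for holomorphic $h$, the second one using crucially that $\zeta\mapsto\zeta^\star$ is anti-holomorphic so that an Abelian integral with lower endpoint $w^\star$ is anti-holomorphic in $w$. I expect no genuine obstacle beyond this careful tracking of charts, signs, and the two integration variables.
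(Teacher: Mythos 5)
Your proof is correct and takes essentially the same route as the paper's: differentiate the representation $G_\Omega(z,w)=-\re\int_{w_0}^{z}\omega_{\Surf,w,w^\star}$ from Lemma \ref{lem: harmgreen} in $z$, then trade $\omega_{\Surf,w,w^\star}$ for $\beta_\Surf$ via \eqref{eq: omega_beta} and differentiate in $w$. The paper compresses this into two lines, and your explicit check that the $w^\star$-endpoint contribution is anti-holomorphic in $w$ (hence annihilated by $\pa_w$) is precisely the step the paper leaves implicit.
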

\begin{proof}
We have $\pa_zG_\Omega(z,w)\,dz=-\frac12\omega_{\Surf,w,w^\star}(z)$ by \eqref{eq: green_Abelian}, and differentiating the right-hand side in $w$ gives $-\frac12 \beta_{\Surf,w}(z)$ by \eqref{eq: omega_beta}.
\end{proof}
We will now compute the coefficients of the forms $Q_\Omega, B_\Omega, \hat{Q}_\Omega$ featuring in the definition of the instanton component, in terms of the period matrix and Abelian integrals on $\Surf$. 

\begin{lem}
\label{lem: prob_instanton}

We have the following identities:

\begin{align}
\langle\nabla h_i,\nabla h_j\rangle_{\Dom}&=-\frac{2}{\pi}(\tau_{\Surf})_{ij}\label{eq: Qform_h_h}\\
\langle\nabla h_i,\nabla \hat h_j\rangle_{\Dom}&=-\frac{2}{\pi}\int_{b_{2j}}^{b_{2j-1}}u_{\Surf,i}\label{eq: Qform_h_hat}\\
\langle\nabla \hat h_i,\nabla \hat h_j\rangle_{\Dom}&=-\frac{1}{\pi}\int_{b_{2j}}^{b_{2j-1}}\omega_{\Surf,b_{2i-1},b_{2i}},\quad i\neq j\label{eq: Qform_hat_hat}
\end{align}
 
\end{lem}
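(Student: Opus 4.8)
The plan is to reduce each of the three identities to a boundary flux integral via Green's identity, and then to evaluate that flux as a period (or a boundary path-integral) of an Abelian differential on $\Surf$, using Lemma~\ref{lem: harmgreen} to supply the relevant differential.

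First I would record a general \emph{flux formula}: if $u$ is harmonic in $\Dom$ and can be written as $u=c\,\im\int_{w_0}^{\cdot}\eta$ with $c\in\R$ and $\eta$ an Abelian differential on $\Surf$ (read off in the chart $\Dom$), then for any positively oriented (domain on the left) boundary arc or component $\gamma$ one has
\[
\int_\gamma \pa_n u\,\ds \;=\; -\,c\,\re\int_\gamma \eta ,
\]
where $\pa_n$ is the outward normal derivative. This follows from the elementary observation that, writing $f=-\i c\int\eta$ so that $\re f=u$, along a positively oriented boundary curve $\pa_n(\re f)\,\ds=\im(df)$, whence $\int_\gamma\pa_n u\,\ds=\im\int_\gamma f'\,dz=-c\,\re\int_\gamma\eta$. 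The two harmonic functions we need are exactly of this form by Lemma~\ref{lem: harmgreen}: $h_i=-\tfrac{2}{\pi}\im\int_{w_0}^{\cdot}u_{\Surf,i}$ and $\hat h_j=-\tfrac{1}{\pi}\im\int_{w_0}^{\cdot}\omega_{\Surf,b_{2j-1},b_{2j}}$.

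With this in hand, each identity follows by the same three moves: integrate by parts, use that each harmonic measure equals $1$ on its own component/arc and $0$ on the rest of $\pa\Dom$, and apply the flux formula. For \eqref{eq: Qform_h_h}, $\langle\nabla h_i,\nabla h_j\rangle_{\Dom}=\int_{\pa\Dom}h_i\,\pa_n h_j\,\ds=\int_{C_i}\pa_n h_j\,\ds$, where $C_i$ is the $i$-th inner component; the flux formula gives $\tfrac{2}{\pi}\re\oint_{C_i}u_{\Surf,j}$, and since the positively oriented $C_i$ equals $-B_i$ by Definition~\ref{def: marking}, this is $-\tfrac{2}{\pi}\re\oint_{B_i}u_{\Surf,j}=-\tfrac{2}{\pi}(\tau_{\Surf})_{ij}$. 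For \eqref{eq: Qform_h_hat} I would use the symmetric representation $\langle\nabla h_i,\nabla\hat h_j\rangle_{\Dom}=\int_{\pa\Dom}\hat h_j\,\pa_n h_i\,\ds$, which localizes to the free arc $(b_{2j-1},b_{2j})$ and yields $\tfrac{2}{\pi}\re\int_{b_{2j-1}}^{b_{2j}}u_{\Surf,i}=-\tfrac2\pi\int_{b_{2j}}^{b_{2j-1}}u_{\Surf,i}$; identity \eqref{eq: Qform_hat_hat} is handled identically, integrating $\pa_n\hat h_i$ over that same free arc. In all three cases the resulting integrals are \emph{real}, because $u_{\Surf,i}$ and $\omega_{\Surf,b_{2i-1},b_{2i}}$ are symmetric under the anti-holomorphic involution $\inv$ (the relations $\overline{\inv^*u_{\Surf,i}}=u_{\Surf,i}$ and $\overline{\inv^*\omega_{\Surf,b_{2i-1},b_{2i}}}=\omega_{\Surf,b_{2i-1},b_{2i}}$ from the proof of Lemma~\ref{lem: harmgreen}), hence restrict to real $1$-forms on the fixed locus $\pa\Dom$; this in particular makes $(\tau_{\Surf})_{ij}$ real, so the left-hand sides are real as they must be. I note also that the freedom, granted by the symmetry of the Dirichlet form, to place the indicator boundary values on either factor is exactly what matches the asymmetric right-hand side of \eqref{eq: Qform_hat_hat}, and amounts to the classical reciprocity $\int_{b_{2j-1}}^{b_{2j}}\omega_{\Surf,b_{2i-1},b_{2i}}=\int_{b_{2i-1}}^{b_{2i}}\omega_{\Surf,b_{2j-1},b_{2j}}$.

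The point requiring the most care is the justification of the integration by parts in \eqref{eq: Qform_h_hat} and \eqref{eq: Qform_hat_hat}, since $\hat h$ has gradient blowing up like $|z-b|^{-1}$ at the jump points $b_{2\ell-1},b_{2\ell}$, so its Dirichlet energy diverges (this being the source of the regularization). The cross integrals nevertheless converge absolutely: the singularity is only $|z-b|^{-1}$, which is integrable in two dimensions, and in \eqref{eq: Qform_hat_hat} the two singular sets are disjoint because $i\neq j$. I would make Green's identity rigorous by excising small half-disks $B_\delta(b)$ around each jump point and letting $\delta\to0$, always arranging the integration by parts so that the derivative falls on the function that is \emph{smooth} near the excised point. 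The extra boundary terms over $\pa B_\delta\cap\Dom$ then vanish in the limit: where the integrand stays bounded they are $O(\delta)$, and at a jump point of the differentiated function the only delicate contribution is the flux of the local model $\tfrac1\pi\arg(z-b)$ across a half-circle of radius $\delta$, whose conjugate $\tfrac1\pi\log|z-b|$ takes the equal values $\tfrac1\pi\log\delta$ at the two endpoints and so contributes nothing. This excision argument, together with the orientation bookkeeping (in particular $C_i=-B_i$ and the sign convention fixing the orientation of each free arc $(b_{2j-1},b_{2j})$) pinning down the global signs, completes the proof.
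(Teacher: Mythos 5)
Your proof is correct and takes essentially the same route as the paper's: integrate by parts to localize the Dirichlet pairing to a boundary flux over the component or free arc where the harmonic measure equals $1$, then use the Cauchy--Riemann relation $\pa_n(\re f)\,|dz|=\im\, df$ together with the representations of $h_i,\hat h_j$ from Lemma \ref{lem: harmgreen} to identify that flux with the period integrals in \eqref{eq: Qform_h_h}--\eqref{eq: Qform_hat_hat}, with the orientation bookkeeping ($C_i=-B_i$, etc.) fixing the signs. Your excision argument justifying the integration by parts near the jump points of $\hat h$, and the realness check via the anti-holomorphic involution, supply details the paper passes over with ``the proofs are similar,'' and both are sound.
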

\begin{proof}
    We start with \eqref{eq: Qform_h_h}. Integrating by parts we obtain
    \begin{align}
    \label{eq: by_parts}
    \langle\nabla h_i,\nabla h_j\rangle_{\Dom}=\iint_{\Dom}h_j\Delta h_i|dz|^2+\int_{\pa\Dom}h_j\pa_n h_i |dz|=\int_{B_j}\pa_n h_i|dz|,
    \end{align}
where we have used that $h_i$ is harmonic and $h_j\equiv 0$ on $\pa\Omega\setminus B_j$, $h_j\equiv 1$ on $B_j$. Now, if $h=\re{f}$ is a harmonic function, where $f$ is analytic, we have by Cauchy--Riemann equations $\pa_n h-\i\pa_{\i n}h=(\pa_z f)n_z$, viewing the unit normal vector $n_z$ as a complex number. If in addition $h$ is locally constant on $\pa \Omega$, we get, along the boundary, $\pa_n h|dz|=(\pa_z f)n_z|dz|=\i\pa_z f dz=\i df$. Combining this observation with \eqref{eq: harm_Abelian}, we conclude that 
    \begin{equation}
    \int_{\Gamma_j}\pa_n h_i |dz|=-\frac{2}{\pi}\int_{\Gamma_j}u_{\Surf,i}=-\frac{2}{\pi}(\tau_{\Surf})_{ij}.
    \end{equation}
The proofs of \eqref{eq: Qform_h_hat} and \eqref{eq: Qform_hat_hat} are  similar: instead of \eqref{eq: by_parts}, we get 
 \begin{align}
    \langle\nabla h_i,\nabla \hat{h}_j\rangle_{\Dom}&=\int_{b_{2j}}^{b_{2j-1}}\pa_n h_i|dz|,\\
    \langle\nabla \hat{h}_i,\nabla \hat{h}_j\rangle_{\Dom}&=\int_{b_{2j}}^{b_{2j-1}}\pa_n \hat{h}_i|dz|,
 \end{align}
which gives \eqref{eq: Qform_h_hat} and \eqref{eq: Qform_hat_hat} after using \eqref{eq: harm_Abelian} and \eqref{eq: harm_free} respectively.
\end{proof}

We summarize the above results in the following Lemma:
\begin{lem}
We have the following interpretation for the entries of the matrix $\tau_0$ featuring in Lemma \ref{lem: limit_period}:
\label{lem: tau_0_harmonic}
\begin{align*}
-(\tau_{0})_{ij}=\begin{cases}
\frac{\pi}{2}\langle\nabla h_i,\nabla h_j\rangle,& 1\leq i,j\leq g;
\\
\frac{\pi}{2}\langle\nabla h_i,\nabla \hat{h}_{j-g-n}\rangle,& 1\leq i\leq g,\quad  g+n+1\leq j\leq g+n+k;
\\
\frac{\pi}{2}\langle\nabla \hat{h}_{i-g-n},\nabla \hat{h}_{j-g-n}\rangle,& g+n+1\leq i,j\leq g+n+k,\quad i\neq j ;
\\ 
2G_\Omega(v_{i-g},v_{j-g})& g+1\leq i,j\leq g+n,\quad i\neq j
\\
\pi \i h_{i}(v_{j-g}), &1\leq i\leq g,\quad  g+1\leq j\leq g+n;
\\
\pi \i \hat{h}_{i-g-n}(v_{j-g})&g+n+1\leq i\leq g+n+k,\quad  g+1\leq j\leq g+n;
\end{cases}
\end{align*}
\end{lem}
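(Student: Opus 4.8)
The plan is to prove Lemma~\ref{lem: tau_0_harmonic} by a case-by-case identification, treating the statement as a dictionary between the entries of $\tau_0$ and the harmonic-analytic quantities attached to $\Omega$. The starting point is Lemma~\ref{lem: limit_period}, which identifies the top-left $g\times g$ block of $\tau_0$ with the period matrix $\tau_{\Surf}$ and, for $i<j$, every off-diagonal entry $(\tau_0)_{ij}$ with the Abelian integral $\int_{e^\star_{j-g}}^{e_{j-g}}u_i$ on $\Surf$. Into each such integral I would substitute the explicit form of $u_i$ from Definition~\ref{def: u_j_all} (a first-kind differential $u_{\Surf,i}$ when $i\le g$, or $\tfrac12\omega_{\Surf,v_{i-g},v_{i-g}^\star}$ when $g+1\le i\le g+n$, or $\tfrac12\omega_{\Surf,b_{2(i-g-n)-1},b_{2(i-g-n)}}$ when $i>g+n$), and then recognize the resulting period as one of the quantities already computed in Lemmas~\ref{lem: prob_instanton} and~\ref{lem: harmgreen}. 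Symmetry of $\tau_0$ reduces every entry to the case $i<j$, and the diagonal entries $c_j$ are not part of the statement.

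The three boundary blocks are immediate substitutions. For $1\le i,j\le g$ one uses $(\tau_0)_{ij}=(\tau_{\Surf})_{ij}$ together with \eqref{eq: Qform_h_h}. For $1\le i\le g$ and $g+n+1\le j\le g+n+k$ the integral is $\int_{b_{2p}}^{b_{2p-1}}u_{\Surf,i}$ with $p=j-g-n$, which is exactly the period appearing in \eqref{eq: Qform_h_hat}. For $g+n+1\le i<j\le g+n+k$ the integral is $\tfrac12\int_{b_{2q}}^{b_{2q-1}}\omega_{\Surf,b_{2p-1},b_{2p}}$ with $p=i-g-n$ and $q=j-g-n$, matching \eqref{eq: Qform_hat_hat}. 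In each case the prefactors combine to produce the factor $\tfrac{\pi}{2}$ in front of the corresponding Dirichlet pairing.

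The blocks involving a bulk point are handled through Lemma~\ref{lem: harmgreen}. For $1\le i\le g$ and $g+1\le j\le g+n$ the integral is $\int_{v_{j-g}^\star}^{v_{j-g}}u_{\Surf,i}$, which by \eqref{eq: harm_Abelian} equals $-\pi\i\,h_i(v_{j-g})$. For $g+1\le i<j\le g+n$ the integral is $\tfrac12\int_{v_{j-g}^\star}^{v_{j-g}}\omega_{\Surf,v_{i-g},v_{i-g}^\star}$, and \eqref{eq: green_Abelian} identifies it with the Green's function of $\Omega$ evaluated at $(v_{i-g},v_{j-g})$. For these two blocks the differential sitting inside $u_i$ and the endpoints of the path of integration already match the shape of the relevant identity, so no further manipulation is needed.

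The genuinely non-routine case, and the one I expect to be the main obstacle, is the free-arc/bulk block $g+n+1\le i\le g+n+k$, $g+1\le j\le g+n$. Here the smaller index is $j=g+q$, so Lemma~\ref{lem: limit_period} gives $(\tau_0)_{ji}=\tfrac12\int_{b_{2p}}^{b_{2p-1}}\omega_{\Surf,v_q,v_q^\star}$ with $p=i-g-n$; that is, the differential carries its poles at $v_q,v_q^\star$, while the path runs between the free-arc endpoints $b_{2p-1},b_{2p}$. This is crossed relative to the shape of \eqref{eq: harm_free}, which integrates $\omega_{\Surf,b_{2p-1},b_{2p}}$ along a path between $v_q^\star$ and $v_q$. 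To reconcile them I would invoke the reciprocity law for normalized differentials of the third kind, $\int_{b_{2p}}^{b_{2p-1}}\omega_{\Surf,v_q,v_q^\star}=\int_{v_q^\star}^{v_q}\omega_{\Surf,b_{2p-1},b_{2p}}$, which follows from the bilinear relation \eqref{eq: bilinear} applied with $\nu=\omega_{\Surf,v_q,v_q^\star}$ and $\xi=\omega_{\Surf,b_{2p-1},b_{2p}}$ (both $A$-periods vanish, so the right-hand side of \eqref{eq: bilinear} drops out); here one must take the paths in the homotopy classes dictated by the $B$-cycles of Definition~\ref{def: marking} and use the involution symmetry of $\Surf$ to verify that no residual period terms survive. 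Once the arguments are swapped, \eqref{eq: harm_free} yields the entry $-\pi\i\,\hat{h}_p(v_q)$, i.e.\ $-(\tau_0)_{ij}=\pi\i\,\hat{h}_p(v_q)$, completing this block; the same reciprocity remark also gives an independent consistency check on the bulk-bulk block.
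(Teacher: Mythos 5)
Your proposal is correct and follows essentially the same route as the paper's proof, which is stated in one sentence: substitute Definition \ref{def: u_j_all} into the formula of Lemma \ref{lem: limit_period} and apply Lemma \ref{lem: prob_instanton} for the three boundary blocks and Lemma \ref{lem: harmgreen} for the three blocks involving bulk points. The one genuine addition in your write-up is the observation that the free-arc/bulk block is ``crossed'': Lemma \ref{lem: limit_period} delivers $\tfrac12\int_{b_{2p}}^{b_{2p-1}}\omega_{\Surf,v_q,v_q^\star}$, whereas \eqref{eq: harm_free} wants $\tfrac12\int_{v_q^\star}^{v_q}\omega_{\Surf,b_{2p-1},b_{2p}}$, so the third-kind reciprocity law is needed to swap the roles of poles and endpoints. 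This step is indeed implicit in the paper's proof, and your identification of it is the right one. One technical caveat: you cannot apply \eqref{eq: bilinear} verbatim with $\nu=\omega_{\Surf,v_q,v_q^\star}$ and $\xi=\omega_{\Surf,b_{2p-1},b_{2p}}$, because $\Xi=\int^z\xi$ has logarithmic (not meromorphic) singularities at $b_{2p-1},b_{2p}$; the standard proof of the reciprocity $\int_{S}^{R}\omega_{P,Q}=\int_{Q}^{P}\omega_{R,S}$ requires cutting the surface additionally along a path joining the poles of $\xi$, after which the residue computation goes through as you describe, with your remark about path classes and the involution symmetry taking care of the ambiguities.

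One point you should not gloss over: in the bulk--bulk block your substitution gives $(\tau_0)_{ij}=\tfrac12\int_{v_{j-g}^\star}^{v_{j-g}}\omega_{\Surf,v_{i-g},v_{i-g}^\star}=-G_\Omega(v_{i-g},v_{j-g})$ by \eqref{eq: green_Abelian}, i.e.\ $-(\tau_0)_{ij}=G_\Omega(v_{i-g},v_{j-g})$, \emph{without} the factor $2$ appearing in the statement of the lemma. You wrote only that the integral ``is identified with the Green's function,'' eliding the constant, so as written your argument does not actually produce the stated entry $2G_\Omega$. In fact the value $G_\Omega$ (not $2G_\Omega$) is the one consistent with \eqref{eq: green_Abelian} and Definition \ref{def: u_j_all}, and also the one actually needed in the proof of Lemma \ref{lem: gausscorr}: matching $\exp\bigl(\tfrac14\sum_{i\neq j}(\tau_0)_{ij}s_is_j\bigr)$ with $\ccor{\prod_i\nexp{\gamma_i}{v_i}}$, where $\gamma_i\gamma_j=-\tfrac12 s_is_j$, forces $(\tau_0)_{ij}=-G_\Omega(v_{i-g},v_{j-g})$. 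So the factor $2$ in the fourth case of the lemma (and the corresponding ``$(\tau_0)_{ij}=-2G(v_i,v_j)$'' in the proof of Lemma \ref{lem: gausscorr}) appears to be a typo in the paper; your computation is the correct one, but a complete proof should state the constant explicitly and flag the discrepancy rather than leave it unexamined.
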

\begin{proof}
The proof boils down to recalling the definition of $\tau_0$ as in Lemma \ref{lem: limit_period}, and of the involved holomorphic differentials $u_1,\dots,u_{g+n+k}$ as defined in Definition \ref{def: u_j_all}, then applying Lemma \ref{lem: prob_instanton} in the first three cases and Lemma \ref{lem: harmgreen} in the last three cases.
\end{proof}

\begin{lem}

Fix $s\in \setSn$, and let $\xi$ be the corresponding value of the instanton component defined by \eqref{eq: instanton}, where we put $\hat{s}_j=s_{g+n+j}$ for $j=1,\dots,k$. Then, we have the following interpretation of the terms within the sum in \eqref{eq: RHS_limit} in terms of the bosonic field $\Phi=\varphi+\xi$:
\begin{multline}
\label{eq: expD}
\exp (Q_{\tau_0}(s)+(\pi \i/2) (s\cdot M)))
\\=C\cdot \mathbb{P}(\xi) \prod_{i=1}^{n} \exp \left( -\frac{\i s_{g+i}}{\sqrt{2}}\xi(v_{i})\right)\ccor{\prod_{i=1}^{n} \:\nexp{-\frac{\i s_{g+i}}{\sqrt{2}}}{v_i}}
\prod_{i:\Op_{v_i}=\mu_{v_i}} s_{g+i},
\end{multline}
where a constant $C\neq 0$ does not depend on $s$ (but may depend on $\Omega$ and $v_1,\dots,v_n$).
\end{lem}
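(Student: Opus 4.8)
The plan is to compute the Gaussian sum exponent $Q_{\tau_0}(s)=\tfrac14\sum_{p,q}(\tau_0)_{pq}s_ps_q$ directly and match it, term by term, against the three factors on the right-hand side of \eqref{eq: expD}. First I would split the index set $\{1,\dots,g+n+k\}$ into three blocks: the ``harmonic-measure'' block $1\le p\le g$, the ``point'' block $g+1\le p\le g+n$ indexing the $v_i$, and the ``free-arc'' block $g+n+1\le p\le g+n+k$. Using Lemma~\ref{lem: tau_0_harmonic}, the entries of $\tau_0$ are, in these blocks: multiples of $\langle\nabla h_i,\nabla h_j\rangle$, $\langle\nabla h_i,\nabla\hat h_j\rangle$, $\langle\nabla\hat h_i,\nabla\hat h_j\rangle$ (within and between blocks I and III); a multiple of $G_\Omega(v_i,v_j)$ (within block II, $i\ne j$); and multiples of $h_i(v_j)$ and $\hat h_j(v_i)$ (between blocks I--II and II--III). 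The strategy is then that block I/III reconstitutes the instanton Dirichlet energy, block II reconstitutes the Gaussian two-point function at the $v_i$, and the I--II/II--III cross terms reconstitute the linear coupling $\sum_i\xi(v_i)$.

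First I would collect all contributions with both indices in block I or block III, together with the block I--III cross terms. Comparing with the expansion of $\langle\nabla\xi,\nabla\xi\rangle_{\Omega,\mathrm{reg}}$ recorded in Section~\ref{sec: instanton} and with $\alpha=\sqrt2\pi$, these reproduce exactly $-\tfrac{1}{4\pi}\langle\nabla\xi,\nabla\xi\rangle_{\Omega,\mathrm{reg}}$ for the instanton $\xi$ built from $s_1,\dots,s_g$ and $\hat s_j=s_{g+n+j}$ via \eqref{eq: instanton}. Here I use that $\langle\nabla\hat h_i,\nabla\hat h_j\rangle_{\mathrm{reg}}=\langle\nabla\hat h_i,\nabla\hat h_j\rangle$ for $i\ne j$, so that the off-diagonal block-III terms match; the diagonal block-III constants $c_{g+n+j}$ (recall $\hat s_j^2=1$) and the regularization constants on the diagonal of $\langle\nabla\xi,\nabla\xi\rangle_{\Omega,\mathrm{reg}}$ differ only by an $s$-independent amount. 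Exponentiating, this block yields $Z\,\P(\xi)$ up to a constant, using $\exp(-\tfrac{1}{4\pi}\langle\nabla\xi,\nabla\xi\rangle_{\Omega,\mathrm{reg}})=Z\,\P(\xi)$.

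Next I would treat block II and the remaining cross terms. Setting $\gamma_i=-\tfrac{\i s_{g+i}}{\sqrt2}$, so that $\gamma_i\gamma_j=-\tfrac12 s_{g+i}s_{g+j}$, the off-diagonal block-II terms reproduce $\sum_{i<j}\gamma_i\gamma_j G_\Omega(v_i,v_j)$, which is the two-point part of $\log\ccor{\prod_{i=1}^n\nexp{\gamma_i}{v_i}}$; the self-energy $\sum_i\tfrac{\gamma_i^2}{2}g_\Omega(v_i,v_i)$ and the diagonal constants $c_{g+i}$ are $s$-independent because $\gamma_i^2=-\tfrac12$ and $s_{g+i}^2=1$, hence absorbable into $C$. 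The block I--II and II--III cross terms, again via Lemma~\ref{lem: tau_0_harmonic}, assemble into $-\tfrac{\i\pi}{2}\sum_{i=1}^n s_{g+i}\big(\sum_{p=1}^g s_p h_p(v_i)+\sum_{q=1}^k \hat s_q\hat h_q(v_i)\big)$, which is precisely $\sum_{i=1}^n\big(-\tfrac{\i s_{g+i}}{\sqrt2}\xi(v_i)\big)$ and so gives the factor $\prod_i e^{-\frac{\i s_{g+i}}{\sqrt2}\xi(v_i)}$. Finally, since $M$ is supported on block II with $M_{g+i}=1$ exactly when $\Op_{v_i}=\mu_{v_i}$, one has $s\cdot M=\sum_{i:\Op_{v_i}=\mu_{v_i}}s_{g+i}$, and the elementary identity $e^{\frac{\pi\i}{2}s}=\i s$ valid for $s\in\{\pm1\}$ gives $e^{\frac{\pi\i}{2}(s\cdot M)}=\i^{\,m}\prod_{i:\Op_{v_i}=\mu_{v_i}}s_{g+i}$ with $m=|\{i:\Op_{v_i}=\mu_{v_i}\}|$, producing the sign factor up to the constant $\i^{\,m}$. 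Gathering $Z$, the diagonal constants, the self-energy, and $\i^{\,m}$ into a single nonzero $s$-independent constant $C$ then yields \eqref{eq: expD}.

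The hard part will be purely the disciplined bookkeeping: carrying the factors of $2$, $\pi$, and $\i$ through each block consistently, and above all verifying that every quantity not explicitly appearing on the right-hand side of \eqref{eq: expD}---the diagonal block-II and block-III entries, the Gaussian self-energy $\sum_i\tfrac{\gamma_i^2}{2}g_\Omega(v_i,v_i)$, the phase $\i^{\,m}$, and the partition function $Z$---is genuinely independent of $s$, so that it may legitimately be absorbed into $C\neq 0$. The conceptual content, by contrast, is simply the clean splitting of $Q_{\tau_0}$ into the instanton Dirichlet energy, the Gaussian two-point function at the points $v_i$, and the linear coupling of the instanton field to the charges placed at the $v_i$.
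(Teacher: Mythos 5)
Your proposal is correct and follows essentially the same route as the paper's proof: the same block decomposition of $Q_{\tau_0}(s)$ via Lemma \ref{lem: tau_0_harmonic} into the instanton Dirichlet energy (giving $Z\,\P(\xi)$), the off-diagonal block-II Green's function terms (giving the exponential correlation with $\gamma_i=-\tfrac{\i s_{g+i}}{\sqrt2}$), the I--II/II--III cross terms (giving the linear coupling $\prod_i e^{-\frac{\i s_{g+i}}{\sqrt2}\xi(v_i)}$), and the identity $e^{\frac{\pi\i}{2}s}=\i s$ for $s=\pm1$ for the sign factor. Your bookkeeping of the $s$-independent pieces (diagonal entries with $s_i^2=1$, self-energies, $\i^m$, and $Z$) matches the paper's absorption of these into $C$.
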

\begin{proof}

First, note that we have 
\[\exp((\pi \i/2) (s\cdot M))=\prod_{i:\Op_{v_i}=\mu_{v_i}} (\i s_{i+g})
\]
since due to Definition \ref{def: marking}, $M_{g+i}=1$ if $\Op_{v_i}=\mu_{v_i}$ for $i=1,\dots,n$, and $M_i=0$ otherwise. This gives the last term in the product, after absorbing the product of $\i$'s into $C$.

The other terms come from breaking down the quadratic form 
\begin{equation}
Q_{\tau_0}(s)=\frac14\sum_{i,j=1}^{g+n+k}(\tau_0)_{ij}s_is_j
\label{eq: Q_tau_zero}
\end{equation} into various parts, and using Lemma \ref{lem: tau_0_harmonic}: 
\begin{itemize}[leftmargin=*]
\item the diagonal terms $(\tau_0)_{ii}s^2_i=(\tau_0)_{ii}$ for $g+1\leq i\leq g+n+k$ do not depend on $s$, and their exponentials can be absorbed into $C$.
\item Let $\mathcal{I}_1\subset\{1,\dots,g+n+k\}^2$ denote the set of pairs $(i,j)$ of indices such that either $(i,j)$, or $(j,i)$ fits into the first three cases in Lemma \ref{lem: tau_0_harmonic}. We then have
$$
\frac14\sum_{(i,j)\in \mathcal{I}_1}(\tau_0)_{ij}s_is_j=Q_\Omega(s)+B_\Omega (s,\hat{s})+ \hat{Q}_\Omega(\hat{s},\hat{s}),
$$
so that 
$$
\exp\left(\frac14 \sum_{(i,j)\in \mathcal{I}_1}(\tau_0)_{ij}s_is_j\right)=Z\mathbb{P}(\xi).
$$ 
\item when $g+1\leq i\neq j\leq g+n$, we have $(\tau_0)_{ij}=-2G(v_i,v_j)$ and 
\begin{multline}
\exp\left(\frac14 \sum_{g+1\leq i\neq j\leq g+n}(\tau_0)_{ij}s_is_j\right)=\exp\left(- \sum_{g+1\leq i\neq j\leq g+n}\frac{s_i s_j}{2}G(v_i,v_j)\right)\\=\ccor{\nexp{\frac{\i s_1}{\sqrt{2}}}{v_1}\dots\nexp{\frac{\i s_1}{\sqrt{2}}}{v_n}}\exp\left(\sum_{i=1}^n\frac{1}{4}g_\Omega(z_i,z_i)\right),
\end{multline}
according to \eqref{eq: defnexp}. The last factor does not depend on $s$ and can be absorbed into $C$.
\item Let $\mathcal{I}_2\subset\{1,\dots,g+n+k\}^2$ denote the set of pairs $(i,j)$ that fit into the last two cases in Lemma \ref{lem: tau_0_harmonic}. Then, for every $j$ such that $g+1\leq j\leq g+n$, we have 
$$\sum_{i:(i,j)\in\mathcal{I}_2}(\tau_0)_{ij}s_is_j=-\pi 
\i s_j\left(\sum_{i=1}^g s_i h_i(v_{j-g})+\sum_{i=1}^{k}\hat{s}_i\hat{h}_i(v_{j-g})\right)=-\sqrt{2}\i \xi(v_{j-g}),$$
therefore 
$$
\exp\left(\frac14\sum_{(i,j)\in\mathcal{I}_2\text{ or }(j,i)\in\mathcal{I}_2}(\tau_0)_{ij}s_is_j\right)=\exp\left(\sum_{j=g+1}^{j+n}-\frac{\i s_j}{\sqrt{2}}\xi(v_{j-g})\right).
$$
\end{itemize}
These options exhaust the entire range of summation in \eqref{eq: Q_tau_zero}, and putting everything together concludes the proof. 
\end{proof}

\begin{lem}
\label{lem: gausscorr}
The expression \eqref{eq: RHS_limit}, evaluated at $P=z$ and $Q=w$ in the chart $\Omega$ with the identity coordinate map, is equal to
\begin{align}
\label{gausscorr}
\frac{\ccor{\sqrt{2} \i\bdry\Phi(z)\sqrt{2} \i\bdry\Phi(w) \hOp_{v_1}\dots\hOp_{v_n}}}{\ccor{\hOp_{v_1}\dots\hOp_{v_n}}}.
\end{align}
\end{lem}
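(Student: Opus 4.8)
The plan is to compute the numerator and denominator of \eqref{gausscorr} directly from Definition \ref{def: bos_corr} and to match the outcome term-by-term with \eqref{eq: RHS_limit}. The denominator is the easy part: expanding each $\hOp_{v_i}=\bsigma{v_i}$ or $\bmu{v_i}$ into the two exponentials $\Nexp{\gamma_i}{v_i}$ with $\gamma_i=\pm\i/\sqrt2$, and summing the resulting products against the instanton average, turns the combined sum over the signs $s_{g+1},\dots,s_{g+n}\in\{\pm1\}$ and over $\xi$ into a single sum over $s\in\setSn$; comparing with \eqref{eq: expD} identifies $\ccor{\hOp_{v_1}\dots\hOp_{v_n}}$ with $\sum_{s\in\setSn}\exp(Q_{\tau_0}(s)+(\pi\i/2)(s\cdot M))$ up to a nonzero factor $C'$ that does not depend on $z,w$ and will cancel in the final ratio. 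So all the work is in the numerator.

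For the numerator I would first fix the charges $\gamma_i=-\i s_{g+i}/\sqrt2$ (one term of the above expansion) and compute $\ccor{\partial\Phi(z)\partial\Phi(w)\prod_i\Nexp{\gamma_i}{v_i}}$ by applying $\mathcal D_z\mathcal D_w=\partial_{\gamma_z}\partial_z\partial_{\gamma_w}\partial_w|_{\gamma_z=\gamma_w=0}$ to $\ccor{\Nexp{\gamma_z}{z}\Nexp{\gamma_w}{w}\prod_i\Nexp{\gamma_i}{v_i}}$. Writing this correlation as $A\,e^{P}E_\xi$, with $A=\ccor{\prod_i\nexp{\gamma_i}{v_i}}$ the $(z,w)$-independent Gaussian factor, $P$ the part of the exponent of \eqref{eq: defnexp} that involves $z$ or $w$, and $E_\xi=\E(e^{\gamma_z\xi(z)+\gamma_w\xi(w)+\sum_i\gamma_i\xi(v_i)})$, a direct differentiation gives
\[
\ccor{\partial\Phi(z)\partial\Phi(w)\prod_i\Nexp{\gamma_i}{v_i}}=\partial_z\partial_w G_\Omega(z,w)\,\ccor{\prod_i\Nexp{\gamma_i}{v_i}}+A\,\E\big[(a(z)+\partial\xi(z))(a(w)+\partial\xi(w))e^{R}\big],
\]
where $a(z)=\sum_i\gamma_i\partial_z G_\Omega(z,v_i)$ and $R=\sum_i\gamma_i\xi(v_i)$. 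Multiplying by $(\sqrt2\i)^2=-2$ and invoking Lemma \ref{lem: Green_sec_kind} to write $-2\partial_z\partial_w G_\Omega(z,w)=\beta_{\Surf}(z,w)$ already produces the $\beta_{\Surf}$ term of \eqref{eq: RHS_limit}.

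It remains to identify the source factor $\Psi(z):=\sqrt2\i(a(z)+\partial\xi(z))$. Here I would differentiate the three identities of Lemma \ref{lem: harmgreen} in $z$ and compare with Definition \ref{def: u_j_all}, obtaining in the chart $\Dom$ that $u_{g+j}(z)=-\partial_z G_\Omega(z,v_j)$ for $1\le j\le n$, $u_p(z)=-\pi\i\,\partial_z h_p(z)$ for $1\le p\le g$, and $u_{g+n+j}(z)=-\pi\i\,\partial_z\hat h_j(z)$ for $1\le j\le k$. Substituting the charges $\gamma_i=-\i s_{g+i}/\sqrt2$ into $a(z)$ and the expansion \eqref{eq: instanton} of $\xi$ (with $\hat s_j=s_{g+n+j}$) into $\partial\xi(z)$, the Gaussian source collapses to $-\sum_{p=g+1}^{g+n}s_p u_p(z)$ and the instanton source to $-\sum_{p\in\{1,\dots,g\}\cup\{g+n+1,\dots,g+n+k\}}s_p u_p(z)$, so that $\Psi(z)=-\sum_{p=1}^{g+n+k}s_p u_p(z)$ and hence $\Psi(z)\Psi(w)=\sum_{p,q}s_p s_q\,u_p(z)u_q(w)$.

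Finally I would reassemble: summing the single-term identity over the sign choices and the instanton average again merges into $\sum_{s\in\setSn}$, and applying \eqref{eq: expD} replaces $A\,\mathbb P(\xi)e^{R}\prod_{i:\Op_{v_i}=\mu_{v_i}}s_{g+i}$ by $C^{-1}\exp(Q_{\tau_0}(s)+(\pi\i/2)(s\cdot M))$. The numerator then becomes $\beta_{\Surf}(z,w)\,\ccor{\hOp_{v_1}\dots\hOp_{v_n}}+C'\sum_{p,q}u_p(z)u_q(w)\sum_{s\in\setSn}s_p s_q\exp(\cdots)$, and division by $\ccor{\hOp_{v_1}\dots\hOp_{v_n}}=C'\sum_{s\in\setSn}\exp(\cdots)$ gives exactly \eqref{eq: RHS_limit} at $P=z$, $Q=w$. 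I expect the main obstacle to be the bookkeeping in the $\mathcal D_z\mathcal D_w$ step: one must check that the mixed $\gamma$-derivative produces precisely one contraction $\partial_z\partial_w G_\Omega(z,w)$ together with four cross-terms that recombine into the single product $(a(z)+\partial\xi(z))(a(w)+\partial\xi(w))$ inside one expectation, the interchange of $\partial_z,\partial_w$ with the $\xi$-average being justified by the differentiation-under-the-expectation lemma of Section \ref{sec: instanton}.
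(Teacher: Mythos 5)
Your proposal is correct and follows essentially the same route as the paper's proof: the same Gaussian integration by parts, the same identifications $u_{g+j}(z)=-\partial_z G_\Omega(z,v_j)$, $u_p(z)=-\pi\i\,\partial_z h_p(z)$, $\partial_z\partial_w G_\Omega=-\tfrac12\beta_{\Surf}$ coming from Lemmas \ref{lem: harmgreen} and \ref{lem: Green_sec_kind}, and the same bookkeeping converting the sum over $(s,\xi)$ into the theta-type sums via \eqref{eq: expD} and identifying the denominator via \eqref{eq: exp_sines_cosines}. The only difference is organizational: you treat the two insertions in one stroke as a contraction term plus the product of sources $\left(a(z)+\partial\xi(z)\right)\left(a(w)+\partial\xi(w)\right)$, whereas the paper splits the index range into $I_1,I_2,I_3$ and verifies the three block identities \eqref{eq: J_1}--\eqref{eq: J_3} separately; the mathematical content is identical.
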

\begin{proof}

The identification proceeds by breaking the range of summation in \eqref{eq: RHS_limit} into three parts. Let $I_1:=\{1,\dots,g\}$, $I_2:=\{g+1,\dots,g+n\}$ and $I_3:=\{g+n+1,\dots,g+n+k\}$. Our goal is to check the identities, hereinafter evaluating all differentials in the chart $\Omega$:
\begin{align}
\beta(z,w)+\sum_{(p,q)\in I_2\times I_2} \alpha_{p,q}u_p(z)u_q(w)&=-2\frac{\ccor{\bdry\varphi(z)\bdry\varphi(w) \hOp_{v_1}\dots\hOp_{v_n}}}{\ccor{\hOp_{v_1}\dots\hOp_{v_n}}};\label{eq: J_1}
\\
\sum_{(p,q)\in I_2\times (I_1\cup I_3)} \alpha_{p,q}u_p(z)u_q(w)&=-2\frac{\ccor{\pa\varphi(z)\pa\xi(w)\hOp_{v_1}\dots\hOp_{v_n}}}{\ccor{\hOp_{v_1}\dots\hOp_{v_n}}};
\label{eq: J_2}
\\
\sum_{(p,q)\in (I_1\cup I_3)\times (I_1\cup I_3)}\alpha_{p,q}u_p(z)u_q(w)&=-2\frac{\ccor{\bdry\xi(z)\bdry\xi(w) \hOp_{v_1}\dots\hOp_{v_n}}}{\ccor{\hOp_{v_1}\dots\hOp_{v_n}}},
\label{eq: J_3}
\end{align}
where $\alpha_{p,q}=\alpha_{q,p}$ are the coefficients in \eqref{eq: RHS_limit} given by 
$$
\alpha_{p,q}=\frac{\sum_{s\in\setSn} s_p s_q\exp (Q_{\tau_0}(s)+(\pi \i/2) (s\cdot M))}{\sum_{s\in\setSn} \exp (Q_{\tau_0}(s)+(\pi \i/2) (s\cdot M))}.
$$
The Lemma follows by summing \eqref{eq: J_1}--\eqref{eq: J_3} and the symmetric counterpart of \eqref{eq: J_2} with roles of $p,q$ exchanged, using that $\Phi=\varphi+\xi,$ and the linearity of correlations.

We start by rewriting $\alpha_{p,q}$, first by leaving $s_{g+1},\dots,s_{g+n}$ fixed and summing over the other indices. Note that the tuples of indices $s_1,\dots,s_g,\hat{s}_1,\dots\hat{s}_k$, where $\hat{s}_i=s_{g+n+i},$ are in one-to-one correspondence with realizations of $\xi$. Because of the factor $\mathbb{P}(\xi)$ in \eqref{eq: expD}, summing over those amounts to taking an expectation with respect to $\xi$. Therefore, we have 
\begin{multline}
\label{eq: a_p_q}
\alpha_{p,q}=\frac{\sum_{\nu\in \{\pm 1\}^n}\sum_{s\in\setSn:s_{g+1}=\nu_1,\dots,s_{g+n}=\nu_n} s_p s_q\exp (Q_{\tau_0}(s)+(\pi \i/2) (s\cdot M))}{\sum_{\nu\in \{\pm 1\}^n}\sum_{s\in\setSn:s_{g+1}=\nu_1,\dots,s_{g+n}=\nu_n} \exp (Q_{\tau_0}(s)+(\pi \i/2) (s\cdot M))}
\\
=\frac{\sum_{\nu\in \{\pm 1\}^n}\ccor{s_p s_q\Nexp{-\frac{\i\nu_1}{\sqrt{2}}}{v_1}\dots \Nexp{-\frac{\i\nu_n}{\sqrt{2}}}{v_n}}\prod_{i:\Op_{v_i}=\mu_{v_i}} \nu_i}{\sum_{\nu\in \{\pm 1\}^n}\ccor{\Nexp{-\frac{\i\nu_1}{\sqrt{2}}}{v_1}\dots \Nexp{-\frac{\i\nu_n}{\sqrt{2}}}{v_n}}\prod_{i:\Op_{v_i}=\mu_{v_i}} \nu_i}.
\end{multline}
Note that in the numerator, if $p\in\{g+1,\dots,g+n\}$, then $s_p=\nu_{p-g}$ can be taken out of the correlation, otherwise $s_p$ is thought of as a random variable which is a function of $\xi$. 

We can directly relate the denominator above to the denominator in \eqref{gausscorr}: plugging in the expressions for sine and cosine in terms of exponentials, and expanding everything by linearity yields
\begin{equation}
\label{eq: exp_sines_cosines}
\ccor{\hOp_{v_1}\dots\hOp_{v_n}}=\sum_{ \nu\in \{\pm 1\}^n}  2^{-n}\ccor{\Nexp{-\frac{\i\nu_1}{\sqrt{2}}}{v_1}\dots \Nexp{-\frac{\i\nu_n}{\sqrt{2}}}{v_n}}\prod_{i:\Op_i=\mu_i}(\i \nu_i ).
\end{equation}

We are ready to prove \eqref{eq: J_3}. Recalling \eqref{eq: instanton}, we can write
\begin{equation}
\pa \xi(z)=\frac{\pi}{\sqrt{2}}\left(\sum_{i=1}^{g}s_i\pa h_i(z)+\sum_{i=1}^k \hat{s}_i\pa \hat{h}_{i}(z)\right)=-\frac{1}{\sqrt{2}\i}\sum_{p\in I_1\cup I_3}s_p u_p(z).
\label{eq: xi_sum}
\end{equation}
where we have used Lemma \ref{lem: harmgreen} and Definition \ref{def: u_j_all}. Multiplying \eqref{eq: a_p_q} by $u_p(z)u_q(w)$ and summing yields 
$$
\sum_{(p,q)\in\mathcal(I_1\cup I_3)\times (I_1\cup I_3)}\alpha_{p,q}u_p(z)u_q(w)=-2\frac{\sum_{\nu\in \{\pm 1\}^n}\ccor{\pa \xi(z) \pa \xi(w)\prod^n_{i=1}\Nexp{-\frac{\i\nu_i}{\sqrt{2}}}{v_i}}\prod_{i:\Op_{v_i}=\mu_{v_i}} \nu_i}{\sum_{\nu\in \{\pm 1\}^n}\ccor{\prod^n_{i=1}\Nexp{-\frac{\i\nu_i}{\sqrt{2}}}{v_i}}\prod_{i:\Op_{v_i}=\mu_{v_i}} \nu_i},
$$
and simplifying the sums as in \eqref{eq: exp_sines_cosines} yields \eqref{eq: J_3}. 

For \eqref{eq: J_1} and \eqref{eq: J_2}, we will need the ``Gaussian integration by parts" formulae:
\begin{align}
\ccor{\pa\varphi(z)\prod_{i=1}^n\nexp{\gamma_i}{v_i}}&=\sum_{p=1}^n\gamma_p\pa_zG(z,v_p)\ccor{\prod_{i=1}^n\nexp{\gamma_i}{v_i}},
\end{align}
\begin{multline}
\ccor{\pa\varphi(z)\pa\varphi(w)\prod_{i=1}^n\nexp{\gamma_i}{v_i}}\\=\left(\pa_z\pa_w G(z,w)
+\sum_{p,q=1}^n\gamma_p\gamma_q\pa_z G(z,v_p)\pa_w G(w,v_q)\right)\ccor{\prod_{i=1}^n\nexp{\gamma_i}{v_i}},
\end{multline}
which follow by a simple computation from Definition \ref{def: bos_corr}. We multiply these identities by $\mathbb{E}(\pa\xi(w)\prod_{i=1}^ne^{\gamma_i\xi_i})$  and $\mathbb{E}(\prod_{i=1}^ne^{\gamma_i\xi_i})$ respectively, specialize to $\gamma_i=-\frac{\i \nu_i}{\sqrt{2}}$, and plug in $\pa_z\pa_w G(z,w)=-\frac12 \beta(z,w)$, from Lemma \ref{lem: Green_sec_kind} and $\pa_z G(z,v_p)=-u_{p+g}(z)$, from combining Lemma \ref{lem: harmgreen} and Definition \ref{def: u_j_all}. This leads to  
\begin{multline}
\ccor{\pa\varphi(z)\pa\xi(w)\prod_{i=1}^n\Nexp{-\frac{\i\nu_i}{\sqrt{2}}}{v_i}}=\sum_{p\in I_2}\frac{\i\nu_{p-g}}{\sqrt{2}}u_{p-g}(z)\ccor{\pa\xi(w)\prod_{i=1}^n\Nexp{-\frac{\i\nu_i}{\sqrt{2}}}{v_i}}\\
=-\frac12 \sum_{p\in I_2}\sum_{q\in I_1\cup I_3}u_{p}(z)u_q(w)\ccor{\nu_{p-g}s_q\prod_{i=1}^n\Nexp{-\frac{\i\nu_i}{\sqrt{2}}}{v_i}},
\end{multline}
where in the last equality we have used \eqref{eq: xi_sum}, so
\begin{multline}
\ccor{\pa\varphi(z)\pa\varphi(w)\prod_{i=1}^n\Nexp{-\frac{\i\nu_i}{\sqrt{2}}}{v_i}}\\=-\frac12 \beta(z,w)-\frac12\sum_{p,q\in I_2}\nu_{p-g}\nu_{q-g}u_p(z)u_q(z)\ccor{\prod_{i=1}^n\Nexp{-\frac{\i\nu_i}{\sqrt{2}}}{v_i}}.
\end{multline}
Hence, multiplying \eqref{eq: a_p_q} by $u_p(z)u_q(w)$ and summing over the respective ranges (and then simplifying sums over $\nu$'s into correlations involving sines and cosines as before) yields \eqref{eq: J_2} and \eqref{eq: J_1}.
\end{proof}

With the help of the last Lemma, we are ready to prove Lemma \ref{lem: wind}.
\begin{proof}[Proof of Lemma \ref{lem: wind}]
Note that if $H,H'$ are two non-singular characteristics, then, for any $Q\in \mathcal{M}$, 
$$
g_{H,H'}(\cdot,Q):=\frac{\theta_\tau(\Ab(\cdot)-\Ab(Q),H)}{\theta_\tau(\Ab(\cdot)-\Ab(Q),H')}
$$
is a two-valued meromorphic \emph{function} which realizes the co-cycle $s$ discussed in Section \ref{sec: spinors_Szego}, namely, its multiplicative monodromies are $s(A_i)=(-1)^{N_i-N_i'}$ and $s(B_i)=(-1)^{M_i-M_i'}$. Now, if $h_{H'}$ is a meromorphic section of $\mathfrak{c}(H')$, then, from comparing the divisors, we get that $g_{H,H'}(\cdot,Q)h_{H'}(\cdot)$ is a meromorphic section of $\mathfrak{c}(H).$ In particular, if follows that for each $\mathcal{M}$ and each $A$ or $B$ loop, it suffices to prove the Lemma for one characteristic $H$. 

Let $H$ be an odd non-singular characteristics for $\mathcal{M}$ and $h_H$ be the explicitly constructed holomorphic section of $\SpStr_H$, as in \cite[Definition 2.1]{Fay}. This section manifestly depends continuously on $\mathcal{M}$ (as long as $H$ is non-singular, which is an open condition), and thus the windings $\wind_{\SpStr(H)}(A_i)$, $\wind_{\SpStr(H)}(B_i)$ computed using $h_H$ are locally constant. It follows that the set of Torelli marked surfaces for which the conclusion of the Lemma (for a given $A$ or $B$ loop) holds is both closed and open (in the Teichm\"uller space). 

We now let $\mathcal{M}=\Surfeps$, where $\Omega=\HH$ with $n$ distinct marked points $v_1,\dots,v_n$ and the choice of $H$ corresponding to $\Op_{v_i}=\sigma_{v_i}$ for all $i$, and fixed boundary conditions on $\pa\HH$, i.e., $N_i=1$ and $M_i=0$ for all $i$. Combining Theorem \ref{thm: Hejhal} and Lemma \ref{lem: gausscorr}, we see that 
$$\SKDom(z,w)\stackrel{\eps\to 0}{\longrightarrow} \left(\frac{\ccor{\sqrt{2} \i\bdry\varphi(z)\sqrt{2} \i\bdry\varphi(w) \hOp_{v_1}\dots\hOp_{v_n}}}{\ccor{\hOp_{v_1}\dots\hOp_{v_n}}}\right)^\frac12,$$
where, as in Section \ref{sec: Szego on surfeps}, $\SKDom(z,w)$ is the element of $\Lambda_{\Surfeps,\SpStr}(z,w)$  in the chart $\HH$. From the OPE \eqref{eq: fuse_pa_cos}, we see that as $z\to v_i$, the expression in brackets behaves as $c_i(z-v_i)^{-1}$, where the coefficients $c_i$ are generically non-zero. Recalling that $A_i$ can be represented by a small loop in $\HH$ encircling $v_i$, it follows that for $\eps$ small enough, we have, on $\Surfeps$, $\wind_{\SpStr(H)}(A_i)=0\mod 4\pi$. By connectedness of the Teichm\"uller space, this proves the lemma for $A$ loops, for arbitrary surface of genus $n$. To complete the proof, note that by \cite[Eq. 12]{Fay}, the change of homology basis $A_i\mapsto B_i$, $B_i\mapsto -A_i$ transforms $\Delta_H$ into $\Delta_{H'}$, where $M'_i=N_i$ and $N'_i=M_i$. 
\end{proof}

\subsection{The limit of the left-hand side of (\ref{eq: bosonization}) and the Ising correlations.}
\label{szegosection}

\begin{lem}
If $\ccor{\hOp_{v_1}\dots\hOp_{v_n}}\neq 0$ and $\ccor{\Op_{v_1}\dots\Op_{v_n}}_\Dom\neq 0$, then we have the following limit as $\eps\to 0$:
\label{lem: LHS_limit}
\begin{equation}
\label{limitlambda}
\lim_{\eps\to 0}\SKDom(z,w)=
    \frac12\frac{\ccor{\psi_z\psi_w \Op_{v_1}\dots\Op_{v_n}}_\Dom}{\ccor{\Op_{v_1}\dots\Op_{v_n}}_\Dom}.
\end{equation}
for any $z,w\in \Domp$, $z\neq w$.
\end{lem}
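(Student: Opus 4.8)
The plan is to show that the family $\SKDom(z,w)$ is precompact, that every subsequential limit solves the same Riemann boundary value problem that characterizes the Ising fermionic correlation in \cite{CHI2}, and then to use uniqueness together with a residue normalization to identify the limit.

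First I would record compactness. Since $\ccor{\hOp_{v_1}\dots\hOp_{v_n}}\neq 0$, Lemma \ref{limitlemma} gives $\theta_{\tau_\eps}(0;H)\neq 0$ for small $\eps$, so Theorem \ref{thm: Hejhal} applies and the element in the chart $\Dom$ of the left-hand side of \eqref{eq: bosonization} is exactly $\SKDom(z,w)^2$. By Lemmas \ref{limitlemma} and \ref{lem: gausscorr}, this square converges locally uniformly, away from $z=w$ and the points $v_i$, to the expression \eqref{gausscorr}, namely $\ccor{\sqrt{2}\i\bdry\Phi(z)\,\sqrt{2}\i\bdry\Phi(w)\,\hOp_{v_1}\dots\hOp_{v_n}}/\ccor{\hOp_{v_1}\dots\hOp_{v_n}}$, which is real-analytic and not identically zero. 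Consequently $|\SKDom(z,w)|=|\SKDom(z,w)^2|^{1/2}$ is locally uniformly bounded on $\Domp\times\Domp$ off the diagonal, so by Montel's theorem the family $\{\SKDom\}_\eps$ is normal; I extract a locally uniformly convergent subsequence $\SKDom\to F$. Working directly with $\SKDom$ (rather than with a square root of its square) avoids any sign ambiguity at the zeros of the limit.

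Next I would identify the analytic structure of such a limit $F$. It is a two-valued holomorphic function of $z$, and by anti-symmetry of $w$, on $\Domp$. Since each $\SKDom$ has a simple pole at $z=w$ with residue $1$ and residues pass to locally uniform limits (via $\tfrac1{2\pi\i}\oint$ around $w$), $F$ has a simple pole at $z=w$ of residue $1$; this is consistent with the double pole $\tfrac1{(z-w)^2}$ of the limit of $\SKDom^2$. The $-1$ monodromies of $\SKDom$ around each $v_i$ and the triviality of monodromy along the $B_i$ (Lemma \ref{lem: Szego_elements}) are topological and survive in the limit. Finally, letting $\eps\to 0$ in the boundary relations \eqref{eq: Szego_bc1}--\eqref{eq: Szego_bc4} and using the reflection symmetry of the Schottky double, these turn into the Riemann-type boundary conditions relating $\tau_z F$ to the conjugate of the limit of $\SKDr$ on the $\fixed$ and $\free$ arcs of $\pa\Omega$; these are precisely the boundary conditions imposed on the continuous fermionic observable in \cite{CHI2}.

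The step I expect to be the main obstacle is pinning down the local behaviour of $F$ at the degenerating points. Near each $v_i$ the cut $[v_i-\eps,v_i+\eps]$ lies in $\fixed^\dagger$ when $\Op_{v_i}=\sigma_{v_i}$ and in $\free^\dagger$ when $\Op_{v_i}=\mu_{v_i}$; combining the boundary condition across the shrinking cut with the $-1$ monodromy, I must show that as the handle pinches no singular mass is lost into it and that $F$ acquires exactly the branch-type singularity the fermionic observable has at a spin (respectively a disorder) insertion, with the analogous analysis at the free-arc endpoints $b_j$. This is delicate precisely because the surface $\Surfeps$ degenerates at the marked points, so the (locally uniform, but only away from the cuts) convergence of Abelian differentials in Lemma \ref{lem: limit_Abelian} must be supplemented by a direct local study of the Szeg\H o kernel in a neighbourhood of each pinch.

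Finally I would invoke uniqueness and fix the constant. By the characterization of fermionic correlations in \cite{CHI2}, the map $z\mapsto \ccor{\psi_z\psi_w\Op_{v_1}\dots\Op_{v_n}}_\Dom/\ccor{\Op_{v_1}\dots\Op_{v_n}}_\Dom$, which is well defined because $\ccor{\Op_{v_1}\dots\Op_{v_n}}_\Dom\neq 0$, is the unique holomorphic spinor with the pole, branchings, and boundary conditions listed above, and its pole at $z=w$ has residue $2$ (consistent with $\ccor{\psi_z\psi_w}_\C=\tfrac{2}{z-w}$ from the introduction). Since $F$ solves the same boundary value problem but has residue $1$, uniqueness forces $F=\tfrac12\,\ccor{\psi_z\psi_w\Op_{v_1}\dots\Op_{v_n}}_\Dom/\ccor{\Op_{v_1}\dots\Op_{v_n}}_\Dom$. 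As this identification holds for every subsequential limit, the full limit exists and equals the claimed expression, establishing \eqref{limitlambda}; combined with the square computation this is exactly what yields \eqref{eq: for_log_der}.
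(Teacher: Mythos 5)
Your outline follows the same skeleton as the paper's proof (precompactness from the convergence of the squared kernel, identification of subsequential limits as solutions of a Riemann boundary value problem, uniqueness plus residue normalization, with the residue $2$ of the Ising ratio accounting for the factor $\tfrac12$), but the step you explicitly defer --- the local behaviour at the degenerating points --- is not a supplementary technicality: it is the analytic core of the paper's argument, and your proposal contains no proof of it. Two pieces of information beyond holomorphy, the $-1$ monodromy, and the pole at $w$ are needed to make the boundary value problem determined: (i) at each $v_i$ the coefficient $\alpha_i$ of the $(z-v_i)^{-1/2}$ singularity must lie on the line $e^{\i\pi/4}\R$ (spin) or $e^{-\i\pi/4}\R$ (disorder) --- without this phase condition, which is what distinguishes a $\sigma$ insertion from a $\mu$ insertion, the problem is underdetermined and uniqueness fails; and (ii) at the endpoints of each free arc one needs the signed residue coupling \eqref{eq: bc_free_arc_full} (equivalent to \cite[Eq.~(3.4)]{CHI2}), not merely the statement that $(f^{[\rho]})^2\,dz$ has at worst simple poles there. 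The paper obtains (i) by a maximum-principle argument applied to $h_\eps=\im\int^t(f^{[\rho]}_\eps)^2\,dz$, whose constancy and one-sided normal derivative on the shrinking cuts come from Lemma \ref{lem: Szego_elements}, and obtains (ii) by comparison with an explicit model: the Szeg\H{o} kernel on the once-pinched sphere $\hat{\C}_\eps$, whose limit \eqref{eq: model_Szego_asymp} is itself computed via the Hejhal--Fay identity. Nothing in your sketch substitutes for these two arguments.

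A second, related gap is your appeal to a ``characterization of fermionic correlations in \cite{CHI2}'' for the single function $F=\lim\SKDom$. The boundary relations \eqref{eq: Szego_bc1}--\eqref{eq: Szego_bc4} couple $\SKDom$ to $\SKDr$, so $F$ alone does \emph{not} satisfy a closed local boundary condition; this is why the paper works throughout with the family $f^{[\rho]}=\bar{\rho}\,\SKDom+\rho\,\SKDr$, $\rho\in\C$, for which the coupling collapses to the scalar conditions $\tau_z^{\frac12}f^{[\rho]}\in\R$ on $\fixed$ and $\tau_z^{\frac12}f^{[\rho]}\in \i\R$ on $\free$, matching the combination $\bar\rho\,\psi_w+\rho\,\psi_w^\star$ on the Ising side. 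Moreover, uniqueness is not imported from \cite{CHI2} but proven in the paper by the residue computation $\bar{\rho}\,g(w)=\res_w(f^{[\rho]}g)$, expanded as a boundary integral plus residues at the $v_j$ and $b_j$: the boundary term and the $v_j$ terms are purely imaginary precisely because of the boundary conditions and the phase condition (i), and the $b_j$ terms cancel precisely because of (ii), whence $\rho\,g(w)\in\i\R$ for all $\rho$ forces $g\equiv0$. So the two gaps are intertwined: without (i), (ii) and the $\rho$-family you can neither pose a well-defined boundary value problem for your subsequential limits nor run any uniqueness argument, and filling these holes amounts to reproducing Section \ref{szegosection} of the paper.
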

\begin{rem}
In the case $k=0$, i.e., every boundary component carries either free or fixed boundary conditions, we can identify $\SKDom$ with the Ising correlation already before passing to a limit: 
$$
\SKDom(z,w)=\frac12\frac{\ccor{\psi_z\psi_w \Op_{\pa_1}\dots\Op_{\pa_n}}}{\ccor{\Op_{\pa_1}\dots\Op_{\pa_n}}},
$$
where $\pa_i=[v_i-\eps_i,v_i+\eps_i]$, and $\Op_{\pa_i}$ stands for the spin on the monochromatic boundary component $\pa_i$ if $\Op_{v_i}=\sigma_{v_i}$, and for a disorder placed (anywhere) on the free boundary component $\pa_i$ if $\Op_{v_i}=\mu_{v_i}$. We do not have such an interpretation when $k\neq 0$. Note also that $\SKDr(z,w)$ and its limit can be expressed in a similar manner with $\psi^\star_w$ in the place of $\psi_w$, but we will not need it.
\end{rem}

\begin{proof}
Fix distinct $w,v_1,\dots,v_n \in \Omega$. Consider, for any $\rho\in \mathbb{C},$ the linear combination 
$$
f_\eps^{[\rho]}(z)=\bar{\rho}\SKDom(z,w)+\rho \SKDr(z,w).
$$

We know that as $\eps\to 0$, we have $f_\eps^{[\rho]}(\cdot)\to f^{[\rho]}(\cdot)$ uniformly on compact subsets of $\bar{\Omega}\setminus \{v_1,\dots,v_n,w\}$, where 
$$
f^{[\rho]}(z)=\bar{\rho}\SKDomLim(z,w)+\rho\SKDrLim(z,w),
$$
and $\SKDomLim=\lim_{\eps\to 0}\SKDom$, $\SKDrLim=\lim_{\eps\to 0}\SKDr$; the limits exist by Lemma \ref{limitlemma}. We will prove the lemma by checking that $f^{[\rho]}(z)$ solves the same boundary value problem as 
$$\frac{\ccor{\psi_z\psi^\rho_w \Op_{v_1}\dots\Op_{v_n}}_\Dom}{\ccor{\Op_{v_1}\dots\Op_{v_n}}_\Dom}=\bar{\rho}\frac{\ccor{\psi_z\psi_w \Op_{v_1}\dots\Op_{v_n}}_\Dom}{\ccor{\Op_{v_1}\dots\Op_{v_n}}_\Dom}+\rho\frac{\ccor{\psi_z\psi^\star_w \Op_{v_1}\dots\Op_{v_n}}_\Dom}{\ccor{\Op_{v_1}\dots\Op_{v_n}}_\Dom},$$
cf. \cite[Def. 3.3., Prop. 5.18 and Thm. 6.1--6.2]{CHI2} and that the solution to that boundary value problem is unique.

Clearly, $z\mapsto f^{[\rho]}(z)$ is a holomorphic two-valued function in $\Dom\setminus\{v_1,...,v_n,w\}$ with monodromy $-1$ around $v_1,\dots,v_n$ and $1$ around other boundary components, and has the same singularity at $z=w$ as each of $f_\eps^{[\rho]}$, namely, 
\begin{equation}
  f^{[\rho]}(z)=\frac{\bar{\rho}}{z-w}+O(1),\quad z\to w\label{eq: pole_t_eta}.  
\end{equation}
Let $z\in\pa\Dom\setminus \bpoints$. We also have 
\begin{align}
\tau^\frac12_z f^{[\rho]}(z)&\in \R, &z\in \pa_{\mathrm{fixed}}\Omega;\label{eq: bc_fixed}\\
\tau^\frac12_z f^{[\rho]}(z)&\in  i\R, &z\in \pa_{\mathrm{free}}\Omega \label{eq: bc_free}.
\end{align}
since by Lemma \ref{lem: Szego_elements}, we have these inclusions for $f^{[\rho]}_\eps$ whenever $\eps$ is so small that $z$ does not belong to a cut.

We turn to describing the behavior of $z\mapsto f^{[\rho]}(z)$ at $v_1,\dots, v_n$. Define 
$$
h_\eps(t)=\im\int^t_{t_0}(f_{\eps}^{[\rho]}(z))^2\,dz,
$$
a (for a moment, multi-valued) harmonic function in $\Dom$. In terms of $h_\eps$, the boundary conditions (\ref{eq: Szego_bc1}--\ref{eq: Szego_bc4}) mean that $h_\eps$ is constant along each cut $l_i=[v_i-\eps,v_i+\eps]$, and $\pa_n h_\eps\geq 0$ on $l_i$ with $\Op_{v_i}=\sigma_{v_i}$ (respectively, $\pa_n h_\eps\leq 0$ on $l_i$ with $\Op_{v_i}=\mu_{v_i}$), where $\pa_n$ is the inner normal derivative; cf. \cite[Proposition 3.7]{CHI2}.

It follows from the convergence of $f_\eps^{[\rho]}$ that $h_\eps(t)$ converges to $h(t)=\im\int^t_{t_0}(f^{[\rho]}(z))^2\,dt$, uniformly on compact subsets of $\Omega\setminus\{w,v_1,\dots,v_n\}$. Note that for a fixed small $r>0$, any $i$ such that $\Op_{v_i}=\sigma_{v_i}$ (respectively, $\Op_{v_i}=\mu_{v_i}$), and $\eps$ small enough, we have $\max_{B_r(v_i)\cap \Dom_\eps}h_\eps=\max_{\pa B_r(v_i)}h_\eps$ (respectively, $\min_{B_r(v_i)\cap \Dom_\eps}h_\eps=\min_{\pa B_r(v_i)}h_\eps$). Indeed, by the maximum principle, the maximum (resp. minimum) cannot be attained inside $B_r(v_i)\cap \Dom_\eps$, and the boundary conditions for $h_\eps$ mean that it cannot be attained on the inner component $B_r(v_i)\cap \pa\Dom_\eps$ either. Passing to the limit, we see that $h$ is harmonic and bounded in $B_r(v_i)\setminus\{v_i\}$ from above by $\max_{\pa B_r(v_i)}h$ (respectively, from below by $\min_{\pa B_r(v_i)}h$). We conclude that it has an expansion $h(t)=c\log|t-v_i|+O(1)$ with $c\geq 0$ (respectively, $c\leq 0$). Differentiating and taking the square root, we deduce that 
 \begin{align}
f^{[\rho]}(z)&=\alpha_i(z-v_i)^{-\frac12}(1+o(1)),\quad \alpha_i\in e^{\i\pi/4}\mathbb{R} &&\text{ if }\Op_{v_i}=\sigma_{v_i}\label{eq: bc_spin}\\
f^{[\rho]}(z)&=\alpha_i(z-v_i)^{-\frac12}(1+o(1)),\quad \alpha_i\in e^{- \i\pi/4}\mathbb{R} &&\text{ if }\Op_{v_i}=\mu_{v_i}.\label{eq: bc_mu}
\end{align}
As a final ingredient, we derive the boundary conditions for $f^{[\rho]}(\cdot)$ at the endpoints of free arcs, namely, if $(b_{2i-1},b_{2i})$ is a free arc, we will prove that $(f^{[\rho]})^2\,dz$ has at worst simple poles at $b_{2i-1},b_{2i}$, and 
\begin{equation}
\res_{b_{2i-1}} f^{[\rho]}(z)\psi_i(z)\,dz=-\res_{b_{2i}} f^{[\rho]}(z)\psi_i(z)\,dt,
\label{eq: bc_free_arc_full}
\end{equation}
where $\psi_i$ is given by 
$$
\psi_i(z)=\frac{\varphi_i'(z)^\frac12(\varphi_i(z)-\frac12)}{\sqrt{\varphi_i(z)(\varphi_i(z)-1)}},
$$
and $\varphi_i$ is a Möbius map that maps $[b_{2i-1},b_{2i}]$ to $[0,1]$ so that a neighborhood of $[b_{2i-1},b_{2i}]$ in $\Omega$ is mapped into the upper half plane. Note that $\psi_i$ has two well defined branches in $\mathcal{U}\setminus [b_{2i-1},b_{2i}]$, where $\mathcal{U}$ is a simply-connected neighborhood of $[b_{2i-1},b_{2i}]$; it does not matter which branch is chosen in \eqref{eq: bc_free_arc_full}, but we insist that the same branch is used in both sides of the equality. Since $\res_{b_{2i-1}}\psi^2_i(z)\,dz=-\res_{b_{2i}}\psi^2_i(z)\,dz$, the condition (\ref{eq: bc_free_arc_full}) is equivalent to $\res_{b_{2i-1}}f^{[\rho]}(z)^2\,dz=-\res_{b_{2i}}f^{[\rho]}(z)^2\,dz$ together with an additional bit of information regarding sign choices. It is easy to see that (\ref{eq: bc_free_arc_full}) is equivalent to \cite[Eq. (3.4)]{CHI2}.

The fact that $(f^{[\rho]})^2$ has at most simple poles at $b_{2i-1},b_{2i}$ follows from Lemmas \ref{limitlemma} and \ref{lem: limit_Abelian}. To prove (\ref{eq: bc_free_arc_full}), recall that in the third step of the construction of $\Surfeps$ in Section \ref{sec: gluing}, to define the cuts at $b_{2i-1},b_{2i}$ and then glue them together, we used a local coordinate map from a neighborhood $\mathcal{U}$ of $(b_{2i-1},b_{2i})$ in $\Surf$ to a neighborhood $U$ of an interval in the real line. We may assume that the interval is $[0,1]$ and the map is given by  $$\phi(z):= \begin{cases}\varphi_i(z),&z\in \Dom_\eps\\ \overline{\varphi_i(z)},&z\in \Domr_\eps.\end{cases}$$ 
Let us pull back $f^{[\rho]}_\eps(z)$ into the coordinate domain $U\cap \HH$, i.e.,  consider the spinor $\tilde{f}_{\eps}^{[\rho]}$ defined in $U\cap\HH$ by $f^{[\rho]}_\eps(z)=\varphi_i'(z)^\frac12 \tilde{f}_{\eps}^{[\rho]}(\varphi_i(z))$. Then, since $\tilde{f}_{\eps}^{[\rho]}$ can be viewed as an element of the spinor $\bar \rho \Lambda_{\Surfeps,\SpStr}(\cdot ,w)+\rho \Lambda_{\Surfeps,\SpStr}(\cdot ,w^\star)$,
 it extends to a two-valued holomorphic function on the surface $U_\eps$ obtained by gluing $U$ along straight cuts $[-\eps,\eps]$ and $[1-\eps,1+\eps]$. Due to our choice of the spin structure $\SpStr$ as per Definition \ref{def: marking} and Lemma \ref{lem: wind}, this function is ramified at the loop $\tilde{A}:=\varphi_i(A_{g+n+i})$ but not at $\tilde{B}:=\varphi_i(B_{g+n+i})$.

Consider now the Szeg\H{o} kernel $\Lambda_{\hat{\C}_{\eps},\tilde{\SpStr}}$ on the genus 1 surface $\hat{\C}_{\eps}$ obtained from the Riemann sphere $\hat{\C}$ by cutting and gluing along the same straight cuts, with the spin structure $\tilde{\SpStr}$ assigning the same winding as $\SpStr$ to the loops $\tilde{A}, \tilde{B}$, i.e., $0$ and $2\pi$. Let $\tilde{\psi}_\eps(z)$ denote the element of $\Lambda_{\hat{\C}_{\eps},\tilde{\SpStr}}(z,\infty)$ with respect to the identity coordinate for the first argument and $1/w$-coordinate for the second one.  Then, $\tilde{\psi}_\eps(z)\tilde{f}_{\eps}^{[\rho]}(z)\,dz$ is an Abelian differential on $U_\eps$, in particular, for a small fixed $r>\eps$, we have 
$$\oint_{\pa B_r(0)} \tilde{\psi}_\eps(z)\tilde{f}_{\eps}^{[\rho]}(z)\,dz=-\oint_{\pa B_r(1)} \tilde{\psi}_\eps(z)\tilde{f}_{\eps}^{[\rho]}(z)\,dz,$$ because $\pa B_{r}(0)$ and $\pa B_{r}(1)$ can be contracted to the loops traversing upper and lower parts of the respective cuts, and under our gluing, this is the same loop with opposite orientations. 

We have the asymptotics, which we prove below,
\begin{equation}
\tilde{\psi}_\eps(z)\stackrel{\eps \to 0}{\longrightarrow} \tilde{\psi}(z):= \i\frac{z-\frac12}{\sqrt{z(z-1)}}\label{eq: model_Szego_asymp}
\end{equation}
 uniformly on compact subsets of $\C\setminus\{0,1\}.$ Therefore, $\tilde{f}^{[\rho]}=\lim_{\eps\to 0} \tilde{f}_{\eps}^{[\rho]}$ 
 satisfies \begin{equation}
 \label{eq: residue_psi}
     \res_0 \left(\tilde{f}^{[\rho]}(t)\tilde{\psi}(t)\,dt\right)=-\res_1\left(\tilde{f}^{[\rho]}(t)\tilde{\psi}(t)\,dt\right),
 \end{equation}
and pulling back by  $\varphi_i$ yields (\ref{eq: bc_free_arc_full}).

Now, we know from \cite[Prop. 5.18 and Thm. 6.1--6.2]{CHI2} that if $\ccor{\Op_{v_1}\dots\Op_{v_n}}\neq 0$, then 
$$
\frac{\ccor{\psi_z\psi^\rho_w \Op_{v_1}\dots\Op_{v_n}}}{\ccor{\Op_{v_1}\dots\Op_{v_n}}}
$$
solves the same boundary value problem (\ref{eq: pole_t_eta}--\ref{eq: bc_free_arc_full}).  Therefore, to complete the proof, it suffices to show that there does not exist a non-trivial spinor satisfying (\ref{eq: bc_fixed}--\ref{eq: bc_free_arc_full}) and holomorphic in $\Dom\setminus\{v_1,\dots,v_n \}$. Assuming $g$ is such a spinor, we compute 
\begin{multline}
\bar{\rho}g(w)=\res_w (f^{[\rho]}g)=\frac{1}{2\pi \i}\mathrm{v.p.}\int_{\pa \Dom\setminus\{b_1,\dots,b_{2k}\}}f^{[\rho]}(z)g(z)\,dz\\-\sum^n_{j=1}\res_{v_j}(f^{[\rho]}g)-\frac12\sum^k_{i=1}(\res_{b_{2i-1}}(f^{[\rho]}g)+\res_{b_{2i}}(f^{[\rho]}g)).
\end{multline}
The first (respectively, the second) term is purely imaginary since $f^{[\rho]}$, $g$ both satisfy (\ref{eq: bc_fixed}--\ref{eq: bc_free}) (respectively, (\ref{eq: bc_spin}--\ref{eq: bc_mu})). To see that the last sum vanishes, observe that since  $f^{[\rho]}$ and $g$ both satisfy $(\ref{eq: bc_free_arc_full})$, so do $f^{[\rho]}+ g$. As noted above, this implies
$$
\res_{b_{2i-1}}(f^{[\rho]}+ g)^2=-\res_{b_{2i-1}}(f^{[\rho]}+ g)^2.
$$ and by expanding,  $\res_{b_{2i-1}}(f^{[\rho]}g)=-\res_{b_{2i}}(f^{[\rho]}g).$ 

We conclude that $\rho g(w)\in \i\R$ for any $\rho\in\mathbb{C}$ and $w\in \Dom$, hence $g\equiv 0$, and the proof is complete.
\end{proof}

\begin{proof}[Proof of (\ref{eq: model_Szego_asymp})]
Let's note that $\hat{\C}_\eps$  is a particular case of $\Surfeps$ in the setup of Section \ref{sec: gluing}, with $\Omega=\HH,$ no insertions, and one free boundary arc $(b_1,b_2)=(0,1)$, i.e., $g=0$, $n=0$ and $k=1$; the genus of $\hat{\C}_\eps$ is 1. We use the bosonization identity (\ref{eq: bosonization}):
\[\Lambda_{\hat{\C}_{\eps},\SpStr} (z,w)^2=\beta_{\hat{\C}_{\eps}}(z,w)+\frac{\theta_{\tau_\eps}''(0;H)}{\theta_{\tau_\eps}(0;H)}u_{\hat{\C}_{\eps}}(z)u_{\hat{\C}_{\eps}}(w)\]
and apply Lemma \ref{limitlemma}: in this case, by Definition \ref{def: marking}, we have $M_1=0$ and $N_1=1$, so that $$\frac{\sum_{s_1=\pm 1}s^2_1\exp \left(Q_{\tau_0}(s_1)+(\pi\i/2)M_1s_1\right)}{\sum_{s_1=\pm 1}\exp \left(Q_{\tau_0}(s_1)+(\pi\i/2)M_1s_1\right)}=1.$$ 
Consequently, as $\eps\to 0$, using Example \ref{exa: riemann_sphere}, we have in the chart $\C$, 
\begin{multline}
\Lambda_{\hat{\C}_{\eps},\SpStr} (z,w)^2\to \beta_{\hat{\C}}(z,w)+\frac12\omega_{\hat{\C},0,1}(z)\frac12\omega_{\hat{\C},0,1}(w)\\
=\left(\frac{1}{(z-w)^2}+\frac{1}{4z(z-1)w(w-1)}\right)dzdw.
\end{multline}
We now put $w=\infty$, using the coordinate chart $\hat\C\setminus\{0\}$, with the coordinate $w'=\frac{1}{w}$, so that $dw=-\frac{1}{(w')^2}dw'$; this gives 
$$
\lim_{\eps\to 0}\Lambda_{\hat{\C}_{\eps},\SpStr} (z,w)^2=-\left(\frac{1}{(zw'-1)^2}+\frac{1}{4z(z-1)(1-w')}\right)dzdw'.
$$
Putting $w=\infty$ corresponds to putting $w'=0$, i.e., we obtain 
$$
\lim_{\eps\to 0}\Lambda_{\hat{\C}_{\eps},\SpStr} (z,\infty)^2=-\left(1+\frac{1}{4z(z-1)}\right)dzdw'=-\frac{(z-\frac12)^2}{z(z-1)}dzdw'.
$$
Taking the square root yields \eqref{eq: model_Szego_asymp}.
\end{proof}

We observe that we have now proven \eqref{eq: for_log_der}, under the assumptions $\ccor{\hOp_{v_1}\dots\hOp_{v_n}}\neq 0$ and $\ccor{\Op_{v_1}\dots\Op_{v_n}}_\Dom\neq 0$: 
\begin{proof}[Proof of \eqref{eq: for_log_der}]
    Start with (\ref{eq: bosonization}) for $\M=\Surfeps$ and $H$ as described in Definition \ref{def: marking}, and pass to the limit as $\eps\to 0$. By Lemma \ref{lem: LHS_limit}, the left-hand side of (\ref{eq: bosonization}), evaluated in the chart $\Omega$, converges to $\frac14$ times the left-hand side of \eqref{eq: for_log_der}. The limit of the right-hand side of (\ref{eq: bosonization}) is given by \eqref{eq: RHS_limit}, which is identified with $\frac14$ times the right-hand side of \eqref{eq: for_log_der} in Lemma \ref{lem: gausscorr}.
\end{proof}

\section{Concluding the proof of Theorem \ref{th:main}.}
\label{sec: conclusion}
We now derive \eqref{eq: log_der} from \eqref{eq: for_log_der}, by applying the OPE on both sides as $w\to v_1$ and then $z\to v_1$.
\subsection{Matching the logarithmic derivatives}
\begin{lem}
\label{thm: log}
If $\ccor{\hOp_{v_1}\dots\hOp_{v_n}}\neq 0$ and $\ccor{\Op_{v_1}\dots\Op_{v_n}}\neq 0$, then \eqref{eq: log_der} holds . 
\end{lem}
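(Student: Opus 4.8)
The plan is to extract \eqref{eq: log_der} from \eqref{eq: for_log_der} by applying the operator product expansions twice: first as $w\to v_1$, to pass to a residue identity that is a function of $z$ alone, and then as $z\to v_1$, comparing the two leading Laurent coefficients. Throughout I treat the case $\Op_{v_1}=\sigma_{v_1}$, so that $\hOp_{v_1}=\bsigma{v_1}$ with $\gamma=\tfrac{\sqrt 2}{2}$; the case $\Op_{v_1}=\mu_{v_1}$ is identical after exchanging the roles of sine and cosine. The hypotheses guarantee that the denominators $\ccor{\Op_{v_1}\dots}$ and $\ccor{\hOp_{v_1}\dots}$ are nonzero, so that both sides of \eqref{eq: for_log_der} are well defined and real-analytic in $(z,w)$ away from the diagonal and the points $v_i$, and Laurent coefficients in $(w-v_1)$ and $(z-v_1)$ may be compared directly.

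First I would multiply both sides of \eqref{eq: for_log_der} by $(w-v_1)$ and let $w\to v_1$. On the bosonic side, \eqref{eq: fuse_pa_cos} gives $\pa\Phi(w)\bsigma{v_1}=\tfrac{\gamma}{2(w-v_1)}\bmu{v_1}+O(1)$, so the limit of $(w-v_1)\cdot(-8)b(z,w)$ equals $-4\gamma\,\Pi(z)$, where
\[
\Pi(z):=\frac{\ccor{\pa\Phi(z)\bmu{v_1}\hOp_{v_2}\dots\hOp_{v_n}}}{\ccor{\bsigma{v_1}\hOp_{v_2}\dots\hOp_{v_n}}}.
\]
On the Ising side, the fusion $\psi\times\sigma\to\mu$ of \cite[Section 6]{CHI2} gives $(w-v_1)^{1/2}\psi_w\sigma_{v_1}\to\kappa\,\mu_{v_1}$, so $\lim_{w\to v_1}(w-v_1)\,g(z,w)^2=A(z)^2$ with $A(z):=\kappa\,\ccor{\psi_z\mu_{v_1}\Op_{v_2}\dots\Op_{v_n}}/\ccor{\sigma_{v_1}\Op_{v_2}\dots\Op_{v_n}}$. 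This yields the residue identity $A(z)^2=-4\gamma\,\Pi(z)$, valid for all $z\in\Omega\setminus\{v_1,\dots,v_n\}$.

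Next I would let $z\to v_1$ and expand both sides to first subleading order. For $\Pi(z)$, \eqref{eq: fuse_pa_sin} applied to $\pa\Phi(z)\bmu{v_1}$ gives
\[
\Pi(z)=-\frac{\gamma}{2(z-v_1)}-\frac1\gamma\,\pa_{v_1}\log\ccor{\hOp_{v_1}\dots\hOp_{v_n}}+O(z-v_1),
\]
whence $-4\gamma\,\Pi(z)=(z-v_1)^{-1}+4\,\pa_{v_1}\log\ccor{\hOp_{v_1}\dots\hOp_{v_n}}+O(z-v_1)$, using $2\gamma^2=1$. For $A(z)^2$, the fusion $\psi\times\mu\to\sigma$ gives $\psi_z\mu_{v_1}=\kappa'(z-v_1)^{-1/2}\big(\sigma_{v_1}+4(z-v_1)\pa_{v_1}\sigma_{v_1}+O((z-v_1)^2)\big)$, where the level-one descendant coefficient $4=\tfrac{h_\psi+h_\sigma-h_\mu}{2h_\sigma}$ is the universal value fixed by conformal covariance \eqref{eq: conf_cov} and the weights in the table (equivalently, it is read off from the expansions of \cite[Section 6]{CHI2}). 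Hence
\[
A(z)^2=(\kappa\kappa')^2(z-v_1)^{-1}\Big(1+8\,\pa_{v_1}\log\ccor{\sigma_{v_1}\dots}\,(z-v_1)+O((z-v_1)^2)\Big).
\]
Matching the $(z-v_1)^{-1}$ coefficients forces $(\kappa\kappa')^2=1$, and matching the constant terms then gives $8\,\pa_{v_1}\log\ccor{\sigma_{v_1}\dots}=4\,\pa_{v_1}\log\ccor{\hOp_{v_1}\dots}$, which is exactly \eqref{eq: log_der}.

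The step I expect to be the main obstacle is the bookkeeping on the Ising side: correctly identifying the $\pa_{v_1}\sigma_{v_1}$ descendant in the double fusion together with its universal coefficient, and verifying that the undetermined structure constants $\kappa,\kappa'$ enter only through $(\kappa\kappa')^2$, which the leading-order match pins to $1$. A secondary point to check is that the term-by-term passage to the two limits is legitimate; this follows because the OPEs invoked are genuine asymptotic expansions and, by \eqref{eq: for_log_der}, the two sides coincide as real-analytic functions. Note that no symmetrization in $z\leftrightarrow w$ is required: a single residue extraction followed by the subleading match already determines the logarithmic derivative.
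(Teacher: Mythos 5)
Your argument is correct and takes essentially the same route as the paper: both proofs extract \eqref{eq: log_der} from \eqref{eq: for_log_der} by fusing twice on each side — $\psi\times\sigma\to\mu$ in one fermion variable and then $\psi\times\mu\to\sigma+4(\cdot)\,\pa_{v_1}\sigma$ in the other (via \cite[Theorem 6.2]{CHI2}), against \eqref{eq: fuse_pa_cos}--\eqref{eq: fuse_pa_sin} on the bosonic side — and then match the subleading (constant) Laurent coefficients at $v_1$. The only cosmetic differences are that you take the limit in $w$ first rather than $z$, and that you leave the fusion constants undetermined and pin $(\kappa\kappa')^2=1$ by matching the leading $(z-v_1)^{-1}$ coefficients, whereas the paper inserts the explicit value $\lambda=e^{-\i\pi/4}$ from \cite{CHI2}, for which $\lambda^2\bar{\lambda}^2=1$ effects the same cancellation.
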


\begin{proof}
We will use fusion rules on (\ref{eq: for_log_der}). We start with the Ising side. 

Let's assume we have a spin at $v_1$, i.e. $\Op_{v_1}=\sigma_{v_1}$. By properly normalizing the fusion rules from \cite[Thm. 6.2]{CHI2} , we obtain

\begin{equation}
\left(\frac{\ccor{\psi_z\psi_w\sigma_{v_1}\Op_{v_2}\dots\Op_{v_n}}}{\ccor{\sigma_{v_1}\Op_{v_2}\dots\Op_{v_n}}}\right)^2=\frac{\bar\lambda^2}{z-v_1}\left(\left(\frac{\ccor{\psi_w\mu_{v_1}\Op_{v_2}\dots\Op_{v_n}}}{\ccor{\sigma_{v_1}\Op_{v_2}\dots\Op_{v_n}}}\right)^2+O(z-v_1)\right)
\end{equation}
where $\lambda=\exp(- \i \pi/4)$. If we again use fusion rules, we get

\begin{multline}
\left(\frac{\ccor{\psi_w\mu_{v_1}\Op_{v_2}\dots\Op_{v_n}}}{\ccor{\sigma_{v_1}\Op_{v_2}\dots\Op_{v_n}}}\right)^2=\frac{\lambda^2}{w-v_1}\big(1+8\bdry_{v_{1}}\log\ccor{\sigma_{v_1}\Op_{v_2}\dots\Op_{v_n}}(w-v_1)+\\
+O(w-v_1)^2\big)
\end{multline}
The equations are similar if we instead put a disorder at $v_1$. We will just have $\bar\lambda$ instead of $\lambda$, and $\mu_{v_1}$ instead of $\sigma_{v_1}$ on the right side. Thus, if we collect all of the fusion rules, we obtain

\begin{equation}
\bdry_{v_{1}}\log\ccor{\Op_{v_1}\dots\Op_{v_n}}=\lim_{w\rightarrow v_1}\left(\lim_{z\rightarrow v_1}\left(\frac{\ccor{\psi_z\psi_w\Op_{v_1}\dots\Op_{v_n}}}{\ccor{\Op_{v_1}\dots\Op_{v_n}}}\right)^2 \frac{z-v_1}{8}-\frac{1}{8(w-v_1)}\right)
\end{equation}

And thus, we can plug in (\ref{eq: for_log_der}) to obtain

\begin{equation}
\label{log_ising}
\bdry_{v_{1}}\log\ccor{\Op_{v_1}\dots\Op_{v_n}}=-\lim_{w\rightarrow v_1}\left(\lim_{z\rightarrow v_1}\frac{\ccor{\pa\Phi(z)\pa\Phi(w)\hat{\Op}_{v_1}\dots\hat{\Op}_{v_n}}}{\ccor{\hat{\Op}_{v_1}\dots\hat{\Op}_{v_n}}} \left(z-v_1\right)+\frac{1}{8(w-v_1)}\right)
\end{equation}

Now, for the GFF side we can use \eqref{eq: fuse_pa_cos} and \eqref{eq: fuse_pa_sin} to obtain
\begin{equation}
\lim_{z\rightarrow v_1}\frac{\ccor{\pa\Phi(z)\pa\Phi(w)\hat{\Op}_{v_1}\dots\hat{\Op}_{v_n}}}{\ccor{\hat{\Op}_{v_1}\dots\hat{\Op}_{v_n}}} (z-v_1)=\frac{1}{2\sqrt{2}}\frac{\ccor{\pa\Phi(w)\hat{\Op}'_{v_1}\hat{\Op}_{v_2}\dots\hat{\Op}_{v_n}}}{\ccor{\hat{\Op}_{v_1}\dots\hat{\Op}_{v_n}}}
\end{equation}
where $$\hat{\Op}'_{v_1}=\begin{cases}:\sin(\Phi (v_{1})/\sqrt{2}):&\text{ if } \hat{\Op}_{v_1}=:\cos(\Phi (v_{1})/\sqrt{2}):\\
-:\cos(\Phi (v_{1})/\sqrt{2}):&\text{ if } \hat{\Op}_{v_1}=:\sin(\Phi (v_{1})/\sqrt{2}):.
\end{cases}$$  
Applying \eqref{eq: fuse_pa_cos} and \eqref{eq: fuse_pa_sin} one more time, we get 
\begin{multline}
\frac{\ccor{\pa\Phi(w)\hat{\Op}'_{v_1}\hat{\Op}_{v_2}\dots\hat{\Op}_{v_n}}}{\ccor{\hat{\Op}_{v_1}\dots\hat{\Op}_{v_n}}}=-\frac{1}{2\sqrt{2}(w-v_1)}\big(1+4\bdry_{v_{1}}\log\ccor{\hat\Op_{v_1}\dots\hat\Op_{v_n}}(w-v_1)\\
+O(w-v_1)^2\big)
\end{multline}
Hence, for any $\hat{\Op}_{v_1}$, we have

\begin{equation}
\label{log_gff}
\bdry_{v_{1}}\log\ccor{\hat\Op_{v_1}\dots\hat\Op_{v_n}}=\lim_{w\rightarrow v_1}\left(\lim_{z\rightarrow v_1}\frac{\ccor{\pa\Phi(z)\pa\Phi(w)\hat{\Op}_{v_1}\dots\hat{\Op}_{v_n}}}{\ccor{\hat{\Op}_{v_1}\dots\hat{\Op}_{v_n}}}(-2) (z-v_1)-\frac{1}{4(w-v_1)}\right)
\end{equation}

Therefore, if we put together (\ref{log_ising}) and (\ref{log_gff}), we obtain \eqref{eq: log_der}.

\end{proof}

\subsection{Concluding the proof}
\begin{proof}[Proof of Theorem \ref{th:ds}]
As a first step, we handle the case $\Op_{v_i}=\sigma_{v_i}$ all $i$. Indeed, we then always have $\ccor{\hOp_{v_1}\dots\hOp_{v_n}}>0$, as the cosines expand into exponentials with positive coefficients. Moreover, we also have $\ccor{\Op_{v_1}\dots\Op_{v_n}}>0$, as defined in \cite[Section 5.1]{CHI2}; note that $n$ is even. Therefore, \eqref{eq: log_der} is valid for \emph{all} configurations of distinct points $v_1,\dots,v_n$. Integrating and taking exponentials implies 
\begin{equation}
\label{eq: up_to_const}
\ccor{\Op_{v_1}\dots\Op_{v_n}}^2=C_{\Omega,n}\cdot\ccor{\hat\Op_{v_1}\dots\hat\Op_{v_n}}
\end{equation}
for a constant $C_{\Omega,n}$ independent of $v_1,\dots,v_n$. The value $C_{\Omega,n}=1$ can now be fixed by induction, letting $v_1\to v_2$ and applying 
the OPEs \eqref{eq: fuse_cos_cos} and \cite[Theorem 6.3]{CHI2}. 

Notice that in particular, this implies that $\ccor{\sigma_{v_1}\dots\sigma_{v_n}}$ is a real analytic function of (distinct) points $v_1,\dots,v_n$, since, as noted in Section \ref{sec: bosonic}, $\ccor{\hOp_{v_1}\dots\hOp_{v_n}}$ is always manifestly real analytic. Furthermore, we have \eqref{eq: for_log_der}; thus using \cite[Theorem 6.2]{CHI2}, we can write 
$$
\ccor{\mu_{v_1}\mu_{v_2}\sigma_{v_3}\dots\sigma_{v_n}}=\ccor{\sigma_{v_1}\dots\sigma_{v_n}}\oint\oint\left(\frac{R(z,w,v_1,\dots,v_n)}{(z-v_1)(w-v_2)}\right)^\frac12\,dzdw,
$$
where $R$ is a real-analytic function of $v_1,\dots,v_n$, and the small contours of integration around $v_{1,2}$ can be chosen to avoid its zeros. Therefore, $\ccor{\mu_{v_1}\mu_{v_2}\sigma_{v_3}\dots\sigma_{v_n}}$ is a real-analytic function of $v_1,\dots,v_n$. Using the Pfaffian idenity \cite[Eq 5.19]{CHI2}, we conclude that $\ccor{\Op_{v_1}\dots\Op_{v_n}}$, $\Op_{v_i}\in\{\sigma_{v_i},\mu_{v_i}\}$ is always real analytic. 

We are ready to conclude the proof in the general case by induction on $n$. Indeed, assume without loss of generality that $\Op_{v_1}=\mu_{v_1}$ and $\Op_{v_2}=\mu_{v_2}$. Fix $v_3,\dots, v_n$ such that $\ccor{\Op_{v_3}\dots\Op_{v_n}}\neq 0$, then by induction hypothesis also $\ccor{\hOp_{v_3}\dots\hOp_{v_n}}\neq 0$. Applying once again the OPE \eqref{eq: fuse_sin_sin} and \cite[Theorem 6.3]{CHI2}, we see that $\ccor{\Op_{v_1}\dots\Op_{v_n}}\neq 0$ and $\ccor{\hOp_{v_1}\dots\hOp_{v_n}}\neq 0$ for $v_1$ in a small neighborhood of $v_2$. In particular, \eqref{eq: log_der} is true in that neighborhood, and by sending $v_1\to v_2$ to fix the multiplicative normalization, we conclude that the result of Theorem \eqref{th:ds} holds true in that neighborhood. But since both sides of the desired identity are real-analytic functions of $v_1$, it holds true whenever $v_1\neq v_2,\dots,v_n$ and $\ccor{\Op_{v_3}\dots\Op_{v_n}}\neq 0$. Since the latter condition is open, we can drop it by applying the real analyticity once again.
\end{proof}

\end{document}